\preprint{ }
\title{Branes and Representations of DAHA $C^\vee C_1$:\\ affine braid group action on category}
\author{Junkang Huang,}
\author{Satoshi Nawata,}
\author{Yutai Zhang,}
\author{and Shutong Zhuang}
\affiliation{Department of Physics and Center for Field Theory and Particle Physics, Fudan University, \\
20005, Songhu Road, 200438 Shanghai, China}
\emailAdd{snawata@gmail.com}
\abstract{
We study the representation theory of the spherical double affine Hecke algebra (DAHA) of $C^\vee C_1$, using brane quantization. By showing a one-to-one correspondence between Lagrangian $A$-branes with compact support and finite-dimensional representations of the spherical DAHA, we provide evidence of derived equivalence between the $A$-brane category of $\SL(2,\bC)$-character variety of a four-punctured sphere and the representation category of DAHA of $C^\vee C_1$. The $D_4$ root system plays an essential role in understanding both the geometry and representation theory.  In particular, this $A$-model approach reveals the action of an affine braid group on the category. As a by-product, our geometric investigation offers detailed information about the low-energy effective dynamics of the SU(2) $N_f=4$ Seiberg-Witten theory.
}
\begin{document}
\maketitle


\newpage

\section{Introduction and Summary}

Brane quantization, as introduced by Gukov and Witten~\cite{Gukov:2008ve}, is a framework that applies ideas from $A$-model topological string theory to the quantization of symplectic manifolds. The core idea is to approach the quantization process via the topological $A$-model on a ``complexification'' $\X$ of the symplectic manifold $M$. The complexified target space $\X$ is usually an affine \HK manifold with a holomorphic symplectic form $\Omega$, whose real part $\Re \Omega$ restricts to the symplectic form on $M$, and imaginary part $\Im \Omega$ restricts to zero on $M$. Considering the $A$-model on $\X$ with a symplectic form $\omega_\X:=\Im \Omega$, the category of $A$-branes $\ABrane(\X,\omega_\X)$ provides a unified framework to quantize not only $M$ but all the $A$-branes in the target space $\X$. 

A key ingredient of this approach is the distinguished $A$-brane called the \emph{canonical coisotropic brane} \cite{Kapustin:2001ij,Kapustin:2006pk} whose support is the entire space $\X$. Its endomorphism algebra gives rise to the deformation quantization of the coordinate ring of $\X$ with respect to the holomorphic symplectic form $\Omega$:
\be 
\End(\Bcc)=\OO^{q}(\X)~.
\ee
The essence of the brane quantization lies in the fact that $M$ is a Lagrangian submanifold in the symplectic manifold $(\X, \omega_\X)$. This allows the original symplectic manifold itself to define an $A$-brane, denoted $\brane_M$, in $(\X, \omega_\X)$. In particular, as shown in \cite{Gukov:2008ve,Gukov:2010sw,Gaiotto:2021kma}, the morphism space $\Hom(\brane_M,\Bcc)$ can be identified with the geometric quantization of $M$, assuming that $M$ is a \K manifold. Thus, brane quantization serves as a bridge connecting deformation quantization with geometric quantization.

Brane quantization, however, goes far beyond this basic construction. A morphism $\Hom(\brane,\brane^\prime)$ between $A$-branes can be understood as $(\brane^\prime,\brane)$-open string in the $A$-model. Joining $(\Bcc,\Bcc)$-string to $(\Bcc,\brane)$-string for any other $A$-brane $\brane$ naturally defines an action of the algebra $\End(\Bcc)=\OO^{q}(\X)$. In this way, brane quantization naturally proposes a functor\footnote{We consider left $\OO^q(\X)$-modules so that a functor is expressed as $\Hom(-,\Bcc)$ rather than $\Hom(\Bcc,-)$, following the standard mathematical convention. For the same reason,  the space of $(\brane^\prime,\brane)$-open strings is represented by a morphism $\Hom(\brane,\brane^\prime)$.}
\be \label{eq:functor0}
\RHom(-,\Bcc): D^b\ABrane(\MS,\omega_\MS) \to D^b \Rep(\OO^q(\MS))
\ee 
which is conjectured to establish a derived equivalence between the category of $A$-branes and the derived category of $\OO^q(\MS)$-modules. This equivalence is understood in the derived sense, meaning that it incorporates grading shifts and treats objects and morphisms up to quasi-isomorphism.
Consequently, the functor provides a geometric framework for understanding the category of $\OO^q(\X)$-modules. Notably, the role of $M$ in this framework is not special;  it represents just one of many possible $A$-branes, each of which naturally corresponds to an $\OO^q(\X)$-module. The framework of brane quantization is mathematically formulated as generalized Riemann-Hilbert correspondence in \cite{Kontsevich:2024esg}. The goal of this paper is thus to present an explicit instance of this correspondence, giving a concrete manifestation of this derived equivalence in the general framework.

In this paper, we consider the target space $\X$ of 2d non-linear sigma-model to be the moduli space of flat $\SL(2,\bC)$-connections on a four-punctured sphere $C_{0,4}$ (a.k.a. the $\SL(2,\bC)$-character variety):
\be 
\X = \MF(C_{0,4}, \SL(2,\bC))~.
\ee 
This space is an affine \HK variety with a distinguished holomorphic symplectic form, which naturally fits into the framework of brane quantization.  
As proven in \cite{oblomkov2004cubic}, the algebra $\OO^q(\X)$ is the spherical subalgebra of the double affine Hecke algebra (DAHA) of type $C^\vee C_1$, which we denote
\be 
\OO^q(\X)\cong \SH~.
\ee 
The main objective of this paper is to apply the brane quantization framework to this space and provide compelling evidence for the equivalence \eqref{eq:functor0} between the $A$-brane category on $(\X, \omega_\X)$ and the representation category of the spherical DAHA of type $C^\vee C_1$. Note that in our example, the target space $\X$ is a real four-dimensional symplectic manifold. Due to dimensional constraints, no other coisotropic branes can exist besides the canonical coisotropic brane. Consequently, all other $A$-branes in this setting are necessarily Lagrangian and thus fall into the framework of the Fukaya category.

DAHA was introduced by Cherednik \cite{cherednik1992double,cherednik1995double,Cherednik-book} as an underlying algebra for $q$-difference operators that govern multi-variable orthogonal special functions such as Macdonald polynomials. Building on this, a series of works \cite{Noumi1995,vD96,sahi1999nonsymmetric,sahi2000koornwinder,stokman2000koornwinder} developed DAHA of type $C^\vee C_1$, which is the main focus of this paper, as the algebra governing the Askey-Wilson polynomials \cite{askey1985some}. Associated to the $\SL(2,\bC)$-holonomies around the punctures, DAHA of type $C^\vee C_1$ depends on four deformation parameters $\boldsymbol{t}=(t_1,t_2,t_3,t_4)$. In \S\ref{sec:DAHA-CC1}, we review this algebra and its representations, especially the polynomial representation involving the Askey-Wilson polynomials.

Furthermore, pioneering works \cite{oblomkov2004Calogero,oblomkov2004cubic,vasserot2005induced,bezrukavnikov2005equivariant,varagnolo2010double} explored DAHA in geometric contexts such as the deformation quantization of coordinate rings of character varieties and the equivariant $K$-rings of affine Grassmannians. These perspectives were further examined from a physics standpoint in \cite{Gukov:2022gei}, where brane quantization was employed to study the representation theory of DAHA of type $A_1$, paving the way for this investigation. A key insight is that the algebra of line operators \cite{Kapustin:2007wm,Okuda:2019emk,Cirafici:2020qlf} in 4d $\cN=2^*$ theory on the $\Omega$-background provides the deformation quantization of the coordinate ring of the Coulomb branch \cite{Gaiotto:2010be} and is related to the spherical DAHA of the reduced affine root system associated with the gauge algebra.\footnote{Since the spectrum of line operators depends on the global structure of the gauge group $G$ \cite{Aharony:2013hda,Tachikawa:2013hya}, the algebra of line operators is generally not strictly isomorphic to the spherical DAHA. Instead, for a simple Lie group $G$, it corresponds to a quotient of the spherical DAHA by a maximal isotropic subgroup of $Z(\tilde G) \oplus Z(\tilde G)$, where $Z(\tilde G)$ denotes the center of the universal covering of $G$.
}
Character varieties and affine Grassmannians naturally arise as moduli spaces associated with line operators in 4d $\cN=2^*$ theories. Our study is motivated by the physics perspective that the study of algebras of line operators in more general 4d $\cN=2$ theories provides a vast generalization of DAHA. While the 4d $\cN=2^*$ theory is associated to a once-punctured torus via the class $\cS$ construction \cite{Gaiotto:2009we,Gaiotto:2009hg}, this paper takes the first non-trivial step in extending this framework to the case of a four-punctured sphere. As demonstrated in \cite{Gukov:2022gei,Shan:2023xtw,Shan:2024yas}, the combination of physical insights with geometric approaches in the study of the DAHA representation theory reveals intricate connections among algebra, geometry, and physics.

Our target space $\X$ is described as an affine cubic surface—a simple yet remarkably rich geometric space that has been studied since the 19th century. True to the nature of any interesting geometric object, $\X$ exhibits multiple facets. For instance, it can be interpreted as the moduli space of  
 $\SU(2)$ parabolic Higgs bundles on $C_{0,4}$ (a.k.a.
the Hitchin moduli space). From a physical perspective, $\X$ corresponds to the Coulomb branch of a 4d $\cN=2$ supersymmetric $\SU(2)$ gauge theory (SQCD) with four fundamental hypermultiplets ($N_f = 4$) \cite{Seiberg:1994aj} on $S^1 \times \bR^3$. As we demonstrate in \S\ref{sec:prelude}, the 2d sigma-model in $\X$ can be related to the 4d $\cN=2$ SQCD by compactification on $T^2$.
By examining these perspectives, we conduct a detailed investigation of the geometry of $\X$ in \S\ref{sec:geometry}. This in-depth study not only establishes a solid foundation for analyzing $A$-branes in the target space $(\X, \omega_{\X})$ but also provides detailed insight into the low-energy dynamics of the 4d $\cN=2$ theory.

Remarkably, the $D_4$ root system plays a central role in controlling the geometry of the target space $\X$. First, note that the Lie algebra  $\frakso(8)\cong D_4$ is the flavor symmetry algebra of the 4d SU(2) SQCD with $N_f=4$.
One key feature is that the second integral homology group is isomorphic to the affine $D_4$ root lattice, and the action of the affine Weyl group is realized through the Picard-Lefschetz monodromy transformation:
\be\label{dtWD4action}
\dt{W}(D_4) \ \rotatebox[origin=c]{-90}{$\circlearrowright$}\ H_2(\X,\bZ) \cong \dt\sfQ(D_4)~.
\ee
In addition, when the deformation parameters $\boldsymbol{t}$ are specialized, du Val singularities emerge in the geometry of the target space. The specific specializations of $\boldsymbol{t}$ and the resulting types of du Val singularities are also determined by the $D_4$ root system.
Moreover, the target space $\X$ admits an elliptic fibration as the Hitchin fibration. The ramification parameters of the Higgs field and the Kodaira types of singular fibers are similarly controlled by the structure of the $D_4$ root system. As the four deformation parameters $\boldsymbol{t}$ vary, the geometry of $\X$ undergoes drastic changes. However, as we see in \S\ref{sec:geometry}, these geometric changes can be neatly encapsulated using the root system $D_4$.

Having laid the geometric foundation, we turn to the study of $A$-branes in the target space $\X$ and the corresponding $\SH$-modules from the viewpoint of brane quantization in \S\ref{sec:brane-rep}. First, we show the following result for non-compact $A$-branes:
\begin{claim}   \label{claim:1}
Under the functor \eqref{eq:functor0}, the $(A,B,A)$-branes supported on 24 lines in the affine cubic surface $\X$ correspond to the polynomial representation of $\SH$ and its images under the $D_4$ Weyl group and the cyclic permutation group corresponding to the triality of $\frakso(8)$.
\end{claim}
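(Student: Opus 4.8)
\medskip
\noindent\emph{Sketch of the intended argument.}
The plan is to proceed in three steps: realize each of the $24$ lines as an $(A,B,A)$-brane and record how the $D_4$ symmetry permutes them; compute the image under \eqref{eq:functor0} of one distinguished line directly and identify it with the polynomial representation; and then obtain the images of the remaining lines by equivariance of the functor. For the first step, recall that in the complex structure in which $\X$ is realized as the affine cubic surface — so that $\OO^q(\X)\cong\SH$ — the $24$ lines are obtained from the classical $27$ lines of the projective cubic compactifying $\X$ by deleting the three lines forming the triangle at infinity; each such line is a smooth rational curve, hence a holomorphic, i.e.\ $B$-type, submanifold. Since $\X$ is a holomorphic symplectic surface, the holomorphic symplectic form restricts to a $(2,0)$-form on the one-dimensional line $L$ and therefore vanishes identically; consequently both real symplectic forms vanish on $L$, so $L$ is Lagrangian and supports an $A$-brane in the remaining two complex structures. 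Endowing $L$ with a rank-one Chan-Paton sheaf (a flat line bundle, together with the twist required for brane consistency) produces the $(A,B,A)$-brane $\brane_L$. Using the explicit cubic together with the Picard-Lefschetz action \eqref{dtWD4action} and the triality automorphism, one checks that $W(D_4)\rtimes S_3$ permutes these $24$ lines and records the resulting orbit structure.

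For the second step, fix the distinguished affine line $L_0\cong\bC^{*}/\bZ_2$ on which the quantized Fricke coordinate $\hat x$ restricts to multiplication by $X+X^{-1}$. Applying the brane-quantization prescription of \cite{Gukov:2008ve,Gaiotto:2021kma}, the space $\Hom(\brane_{L_0},\Bcc)$ is computed as the relevant sheaf cohomology of the Chan-Paton data over $L_0$, which yields the symmetric Laurent polynomials $\bC[X^{\pm1}]^{\bZ_2}$; joining $(\Bcc,\Bcc)$-strings then realizes the generators $\hat x,\hat y,\hat z$ of $\SH=\End(\Bcc)$ as the symmetric Askey-Wilson $q$-difference operators, with the four parameters $\boldsymbol t$ read off from the defining data of $L_0$. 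One verifies that $\RHom(\brane_{L_0},\Bcc)$ is concentrated in cohomological degree $0$, so that the image is genuinely the polynomial representation reviewed in \S\ref{sec:DAHA-CC1}.

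For the third step, the monodromy transformations generating $W(D_4)$ and the triality permutations act on $(\X,\omega_\X)$ by symplectomorphisms, each inducing an automorphism of $\OO^q(\X)=\SH$ that acts on the parameters $\boldsymbol t$ in the manner recalled in \S\ref{sec:DAHA-CC1}. Hence \eqref{eq:functor0} intertwines pullback of branes with these automorphisms, and the image of $\brane_{L}$ for any line $L$ in the $W(D_4)\rtimes S_3$-orbit of $L_0$ is the corresponding transform of the polynomial representation. Matching the orbit sizes found in the first step with the total of $24$ lines shows that these transforms are exhausted, which is the assertion of the claim.

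The main obstacle is the second step: making the computation of $\RHom(\brane_{L_0},\Bcc)$ precise. This requires pinning down the correct rank-one Chan-Paton sheaf on $L_0$ — in particular the brane-consistency twist and any residual flat holonomy, which is precisely what encodes $\boldsymbol t$ — establishing the vanishing of the higher $\mathrm{Ext}$-groups so that the image is an honest module rather than a complex, and matching term by term the resulting operators with the Askey-Wilson difference operators normalized as in \S\ref{sec:DAHA-CC1}. By contrast, the geometric input of the first step is comparatively routine once the Picard-Lefschetz monodromies are written out explicitly, and the third step is essentially formal given the equivariance of the functor under symplectomorphisms.
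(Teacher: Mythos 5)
Your steps (1) and (3) coincide in substance with what the paper does: the $24$ lines are indeed obtained as the classical $27$ lines minus the triangle at infinity, the $(A,B,A)$ property follows exactly as you say from the vanishing of $\Omega_J$ restricted to a holomorphic curve, and the remaining lines/representations are generated by the $W(D_4)$ and triality ($\bZ_3$) actions. The problem is your step (2), which is where the actual content of Claim \ref{claim:1} sits and which you yourself flag as "the main obstacle" without carrying it out. As written, the proposal never establishes \emph{which} line corresponds to the polynomial representation, nor how the four parameters $\boldsymbol{t}$ are encoded in the brane data: asserting that $\RHom(\brane_{L_0},\Bcc)$ "yields $\bC[X^{\pm1}]^{\bZ_2}$ with the Askey--Wilson operators" is precisely the statement to be proved, and pinning down the Chan--Paton twist, the degree-zero concentration, and the operator matching is not a routine finishing touch but the whole identification. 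Note also that all $24$ lines look alike (each has one coordinate constant and the other two linearly related), so without an extra input nothing in your setup singles out $L_0$.

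The idea you are missing is the paper's much cheaper route in \S\ref{sec:brane-poly}: run the identification in the opposite direction, from the representation to the brane, by taking the classical limit $q\to 1$ of the polynomial representation. In that limit $\pol(y)$ acts by the scalar $-t_1t_3-t_1^{-1}t_3^{-1}$ and the lowering operator \eqref{lowering} becomes identically null, and these two scalar conditions \eqref{line} are exactly the defining equations of the line $\bfP$ in \eqref{Brane_P}; this fixes both the distinguished line and its $\boldsymbol{t}$-dependence (the slope $-t_1t_3$, and more generally $-\boldsymbol{t}^w$ for the images under $W(D_4)\times\bZ_3$) without any open-string cohomology computation. If you insist on your direction, you must either supply the full quantization computation you deferred, or import this classical-limit support argument to anchor the correspondence; otherwise the chain "lines $\to$ symmetry orbit $\to$ exhaustion by the $27$-lines theorem" proves only that there are $24$ candidate branes and $24$ candidate representations, not that the functor \eqref{eq:functor0} matches them as claimed. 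A further small caution on step (3): the $W(D_4)$ symmetry fixes $\theta_1,\dots,\theta_4$ and acts trivially on the coordinates $(x,y,z)$, so it is not literally a nontrivial symplectomorphism of the fixed surface; it is a symmetry of the family over the $\boldsymbol{t}$-parameter space, and the equivariance you invoke should be phrased at that level (as the paper does by acting on the $\boldsymbol{t}$-dependent defining equations of the lines).
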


We then proceed to explicitly identify a compact Lagrangian $A$-brane in $\X$ for each finite-dimensional irreducible representation of $\SH$. In particular, we match the corresponding objects by analyzing their parameter spaces, dimensions, and shortening conditions on both sides. In addition,  we examine the spaces of derived morphisms between objects in the two categories.
This detailed investigation within the framework of brane quantization provides solid evidence for the following statement:

\begin{claim}  \label{claim:2}
For $\X=\MF(C_{0,4}, \SL(2, \bC))$, the functor \eqref{eq:functor0} restricts to a derived equivalence of the full subcategory of compact Lagrangian $A$-branes of $\X$ and the category of finite-dimensional \SH-modules.
\end{claim}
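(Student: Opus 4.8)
The plan is to establish the equivalence first at the level of objects --- exhibiting, for each finite-dimensional simple $\SH$-module, a compact Lagrangian $A$-brane whose image under \eqref{eq:functor0} reproduces it --- and then to upgrade this dictionary to a triangulated equivalence by matching all morphism spaces. On the algebraic side I would start from the classification of finite-dimensional irreducible $\SH$-modules reviewed in \S\ref{sec:DAHA-CC1}: such modules exist only when the parameters $\boldsymbol{t}$ are specialized so that a shortening condition holds, and both the codimension of these loci in parameter space and the dimensions of the modules are governed by the $D_4$ root system, mirroring the du Val degenerations of $\X$ analyzed in \S\ref{sec:geometry}. On any such locus $\SH$ has a finite-length module category with finitely many simples $S_i$; hence the derived category of finite-dimensional $\SH$-modules is split-generated by the $S_i$ and is recovered from the $A_\infty$-algebra $\bigoplus_{i,j}\mathrm{Ext}^{\bullet}_{\SH}(S_i,S_j)$. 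It therefore suffices to reproduce the simples and all $\mathrm{Ext}$ groups among them from the geometry of $\X$.

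Next I would classify the compact Lagrangian $A$-branes in $(\X,\omega_\X)$. Because $\X$ is an affine \HK surface it carries no compact holomorphic curves at generic $\boldsymbol{t}$, consistent with the absence of finite-dimensional $\SH$-modules there; when $\boldsymbol{t}$ is tuned, the vanishing cycles of the emerging du Val singularities --- together with their configurations, infinitesimal thickenings, and images under the $\dt{W}(D_4)$-action of \eqref{dtWD4action} and under triality --- furnish a discrete family of compact Lagrangian supports, each with essentially rigid Chan--Paton data. I would then put this family in bijection with the simples $S_i$ by comparing (i) the loci in $\boldsymbol{t}$-space on which the two sides are defined, (ii) dimensions, reading $\dim\Hom(\brane,\Bcc)$ as the geometric quantization of the support, and (iii) the poset of sub- and quotient branes against the shortening (submodule) structure --- precisely the object-level comparison performed in \S\ref{sec:brane-rep}.

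It then remains to match morphisms. Applying $\RHom(-,\Bcc)$ to each compact brane and verifying essential surjectivity produces the bijection on isomorphism classes, while full faithfulness is the identification $\RHom_{\ABrane}(\brane_L,\brane_{L'})\cong \mathrm{Ext}^{\bullet}_{\SH}(M_L,M_{L'})$ of the Floer cohomology of two compact branes with the $\SH$-$\mathrm{Ext}$ of the associated modules. For chains of rational vanishing cycles both sides reduce to small computations organized by the $D_4$ Dynkin data and can be checked case by case, after which an equivalence at the abelian level together with the generation statement above promotes the object-wise dictionary to the claimed derived equivalence.

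The hard part is making the $A$-side rigorous: defining and computing the Fukaya-type category of the \emph{non-compact} \HK surface $\X$ with its degenerating Hitchin fibration, controlling the disk-instanton corrections to the $A$-brane morphism spaces and their wall-crossing as $\boldsymbol{t}$ crosses the $D_4$ walls, and --- indispensable for the \emph{derived} statement --- proving that the compact branes, perhaps together with $\Bcc$ and the non-compact line branes of Claim~\ref{claim:1}, split-generate the relevant subcategory. A secondary difficulty is transporting the full $\dt{W}(D_4)$ and triality symmetry compatibly across the equivalence.
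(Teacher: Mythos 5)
Your overall architecture (match objects first, then morphisms, then worry about generation and rigor on the $A$-side) parallels the paper's strategy, but the geometric half of your object-level dictionary rests on a premise that fails, and it is not the identification the paper makes. You assert that $\X$ has no compact holomorphic curves at generic $\boldsymbol{t}$, that finite-dimensional $\SH$-modules exist only on the du Val loci $\boldsymbol{t}^r=1$, and that the compact Lagrangian supports are the vanishing cycles that appear when $\boldsymbol{t}$ is tuned to those loci. None of these matches the situation. The compact two-cycles of $\X$ exist at \emph{generic} parameters: they are the components of the Hitchin (Kodaira) fibers in complex structure $I$ — the generic torus fiber $\bfF$, the components $\bfV,\bfD_j$ of the $I_0^*$ fiber, the components $\bfU_a$ of the $I_k$ fibers, and the suspended cycles $\bfW_{ij}$ — and they are Lagrangian for $\omega_\X$ (e.g.\ for real $\hbar$, because Hitchin fibers are of type $(B,A,A)$). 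The condition for such a support to carry an honest $A$-brane is not a degeneration of $\X$ but a Bohr–Sommerfeld quantization condition coming from the dimension formula, $\int_{\bfN_r}(F+B)/2\pi=\operatorname{vol}(\bfN_r)/\hbar=m\in\bZ_{>0}$, and it is exactly this condition that reproduces the shortening condition $q^{m}=\boldsymbol{t}^{-r}$, $r\in\sfR(D_4)$. Note that the shortening conditions are joint conditions on $q$ and $\boldsymbol{t}$, whereas the du Val conditions $\boldsymbol{t}^{r}=1$ are $q$-independent; conflating them is the root of the problem. Worse, at a du Val point the would-be vanishing cycle has zero symplectic volume, so by the same dimension formula your proposed branes would correspond to zero-dimensional morphism spaces $\Hom(\brane,\Bcc)$ — they cannot reproduce the $m$-dimensional modules. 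The dimension of the module is literally the (quantized, nonzero) volume of the supporting cycle, and the relevant cycles are those whose homology classes are $D_4$ roots, i.e.\ have self-intersection $-2$, at generic smooth $\X$.

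On morphisms, the paper does not argue abstractly via split-generation by simples and a comparison of $A_\infty$-algebras; it matches $\mathrm{Ext}$'s case by case: when two branes supported on fiber components intersect in $k$ points, it exhibits $k$ independent classes in $\Ext^1$ by explicit short exact sequences built from the polynomial representations $\scP^w$ and the raising/lowering operators (one extension per intersection point for $I_0^*$, $I_4$, $I_3$; a two-dimensional $\Ext^1$ for the doubly-intersecting components of $I_2$, realized by two different weights $w$). Your plan to reduce everything to the simples would also have to contend with the fact that the relevant abelian category at root-of-unity $q$ contains continuous families (e.g.\ the modules $\scF_m^{(x_m)}$ attached to generic fibers with varying holonomy), so ``finitely many simples generate'' is not automatic. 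Your closing list of analytic difficulties (defining the Fukaya category of the non-compact \HK surface, wall-crossing, split-generation) is fair and is indeed why the paper states Claim \ref{claim:2} as a claim supported by object- and morphism-level evidence rather than a theorem; but as written your construction of the objects would not produce the correct branes or dimensions, so the proposal needs to be rebuilt on the Hitchin-fiber components and the quantization condition rather than on du Val vanishing cycles.
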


One of the most profound lessons we have learned from homological mirror symmetry \cite{KontsevichICM} and the geometric Langlands program \cite{Kapustin:2006pk,Gukov:2006jk,Frenkel:2007tx} is the importance of treating the entire collection of boundary conditions as a category. This perspective reveals hidden structures in the category of boundary conditions. Although a precise definition of the $A$-brane category is still lacking, its physical meaning, behavior, and properties are fairly well understood and have been extensively studied in the context of 2d non-linear sigma-models.

The results outlined above provide a geometric perspective on the representation theory of $\SH$. However, brane quantization reveals even hidden structures at the categorical level. To see such hidden structures, we consider the quadruple of parameters $(\talpha,\tbeta,\tgamma,\teta)\in T_{D_4}\times \frakt_{D_4}\times \frakt_{D_4}\times T_{D_4}^\vee$, which parametrizes a family of 2d $\cN=(4,4)$ sigma-models. Among these parameters, the first three parameters $(\talpha,\tbeta,\tgamma)$ correspond to tame ramification parameters of a Higgs bundle at punctures, while $\teta$ originates from the $B$-field in the 2d sigma model. 

Depending on the choice of the symplectic form $\omega_\X = \Im \Omega$, the parameter space of the quadruple splits into two subspaces: one acts as the moduli space of complex structures whereas the other serves as the complexified \K moduli space. Since the $A$-model depends solely on the complexified \K moduli, variations along a loop in the complex structure moduli space (avoiding singularities) induce non-trivial transformations of the $A$-brane category \cite{Gukov:2006jk}.

In our setup, for a generic choice of the complexified \K parameter $v$, the group of transformations arising from loops in the complex structure moduli space is an affine braid group $\dt\Br_v$, whose explicit form will be presented in \S\ref{sec:category}. Therefore, we obtain the following claim for the categorical structure:

\begin{claim} \label{claim:3}
Given a complexified \K parameter $v$ for the 2d $A$-model on $\X$, the affine braid group  $\dt\Br_v$ acts on the category $\ABrane(\X,\omega_\X)$, and therefore on the representation category $\Rep(\SH)$ of the spherical DAHA, as the group of auto-equivalences. 
\end{claim}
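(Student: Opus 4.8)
The plan is to establish the braid group action geometrically, on the category $\ABrane(\X,\omega_\X)$, and then transport it to $\Rep(\SH)$ via the derived equivalence of Claim~\ref{claim:2} (together with the identification of non-compact branes in Claim~\ref{claim:1}). The starting point is the observation from \cite{Gukov:2006jk} that the $A$-model depends only on the complexified \K{}ahler data, so that the full quadruple $(\talpha,\tbeta,\tgamma,\teta)\in T_{D_4}\times\frakt_{D_4}\times\frakt_{D_4}\times T_{D_4}^\vee$ splits, for a fixed choice of $\omega_\X=\Im\Omega$, into a complexified \K{}ahler parameter $v$ (on which the $A$-brane category genuinely depends only up to canonical equivalence) and a residual family of complex structures. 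First I would make this split explicit: fix $v$, and describe the remaining complex-structure moduli space $\mathcal{B}_v$ as an open subset of a torus/affine space with a discriminant locus $\Delta_v$ removed, the locus where du Val singularities appear in $\X$ (as analyzed in \S\ref{sec:geometry} via the $D_4$ root system). Since the $A$-model is invariant under monodromy around $\Delta_v$ only up to an autoequivalence, one gets a homomorphism $\pi_1(\mathcal{B}_v\setminus\Delta_v,\ast)\to \mathrm{Auteq}\,\ABrane(\X,\omega_\X)$.

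The second step is to identify this fundamental group. Because the discriminant is cut out by the vanishing of the root-system data of $D_4$ (suitably affinized by the elliptic/Hitchin fibration structure), $\pi_1(\mathcal{B}_v\setminus\Delta_v)$ is — for generic $v$ — an affine braid group $\dt\Br_v$ of the appropriate type; this is where the $D_4$ combinatorics enters, and I would read off $\dt\Br_v$ from the arrangement of hyperplane/divisor components of $\Delta_v$ exactly as one reads a braid group off a hyperplane arrangement complement. Then I would construct the generating autoequivalences concretely: each simple loop encircles one component of $\Delta_v$, i.e. degenerates one vanishing cycle $S$ (a compact Lagrangian sphere, corresponding under Claim~\ref{claim:2} to a finite-dimensional $\SH$-module), and the associated autoequivalence is the spherical twist $T_S$ along that object, in the sense of Seidel--Thomas. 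The braid relations among the $T_S$ then follow from the intersection pattern of the vanishing cycles, which is precisely the $D_4$ Dynkin/affine-Dynkin incidence encoded in $H_2(\X,\bZ)\cong\dt\sfQ(D_4)$ from \eqref{dtWD4action}: spheres with disjoint (resp. transverse, single-point) intersection give commuting (resp. braiding) twists.

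The final step is transport: apply the derived equivalence $\RHom(-,\Bcc)$ of Claim~\ref{claim:2}, which is an exact equivalence of triangulated categories, to conjugate each geometric autoequivalence $T_S$ of $D^b\ABrane(\X,\omega_\X)$ into an autoequivalence of $D^b\Rep(\SH)$; since conjugation by an equivalence preserves all relations, the $\dt\Br_v$-action descends verbatim, and on the non-compact side it permutes the polynomial representation and its $D_4$-Weyl/triality images as in Claim~\ref{claim:1}, giving a consistency check. I expect the main obstacle to be Step~2–3: proving that the monodromy autoequivalences are \emph{exactly} spherical twists (rather than some more exotic functor) and that they satisfy the affine braid relations on the nose rather than merely up to natural isomorphism of functors — this requires either a careful local Picard--Lefschetz analysis near each component of $\Delta_v$ (a family of $A_1$, or higher du Val, degenerations) to pin down the functor, or an appeal to the expected behaviour of branes under such degenerations. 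A secondary subtlety is dependence on $v$: for special (non-generic) $v$ some components of $\Delta_v$ collide or the du Val type jumps, so the statement genuinely needs the genericity hypothesis, and I would delineate precisely which $v$ are excluded, again using the $D_4$ root-system stratification of \S\ref{sec:geometry}.
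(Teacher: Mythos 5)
Your overall strategy coincides with the paper's: fix the complexified K\"ahler datum $v$, let the residual ``complex-structure'' parameters vary, and declare that monodromy around the degeneration loci acts by auto-equivalences on $\ABrane(\X,\omega_\X)$, then carry this to $\Rep(\SH)$ through the functor \eqref{eq:functor0}. However, there is a concrete gap in your Step 2: you identify the group with $\pi_1(\mathcal{B}_v\setminus\Delta_v)$, the fundamental group of the discriminant \emph{complement}. That group is only the pure affine braid group. The paper's $\dt\Br_v$ is the \emph{orbifold} fundamental group of the quotient of the $\dt W_v$-regular locus in $\frakt_{D_4}^{(\alpha)}\times\frakt_{D_4}^{(\beta)}$ by the residual affine Weyl group $\dt W_v$ (the stabilizer of $v$, defined through the evaluation map \eqref{evaluation2}); it sits in the extension $1\to\pi_1\bigl((\frakt^{(\alpha)}_{D_4}\times\frakt^{(\beta)}_{D_4})^{\dt W_v\text{-reg}}\bigr)\to\dt\Br_v\to\dt W_v\to 1$. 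The quotient is essential because parameter points related by $\dt W_v$ define the same sigma-model, so loops in the quotient that project nontrivially to $\dt W_v$ are genuine monodromies; omitting the quotient loses exactly the ``reflection'' generators and you never obtain the full affine braid group of type $D_4$ (e.g.\ at $v=1$). You should also make explicit, as the paper does, how the splitting into $v$ and the residual parameters is defined for generic $\hbar=|\hbar|e^{i\theta}$, since for complex $\hbar$ it mixes $\talpha$ and $\tgamma$ and is not simply ``$\tgamma+i\teta$ versus $(\talpha,\tbeta)$''.

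Your Step 3 is a genuine departure rather than an error: the paper never attempts to prove that the monodromy functors are spherical twists satisfying the braid relations; it only remarks that the action is mirror to the Seidel--Thomas twist action on $B$-branes \cite{seidel2001braid}. In the paper's formulation no such verification is needed, because $\dt\Br_v$ is \emph{defined} as a fundamental group and acts by parallel transport of the family of $A$-models over the (orbifold) parameter space, following \cite{Gukov:2006jk}; the relations hold automatically at the level of the group, and the claim is only that this group maps to auto-equivalences. Pinning the functors down as twists along the vanishing Lagrangian spheres would be a stronger statement, and, as you yourself note, it is the hardest step of your plan --- so it is worth separating it from the claim actually being made. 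Finally, a minor point: Claim \ref{claim:2} only concerns the compact/finite-dimensional subcategories, so transporting the action to all of $\Rep(\SH)$ requires the full conjectural equivalence \eqref{eq:functor0}, which is also what the paper invokes.
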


In summary, the approach of brane quantization to the representation theory of $\SH$ uncovers the rich geometric and categorical structures that underlie these algebraic objects. This offers a more unified understanding of their properties and interrelations. With this motivation in mind, we now begin our exploration to ``see'' and ``understand'' these structures and relationships.

\bigskip

The structure of the paper is organized as follows. In \S\ref{sec:DAHA-CC1}, we provide a comprehensive overview of the double affine Hecke algebra (DAHA) of type $C^\vee C_1$, beginning with its defining relations and symmetry properties in \S\ref{sec:DAHA-CC1-subsec}. In \S\ref{sec:SH}, we introduce our main algebra, the spherical DAHA $\SH$ of type $C^\vee C_1$, emphasizing its symmetry and the connection to the $D_4$ root system. This is followed by a discussion of the polynomial representation of $\SH$ and the Askey-Wilson polynomials in \S\ref{sec:polyrep}. The section \S\ref{sec:DAHA-CC1} concludes with a classification of finite-dimensional representations derived from the polynomial representation.

Our motivation arises from the application of brane quantization to the target space $\X$ for the representation theory of $\SH$. To this end, we provide a detailed investigation of the geometry of the target space $\X$ in \S\ref{sec:geometry}. For this purpose, the identification of $\X \cong \MH(C_{0,4}, \SU(2))$ as the $\SU(2)$ Hitchin moduli space on the four-punctured sphere proves particularly useful. In \S\ref{sec:prelude}, we begin by revisiting the Seiberg-Witten theory for $\SU(2)$ with $N_f = 4$, providing a physical background and its connection to the Hitchin system. In \S\ref{sec:brane-quantization}, we present a detailed explanation of the framework of brane quantization, linking the geometry of $\X$ to the representation theory of $\SH$.
To prepare for the match between $A$-branes and representations of $\SH$, we first examine the Hitchin fibration in the special case where $\tbeta = 0 = \tgamma$ in \S\ref{sec:wallcrossing}, showing the connection to the affine $D_4$ root system. In particular, we explicitly construct the action \eqref{dtWD4action} of the affine Weyl group $\dt W(D_4)$ on the homology cycles through wall-crossing. Next, in \S\ref{sec:classification}, we classify the configurations of Kodaira singular fibers in the Hitchin fibration of $\MH(C_{0,4}, \SU(2))$ when the ramification parameter $\tgamma$ is turned on. 
In \S\ref{sec:volume}, we identify the generators of the second homology group and compute their volumes with respect to $\Omega$ for these configurations. The connection to the $D_4$ root system is particularly powerful throughout this geometric study. The section concludes with a discussion of the symmetry actions on homology cycles in \S\ref{sec:symmetry}.

In \S\ref{sec:brane-rep}, we explore the interplay between branes and representations within the framework of brane quantization. Specifically, \S\ref{sec:category} examines symmetry actions on the categories $\ABrane(\X,\omega_\X)$ and $\Rep(\SH)$, with a focus on the affine braid group action on the category (Claim \ref{claim:3}).  In \S\ref{sec:brane-poly}, we investigate the correspondence between non-compact $(A, B, A)$-branes and polynomial representations (Claim \ref{claim:1}). We then proceed with a detailed analysis of the correspondence between compact $A$-branes and finite-dimensional $\SH$-modules (Claim \ref{claim:2}).  We begin by considering the case where $\hbar$ is real, establishing the matching between objects and morphisms in the two categories. We first examine this matching in the context of generic fibers of the Hitchin fibration in \S\ref{sec:genericfiber}. 
To show solid evidence for the correspondence, we analyze their parameter spaces, dimensions, and shortening conditions on both sides, and further compare morphism structures. We start with the most degenerate singular fiber, the global nilpotent cone, and progress to less degenerate singular fibers. The section concludes with a discussion of the equivalence for generic values of $\hbar$ in \S\ref{sec:generic_hbar}.

\section{Double Affine Hecke Algebra of type  \texorpdfstring{$C^\vee C_1$}{C*C1}}\label{sec:DAHA-CC1}

In the late 1980s, Macdonald introduced multivariable $q$-symmetric polynomials \cite{macdonald1987orthogonal}, which form a $\bC_{q,t}$-basis for $\bC_{q,t}[X]^{\mathfrak{S}_n}$ and constitute an orthogonal system of polynomials with respect to the inner product defined in $\bC_{q,t}[X]^{\mathfrak{S}_n}$. The properties of these Macdonald symmetric polynomials $P_\lambda(X; q, t)$ can be explored using both difference operators and the associated inner product.

Cherednik reconstructed these $q$-difference operators based on the representation theory of the affine Hecke algebra \cite{cherednik1992double,cherednik1995double,cherednik1995macdonald,cherednik1995nonsymmetric}. In fact, Macdonald's theory of orthogonal polynomials can be extended to any admissible pair $(\Delta, \Delta')$ of affine root systems \cite{macdonald2003affine,stokman2020macdonald}. Given such a pair, one can construct a pair of affine Hecke algebras, referred to as the double affine Hecke algebra (DAHA) \cite{Cherednik-book}, which governs the associated Macdonald polynomials.

Through the works of \cite{Noumi1995,vD96,sahi1999nonsymmetric,sahi2000koornwinder,stokman2000koornwinder}, Macdonald-Cherednik theory has been further extended to non-reduced affine root systems, particularly to the type $(C_n^{\vee}, C_n)$, giving rise to the Koornwinder polynomials of this type \cite{koornwinder1992askey}.

The double affine Hecke algebra (DAHA) of type $(C_1^{\vee}, C_1)$ (in short $C^\vee C_1$), denoted by $\HH$, is the main focus of this paper and governs the Askey-Wilson polynomials \cite{askey1985some}, which are the rank-one Koornwinder polynomials. In addition to the deformation parameter $q$, the DAHA involves a set of $t$-parameters, whose number corresponds to the number of orbits in the associated affine Weyl group. For the $C^{\vee}C_1$ case, there are four such parameters, $(t_1, t_2, t_3, t_4)$, which appear both in the algebra and in the Askey-Wilson polynomials.

\subsection{DAHA of type \texorpdfstring{$C^\vee C_1$}{C*C1}}\label{sec:DAHA-CC1-subsec}

Let us now review some necessary details of DAHA of type $C^\vee C_1$.  
 For an admissible pair $C^\vee C_1$, DAHA is closely related to the fundamental group of a four-punctured sphere, which is generated by  $T_0, T_0^{\vee}, T_1,T_1^{\vee}$ as illustrated in Figure \ref{fig:daha generators}
\begin{equation}
\pi_1\left(S^2 \backslash \{p_1,p_2,p_3,p_4\}\right) =\langle T_0, T_0^{\vee}, T_1,T_1^{\vee} \mid T_0^\vee T_0 T_1 T_1^\vee=1\rangle.
\end{equation}
A deformation of the group algebra of this fundamental group is achieved by modifying the relation to
\begin{equation}
    T_0^\vee T_0 T_1 T_1^\vee=q^{-\frac{1}{2}}.
\end{equation}
Then, DAHA of type $C^\vee C_1$, denoted as $\HH$, can be constructed by the quotient of the algebra by four Hecke relations\footnote{We make a slight adjustment to the standard notation so that the parameters $t_i$ agree exactly with the geometric parameters, which will be important for later use. For the standard notation, we refer the reader to Appendix \ref{app:notation}.}, namely imposing the Hecke relation for each generator
\cite{sahi1999nonsymmetric}
\bea\label{CC1}
    (T_0+iq^{\frac{1}{2}}t_1)(T_0+iq^{-\frac{1}{2}}t_1^{-1})&=0~, \\
(T_0^\vee+it_2)(T_0^\vee+it_2^{-1})&=0~,\\
(T_1+it_3)(T_1+it_3^{-1})&=0~,\\
(T_1^\vee+it_4)(T_1^\vee+it_4^{-1})&=0~.
\eea
Thus, the algebra $\HH$ is parameterized by four parameters from the Hecke relations, $\boldsymbol{t} = (t_1, t_2, t_3, t_4) \in (\bC^\times)^4$, along with the deformation parameter $q\in \bC^\times$.

\begin{figure}
    \centering
    \includegraphics{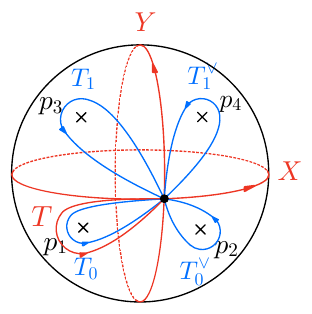} \qquad\qquad\qquad \includegraphics{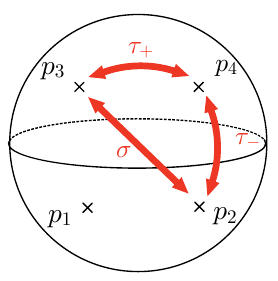}
    \caption{(Left) Two sets of generators of DAHA of type $C^\vee C_1$ presented by the fundamental group of a four-punctured sphere. (Right) Braid group action of $B_3$ on the four-punctured sphere.}
    \label{fig:daha generators}
\end{figure}

In \cite{cherednik1992double, Cherednik-book}, Cherednik showed that DAHAs associated with reduced root systems admit a projective action of $\SL(2, \bZ)$. For the DAHA of type $C^\vee C_1$, there is an action of the braid group $B_3$ \cite{noumi2004askey, stokman2003difference}, which can be visualized geometrically (see Figure \ref{fig:daha generators}). The braid group $B_3$ is generated by elements $\tau_\pm$, with the following relations:
\begin{equation}
 \sigma=\tau_+\tau_-^{-1}\tau_+=\tau_-\tau_+^{-1}\tau_-~. 
\end{equation}
The explicit action of the braid group on the generators of the DAHA is given by
\bea\label{braid_action_daha}
    \tau_+:(T_0,T_0^\vee,T_1,T_1^\vee)&\mapsto (T_0, T_0^\vee,T_1T_1^\vee T_1^{-1}, T_1)~,\cr
    (t_1,t_2,t_3,t_4)&\mapsto (t_1,t_2,t_4,t_3)~,\cr
    \tau_-:(T_0,T_0^\vee,T_1,T_1^\vee)&\mapsto ( T_0,T_0^{\vee-1}T_1^\vee T_0^\vee,T_1,T_0^\vee)~,\cr
    (t_1,t_2,t_3,t_4)&\mapsto (t_1,t_4,t_3,t_2)~,\cr
    \sigma:( T_0,T_0^\vee,T_1,T_1^\vee)&\mapsto \left(T_0, T_0^\vee,T_1T_1^\vee T_0^\vee (T_1T_1^\vee)^{-1}, T_1\right)~,\cr
    (t_1,t_2,t_3,t_4)&\mapsto (t_1,t_3,t_2,t_4)~.
\eea

For convenience, we introduce an alternative set of generators $X, Y, T$, which are related to the original ones as follows:
\begin{equation}
X=(T_0^\vee T_0)^{-1},\quad Y=T_0 T_1,\quad T=T_0~.
\end{equation}
In terms of these new generators, we have the following expressions:
\begin{equation}
T_0 = T,\quad T_0^\vee = X^{-1}T^{-1}, \quad T_1 = T^{-1}Y,\quad T_1^\vee = q^{-\frac{1}{2}}Y^{-1}TX~.
\end{equation}
For instance, if we specialize the parameters to
\be \label{A1Parameters}
(t_1,t_2,t_3,t_4)=(-iq^{-1/2}t,-i,-i,-i)~,
\ee 
the algebra reduces to the DAHA of type $A_1$, with  the defining relations:
\begin{equation}
T X T=X^{-1}, \quad T Y^{-1} T=Y, \quad(T-t^{-1})(T+t)=0, \quad     XYX^{-1}Y^{-1}T^2=q^{-1}~.
\end{equation}

\subsection{Spherical DAHA of \texorpdfstring{$C^\vee C_1$}{C*C1}}\label{sec:SH}

A distinguished subalgebra of the DAHA, known as the spherical subalgebra, is invariant under a certain Weyl group symmetry. We refer to this as the spherical DAHA, denoted by $\SH$, which will be the primary focus of this paper. 

It follows directly from the Hecke relation \eqref{CC1} that the element
\be
\mathbf{e} = \frac{iT_0 - q^{-\frac{1}{2}} t_1^{-1}}{q^{\frac{1}{2}} t_1 - q^{-\frac{1}{2}} t_1^{-1}}
\ee
is idempotent, satisfying $\mathbf{e}^2 = \mathbf{e}$ in $\HH$.\footnote{Given the four Hecke relations, there are indeed four corresponding idempotent elements  given by
\begin{equation}\label{idenpotents}
(\mathbf{e}_1,\mathbf{e}_2,\mathbf{e}_3,\mathbf{e}_4) = \left( \frac{iT_0 - q^{-\frac{1}{2}}t_1^{-1}}{q^{\frac{1}{2}}t_1 - q^{-\frac{1}{2}}t_1^{-1}}, \frac{iT_0^{\vee} - t_2^{-1}}{t_2 - t_2^{-1}}, \frac{iT_1 - t_3^{-1}}{t_3 - t_3^{-1}}, \frac{iT_1^{\vee} - t_4^{-1}}{t_4 - t_4^{-1}} \right).
\end{equation}
In this work, we select one of these idempotents to define the spherical subalgebra.
} 
The spherical DAHA is then defined by the idempotent projection
\be\label{idempotent-proj}
\SH = \mathbf{e} \HH \mathbf{e}~.
\ee 

The spherical DAHA $\SH$ is generated by the following generators \cite{Terwilliger2013}
\begin{align}
\begin{aligned}
&x=\left(T_0^\vee T_0+(T_0^\vee T_0)^{-1}\right)\mathbf{e}~=(X+X^{-1})\mathbf{e}~, \cr
&y=\left(T_0 ~ T_1+(T_0 ~ T_1)^{-1}\right)\mathbf{e}~=
(Y+Y^{-1})\mathbf{e}~, \cr
&z=\left(T_0 T_1^\vee+(T_0 T_1^\vee)^{-1}\right)\mathbf{e}~.
\end{aligned}
\end{align}
The algebraic structure of $\SH$ \cite{bullock2000multiplicative,Terwilliger2013} can be made explicit using $q$-commutator. By defining a $q$-commutator as
\be 
    [f,g]_q \equiv q^{-\frac{1}{2}}fg-q^{\frac{1}{2}}gf 
\ee
the generator of $\SH$ satisfies the following commutation relations
\bea\label{sDAHA_algebra}
&[x,y]_q=(q^{-1}-q)z+(q^{-\frac{1}{2}}-q^{\frac{1}{2}})\theta_3~, \cr
&[y,z]_q=(q^{-1}-q)x+(q^{-\frac{1}{2}}-q^{\frac{1}{2}})\theta_1~, \cr
&[z,x]_q=(q^{-1}-q)y+(q^{-\frac{1}{2}}-q^{\frac{1}{2}})\theta_2~,
\eea
where, using the notation $\overline{f} \equiv f+f^{-1}$, $\theta_i$ are expressed by 
\bea\label{theta_values}
    \theta_1&=\overline{t_1} \ \overline{t_2} + \overline{t_3} \ \overline{t_4}~,\cr
\theta_2&=\overline{t_1} \ \overline{t_3} + \overline{t_2} \ \overline{t_4}~,\cr
\theta_3&=\overline{t_1} \ \overline{t_4} + \overline{t_2} \ \overline{t_3}~.
\eea
In addition, the generators $x,y,z$ satisfy a Casimir relation 
\begin{equation}\label{q_cubic_eqn}
    -q^{-\frac{1}{2}}xyz+q^{-1}x^2+qy^2+q^{-1}z^2+q^{-\frac{1}{2}}\theta_1x+q^{\frac{1}{2}}\theta_2y+q^{-\frac{1}{2}}\theta_3z+\theta_4(q)=0~,
\end{equation}
where
\begin{equation}\label{Casemir_value}  \theta_4(q)=\overline{t_1}^2+\overline{t_2}^2+\overline{t_3}^2+\overline{t_4}^2+\overline{t_1} \ \overline{t_2} \ \overline{t_3} \ \overline{t_4}-q-q^{-1}-2~.
\end{equation}

As is evident from \eqref{sDAHA_algebra}, the spherical DAHA $\SH$ becomes commutative in the ``classical'' limit $q=1$, whereas the DAHA $\HH$ remains non-commutative even at $q=1$. Indeed, in the classical limit $q \to 1$, the Casimir relation \eqref{Casemir_value} reduces to the equation of an affine cubic surface:
\begin{equation}\label{cubic_eqn_2}
    -xyz+x^2+y^2+z^2+\theta_1x+\theta_2y+\theta_3z+\theta_4=0~,
\end{equation}
where
\begin{equation}   \theta_4=\overline{t_1}^2+\overline{t_2}^2+\overline{t_3}^2+\overline{t_4}^2+\overline{t_1} \ \overline{t_2} \ \overline{t_3} \ \overline{t_4}-4~.
\end{equation}
The earliest known appearance of the non-commutative cubic surface is found in Zhedanov’s Askey-Wilson algebra \cite{zhedanov1991hidden}. Its equivalence with the spherical DAHA $\SH$ was later established in \cite{koornwinder2008zhedanov}.
As we will see in \eqref{cubic_eqn_2}, this affine cubic surface describes the moduli space of flat $\SL(2, \bC)$-connections on a four-punctured sphere \cite{oblomkov2004cubic}. Therefore, in the classical limit, $\SH$ corresponds to the coordinate ring of the moduli space:
\be \label{classical-SH}
\SH \xrightarrow[q\to 1]{} 
 \OO(\MF(C_{0,4}, \SL(2, \bC)))~.
\ee 
In the next section, we will explore this point in greater detail from the perspective of the geometry of the affine cubic surface. For now, we turn our attention to the symmetries of the algebra.

The spherical DAHA $\SH$ exhibits three key types of symmetries \cite{Crampe:2020abq}, each of which plays an essential role in its structure.

\paragraph{Weyl group $W(D_4)$:}
The parameters $\theta_1, \theta_2, \theta_3, \theta_4$ are equal to the module characters of the Lie algebra $\mathfrak{so}(8) = D_4$. Specifically, the correspondence is:
\begin{align}\label{theta-SO8characters}
    \theta_1 = \chi_{\boldsymbol{8}_V}, \quad
    \theta_2 = \chi_{\boldsymbol{8}_S}, \quad
    \theta_3 = \chi_{\boldsymbol{8}_C}, \quad
    \theta_4 = \chi_{\boldsymbol{\mathrm{adj}}},
\end{align}
which are the characters of the vector, spinor, conjugate spinor, and adjoint representations of $\frakso(8)$, respectively. Note that $\overline{t_j}$ can be understood as the character of the fundamental representation of the corresponding $\fraksu(2)_j$ subalgebra.

The characters are invariant under this Weyl group $W(D_4)$ of $\frakso(8)$, and so is the $q$-deformed version $\theta_4(q)$ in \eqref{Casemir_value}. Consequently, the spherical DAHA is also invariant under $W(D_4)$. In fact, the idempotent projection \eqref{idempotent-proj} ensures that the algebra becomes invariant under this particular Weyl group $W(D_4)$.

\paragraph{Braid group $B_3$:} As seen in \eqref{braid_action_daha}, the full DAHA $\HH$ receives the action of the braid group $B_3$. In fact, the spherical subalgebra $\SH$ is invariant under this action. More explicitly, the generators of $B_3$ act on $\SH$ as
\begin{equation}
\label{braid_action_sdaha}
\begin{aligned}
\tau_+:~&
(x,y,z,\theta_{1},\theta_{2},\theta_{3},\theta_{4})\mapsto \left(x,q^{-\frac{1}{2}}(xy-q^{-\frac{1}{2}}z-\theta_3),y,\theta_{1},\theta_{3},\theta_{2},\theta_{4}\right)~,\cr
\tau_-:~&
(x,y,z,\theta_{1},\theta_{2},\theta_{3},\theta_{4})\mapsto \left(q^{-\frac{1}{2}}(xy-q^{-\frac{1}{2}}z-\theta_3),y,x,\theta_{3},\theta_{2},\theta_{1},\theta_{4}\right)~,\cr
\sigma:~&
(x,y,z,\theta_{1},\theta_{2},\theta_{3},\theta_{4})\mapsto \left(y,x,q^{-\frac{1}{2}}(xy-q^{-\frac{1}{2}}z-\theta_3),\theta_{2},\theta_{1},\theta_{3},\theta_{4}\right)~.
\end{aligned}
\end{equation}
In particular, the Casimir relation \eqref{q_cubic_eqn} is invariant under the action of $B_3$. Under the limit $q\to 1$ , these actions reduces to 
\bea\label{Classical_braid_action_sdaha}
\tau_+:~
(x,y,z,\theta_{1},\theta_{2},\theta_{3},\theta_{4})&\mapsto(x,xy-z-\theta_3,y,\theta_{1},\theta_{3},\theta_{2},\theta_{4})~,\cr 
\tau_-:~
(x,y,z,\theta_{1},\theta_{2},\theta_{3},\theta_{4})&\mapsto(xy-z-\theta_3,y,x,\theta_{3},\theta_{2},\theta_{1},\theta_{4})~,\cr
\sigma:~
(x,y,z,\theta_{1},\theta_{2},\theta_{3},\theta_{4})&\mapsto(y,x,xy-z-\theta_3,\theta_{2},\theta_{1},\theta_{3},\theta_{4})~.
\eea

Additionally, it is useful to consider a $\bZ_3$ cyclic permutation symmetry for the algebra, generated by
\begin{equation}
\label{A_3action}
s:
(x,y,z,\theta_{1},\theta_{2},\theta_{3},\theta_{4})\mapsto (y,z,x,\theta_{2},\theta_{3},\theta_{1},\theta_{4})~,
\end{equation}
with $s^3=\text{id}$. It is indeed a subgroup of the braid group $B_3$, as $s = \tau_+^{-1} \tau_-$. Indeed, it can be directly verified using the commutation relation \eqref{sDAHA_algebra} that
\begin{align}
    &-q^{-\frac{1}{2}}xyz+q^{-1}x^2+qy^2+q^{-1}z^2+q^{-\frac{1}{2}}\theta_1x+q^{\frac{1}{2}}\theta_2y+q^{-\frac{1}{2}}\theta_3z+\theta_4(q) \cr
    =&-q^{-\frac{1}{2}}yzx+q^{-1}y^2+qz^2+q^{-1}x^2+q^{-\frac{1}{2}}\theta_2y+q^{\frac{1}{2}}\theta_3z+q^{-\frac{1}{2}}\theta_1x+\theta_4(q)~,
\end{align}
thus the Casimir relation is indeed invariant under this permutation symmetry.

\paragraph{Sign-flip group $\bZ_2^{\times2}$:}
Lastly, there is a sign-flip symmetry  $\bZ_2\times \bZ_2=\bZ_2^{\times2}$ for the cubic surface, generated by $\xi_1$ and $\xi_2$:
\begin{equation}
\label{Z2action}
\begin{aligned}
\xi_1:~&
(x,y,z,\theta_{1},\theta_{2},\theta_{3},\theta_{4})\mapsto (-x,y,-z,-\theta_{1},\theta_{2},-\theta_{3},\theta_{4})~,\cr
\xi_2:~&
(x,y,z,\theta_{1},\theta_{2},\theta_{3},\theta_{4})\mapsto (x,-y,-z,\theta_{1},-\theta_{2},-\theta_{3},\theta_{4})~,\cr
\xi_3:~&
(x,y,z,\theta_{1},\theta_{2},\theta_{3},\theta_{4})\mapsto (-x,-y,z,-\theta_{1},-\theta_{2},\theta_{3},\theta_{4})~,\cr
\end{aligned}
\end{equation}
with $\xi_3=\xi_2\circ\xi_1$.

\subsubsection*{\texorpdfstring{$A_1$}{A1} limit of the DAHA}
The spherical DAHA of type $A_1$, which is studied in great detail in  \cite{Gukov:2022gei}, admits generators $x, y, z$ and parameters $q, t$, subject to the following relations:
\bea
    &[x, y]_q = (q^{-1} - q)z~, \cr
    &[y, z]_q = (q^{-1} - q)x~, \cr
    &[z, x]_q = (q^{-1} - q)y~, \cr
    -q^{-\frac{1}{2}}xyz + q^{-1}x^2 & + qy^2 + q^{-1}z^2 = q^{-1}t^2 + qt^{-2} + q + q^{-1}.
\eea
Comparing this with \eqref{sDAHA_algebra}, we observe that the $A_1$ limit is achieved by setting $\theta_j = 0$ for $j = 1, 2, 3$. From the definition of $\theta_j$ \eqref{theta_values}, there are two possible ways to take this limit:
\begin{equation}\label{A1_cond_1}
    \overline{t_k} = \overline{t_l} = \overline{t_m} = 0~,
\end{equation}
or
\begin{equation}\label{A1_cond_2}
    \overline{t_j} = -\overline{t_k} = -\overline{t_l} = -\overline{t_m}~,
\end{equation}
for any permutation $(j, k, l, m)$ of $(1, 2, 3, 4)$.

For the first condition \eqref{A1_cond_1}, the corresponding solutions are given by
\begin{equation}\label{A1_t_cond1}
    (t_j, t_k, t_l, t_m) = \left(\pm i \left(q^{-\frac{1}{2}}t\right)^{\pm1}, \pm i, \pm i, \pm i\right),
\end{equation}
where the signs can be chosen freely.
For the second condition \eqref{A1_cond_2}, the solutions take the form
\begin{equation}\label{A1_t_cond2}
    (t_j, t_k, t_l, t_m) = \left(t_*, -t_*^{\pm1}, -t_*^{\pm1}, -t_*^{\pm1}\right), \qquad t_* = \pm \left(\pm q^{-\frac{1}{2}}t\right)^{\pm\frac{1}{2}},
\end{equation}
where any choice of signs is allowed.

In the $A_1$ limit, the braid group action \eqref{braid_action_sdaha} reduces to the $\PSL(2,\bZ)$ action described in \cite{Gukov:2022gei}. Although the sign-flip symmetry remains, the symmetry of the Weyl group $W(D_4)$ is lost in this limit.

\subsection{Polynomial representation}\label{sec:polyrep}
As previously mentioned, the spherical DAHA serves as the underlying algebra governing $q$-difference operators acting on the Askey-Wilson polynomials. This connection is realized in the \emph{polynomial representation} of $\SH$ \cite{noumi2004askey, sahi1999nonsymmetric, stokman2003difference}, which acts on the space $\scP \equiv \bC_{q,\boldsymbol{t}}[X, X^{-1}]^{\bZ_2}$. Here, $\scP$ represents the space of symmetric Laurent polynomials with coefficients in the ring $\bC_{q,\boldsymbol{t}}$ of rational functions in $q$ and $\boldsymbol{t}$, obtained via the localization \eqref{localization}. The representation is given by
\begin{equation} \label{pol}
    \pol: \SH \to \End(\scP)~,
\end{equation}
and the actions of the generators are given by:
\begin{align}\label{pol2}
\begin{aligned}
    \pol(x) &= X + X^{-1}~, \\
    \pol(y) &= A(X; \boldsymbol{t})(\varpi - 1) + A(X^{-1}; \boldsymbol{t})(\varpi^{-1} - 1) - q^{\frac{1}{2}}t_1 t_3 - q^{-\frac{1}{2}}(t_1 t_3)^{-1}~, \\
    \pol(z) &= q^{\frac{1}{2}} \left[ X A(X) \varpi + X^{-1} A(X^{-1}) \varpi^{-1} \right] \\
    &\quad - \frac{X + X^{-1}}{q^{\frac{1}{2}} + q^{-\frac{1}{2}}} \left[ A(X) + A(X^{-1}) + q^{\frac{1}{2}} t_1 t_3 + q^{-\frac{1}{2}} t_1^{-1} t_3^{-1} \right] - \frac{\theta_3}{q^{\frac{1}{2}} + q^{-\frac{1}{2}}}~.
\end{aligned}
\end{align}
Here, we define the function $A(X; \boldsymbol{t})$ as
\begin{equation}
    A(X; \boldsymbol{t}) = -\frac{(1 + q^{\frac{1}{2}} t_1 t_2 X)(1 + q^{\frac{1}{2}} t_1 X / t_2)(1 + q^{\frac{1}{2}} t_3 t_4 X)(1 + q^{\frac{1}{2}} t_3 X / t_4)}{q^{\frac{1}{2}} t_1 t_3 (1 - X^2)(1 - q X^2)}~,
\end{equation}
and $\varpi$ is the $q$-shift operator, defined by:
\begin{equation}\label{qshift}
    \varpi \cdot f(X) = f(qX)~.
\end{equation}

A family of orthogonal symmetric polynomials of type $C^\vee C_{1}$, known as the Askey-Wilson polynomials, forms a basis for $\scP$. These polynomials $P_n$ ($n\in\bZ_{\ge0}$) can be expressed in terms of the basic hypergeometric series \cite{askey1985some} as
\begin{equation}\label{sym-Macdonald-A1}
P_{n}(X;q,\boldsymbol{t}):=\frac{(qab,qac,qad;q)_{n}}{q^{\frac{n}{2}}a^{n}(abcd q^{n+1};q)_{n}} {_4\phi_3}\left(\left.\begin{matrix}q^{-n}, q^{n+1} abcd,q^{\frac{1}{2}}aX,q^{\frac{1}{2}}aX^{-1}\\ qab,qac,qad\end{matrix}\right|q;q\right)~,
\end{equation}
where 
\begin{equation}\label{abcd-t}
    a=-t_1t_2,\quad b=-\frac{t_1}{t_2},\quad c=-t_3 t_4,\quad d=-\frac{t_3}{t_4}.
\end{equation}
Note that ${}_4\phi_{3}$ is the basic hypergeometric series, which can be expressed as a series expansion at $z=0$
\begin{equation}
    {}_4\phi_{3}\left(\left.\begin{matrix}a_1, a_2,a_3,a_4\\ b_{1},b_{2},b_{3}\end{matrix}\right|q;z\right)=\sum_{k=0}^{\infty} \frac{(a_1,a_2,a_3,a_4 ; q)_k }{(b_1,b_2,b_3,q ; q)_k} z^k .
\end{equation}
Here, we use the notation
\begin{equation}
(a_1, a_2, \ldots, a_m ; q)_n=(a_1 ; q)_n(a_2 ; q)_n\ldots(a_m ; q)_n ~,
\end{equation}
where we use the $q$-Pochhammer symbols
\begin{equation}
(a;q)_n=\prod_{k=0}^{n-1}(1-a q^k) ~.
\end{equation}

Under the polynomial representation \eqref{pol2}, the generators of $\SH$ act on the Askey-Wilson polynomial as
\bea\label{pol action of xyz}
\pol(x)\cdot P_{n}(X;q,\boldsymbol{t})&= P_{n+1}(X;q,\boldsymbol{t}) +B_n P_{n}(X;q,\boldsymbol{t})+C_n P_{n-1}(X;q,\boldsymbol{t})~~, \cr
\pol(y)\cdot P_{n}(X;q,\boldsymbol{t})&= \left( -q^{n+\frac{1}{2}}t_1t_3-q^{-n-\frac{1}{2}}t_1^{-1}t_3^{-1} \right) P_{n}(X;q,\boldsymbol{t})~~, \cr
\pol(z)\cdot P_{n}(X;q,\boldsymbol{t})&= -q^{n+1}t_1t_3 P_{n+1}(X;q,\boldsymbol{t})-q^{-n}t_1^{-1}t_3^{-1}C_n P_{n-1}(X;q,\boldsymbol{t}) \cr
&\quad +\left(B_n-q^{n+\frac{1}{2}}\frac{(t_1-t_3) (1-t_1 t_3) (t_2-t_4) (1-t_2 t_4)}{t_2 t_4 (1-t_1 t_3 q^{n}) (1-t_1 t_3 q^{n+1})} \right) P_{n}(X;q,\boldsymbol{t})~, 
\eea
where
   \begin{small}
\begin{align}
B_n&=\resizebox{0.29\textwidth}{!}{$\frac{-q^{n+\frac{1}{2}} }{t_2 t_4 (t_1^2 t_3^2 q^{2 n}-1) (t_1^2 t_3^2 q^{2 n+2}-1)}$} \Big(t_1 (t_2^2+1) t_4 \left\{1+t_3^2 \left[1+q^{2 n+1} t_1^2 (t_3^2+1) -q^n(q+1) (t_1^2+1) \right]\right\} \cr
& \qquad\qquad\qquad\qquad\qquad\qquad\qquad+t_2 t_3 (t_4^2+1) \left\{1+t_1^2 \left[ 1+ q^{2 n+1} t_3^2 (t_1^2+1) -q^n(q+1) (t_3^2+1) \right]\right\} \Big)~,
\cr
C_n &=\resizebox{0.93\textwidth}{!}{$\frac{(q^n-1) (t_1^2 q^n-1) (t_3^2 q^n-1) (t_1^2 t_3^2 q^n-1) (t_1 t_2 t_3 q^n-t_4) (t_1 t_3 q^n-t_2 t_4) (t_1 t_3 t_4 q^n-t_2) (t_1 t_2 t_3 t_4 q^n-1)}{t_2^2 t_4^2 (t_1^2 t_3^2 q^{2 n}-1)^2 (t_1^2 t_3^2 q^{2 n-1}-1) (t_1^2 t_3^2 q^{2 n+1}-1)}~.$} \nonumber
\end{align}\end{small}
We can define the raising and lowering operators in $\SH$ with respect to this basis \cite{sahi2007raising} as
\bea\label{raising_Lowering_operators}
     R_n=-q^{1-n}x -qt_1t_3 z+ D_n ~, \quad
   L_n=-q^{n+2}x -\frac{q}{t_1t_3} z+E_n~,
\eea
where
\begin{align}
D_n=\frac{q^{3/2}t_1t_3\left(q^{n+1}t_1t_3\theta_1-\theta_3\right)}{q^{2n+2}t_1^2t_3^2 -1}\nonumber ~,\quad
E_n=\frac{q^{3/2}t_1^{-1}t_3^{-1}\left(q^{-n}t_1^{-1}t_3^{-1}\theta_1-\theta_3\right)}{1-q^{-2n}t_1^{-2}t_3^{-2}}~.
\end{align}

\subsubsection*{Raising and lowering operators}
Under the polynomial representation \eqref{pol}, these operators raise and lower the labels $n$ of the Askey-Wilson polynomials as follows:
\begin{align}
    &\pol(R_n)\cdot P_{n}(X;q,\boldsymbol{t})= q^{1-n}\left(t_1^2t_3^2q^{2n+1}-1\right) P_{n+1}(X;q,\boldsymbol{t})~,\label{raising} \\
&\pol(L_n)\cdot  P_{n}(X;q,\boldsymbol{t})= -q^{1-n}\left(q^{n}-t_1^{-1}t_2t_3^{-1}t_4\right)\left(q^{n}-t_1^{-1}t_2t_3^{-1}t_4^{-1}\right)\left(q^{n}-t_1^{-1}t_2^{-1}t_3^{-1}t_4\right)\cr
&\hspace{1.1cm}\times\left(q^{n}-t_1^{-1}t_2^{-1}t_3^{-1}t_4^{-1}\right)\frac{(q^n-1)(q^n-t_1^{-2})(q^{n}-t_3^{-2})(q^{n}-t_1^{-2}t_3^{-2})}{\left(q^{2n}-t_1^{-2}t_3^{-2}\right)^2(q^{2n-1}-t_1^{-2}t_3^{-2})} P_{n-1}(X;q,\boldsymbol{t})~.\label{lowering}
\end{align}

As demonstrated above, the spherical DAHA $\SH$ is invariant under the Weyl group $W(D_4)$. However, the polynomial representation is \emph{not} invariant under $W(D_4)$. In fact, as we will see more explicitly, acting the Weyl group on the polynomial representation \eqref{pol2} yields eight distinct representations. 

Furthermore, incorporating the action of the permutation group $\bZ_3$, generated by \eqref{A_3action}, one can obtain 24 distinct representations in total from the polynomial representation. In this way, we can use the action of $W(D_4)$ to understand the representation theory of $\SH$.

\subsubsection*{Finite-dimensional representations}

A common way to construct a finite-dimensional representation is to find null vectors for the raising and lowering operators. The raising operators can never be null since the Askey-Wilson polynomial $P_{n+1}(X;q,\boldsymbol{t})$ always contains a factor $(t_1^2t_3^2q^{2n+1}-1)$ in the denominator that cancels the raising coefficient in \eqref{raising}. However, there are conditions where the lowering operator annihilates an Askey-Wilson polynomial, say $P_{n}(X;q,\boldsymbol{t})$, which becomes the lowest weight state of a sub-representation. In this way, a finite-dimensional $\SH$-module appears as the quotient $\scP/(P_{n})$ of the polynomial representation by the ideal $(P_{n})$. It is worth noting that not all finite-dimensional representations can be obtained in this way. 

Therefore, we can study finite-dimensional representations by imposing the condition $\pol(L_{n})\cdot P_{n}=0$. Namely, when the lowering coefficient \eqref{lowering} vanishes:
\begin{multline}\label{lowering_coefficient}
    -q^{1-n}(q^{n}-t_1^{-1}t_2t_3^{-1}t_4)(q^{n}-t_1^{-1}t_2t_3^{-1}t_4^{-1})(q^{n}-t_1^{-1}t_2^{-1}t_3^{-1}t_4)\cr
\times(q^{n}-t_1^{-1}t_2^{-1}t_3^{-1}t_4^{-1})\frac{(q^n-1)(q^n-t_1^{-2})(q^{n}-t_3^{-2})(q^{n}-t_1^{-2}t_3^{-2})}{(q^{2n}-t_1^{-2}t_3^{-2})^2(q^{2n-1}-t_1^{-2}t_3^{-2})}=0~.
\end{multline}
This amounts to the seven shortening conditions as follows
\begin{equation}\label{generic_shortening_cond}
q^{n}=1,~ t_1^{-2},~ t_3^{-2},~
t_1^{-1}t_2t_3^{-1}t_4,~
t_1^{-1}t_2t^{-1}_3t_4^{-1},~ t_1^{-1}t_2^{-1}t_3^{-1}t_4,~  t_1^{-1}t_2^{-1}t_3^{-1}t_4^{-1}~.
\end{equation}
Note that the shortening condition $q^n=t_1^{-2}t_3^{-2}$ is ruled out because some lowering operator becomes ill-defined due to the vanishing of the denominator.  

Taking into account the eight distinct representations obtained by the $W(D_4)$ action, the set of shortening conditions can be interpreted within the framework of the $D_4$ root system. The shortening conditions \eqref{generic_shortening_cond} can be identified with a subset of roots of type $D_4$. By using the Weyl group symmetry, all the shortening conditions for finite-dimensional representations obtained in this way can be neatly repackaged as
\be\label{finite dimensional representation}
    q^n = t_1^{-r_1} t_2^{-r_2} t_3^{-r_3} t_4^{-r_4}=:\boldsymbol{t}^{-r}, \qquad r \in \sfR(D_4)\cup\{(0,0,0,0)\}~.
\ee
(See Appendix \ref{app:notation} for the convention of the roots.)

When $q$ is not a root of unity, all finite-dimensional representations are rigid. 
In this regime, the finite-dimensional modules of the full DAHA $\HH$ were classified in
\cite{oblomkov2009finite}, where it was shown that every such module can be realized as a
quotient of a polynomial representation. 
Moreover, the algebras $\HH$ and $\SH$ are Morita equivalent, so their representation
theories are equivalent \cite{oblomkov2004cubic}. 
The precise correspondence between the polynomial representations used in
\cite{oblomkov2009finite} and those appearing in the present work is spelled out at the end
of Appendix~\ref{app:24lines}. 

We will return to this classification of finite-dimensional representations from the
perspective of brane quantization in \S\ref{sec:brane-rep}.

\subsubsection*{Deformation of the polynomial representation}

Let 
\[
\mathbb{K}_{q, \boldsymbol{t}} := \mathbb{C}\bigl(q^{1/2}, t_1, t_2, t_3, t_4 \bigr)
\]
be the field of rational functions in the parameters \( q^{1/2} \) and 
\(\boldsymbol{t} = (t_1, t_2, t_3, t_4)\) over \(\mathbb{C}\). If we enlarge the representation space to the ring of formal power series
\begin{equation}
\PR^{y_1} = \mathbb{K}_{q, \boldsymbol{t}}\llbracket X^{\pm1} \rrbracket
\end{equation}
in the variable \(X\) with coefficients in \(\mathbb{K}_{q, \boldsymbol{t}}\),
then we obtain a family of representations of \(\SH\) on \(\PR^{y_1}\), labeled by \(y_1 \in \mathbb{C}^\times\), deforming the standard polynomial representation \eqref{pol}:
\begin{equation}\label{poly-rep-y1}
\SH \longrightarrow \End\bigl(\PR^{y_1}\bigr).
\end{equation}
where the action of the generators \(x, y, z \in \SH\) on \(\PR^{y_1}\) is given by
\begin{align}\label{gen-pol}
\begin{aligned}
    \pol_{y_1}(x) &= X + X^{-1}, \\[4pt]
    \pol_{y_1}(y) &= A(X; \boldsymbol{t}) \bigl(y_1 \varpi - 1 \bigr) 
    + A(X^{-1}; \boldsymbol{t}) \bigl(y_1^{-1} \varpi^{-1} - 1 \bigr) 
    - q^{1/2} t_1 t_3 - q^{-1/2}(t_1 t_3)^{-1}, \\[4pt]
    \pol_{y_1}(z) &= q^{1/2} \bigl[\, X A(X;\boldsymbol{t})\, y_1 \varpi 
    + X^{-1} A(X^{-1};\boldsymbol{t})\, y_1^{-1} \varpi^{-1} \bigr] \\
    &\quad - \frac{X + X^{-1}}{q^{1/2} + q^{-1/2}} 
    \Bigl[\, A(X;\boldsymbol{t}) + A(X^{-1};\boldsymbol{t}) 
    + q^{1/2} t_1 t_3 + q^{-1/2} t_1^{-1} t_3^{-1} \Bigr] 
    - \frac{\theta_3}{q^{1/2} + q^{-1/2}}.
\end{aligned}
\end{align}

For generic $y_1\in\mathbb C^\times$, the eigenfunction of $y$ under the $\pol_{y_1}$ is constructed in \cite{koelink2001askey}, which is called the Askey-Wilson function:
\begin{multline}\label{AW-function}
    P_{y_1}(X;q,\boldsymbol{t})=\frac{(y_1 X,q(y_1X)^{-1};q)_\infty}{( X,qX^{-1};q)_\infty}\Bigg[{_4\phi_3}\left(\left.\begin{matrix}y_1^{-1}, y_1q abcd,q^{\frac{1}{2}}aX,q^{\frac{1}{2}}aX^{-1}\\ qab,qac,qad\end{matrix}\right|q;q\right)\\
    +K(X)\cdot {_4\phi_3}\left(\left.\begin{matrix}y_1 qbc, (y_1ad)^{-1},q^{\frac{1}{2}}d^{-1}X,q^{\frac{1}{2}}(dX)^{-1}\\ q(ad)^{-1},qbd^{-1},qcd^{-1}\end{matrix}\right|q;q\right)\Bigg]~,
\end{multline}
with
\begin{equation}
K(X)=\frac{\left((ad)^{-1},qbc,qad^{-1},qbd^{-1},qcd^{-1},y_1^{-1},y_1qabcd,q^{\frac{1}{2}}aX,q^{\frac{1}{2}}aX^{-1};q\right)_\infty}{\left(ad,qbc,qab,qac,qad^{-1},y_1qbc,(y_1ad)^{-1},q^{\frac{1}{2}}d^{-1}X,q^{\frac{1}{2}}(dX)^{-1};q\right)_\infty}~.
\end{equation}
Then, the eigenvalue is given by
\begin{equation}
    \pol_{y_1}(y)\cdot  P_{y_1}(X;q,\boldsymbol{t})= -\left( y_1 q^{\frac{1}{2}}t_1t_3+y_1^{-1}q^{-\frac{1}{2}}t_1^{-1}t_3^{-1} \right)  P_{y_1}(X;q,\boldsymbol{t})~~.
\end{equation}
If we set $y_1=q^n$, the function \eqref{AW-function} becomes the polynomial $P_n$ in \eqref{sym-Macdonald-A1} up to a factor
\be 
(-1)^n\frac{(qab,qac,qad;q)_{n}}{q^{\frac{n}{2}}a^{n}(abcd q^{n+1};q)_{n}} \ P_{y_1=q^n}(X;q,\boldsymbol{t})=P_n(X;q,\boldsymbol{t})~.
\ee

\section{Geometry of Coulomb Branch}\label{sec:geometry}

In this paper, we study the representation theory of the spherical DAHA $\SH$ via brane quantization in the
two-dimensional $A$-model whose target is the moduli space of flat $\SL(2,\bC)$-connections on a four-punctured sphere.
The purpose of this section is to describe the relevant geometry of the target space that will be used throughout the paper.

We begin by recalling how $\SH$ arises from the deformation quantization of the coordinate ring of the
$\SL(2,\bC)$ character variety of the four-punctured sphere. We then summarize the brane-quantization framework \cite{Gukov:2008ve}
that we use to relate $\SH$-modules to $A$-branes.
A central objective of this section is to identify compact Lagrangian submanifolds
in the target geometry, as we will identify them to
finite-dimensional $\SH$-modules in \S\ref{Sec:Brane_Finite_Rep}. For this purpose, we adopt the perspective of the Hitchin moduli space 
and analyze the Hitchin fibration, focusing in particular on the classification of its singular fibers.

Compared with the situation studied in \cite{Gukov:2022gei}, the presence of four ramification parameters makes the fibration structure
more involved. Nevertheless, the connection between the moduli space and the $D_4$ root system makes the underlying geometric structure transparent: it allows us to track the relevant cycles and their degenerations in a uniform way, and it will be used repeatedly in later sections.

Metaphorically, the target space of the 2d sigma-model serves as the stage upon which the main actors—branes—will appear. With that in mind, let us proceed by carefully setting the stage for the forthcoming analysis.

\subsection{Revisiting Seiberg-Witten theory of SU(2) with \texorpdfstring{$N_f=4$}{Nf=4}}\label{sec:prelude}

In the construction of DAHA of type $C^\vee C_1$, we observed that a four-punctured sphere emerges naturally. In the class $\cS$ construction for type $A_1$ \cite{Witten:1997sc,Gaiotto:2009we}, the four-punctured sphere corresponds to a 4d $\cN=2$ supersymmetric theory with gauge group $G=\SU(2)$ and four fundamental hypermultiplets ($N_f=4$). This theory has a rich history, dating back to the original Seiberg-Witten paper \cite{Seiberg:1994aj}. If all the hypermultiplets are massless, the theory is superconformal. Since the fundamental representation of SU(2) is pseudo-real, the eight fundamental half-hypermultiplets are subject to the SO(8) flavor symmetry. In fact, the geometry we are interested in naturally appears as the Coulomb branch of this theory compactified on $S^1$. Since the Seiberg-Witten theory plays a crucial role in understanding the geometry of the Coulomb branch, let us recall the Seiberg-Witten analysis of this theory.
 
The construction is associated to a Hitchin system  \cite{hitchin1987self} where the base curve is a four-punctured sphere $C_{0,4}$
\begin{equation}
\label{eq:spectral}
\begin{tikzcd}
\Sigma \arrow[r,hook]
& T^*C_{0,4} \arrow[d] \\
& C_{0,4}
\end{tikzcd}\qquad,
\end{equation}
where $\Sigma$ is a Seiberg-Witten (or spectral) curve.
(Historically, Garnier \cite{garnier1919classe} first introduced the notion of the spectral curve in genus zero.) 
Without loss of generality, the four ramification points can be placed at $\{0,1,q,\infty \}$ on a Riemann sphere where $q=e^{2\pi i \tau}$ can be interpreted as a complexified coupling constant $\tau=\frac{\theta}{\pi}+\frac{8\pi i}{g^2}$ of the 4d $\cN=2$ superconformal theory in the ultra-violet.

The data of the Hitchin system are as follows.  
Let $C_{0,4}$ be a Riemann sphere and $p_j \in C_{0,4}$ ($j=1,2,3,4$) four marked points. Then, the meromorphic Higgs bundles on $C_{0,4}$ are pairs $(E, \varphi)$ with $E$ a holomorphic vector bundle of rank two on $C_{0,4}$ and $\varphi: E \rightarrow E \otimes K_{C_{0,4}}(D)$ a meromorphic Higgs field with polar divisor $D=p_1+p_2+p_3+p_4$ where $K_{C_{0,4}}$ denotes the canonical bundle of $C_{0,4}$. Recall that the slope of a vector bundle $E$ is defined as $\mu(E)=\textrm{deg}(E)/\textrm{rk}(E)$. As usual $(E, \varphi)$ is (semi)stable if $\mu(E^{\prime}) \leqq \mu(E)$ for any $\varphi$ invariant subbundle $E^{\prime} \subset E$. The tame ramification at $p_j$ is described by the following conditions on the SU(2) connection $A$ on $E$ and the Higgs field
\bea\label{tame}
A &=  \alpha_j \,d\vartheta +\cdots \cr
\varphi&= {\frac12}(\beta_j+i\gamma_j){\frac{dz}{z}}+\cdots
\eea
Here, $z=re^{i\vartheta}$ is a local coordinate on a small disk centered at $p_j$, and the ramification data is a triple  $(\alpha_j,\beta_j,\gamma_j)\in \frakt\times \frakt \times \frakt$ where we denote the Cartan subalgebra $\frakt\subset\fraksu(2)$. 
By expressing them as conjugate to diagonal matrices
\bea
\textrm{diag}(\talpha_j, -\talpha_j) &\sim \a_j \in \frakt~, \cr
\textrm{diag}(\tbeta_j, -\tbeta_j) &\sim \b_j \in \frakt~, \cr
\textrm{diag}(\tgamma_j, -\tgamma_j) &\sim \g_j \in \frakt~,
\eea
we will often use the triple $(\a_j, \b_j, \g_j) \in \frakt^3$ interchangeably with the ramification parameters $(\talpha_j, \tbeta_j, \tgamma_j) \in \bR^3$ in the following discussions, by a slight abuse of notation.

The relation between masses $m_j$ of hypermultiplets and ramification parameters of a four-punctured sphere is given by \cite{Alday:2009aq,Yoshida:2025iae}
\begin{equation}\label{mass_monodromy}
(\tbeta_1+i\tgamma_{1},\tbeta_2+i\tgamma_{2},\tbeta_3+i\tgamma_{3},\tbeta_4+i\tgamma_{4})=\left(\frac{m_{1}+m_{2}}{2},\frac{m_{1}-m_{2}}{2},\frac{m_{3}+m_{4}}{2},\frac{m_{3}-m_{4}}{2}\right).
\end{equation}
The ramification parameters $\talpha_j$ are invisible in the 4d $\cN=2$ theory on $\bR^4$, but they show up as gauge holonomies around the corresponding puncture when compactified on $S^1$.

When the four hypermultiplets are massless, the Seiberg-Witten curve is given by \cite{Seiberg:1994aj} 
\begin{equation}
    y^2=\left(x-e_1(\tau) u\right)\left(x-e_2(\tau) u\right)\left(x-e_3(\tau) u\right)~,
\end{equation}
with $u$ the parameter of Coulomb branch and $e_i(\tau)$ are roots of the cubic polynomial $4x^3-g_2(\tau)x-g_3(\tau)$ obeying $e_1+e_2+e_3=0$. Note that $g_2(\tau)$ and $g_3(\tau)$ are appropriately normalized Eisenstein series
\begin{align}
\begin{aligned}
       g_2(\tau)&=\frac{60}{\pi^4}G_4(\tau)=\frac{60}{\pi^4}\sum_{m,n\in \bZ_{\neq0}}\frac{1}{(m\tau+n)^4}~,\cr
    g_3(\tau)&=\frac{140}{\pi^6}G_6(\tau)=\frac{140}{\pi^6}\sum_{m,n\in \bZ_{\neq0}}\frac{1}{(m\tau+n)^6}~.
\end{aligned}
\end{align}
The roots $e_i$ can be expressed by the Jacobi theta functions
\begin{align}
\begin{aligned}
    & e_1-e_2=\vartheta_3{ }^4(0, \tau)~, \cr
& e_3-e_2=\vartheta_2{ }^4(0, \tau)~, \cr
& e_1-e_3=\vartheta_4{ }^4(0, \tau)~,
\end{aligned}
\end{align}
where the Jacobi theta functions are defined by
\begin{align}\label{jacob_theta}
\begin{aligned}
    & \vartheta_2(0, \tau)=\sum_{n \in \bZ} q^{\frac{1}{2}(n+1 / 2)^2} ~, \cr
& \vartheta_3(0, \tau)=\sum_{n \in \bZ} q^{\frac{1}{2} n^2}~,
\cr
& \vartheta_4(0, \tau)=\sum_{n \in \bZ}(-1)^n q^{\frac{1}{2} n^2}~.
\end{aligned}
\end{align}
At generic mass parameters $m_j$, the Seiberg-Witten curve is obtained in \cite[Eqn. (16.38)]{Seiberg:1994aj}, which is
\begin{align}\label{SW-Nf4-generic}
y^2 &=\left(x^2-c_2^2 u^2\right)\left(x-c_1 u\right)-c_2^2\left(x-c_1 u\right)^2 \sum_i m_i^2-c_2^2\left(c_1^2-c_2^2\right)\left(x-c_1 u\right) \sum_{i>j} m_i^2 m_j^2 \cr
& +2 c_2\left(c_1^2-c_2^2\right)\left(c_1 x-c_2^2 u\right) m_1 m_2 m_3 m_4-c_2^2\left(c_1^2-c_2^2\right)^2 \sum_{i>j>k} m_i^2 m_j^2 m_k^2,
\end{align}
where $c_1=\frac{3}{2} e_1$ and $c_2=\frac{1}{2}(e_3-e_2)$.

When compactifying the 4d $\cN=2$ SU(2) gauge theory with $N_f=4$ hypermultiplets on $S^1$, the Coulomb branch becomes the moduli space $\MF(C_{0,4},\SL(2,\bC))$ of flat $\SL(2,\bC)$-connections on the four-punctured sphere \cite{Gaiotto:2009hg,Gaiotto:2010be} (a.k.a. the $\SL(2,\bC)$ character variety). Via the non-abelian Hodge correspondence \cite{hitchin1987self,corlette1988flat,simpson1988constructing,simpson1990harmonic}, the $\SL(2,\bC)$ character variety is diffeomorphic to the moduli space $\MH(C_{0,4},\SU(2))$ of the SU(2) Higgs bundles on $C_{0,4}$, (a.k.a. the Hitchin moduli space). The Hitchin moduli space $\MH(C_{0,4},\SU(2))$ is the space of pairs $(E,\varphi)$ of a holomorphic bundle $E$ and the corresponding Higgs field $\varphi$, imposing a stability condition called Hitchin equations \cite{hitchin1987self}:
\bea
\label{Hitchin-eq}
F-[\varphi, \overline \varphi]&=0~,\cr
\overline D_A\,\varphi&=0~.
\eea
This space is a \HK space \cite{konno1993construction,nakajima1996hyper,biquard2004wild} with three integrable complex structures $I$, $J$, $K$, and the corresponding \K forms are given by
\bea\label{kahler-form}
\omega_I &=
-{\frac{i}{2\pi}}\int_{C_{0,4}}|d^2z|\,\Tr\Bigl(\delta A_{\bar z}\wedge
\delta A_z-\delta \bar\varphi\wedge \delta\varphi\Bigr)~,\cr
\omega_J &={\frac{1}{2\pi}}\int_{C_{0,4}} |d^2z|\,\Tr\Bigl(
\delta\bar\varphi\wedge \delta A_z+\delta\varphi\wedge
\delta A_{\bar z} \Bigr)~~,\cr
\omega_K &=
{\frac{i}{2\pi}}\int_{C_{0,4}}|d^2z|\,\Tr\Bigl(\delta\bar\varphi
\wedge\delta A_z-\delta\varphi\wedge\delta A_{\bar z}\Bigr)~.
\eea

\begin{figure}[ht]
    \centering
    \includegraphics{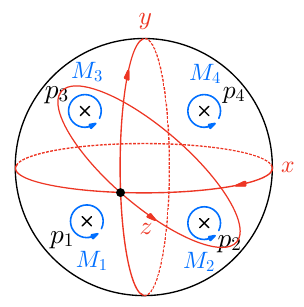}
    \caption{Generators of the fundamental group of a four-punctured sphere and the generators of spherical DAHA of type $C^\vee C_1$.}
    \label{fig:sdaha generators}
\end{figure}

The description of $\MH(C_{0,4}, G)$ as the Hitchin moduli space given above is in complex structure $I$, while the description as character variety $\MF(C_{0,4},\SL(2,\bC))$ arises in complex structure $J$. In this complex structure, a complex combination $A_{\bC} = A + i ( \varphi + \bar{\varphi})$, can be identified with an $\SL(2,{\bC})$-connection. The Hitchin equations then become the flatness condition $F_{\bC} = dA_{\bC} + A_{\bC} \wedge A_{\bC} = 0$ for this $\SL(2,{\bC})$-connection $A_{\bC}$. Part of the data for the tame ramification \eqref{tame} is encoded in a monodromy $M_j$ around the corresponding puncture $p_j$, which is conjugate to
\be 
M_{j}\sim\exp \left(-2 \pi\left(\gamma_j+i \alpha_j\right)\right)~
\ee
while $\tbeta_j$ is a \K parameter in this complex structure.

These monodromy matrices naturally define an $\SL(2,\bC)$ holonomy representation of the fundamental group of a four-punctured sphere (see Figure \ref{fig:sdaha generators}):
\begin{equation}\label{monodromyRep}
  \MF(C_{0,4},\SL(2,\bC)) = \langle M_1, M_2, M_3, M_4 \in \SL(2,\bC) \mid M_1M_2M_3M_4 = \textrm{Id} \rangle / \SL(2,\bC)~,
\end{equation}
where the quotient by $\SL(2,\bC)$ is taken with respect to conjugation. 

To describe the character variety geometrically, we introduce holonomy variables as holomorphic functions on $\MF(C_{0,4},\SL(2,\bC))$:
\begin{align}
   & x = -\Tr(M_1M_2)~, \quad y = -\Tr(M_1M_3)~, \quad z = -\Tr(M_2M_3)~, \cr
   & \overline{t_j} = \Tr(M_j)~, \qquad (j=1,2,3,4)~.
\end{align}
These variables are subject to the trace identity \cite{goldman2009trace}:
\begin{equation}\label{cubic_eqn}
    f(x,y,z) = -xyz + x^2 + y^2 + z^2 + \theta_1x + \theta_2y + \theta_3z + \theta_4 = 0~,
\end{equation}
where
\begin{align}\label{theta_values2}
\begin{aligned}
    \theta_1 &= \overline{t_1} \ \overline{t_2} + \overline{t_3} \ \overline{t_4}~, \cr
    \theta_2 &= \overline{t_1} \ \overline{t_3} + \overline{t_2} \ \overline{t_4}~, \cr
    \theta_3 &= \overline{t_1} \ \overline{t_4} + \overline{t_2} \ \overline{t_3}~, \cr
    \theta_4 &= \overline{t_1}^2 + \overline{t_2}^2 + \overline{t_3}^2 + \overline{t_4}^2 + \overline{t_1} \ \overline{t_2} \ \overline{t_3} \ \overline{t_4} - 4~.
\end{aligned}
\end{align}
As explained in \eqref{theta-SO8characters}, these are the characters of the SO(8) representations, which can be attributed to the SO(8) flavor symmetry present in the 4d $\cN=2$ SU(2) theory with $N_f=4$.
Thus, in complex structure $J$, the character variety $\MF(C_{0,4},\SL(2,\bC))$ is an affine variety described by this cubic equation. The four complex structure parameters $(t_1,t_2,t_3,t_4)$ are identified as
\be \label{tj}
t_j = \exp(-2\pi (\tgamma_j + i\talpha_j))~, \qquad (j=1,2,3,4).
\ee 
It becomes evident that in the classical limit $q\to1$, the spherical DAHA $\SH$ reduces to the coordinate ring of the character variety, as mentioned in \eqref{classical-SH}.

Physically, the holonomy variables $x, y, z$ can be interpreted as vacuum expectation values of loop operators along $S^1$ in the 4d $\cN=2$ theory. In the class $\cS$ construction, two M5-branes wrap the four-punctured sphere $C_{0,4}$, with the punctures realized by intersections of co-dimension two with other M5-branes. A line operator in the 4d $\cN=2$ theory is realized by an M2-brane attaching to a one-cycle on $C_{0,4}$ \cite{Alday:2009fs,Drukker:2009id}, and the one-cycles $x, y, z$ in Figure \ref{fig:sdaha generators} correspond to fundamental Wilson, 't Hooft, and dyonic loop operators, respectively:
\bea 
x \ & \longleftrightarrow \textrm{ Wilson loop}~,\cr 
y \ & \longleftrightarrow  \textrm{ 't Hooft loop}~,\cr
z \ & \longleftrightarrow  \textrm{ dyonic loop}~.
\eea 
Consequently, the algebra of loop operators gives rise to the coordinate ring $\scO(\MF(C_{0,4},\SL(2,\bC)))$ of the Coulomb branch holomorphic in complex structure $J$ \cite{Gaiotto:2010be,Teschner:2013tqy}. 

One can introduce the $\Omega$-background $S^1\times \bR\times_q \bR^2$, which effectively introduces a potential around the origin of the $\Omega$-deformation. As illustrated in Figure \ref{fig:omega-background}, the loop operators are localized along the axis of the $\Omega$-deformation and are forced to cross each other as they exchange positions. Consequently, the algebra of loop operators becomes non-commutative, providing a physical realization of the deformation quantization of the coordinate ring \cite{Nekrasov:2010ka,Ito:2011ea,Yagi:2014toa,Bullimore:2016nji,Dedushenko:2018icp,Okuda:2019emk}.

The deformation quantization we consider is with respect to the holomorphic symplectic form $\Omega_J = \omega_K + i\omega_I$ of complex structure $J$ (also known as the Atiyah-Bott-Goldman symplectic form), which is given by
\begin{equation}
    \Omega_{J}=-\frac{1}{2\pi}\frac{dx\wedge dy}{\partial f/\partial z}=-\frac{1}{2\pi }\frac{dx\wedge dy}{2z-xy+\theta_{3}}.
\end{equation}
In terms of the holomorphic symplectic form $\Omega_{J}$, the Poisson brackets of the generators of the coordinate ring are given by
\begin{equation}
\begin{aligned}
       \{x,y\}= -2\pi (2z-xy+\theta_3)~,\\
       \{y,z\}=-2\pi  (2x-yz+\theta_1)~,\\
       \{z,x\}=-2\pi  (2y-zx+\theta_2)~.\\
\end{aligned}
\end{equation}
Using the algebraic relation of the spherical DAHA $\SH$ in \eqref{sDAHA_algebra}, one can show that, with $q=e^{2\pi i \hbar}$, the Poisson brackets can be obtained by
\begin{equation}
    \eval{\frac{[-,-]}{i\hbar}}_{\hbar\rightarrow0}=\{-,-\}~,
\end{equation}
where $[x,y]=x y-y x$ is the ordinary commutator (not the $q$-commutator). This verifies that the spherical DAHA $\SH$ is indeed the deformation quantization of the coordinate ring with respect to the holomorphic symplectic form $\Omega_J$ \cite{oblomkov2004cubic}:
\begin{equation}\label{deformation of coordinate ring}
    \SH\cong \OO^{q}(\MF(C_{0,4},\SL(2,\bC)))~.
\end{equation}
In other words, the spherical DAHA $\SH$ is the algebra of loop operators in the 4d $\cN=2$ SU(2) gauge theory with $N_f=4$ hypermultiplets on the $\Omega$-background \cite{Cirafici:2020qlf,Yoshida:2025iae}.

Another perspective comes from the Kauffman bracket skein algebra \cite{turaev1991skein,przytycki2006skein} of a Riemann surface, which also provides a deformation quantization of the coordinate ring of the character variety on the surface with respect to $\Omega_J$ \cite{bullock1997rings,przytycki1997skein}. Indeed, the algebraic relations \eqref{sDAHA_algebra} and \eqref{q_cubic_eqn} for $\SH$ can also be derived as the skein algebra $\textrm{Sk}(C_{0,4})$ of a four-punctured sphere in \cite{bullock2000multiplicative}. This perspective has been explored extensively in the literature \cite{berest2016double,berest2018affine,Hikami:2019jaw,cooke2020kauffman,bousseau2023strong,Allegretti:2024svn}; see also \cite[\S2.5]{Gukov:2022gei} for discussions on the relationship between skein algebras/modules and brane quantization. Accordingly, we will not go into the details here, referring instead to these sources.

\begin{figure}
    \centering
    \includegraphics[width=0.9\linewidth]{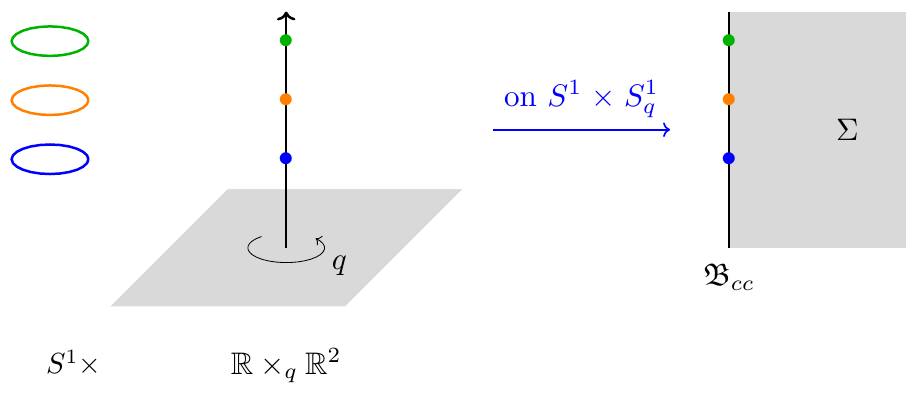}
    \caption{An algebra of line operators (colored circles) in a 4d $\cN=2$ theory becomes non-commutative in the $\Omega$-background $S^1 \times \bR \times_{q} \bR^2$, which provides deformation quantization of the holomorphic coordinate ring of the Coulomb branch. The 4d $\cN=2$ theory compactified on $S^1 \times S_q^1$ is described by 2d $A$-model $\Sigma \rightarrow \mathcal{M}_C$ on the Coulomb branch where the boundary condition at $\partial \Sigma$ is given by $\Bcc$. Here $\bR^2 \supset S_q^1$ is the circle generating the $\Omega$-deformation.}
    \label{fig:omega-background}
\end{figure}

\subsection{Brane quantization}\label{sec:brane-quantization}

As discussed above, the spherical DAHA $\SH$ arises from the deformation quantization of the coordinate ring of
\be \label{target}
\X = \MF(C_{0,4}, \SL(2,\bC))
\ee
with respect to the holomorphic symplectic form $\Omega_J$.  This space is both an affine variety and a \HK manifold. These properties naturally place it into the framework of brane quantization \cite{Gukov:2008ve} in a 2d sigma-model with target space $\X$. Brane quantization provides a geometric approach to the representation theory of $\SH$, which is the main focus of this paper. We will briefly review the brane quantization method here while referring the reader to sections 2.3 and 2.4 of \cite{Gukov:2022gei} for a more detailed treatment in the context of DAHA.

We consider the topological $A$-model on a symplectic manifold $(\X, \omega_\X)$, where quantization is achieved via open strings in the $A$-model. Brane quantization incorporates both deformation quantization and geometric quantization simultaneously, naturally providing the algebra and its representation, respectively. The boundary conditions of open strings are determined by geometric data in the target space, known as $A$-branes.  Depending on whether one is dealing with deformation quantization or geometric quantization, one considers two types of $A$-brane: the canonical coisotropic brane $\Bcc$ and the Lagrangian branes $\brane_\bfL$. 

The deformation quantization, which provides the algebra, is achieved by the canonical coisotropic brane, specifically via open $(\Bcc, \Bcc)$-strings. The canonical coisotropic brane, which can be figuratively described as the ``big $A$-brane,'' is a holomorphic line bundle over the target space itself:
\be
\label{Bcc}\Bcc:\quad
\begin{tikzcd}
\cL \arrow[d ] \\
\X
\end{tikzcd}  \qquad\qquad c_1(\cL)=[F/2\pi]\in H^2(\X,\Z)~.
\ee
As usual, the curvature $F$ forms a gauge invariant combination with the 2-form $B$-field in the 2d sigma-model, given by $F + B$, where
\be\label{B-field}
B \in H^2(\X, \U(1))~.
\ee 
The parameter of the deformation quantization can be written as
\be 
q = e^{2\pi i \hbar}, \qquad \hbar \in \bC^\times.
\ee 
The canonical coisotropic brane $\Bcc$ is parameterized by $\hbar$ on the symplectic manifold $(\X, \omega_\X)$, with the following structure:
\be\label{Bcc-Omega}
\Omega := F + B + i\omega_\X = \frac{\Omega_J}{i\hbar}~.
\ee
At a generic value of $\hbar = |\hbar| e^{i\theta}$, we can express the real and imaginary parts of $\Omega$ as
\be 
\begin{aligned}\label{generic-Bcc}
F + B &= \Re\Omega = \frac{1}{|\hbar|} (\omega_I \cos\theta - \omega_K \sin\theta)~, \cr
\omega_\X &= \Im\Omega = -\frac{1}{|\hbar|} (\omega_I \sin\theta + \omega_K \cos\theta)~,
\end{aligned}
\ee
so the symplectic form $\omega_\X$ depends on the value of $\hbar$.
Thanks to the \HK structure $\omega_{\X}^{-1}(F+B)=J$,  the brane automatically satisfies the condition $(\omega_\X^{-1}(B + F))^2 = -1$, which is required for a coisotropic $A$-brane \cite{Kapustin:2001ij}. With this setup, the space of open $(\Bcc, \Bcc)$-strings gives rise to the deformation quantization of the coordinate ring on $\X$, holomorphic in $J$.

To see the connection between the 4d $\cN=2$ theory and the 2d sigma-model, we compactify the 4d $\cN=2$ theory on $T^2 \cong S^1 \times S^1_q$, as illustrated in Figure \ref{fig:omega-background}. This compactification yields a 2d sigma-model $\bR \times \bR_+ \cong \Sigma \to \MF(C_{0,4}, \SL(2,\bC))$ on the Coulomb branch. Here, $S^1_q \subset \bR^2$ is a circle that encircles the axis of the $\Omega$-background, where the loop operators intersect. By the state-operator correspondence, loop operators in the 4d $\cN=2$ theory map to states in $\Hom(\Bcc, \Bcc)$. Thus, upon the compactification, the canonical coisotropic brane condition $\Bcc$ naturally emerges from the ``axis of the $\Omega$-deformation'' $\partial\Sigma$ (or the tip of the cigar, as described in \cite{Nekrasov:2010ka}).

The study on the representation theory of $\OO^q(\X)$ in the context of the 2d $A$-model originates from a simple idea: given an $A$-brane boundary condition $\brane'$, the space of open strings between $\Bcc$ and $\brane'$ forms a vector space $\Hom(\brane',\Bcc)$. As illustrated on the right side of Figure~\ref{fig:Bcc-algebra-rep}, the joining of $(\Bcc,\Bcc)$ and $(\Bcc,\brane')$-string produces another $(\Bcc,\brane')$-string. This implies that the space of $(\Bcc,\brane')$-strings receives an action of the algebra of $(\Bcc,\Bcc)$ strings \cite{Gukov:2008ve}. In other words, $A$-branes $\brane'$ on~$\X$ correspond to representations of $\OO^q(\X)$:
\be\label{HHrep}
\begin{array}{ccc}
 \OO^q(\X) & = & \Hom(\Bcc,\Bcc) \\
 \rotatebox[origin=c]{180}{$\circlearrowright$} & & \rotatebox[origin=c]{180}{$\circlearrowright$}  \\
 \repB' & = & \Hom(\brane',\Bcc)~.
\end{array}
\ee
In other words, brane quantization naturally proposes a derived functor
\be \label{eq:functor}
\RHom(-,\Bcc): D^b\ABrane(\MS,\omega_\MS) \to D^b \Rep(\OO^q(\MS))~,
\ee 
which conjecturally provides a derived equivalence between the category of $A$-branes and the derived category of $\OO^q(\MS)$-modules.

\begin{figure}[ht]
	 \centering
		 \includegraphics[width=4.5cm]{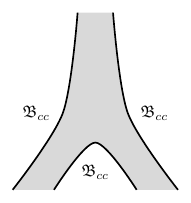} \hspace{3cm}
		 \includegraphics[width=4.5cm]{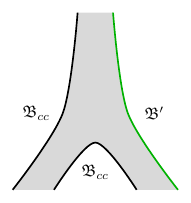}
 \caption{(Left) Open strings that start and end on the same brane $\Bcc$ form an algebra. \\ (Right) Joining a $(\Bcc,\Bcc)$-string with a $(\Bcc,\brane')$-string leads to another  $(\Bcc,\brane')$-string.}
 \label{fig:Bcc-algebra-rep}
\end{figure}

As the target symplectic manifold $(\X,\omega_\X)$ is of quaternionic dimension one, $A$-branes of the other types are all Lagrangian $A$-branes, namely it has a Lagrangian submanifold $\bfL$ as its support, endowed with a flat $\text{Spin}^{c}$-bundle:
\begin{equation}\label{Lag-brane}
\brane_\bfL:\begin{gathered}
\mathcal{L}^{\prime} \otimes K_{\bfL}^{-1 / 2} \\
\downarrow \\
\bfL
\end{gathered}
\end{equation}
where $K_{\bfL}^{-1 / 2}$ is a square root of the canonical bundle of $\bfL$, which gives rise to a $\text{Spin}^{c}$ structure when $\mathcal{L}^{\prime}$ is not a genuine line bundle. (See \cite{Gukov:2008ve,Gukov:2022gei} for more details.) The subtlety of $\text{Spin}^{c}$ structures appears only when we consider bound states of $A$-branes, and both $\cL'$ and $K_\L^{-1/2}$ exist as genuine line bundles in most of the examples in this paper since all the Lagrangian submanifolds considered are of real two dimensions. Additionally, a Lagrangian $A$-brane must satisfy the flatness condition 
\begin{equation}\label{deformed-flat}
    F^{\prime}+B|_{\bfL}=0~,~
\end{equation}
where $F^{\prime}$ is the curvature of $\cL'$, and $F^{\prime}+B$ is the gauge-invariant combination as before.

Furthermore, a Lagrangian $A$-brane carries a natural \emph{grading datum} determined by the \emph{Maslov index}~\cite{seidel2000graded}. Since gradings play a role in our discussion, let us briefly review the definition of the Maslov index.  

The \emph{Lagrangian Grassmannian} $\LGr(2n)$ of the standard symplectic vector space $(\R^{2n},\omega)$ is the set of all Lagrangian subspaces of $\R^{2n}$. The subgroup $\U(n)\subset \mathrm{GL}(2n,\R)$ preserves both the symplectic and orthogonal structures, and it acts transitively on $\LGr(2n)$. The stabilizer of a fixed Lagrangian subspace is $\Or(n)$, so $\LGr(2n)$ can be identified with a homogeneous space
\begin{equation}
  \LGr(2n) = \U(n)/\Or(n)~.
\end{equation}
There is a natural map
\begin{equation}\label{det2}
  {\det}^2 : \LGr(2n) \to \U(1)
\end{equation}
whose induces an isomorphism on the fundamental groups. The \emph{Maslov index}~\cite{ArnoldMaslov} of a loop in $\LGr(2n)$ is defined as its image in $\pi_1(\U(1)) \cong \Z$ under this map. The universal cover $\tLGr(2n)$ has deck transformation group $\Z$, and the Maslov index of a loop records the $\Z$-valued displacement of a lift of the loop to $\tLGr(2n)$.

Given a symplectic manifold $(\X,\omega_\X)$, one can assemble these constructions into the Lagrangian Grassmannian bundle
\begin{equation}
  \LGr(\X) \to \X
\end{equation}
whose fiber over $x \in \X$ is $\LGr(T_x\X)$.  
We similarly define a bundle $\tLGr(\X)$ as a covering space of $\LGr(\X)$ such that the projection map is fiberwise the universal covering map.

For a Lagrangian submanifold $\L \subset \X$, the tangent spaces define a Gauss map
\be \label{Gauss}
g_\L: \L \hookrightarrow \LGr(\X), \quad x \mapsto T_x \L \in \LGr(T_x \X) ~.
\ee 
Using \eqref{det2}, we can pull back the generator $d\theta \in H^1(\U(1),\Z)$ along the composition ${\det}^2 \circ g_\L$, obtaining a cohomology class
\[
  \mu_\L := ({\det}^2 \circ g_\L)^* d\theta \;\in\; H^1(\L,\Z),
\]
called the \emph{Maslov class} of $\L$.  

In fact, the Maslov class $\mu_\L$ precisely measures the obstruction to lifting the Gauss map $g_\L$ to $\tLGr(\X)$. If $\mu_\L = 0$, then $\L$ admits a \emph{grading}, defined by a lift $g : \L \to \tLGr(\X)$ making the following diagram commute:
\begin{equation}\label{grading}
\begin{tikzcd}
\tLGr(\X) \arrow{r}{\cdot/\Z} & \LGr(\X) \ar[d] \\
\L \arrow[dashed]{u}{g} \arrow{r}{\subset} \arrow[hook]{ru}{} & \X
\end{tikzcd}
\end{equation}
The set of such lifts forms a $\Z$-torsor under deck transformations. In particular, there is no canonical choice of grading: different lifts differ by integer shifts, mirroring the shift functor in the $A$-brane category. Thus, given a Lagrangian object in $\ABrane(\X,\omega_\X)$, the choice of graded lift encodes its possible shifts.

Now, we consider the morphism space in the $A$-brane category. 
The space of $(\Bcc,\brane_\bfL)$-open string arises from the geometric quantization of $\bfL$, namely the space of holomorphic sections $\Bcc\otimes\brane_\bfL^{-1}$ over $\bfL$. Hence, when the support is a compact Lagrangian submanifold, one can employ     the $B$-model perspective to compute the dimension of the representation space $\Hom(\brane_\bfL,\Bcc)$ using Hirzebruch-Riemann-Roch formula:
\begin{equation}
\begin{aligned}
\dim \scL &=\dim \Hom(\brane_\bfL,\Bcc)\cr 
&=\dim H^0(\bfL, \Bcc \otimes \brane_{\bfL}^{-1}) \\
&=\int_{\bfL} \operatorname{ch}(\Bcc) \wedge \operatorname{ch}(\brane_{\bfL}^{-1}) \wedge \operatorname{Td}(T \bfL)~,
\end{aligned}
\end{equation}
where $\operatorname{Td}(T \bfL)$ is the Todd class of the tangent bundle of $\bfL$. Since $\bfL$ is real two-dimensional in our case, the Todd class $\operatorname{Td}(T \bfL)$ is equal to $\operatorname{ch}(K_{\bfL}^{-1/2})$.  Thus, the dimension formula is simplified to
\begin{equation}\label{dimformula}
    \dim \scL  = \int_{\bfL} \operatorname{ch}(\Bcc)= \int_{\bfL} \frac{F+B}{2\pi}~,
\end{equation}
for a real two-dimensional Lagrangian $\bfL$. In this way, we can explicitly provide the dimension of a finite-dimensional representation corresponding to a Lagrangian $A$-brane with compact support. 

The main goal of this paper is to study the representation theory of DAHA of $C^\vee C_1$ using brane quantization. This approach gives a geometric perspective on $\SH$-modules. Moreover, as brane quantization is conjectured to establish derived equivalence on the categories through \eqref{eq:functor}, it offers new insights into categorical structures, including the auto-equivalence group of these categories. To study the $A$-model in the character variety $\X$, we proceed to examine the geometry of $\X$ in gory detail.

\subsection{Affine Weyl group action via Picard-Lefschetz}\label{sec:wallcrossing}

The affine cubic surface \eqref{cubic_eqn} has been the subject of study since the 19th century, with a history spanning over a century \cite{vogt1889,fricke_klein_1897}.
Notably, the character variety $\X=\MF(C_{0,4}, \SL(2,\bC))$ naturally arises in the isomonodromic deformation problem of Painlev\'e VI equation \cite{Jimbo1982}.
(See \cite{boalch2005klein,mazzocco2018embedding,Crampe:2020abq} and references therein for more details.) 
The non-abelian Hodge correspondence identifies $\X$ with the corresponding moduli space of Higgs bundles. Following developments in Seiberg-Witten theory \cite{Seiberg:1994aj}, this space has been intensively investigated even from physics perspective. Nonetheless, this paper further investigates $\X$, uncovering its new aspects. As demonstrated in \cite{Gukov:2022gei}, Hitchin fibration is the key to understanding compact Lagrangian submanifolds in $\X$. Moreover, the remainder of this subsection aims to reveal intrinsic connections between the geometry of $\X$ and the root system of type $D_4$.

The root system of type $D_4$ appears naturally in the classification of du Val singularities of the character variety $\X$. For specific values of the monodromy parameters $\boldsymbol{t}$, the character variety $\X$ develops du Val singularities (a.k.a. $ADE$ singularities) around which the space is locally $\bC^2/\Gamma_{ADE}$ with $\Gamma_{ADE}$ being a finite subgroup of $\SU(2)$ classified by the $ADE$ types.  These singularities arise when the discriminant of the cubic equation  \eqref{cubic_eqn} vanishes \cite{iwasaki2002modular}:
\begin{equation}\label{discriminant_cubic_surface}
\Delta(\boldsymbol{t})=\left[\prod_{\epsilon_1\epsilon_2\epsilon_3=1}\left(\overline{t_4}+\sum_{j=1}^3\epsilon_j\overline{t_j}\right)-\prod_{j=1}^3(\overline{t_j} \ \overline{t_4}-\overline{t_k} \ \overline{t_l})\right]^2\prod_{j=1}^4(\overline{ t_j}^2-4)~,
\end{equation}
where $\epsilon_j=\pm 1$, and $(j,k,l)$ is the positive permutation of $(1,2,3)$ for any $j$. In fact, the discriminant can be expressed in a remarkably concise way, using the set $\sfR(D_4)$ of the $D_4$ roots:
\begin{equation}
   \Delta(\boldsymbol{t})=\prod_{r\in\sfR(D_4)} (\boldsymbol{t}^{r}-1)~,
\end{equation}
where $\boldsymbol{t}^{r}=t_1^{r_1}t_2^{r_2}t_3^{r_3}t_4^{r_4}$. (See Appendix \ref{app:notation} for the notation of the $D_4$ roots.) 
Furthermore, the specific $ADE$ type of each singularity can be identified by embedding the corresponding root system into $\sfR(D_4)$. Taking into account the multiplicity of singularities, a singularity of type $\frakg$ emerges when
\be \label{ADE-condition}
\boldsymbol{t}^{r} =1 ~, \qquad \forall r \in \sfR(\frakg) \hookrightarrow \sfR(D_4) 
\ee 
where $\frakg$ is a Lie subalgebra of $D_4$ so that $\sfR(\frakg)$ is the subset of $\sfR(D_4)$ \cite{Iwasaki2007FiniteBS}. In this manner, we can consistently classify the conditions and types of du Val singularities using the root system of type $D_4$, as summarized in Table \ref{tab:surface_sing_classification}.

\begin{table}[ht]
\centering
\renewcommand{\arraystretch}{1.3}
\begin{tabular}{c|c|c}
 Types& Conditions of $\boldsymbol{t}$ &Examples\\[.5em]
\hline 
 \multirow{1.4}*{$A_1$}&\multirow{1.4}*{$\boldsymbol{t}^{r}=1,\ \ \forall r\in \sfR(A_1)\hookrightarrow \sfR(D_4)$} &\multirow{1.4}*{$t_1^2=1$ \text{or} $t_1t_2t_3 t_4=1$}\\[.5em]
\hline
\multirow{1.4}*{$A_1^{\oplus2}$}&\multirow{1.4}*{$\boldsymbol{t}^{r}=1,\ \ \forall r\in \sfR(A_1^{\oplus2})\hookrightarrow \sfR(D_4)$} &\multirow{1.4}*{$t_1^2=t_2^2=1$}\\[.5em]
\hline
\multirow{1.4}*{$A_2$}&\multirow{1.4}*{$\boldsymbol{t}^{r}=1,\ \ \forall r\in \sfR(A_2)\hookrightarrow \sfR(D_4)$} &\multirow{1.4}*{$t_1^2=t_1t_2t_3t_4=1$}\\[.5em]
\hline
\multirow{1.4}*{$A_1^{\oplus 3}$}&\multirow{1.4}*{$\boldsymbol{t}^{r}=1,\ \ \forall r\in \sfR(A_1^{\oplus3})\hookrightarrow \sfR(D_4)$} &\multirow{1.4}*{$ t_1^2=t_1t_2t_3t_4=t_1t_2t_3t_4^{-1}=1$}\\[.5em]
\hline
 \multirow{1.4}*{$A_3$}&\multirow{1.4}*{$\boldsymbol{t}^{r}=1,\ \ \forall r\in \sfR(A_3)\hookrightarrow \sfR(D_4)$} &\multirow{1.4}*{$t_1^2= t_2^2=t_1t_2t_3 t_4=1$}\\[.5em]
\hline
 \multirow{1.4}*{$A_1^{\oplus 4}$}&\multirow{1.4}*{$\boldsymbol{t}^{r}=1,\ \ \forall r\in \sfR(A_1^{\oplus4})\hookrightarrow \sfR(D_4)$} &\multirow{1.4}*{$ t_1^2=t_2^2=t_3^2=t_4^2=1$,~ $t_1t_2t_3t_4=-1$}\\[.5em]
\hline
 \multirow{1.4}*{$D_4$}&\multirow{1.4}*{$\boldsymbol{t}^{r}=1,\ \ \forall r\in\sfR(D_4)$} &\multirow{1.4}*{$ t_1=t_2=t_3=t_4=1$}\\[.5em]\end{tabular}
 \caption{Classification of du Val singularities and the corresponding conditions on $\boldsymbol{t}$. Each row represents a type of singularity characterized by the embedding of a sub-root system into $\sfR(D_4)$. The second column specifies the conditions on $\boldsymbol{t}$, where $\boldsymbol{t}^r = 1$ holds for all roots $r$ in the respective sub-root system. The third column provides illustrative examples of these conditions. The type of the sub-root system matches that of the du Val singularity. } \label{tab:surface_sing_classification}
\end{table}

Certainly, there are more intricate connections between the geometry of $\X$ and the $D_4$ root system. 
In complex structure $I$, the Hitchin moduli space admits the Hitchin fibration over $\cB_H$ is an affine space, called the Hitchin base \cite{hitchin1987self}:
\be \label{Hitchin-fibration}
\begin{aligned}
    h:\cM_H(C_{0,4},\SU(2)) &\to \cB_H=H^{0}(C_{0,4},K_C^{\otimes2})~.\\
(E,\varphi)&\mapsto \Tr(\varphi^2)
\end{aligned}
\ee 
It is a completely integrable system so that generic fibers are abelian varieties (sometimes called ``Liouville tori'') and they are holomorphic Lagrangian with respect to $\Omega_I$, namely Lagrangian submanifolds of type $(B,A,A)$. For our case where $G=\SU(2)$ and the base curve is $C_{0,4}$, the Hitchin moduli space is of quaternionic dimension one, and a generic fiber $\bfF$ of the Hitchin fibration is topologically a two-torus. 
The Hitchin base $\cB_H$ is identified with the $u$-plane that parametrizes the complex structure $u$ of the Seiberg-Witten curve \eqref{SW-Nf4-generic}.

At special points of the $u$-plane where the Seiberg-Witten curve becomes singular, a Hitchin fiber degenerates into a singular fiber. The geometry of the moduli space undergoes significant changes as the ramification parameters \eqref{tame} at the four punctures are varied, leading to changes in the types of singular fibers. A classification of these singular fibers is presented in the next subsection.

Among physics literature, the geometry of the Hitchin moduli space for a curve with tame ramifications has been studied in great detail in \cite{Gukov:2006jk}. One of the key insights there is the action of the affine Weyl group on the second (co)homology group of the Hitchin moduli space. 
Consider the case where all ramification points correspond to full punctures, and the triples $(\alpha_j, \beta_j, \gamma_j) \in (\mathfrak{t}_j)^3$ take generic values. Following the notation of \cite{Gukov:2006jk}, we denote the corresponding moduli space as $\MH(\alpha_1, \beta_1, \gamma_1; \ldots; \alpha_s, \beta_s, \gamma_s;G)$. In this setting, the second integral homology group can be identified with the (generalized) affine root lattice:
\be
H_2(\MH(\alpha_1, \beta_1, \gamma_1; \ldots; \alpha_s, \beta_s, \gamma_s;G), \bZ) \cong \bZ\langle e\rangle \oplus \bigoplus_{j=1}^s \sfQ_j(\frakg)~,
\ee
where $e$ is the generator of the second homology group $H_2(\bfV, \bZ) \cong \bZ\langle e\rangle$ of the moduli space $\bfV$ of $G$-bundles.
Consequently, the (generalized) affine Weyl group $\AWeyl^s \cong (\Weyl \ltimes \mathsf{t}(\sfQ^\vee))^s$ acts naturally on the second homology group. Here, $\sfQ^\vee$ represents the coroot lattice, which acts via affine translations $\mathsf{t}$, given explicitly by:
\be\label{affine-translation}
\mathsf{t}(r_j^\vee)(\tilde{r}_j) = \tilde{r}_j - (\tilde{r}_j, r_j^\vee)e~, \quad j \in \{1,\ldots,s\}~.
\ee

In our case, where $G = \SU(2)$ and the Riemann surface is a four-punctured sphere $C_{0,4}$, it is well known that the flavor symmetry is enhanced from $\SU(2)^4$ to the $\SO(8)$ group \cite{Seiberg:1994aj}. As a result, the space of ramification parameters $(\alpha_j, \beta_j, \gamma_j) \in (\mathfrak{t}_j)^3$ at the four punctures, where each parameter takes values in the Cartan subalgebra $\mathfrak{t}_j$ of $\SU(2)$, can be naturally identified with the Cartan subalgebra of the Lie algebra of type $D_4$:
\be 
\prod_{j=1}^4 \mathfrak{t}_j^{(\alpha)}\times  \mathfrak{t}_j^{(\beta)} \times  \mathfrak{t}_j^{(\gamma)}\cong \mathfrak{t}_{D_4}^{(\alpha)}\times\mathfrak{t}_{D_4}^{(\beta)}\times\mathfrak{t}_{D_4}^{(\gamma)} ~ .
\ee 
Moreover, the second integral homology group of the moduli space can be identified with the affine root lattice of type $D_4$:
\be\label{Homology_Lattice}
\bZ\langle e\rangle \oplus \bigoplus_{j=1}^4 \sfQ_j(A_1) \cong \dt{\sfQ}(D_4)\cong H_2(\MH(C_{0,4},\SU(2)),\bZ)~,
\ee 
on which the affine Weyl group of type $D_4$ acts:
\be 
\dt{W}(D_4) \cong W(D_4) \ltimes \mathsf{t}(\sfQ^\vee_{D_4})~.
\ee 
The affine Weyl group also acts on the ramification parameters $(\alpha,\beta,\gamma)$
\be \label{dtWD4onabc}
\frac{\mathfrak{t}_{D_4}^{(\alpha)}\times\mathfrak{t}_{D_4}^{(\beta)}\times\mathfrak{t}_{D_4}^{(\gamma)}}{\dt{W}(D_4)}\cong \frac{T_{D_4}^{(\alpha)}\times\mathfrak{t}_{D_4}^{(\beta)}\times\mathfrak{t}_{D_4}^{(\gamma)} }{W(D_4)},
\ee 
where the affine translation $\mathsf{t}(\sfQ^\vee_{D_4})$ acts non-trivially only on the $\alpha$-space. The quotient of the affine translation \eqref{affine-translation} can be identified with the periodicity 
\be 
\talpha_j \sim \talpha_j +1~, 
\ee 
due to \eqref{tj}. As we will see below, a point of the space $T_{D_4} \times \mathfrak{t}_{D_4}\times \mathfrak{t}_{D_4}$ represents the cohomology classes of the \K forms in the second cohomology group $H^2(\MH,\bR)$ of the corresponding moduli space:
\be 
\left(\left[\frac{\omega_I}{2\pi}\right],\left[\frac{\omega_J}{2\pi}\right],\left[\frac{\omega_K}{2\pi}\right]\right)\in T_{D_4} \times \mathfrak{t}_{D_4}\times \mathfrak{t}_{D_4}~.
\ee

In fact, the connection between the affine Weyl group of type $D_4$ and this system also has an extensive history. It was first discovered in \cite{okamoto1986studies} that a group of Bäcklund transformations of Painlevé VI equation is isomorphic to $\dt{W}(D_4)$. For our purposes, the modern treatment of the relation between the geometry of the cubic surface \eqref{cubic_eqn} and the affine Weyl group $\dt{W}(D_4)$ is provided in \cite{Iwasaki2007FiniteBS}.

\begin{figure}[ht]
    \centering    \includegraphics{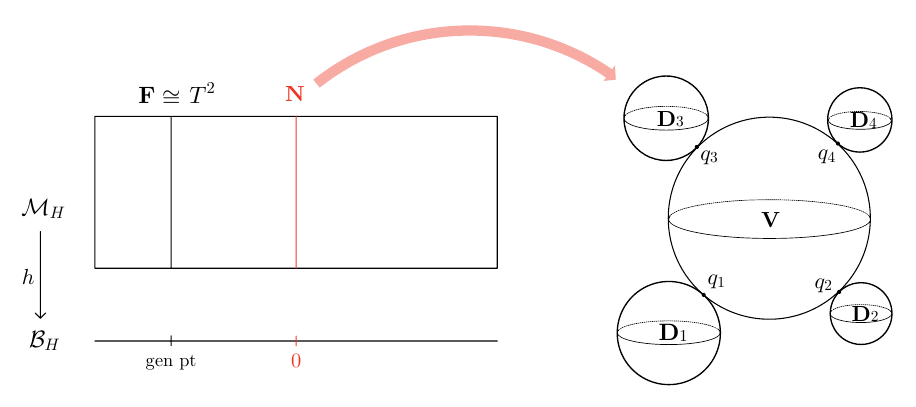}
    \caption{The schematic figure of the Hitchin fibration $\MH\to \cB_H$ when the ramification parameters $\talpha_j$ are generic, and the others are zero $\tbeta_j=\tgamma_j=0$, $(j=1,2,3,4)$. A generic fiber $\bfF$ is topologically a two-torus, and the global nilpotent cone $\bfN$ at the origin of the Hitchin base $\cB_H$ is a singular fiber of Kodaira type $I_0^*$, which is illustrated on the right. }
    \label{fig:nilpotent_cone}
\end{figure}

To understand the action of the affine Weyl group $\dt W(D_4)$, let us examine the geometry of the Hitchin moduli space under the condition that the ramification parameters $\talpha_j$ are generic while the others are set to zero, $\tbeta_j = \tgamma_j = 0$ $(j = 1, 2, 3, 4)$. Under these assumptions, the Hitchin fibration $h:\MH(C_p, \SU(2))\to\cB_H$ in \eqref{Hitchin-fibration} possesses a singular fiber only at the origin of the base, $\bfN=h^{-1}(0)$, which is referred to as the \emph{global nilpotent cone}. 
The global nilpotent cone $\bfN$ is a singular fiber of Kodaira type $I_0^*$ \cite{Gukov:2007ck}, characterized by a configuration of five irreducible components, each of which is topologically $\bC\bP^1$, arranged in the shape of the affine $D_4$ Dynkin diagram:
\be
\bfN = \bfV \cup \bigcup_{j=1}^4 \bfD_j~,
\ee
where $\bfV$ represents the moduli space of $\SU(2)$-bundles on the four-punctured sphere $C_{0,4}$, and $\bfD_j$ $(j = 1, 2, 3, 4)$ are known as the exceptional divisors. (See Figure \ref{fig:nilpotent_cone}.)

At generic values of $\talpha_j$, each irreducible component of the global nilpotent cone serves as a generator of the second integral homology group, $H_2(\MH, \bZ) \cong \bZ^{\oplus 5}$. Using the basis $\{[\bfD_1], [\bfD_2], [\bfD_3], [\bfD_4], [\bfV]\}$, the intersection pairing between these homology classes is represented by the Cartan matrix of the affine $D_4$ Dynkin diagram, up to an overall minus sign:
\be \label{intersection-form}
Q = 
\begin{pmatrix}
-2 & 0 & 0 & 0 & 1 \\
0 & -2 & 0 & 0 & 1 \\
0 & 0 & -2 & 0 & 1 \\
0 & 0 & 0 & -2 & 1 \\
1 & 1 & 1 & 1 & -2
\end{pmatrix}.
\ee
With this bilinear pairing defined by the intersection form, the second homology group $H_2(\MH, \bZ)$ can be identified with the root lattice of the affine $D_4$ Lie algebra. In this identification, the irreducible components correspond to the simple roots of the affine $D_4$ root system. Furthermore,
these components also satisfy a fiber-class relation in $H_2(\MH,\bZ)$:
\begin{equation}\label{homology-relation}
    [\bfF] = 2[\bfV] + \sum_{j=1}^4 [\bfD_j]~,
\end{equation}
where $[\bfF]$ denotes the homology class of a generic fiber of the Hitchin fibration \cite{Gukov:2007ck}.

\begin{figure}[ht]
    \centering
    \includegraphics{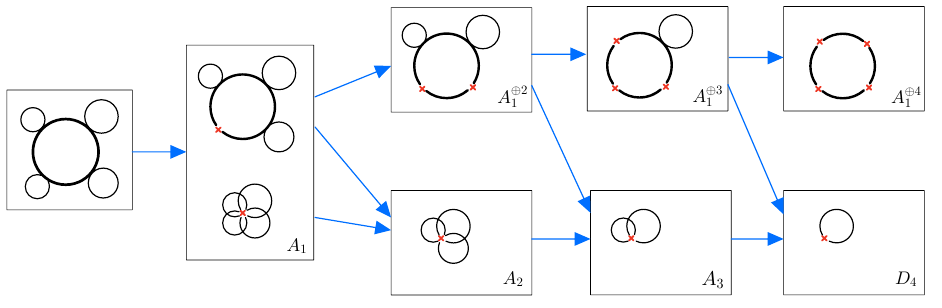}
    \caption{When $\tbeta_j = \tgamma_j = 0$, du Val singularities emerge as the volumes of $\bfV$ or $\bfD_j$ shrink to zero. These vanishing volumes are indicated by red crosses on the corresponding diagrams. The blue arrows represent transitions where a single volume contracts to zero, leading to successive enhancements of the singularity type. The progression illustrates the hierarchy of du Val singularities, starting from $A_1$ and culminating in higher types such as $A_1^{\oplus 4}$ and $D_4$.}
    \label{fig:du_Val}
\end{figure}

In complex structure $I$, the ramification parameters $\talpha_j$ serve as the \K parameter of the Hitchin moduli space. Since each Hitchin fiber is a holomorphic Lagrangian of type $(B, A, A)$, variations in $\talpha_j$ correspond to changes in the ``sizes'' or volumes of the irreducible components of the global nilpotent cone while the Kodaira type remains the same. 
 For our applications to branes and representations, it is essential to identify the relation between the \K moduli parameter $\talpha_j$ and the periods of the \K forms over the five compact two-cycles:
\be 
\textrm{vol}_I(\bfV):=\int_{\bfV}\frac{\omega_I}{2\pi}~, \qquad \textrm{vol}_I(\bfD_j)=\int_{\bfD_j}\frac{\omega_I}{2\pi}~.
\ee 
As a characteristic of the Hitchin fibration, due to its complete integrability, the period of $\omega_I/2\pi$ over a general fiber $\bfF$ is one: 
\be \label{Fibervolume}
\textrm{vol}_I(\bfF)=\int_{\bfF}\frac{\omega_I}{2\pi}=1~.
\ee
As a result of the relation \eqref{homology-relation}, we have the constraint
\begin{equation}\label{VolumeRelation}
    2\textrm{vol}_I(\bfV)+\sum_{j=1}^4\textrm{vol}_I(\bfD_j)= 1~.
\end{equation}

The relation between the ramification parameters $\talpha_j$ and the volumes of the irreducible components in the global nilpotent cone is subtle and involves the wall-crossing phenomenon. 
As illustrated in Figure \ref{fig:du_Val}, when an irreducible component $\bfV$ or $\bfD_j$ shrinks to zero, the Higgs bundle is no longer stable, and the Hitchin moduli space develops a du Val singularity.  These singularities correspond to special points in the \K moduli space, which are located at codimension-one loci, referred to as walls. As seen in \eqref{discriminant_cubic_surface}, du Val singularities appear precisely at the zeros of the discriminant
\be \label{wall}
 \boldsymbol{t}^{r}=1~,\quad\forall r\in\sfR(D_4)~.
\ee 
Since we are now considering the case $\tgamma_j = 0$, with $t_j = e^{-2\pi (\tgamma_j + i\talpha_j)}$, the locations of the walls \eqref{wall} in the \K moduli space are therefore specified by
\be\label{wallalpha}
\sum_{j=1}^4 r_j \boldsymbol{\talpha}_j=n,\qquad\forall r\in\sfR(D_4),\quad \forall n\in\bZ~.
\ee

These walls align with the set of reflection hyperplanes for the affine $D_4$ weight lattice at level one \cite{Iwasaki2007FiniteBS}. In an affine root system, the reflection of an affine weight $\lambda$ at level one with respect to an affine root $\dt{r} = r - n\delta$
\be 
s_{\dt{r}}(\lambda) = s_r\left(\lambda - \frac{n}{2}r\right)
\ee
corresponds to a reflection over the hyperplane specified by \eqref{wallalpha}, generating the affine Weyl group $\dt W(D_4)$.
Therefore, a chamber surrounded by the walls in \eqref{wallalpha} corresponds to a Weyl alcove of type $D_4$.

A du Val singularity arises when a two-cycle shrinks to zero volume, whose condition \eqref{wallalpha} is linear in the \K parameters $\talpha_j$. This implies that the volume of a compact two-cycle depends linearly on $\talpha_j$.
 This linearity can be numerically verified by the integration of $\omega_I/2\pi$ over $\bfV$ in the relevant domain where the variables $x,y,z$ in the cubic surface \eqref{cubic_eqn} are real. Starting from this observation, we can make some statements about the volumes of the cycle.

Assuming that the volumes of compact two-cycles are linear in $\talpha_j$, the configurations of the walls in \eqref{wall}, along with the normalization in \eqref{Fibervolume} uniquely determine the volumes of the cycles. Given a point $\talpha_j$ of the \K moduli space, the volumes of the two-cycles $\bfV$ and $\bfD_j$ ($j=1,2,3,4$) are equal to twice the distance from this point to the corresponding walls, as illustrated in Figure \ref{fig:dis_wall}. The factor of two arises from the length of a root $r\in \sfR(D_4)$. 
Since these walls can be identified with reflection hyperplanes associated with the affine $D_4$ Weyl group, the set of the volumes defines a basis $(e^0,\ldots,e^4)$ of the affine $D_4$ root system. Given a point $\boldsymbol{\alpha}$ in the Weyl alcove, the volume corresponding to each basis vector $e^i$ is computed as the inner product $e^i \cdot \boldsymbol{\alpha}$; see \cite[Chapter 1.2]{macdonald2003affine} for further details.

\begin{figure}[ht]
    \centering
    \includegraphics[width=0.55\linewidth]{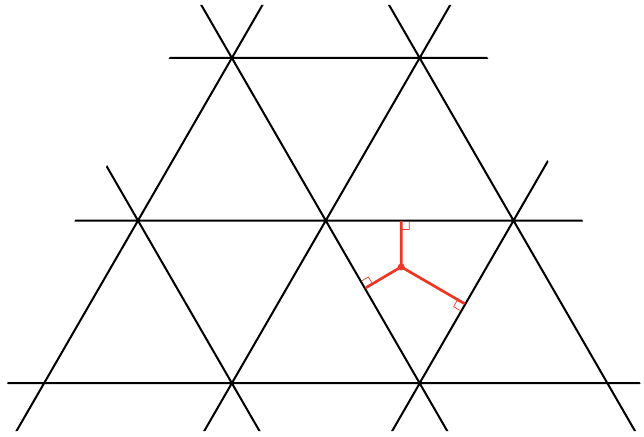}
    \caption{A schematic illustration of the volumes of compact two-cycles in the \K moduli space. The black lines represent reflection hyperplanes, and each triangle corresponds to a Weyl alcove, which is a chamber in the moduli space. The volumes of the two-cycles are twice the distances (depicted by the lengths of the red lines) from a given parameter point to the walls of the chamber containing it.}
    \label{fig:dis_wall}
\end{figure}

This identification immediately indicates a wall-crossing phenomenon of the volume functions in the \K moduli space. Whenever the $\talpha_j$ parameters cross the wall defined by the vanishing loci of an affine root $r$, the basis of roots is transformed by the Weyl reflection $s_{r}$. 
\begin{equation}\label{Weyl_Reflection}
s_{r_a}(r_b)= r_b- \dt A_{ba} r_a 
\end{equation}
where $\dt A_{ba}$ is the affine $ D_4$ Cartan matrix. As a result, the volume functions exhibit a discontinuity, with their dependence on the \K parameters $\talpha_j$ jumping by the affine Weyl reflection. The detailed study of the chamber structures and the appearance of du Val singularities is presented in Appendix \ref{app:chamber-structure}.

\begin{figure}[ht]
    \centering
    \includegraphics[width=0.9\linewidth]{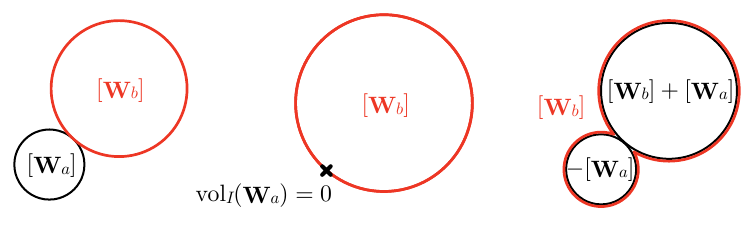}
    \caption{When crossing the wall defined by $\text{vol}_I(\bfW_a)=0$, there is a basis transformation of the second homology, governed by the Picard-Lefschetz monodromy transformation $T_{\bfW_a}$ defined in \eqref{mPL}. This figure illustrates the action of $T_{\bfW_a}$ on the homology classes $[\bfW_a]$ and $[\bfW_b]$ when their intersection number satisfies $[\bfW_a]\cdot[\bfW_b]=1$ where $[\bfW_a]$ is inverted and $[\bfW_b]$ shifts by $[\bfW_a]$. The middle diagram represents the configuration at the moment of wall-crossing, where the volume of $\bfW_a$ vanishes. This transformation reflects the geometric change in the homology basis due to the vanishing cycle.}
    \label{fig:PL}
\end{figure}

This wall-crossing phenomenon admits geometric interpretation. 
When crossing a wall defined by $\text{vol}_I(\bfW_a) = 0$ for $[\bfW_a]\in H_2(\MH,\bZ)$, the Picard-Lefschetz (PL) monodromy transformation  \cite{Seidel1999,Seidel2008} changes a base of the second homology due to the vanishing cycle $[\bfW_a]$ as
\begin{equation}\label{mPL}
T_{\bfW_a}([\bfW_b])= [\bfW_b]+([\bfW_a] \cdot [\bfW_b])[\bfW_a]~,\qquad [\bfW_b]\in H_2(\MH,\bZ)~.
\end{equation}
The sign difference from \eqref{Weyl_Reflection} results from the sign difference between the intersection form \eqref{intersection-form} and the affine Cartan matrix $\dt A_{ba}$. 
Therefore, writing $T_j \equiv T_{\bfD_j}$ for $(j = 1, 2, 3, 4)$, the PL transformation geometrically realizes the affine Weyl reflection in \eqref{Weyl_Reflection} in the second homology classes, subject to the following relations:
\be\label{affine-Weyl-D4}
\begin{aligned}
    T_j T_k &= T_k T_j~, \\
    T_j T_{\bfV} T_j &= T_{\bfV} T_j T_{\bfV}~, \\
    T_j^2 &= T_{\bfV}^2 = \operatorname{id}~,
\end{aligned}
\ee
for any pair $(j, k = 1, 2, 3, 4)$.
 In conclusion, the PL transformation gives a concrete example of the affine Weyl group action on the second integral homology of the Hitchin moduli space \cite{Gukov:2006jk}.

The remainder of this subsection focuses on visualizing the wall inside the \K moduli space, specifying a chamber, and explicitly writing down the volume function. Using the periodicity $\talpha_j \to \talpha_j + 1$, we can restrict $\talpha_j$ to the range $[-\frac{1}{2}, \frac{1}{2}]$.  In addition, the Weyl group invariance $\talpha_j\to-\talpha_j$ of the cubic equation implies that parameter space is symmetric with respect to four $\talpha_j=0$ walls.
In total, all the walls of \eqref{wall} divide the parameter space into $24\times 2^5$ chambers, where the factor $2^5$ corresponds to the sign changes of $\talpha_j$.  Consequently, it suffices to study the region where $\talpha_j$ is restricted to $[0, \frac{1}{2}]$, which contains 24 chambers. This subset of the parameter space forms a 4-dimensional hypercube, which we denote as $\mathsf{Cube}$.

\begin{figure}[ht]
    \centering
    \includegraphics[width=7.5cm]{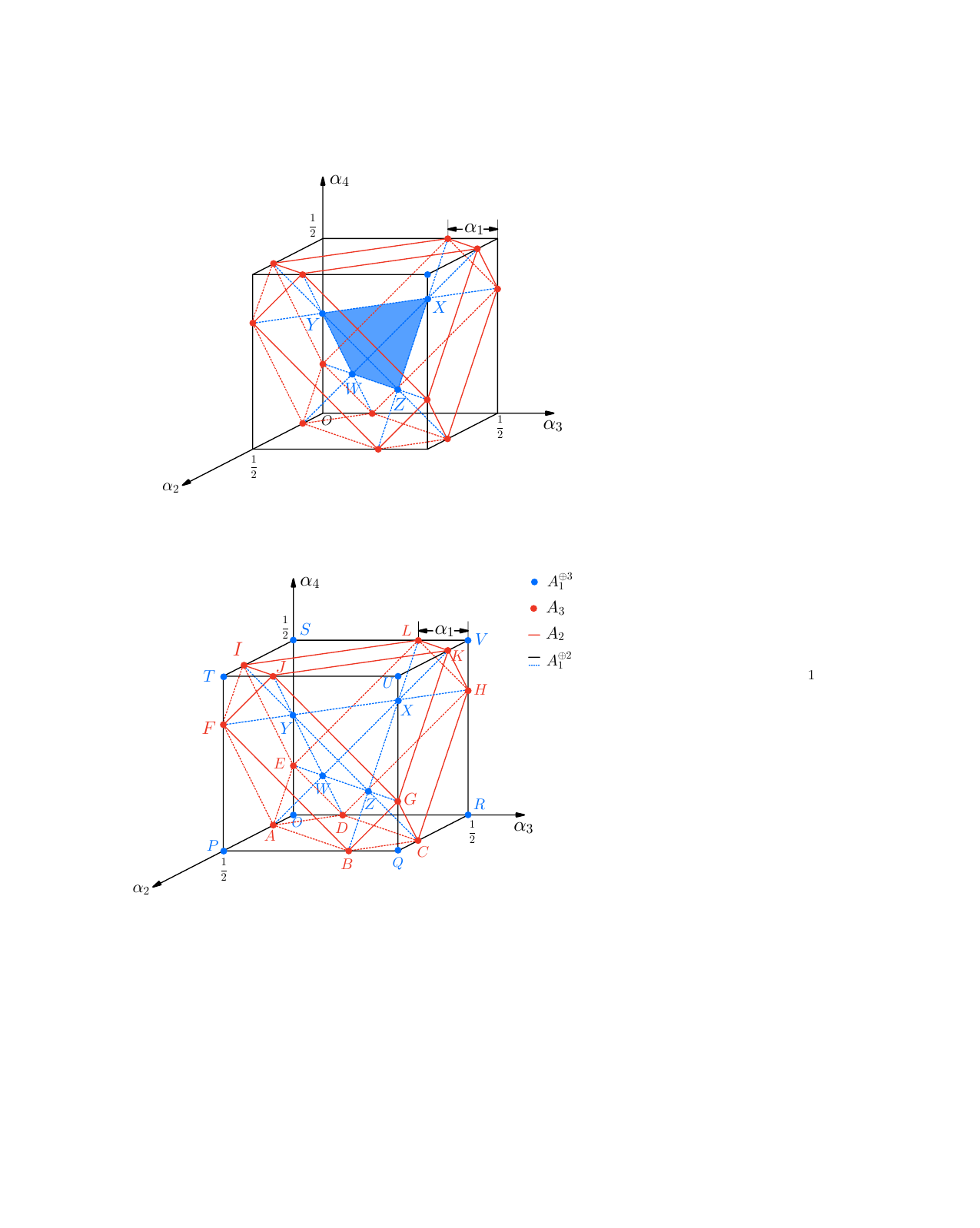}
    \hspace{1cm}
		 \includegraphics[width=7cm]{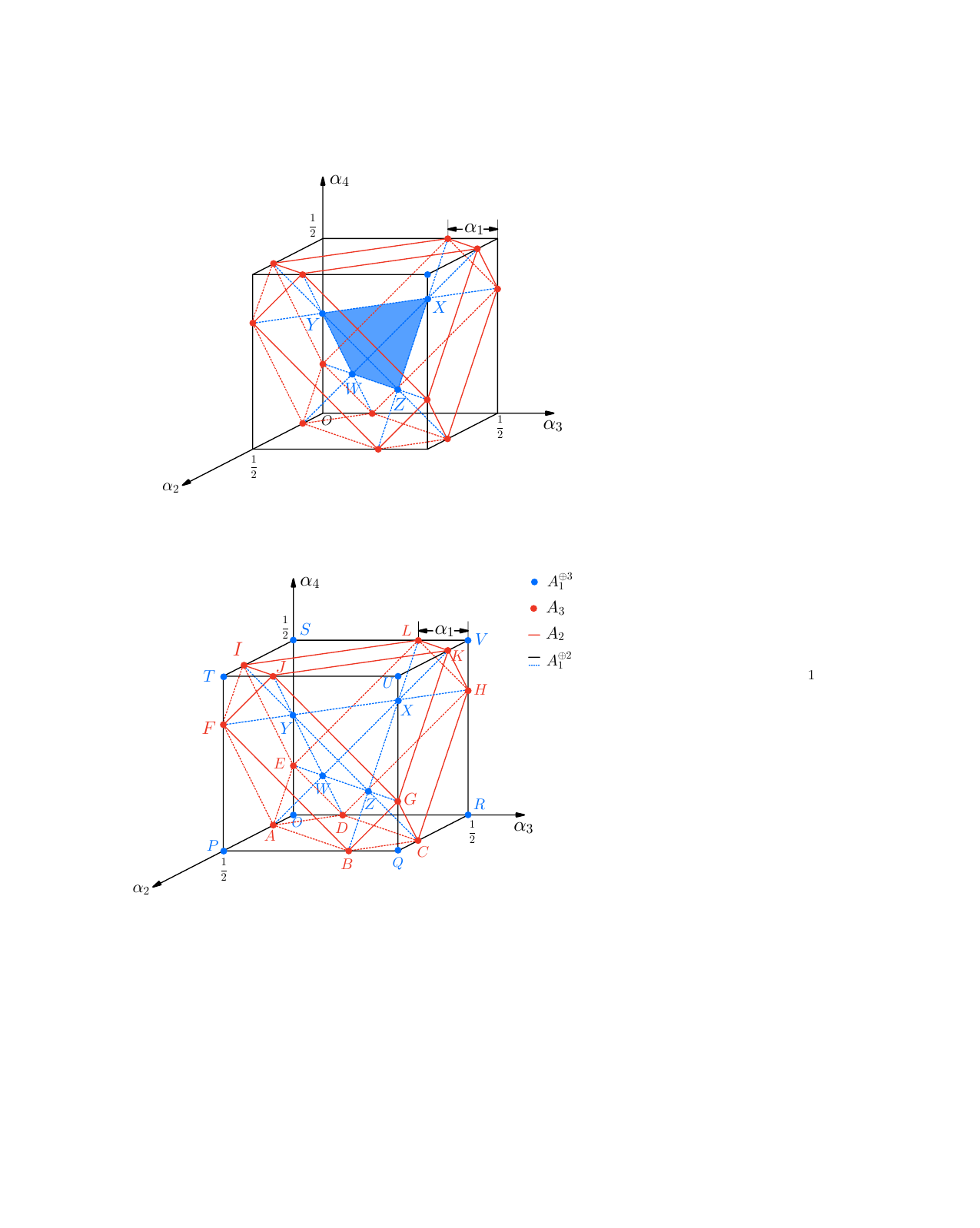}
    
    \caption{(Left): The cross section of the \K moduli space $\mathsf{Cube}$ at a fixed $\talpha_1\in[0,\frac{1}{4}]$, and its chamber structure. $W$ is at $(\talpha_1,\talpha_1,\talpha_1)$ and $X,Y,Z$ have one coordinate $\talpha_1$ and two $\frac{1}{2}-\talpha_1$; $A,D,E$ have two coordinates 0 one  $\talpha_1$ and $B,C,F,G,H,I,J,K,L$ have one coordinate $0$, one $\frac{1}{2}$ and one $\frac{1}{2}-\talpha_1$. Every plane in the 3-cube corresponds to a wall where an $A_1$ type singularity develops, and other singularities are marked out. The $A_1^{\oplus4}$ and $D_4$ singularities can be seen only for specific $\talpha_1$, which is presented in Figure \ref{fig:chambers_special_alpha1}.\\
    (Right):  The center chamber with $\talpha_1\in[0,\frac{1}{4}]$, shaded in blue, is the tetrahedron $WXYZ$ determined by the constraints \eqref{wall_center}.}
    \label{fig:chambers}
\end{figure}

To visualize the 24 chambers in $\mathsf{Cube}$, we can fix the value of $\talpha_1$ and examine a 3-dimensional cross-section of the hypercube, represented as a 3-cube in Figure \ref{fig:chambers}. This 3-cube is subdivided into 23 distinct regions, each corresponding to a unique chamber in $\mathsf{Cube}$. The mapping between these regions and the chambers is injective, meaning no two regions correspond to the same chamber. However, one chamber remains ``invisible'' in this visualization. If we instead cut along $\frac{1}{2} - \talpha_1$, the central chamber $WXYZ$ is replaced by the previously hidden chamber. A detailed discussion of the chamber structure can be found in Appendix \ref{app:chamber-structure}.

In the subsequent analysis, we focus on the central chamber $WXYZ$, as depicted in Figure \ref{fig:chambers}, with $\talpha_1 \in [0, \frac{1}{4}]$. The other chambers can be analyzed similarly. The central chamber is the region defined by the following constraints:
\begin{equation}\label{wall_center}
    \begin{aligned}
        \talpha_1 + \talpha_2 + \talpha_3 + \talpha_4 &\leq 1~, \\
        -\talpha_1 - \talpha_2 + \talpha_3 + \talpha_4 &\geq 0~, \\
        -\talpha_1 + \talpha_2 - \talpha_3 + \talpha_4 &\geq 0~, \\
        -\talpha_1 + \talpha_2 + \talpha_3 - \talpha_4 &\geq 0~,
    \end{aligned}
\end{equation}
together with the restriction $\talpha_1 \in [0, \frac{1}{4}]$. The four constraints in \eqref{wall_center} correspond to the four walls $XYZ,WYZ,WZX,WXY$ of this tetrahedron. 
From the identification between the volume functions and the basis of the affine root system, we deduce the explicit volume functions in this chamber:
\be\label{vol-middle-chamber2}
\begin{aligned}
    \textrm{vol}_I(\bfD_j) &= \Big(1 - \talpha_1 - \talpha_2 - \talpha_3 - \talpha_4, ~-\talpha_1 - \talpha_2 + \talpha_3 + \talpha_4, \cr
    &\qquad -\talpha_1 + \talpha_2 - \talpha_3 + \talpha_4, ~-\talpha_1 + \talpha_2 + \talpha_3 - \talpha_4\Big), \cr
    \textrm{vol}_I(\bfV) &= 2 \talpha_1~,
\end{aligned}\ee
where the normalization in \eqref{VolumeRelation} is appropriately applied.

In fact, we can define the simple roots of the $D_4$ root system with a suitable choice of positive roots as
\begin{equation}\label{simpleroots}
    \{e^1, e^2, e^3, e^4\} = \{(-1, -1, 1, 1), (-1, 1, -1, 1), (-1, 1, 1, -1), (2, 0, 0, 0)\}~,
\end{equation}
Then, the highest root is expressed by $\theta=e^1+e^2+e^3+2e^4 = (1, 1, 1, 1)$.
Using these roots, the volume functions can be concisely written as
\be\label{vol-middle-chamber3}
\Big(\textrm{vol}_I(\bfD_1),\textrm{vol}_I(\bfD_2),\textrm{vol}_I(\bfD_3),\textrm{vol}_I(\bfD_4),\textrm{vol}_I(\bfV)\Big)=\Big(1-\theta\cdot\boldsymbol{\talpha},~e^1\cdot\boldsymbol{\talpha},~e^2\cdot\boldsymbol{\talpha},~e^3\cdot\boldsymbol{\talpha}~,e^4\cdot\boldsymbol{\talpha}\Big)~,
\ee
where $r\cdot\boldsymbol{\talpha}=\sum_{j=1}^4 r_j\talpha_j$ is the Euclidean inner product. As drawn in Figure \ref{fig:affineD4}, the homology class of each irreducible component of the $I_0^*$ singular fiber corresponds to an affine $D_4$ root as
\be \label{homology-root}
[\bfD_1]\leftrightarrow e^0=\delta-\theta~,\quad [\bfD_2]\leftrightarrow e^1~,\quad [\bfD_3]\leftrightarrow e^2~,\quad  [\bfD_4]\leftrightarrow e^3~,\quad [\bfV]\leftrightarrow e^4~,\quad 
\ee 
where $\delta$ is the imaginary root and the case $\delta\cdot\talpha =1$ is assumed for the volume function. Using the relation of the second homology classes, the homology class $[\bfF]$ of a generic Hitchin fiber indeed corresponds to the imaginary root $\delta$.

\begin{figure}[ht]
    \centering
    \includegraphics[width=0.3\linewidth]{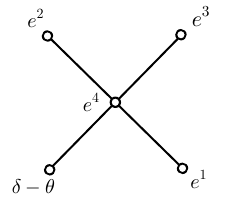}
    \caption{Simple roots of affine $D_4$ root system.}
    \label{fig:affineD4}
\end{figure}

\subsection{Classification of Kodaira singular fibers}\label{sec:classification}

Up to this point, we have considered the case where only the ramification parameters $\talpha_j$ are non-zero, while the other parameters are set to zero, i.e., $\tbeta_j = 0$ and $\tgamma_j = 0$. However, when the ramification parameters $\tbeta_j$ and $\tgamma_j$—which correspond to the masses in the SU(2) SQCD, as in \eqref{mass_monodromy}—are turned on, the configuration of the Hitchin fibration \eqref{Hitchin-fibration} undergoes a significant change. 

In our example of $\MH(C_{0,4},\SU(2))$, the Hitchin fibration is an elliptic fibration over an affine base $\cB_H$, with some singular fibers. The introduction of the parameters $\tbeta_j$ and $\tgamma_j$ modifies the types and configurations of these singular fibers. However, since the holomorphic symplectic form $\Omega_J$ does not depend on the ramification parameters $\tbeta_j$, the representation theory of $\SH$ is insensitive to these parameters. Thus, for simplicity, we set $\tbeta_j=0$. Hence, this subsection will focus on classifying the singular fibers and the second homology classes in the Hitchin fibration when the parameters $\tgamma_k$ are turned on. Although our motivation comes from the brane quantization on the Hitchin moduli space, this subsection also serves as the detailed analysis of low-energy dynamics of the SU(2) $N_f=4$ Seiberg-Witten theory.

The possible types of singular fibers in an elliptic fibration have been systematically classified by Kodaira \cite{kodaira1964structure,kodaira1966structure}.
The types of Kodaira singular fibers in the Hitchin fibration can be determined from the Seiberg-Witten curve  \eqref{SW-Nf4-generic} through two steps. The first step is to convert the Seiberg-Witten curve \eqref{SW-Nf4-generic} into the Weierstrass form $y^2=x^3+a(u) x+b(u)$  \cite{kodaira1964structure,neron1964modeles}. The second step is to find a vanishing locus $u=u_*$ of its discriminant 
\be \label{SW_discriminant}
\frakD(u_*):=-16\left(4 a(u_*)^3+27 b(u_*)^2\right),
\ee 
where $u_*$ is the location of a singular fiber at the Hitchin base $\cB_H$.
The vanishing orders of $a(u)$, $b(u)$, and $\frakD(u)$ at this specific $u_*$ determine the Kodaira type of the singular fiber. The concrete method is well-known in the literature (see, for instance, \cite[\S4.1]{Weigand:2018rez}), so we omit the details.

The appearance of these singular fibers on the Coulomb branch can be understood physically as the result of certain charged BPS particles becoming massless in the low-energy effective theory of SU(2) $N_f=4$ SQCD \cite{Seiberg:1994rs,Seiberg:1994aj}. Each singular fiber has an associated monodromy matrix in $\SL(2,\bZ)$, which encodes how the homology cycles of the elliptic curve transform around the singularity. The specific type of massless BPS particle is determined by the monodromy matrix around the singular fiber, once an electromagnetic frame is chosen. More precisely, if the particle becoming massless at $u_*$ has charge $(n_m, n_e)$, then its charge remains invariant under the monodromy matrix $M$ associated with $u_*$:
\begin{equation}\label{charge_monodromy}
    (n_m,n_e)\cdot M=(n_m,n_e)~.
\end{equation}

\begin{table}[ht]
\centering
\begin{adjustbox}{max width=\textwidth}
\renewcommand{\arraystretch}{1.3}
\begin{tabular}{c|c|c|c|c}
 {ord($\frakD$)} & {Kodaira types} & {Conditions} & {Examples in  $\tgamma_j$} & {Examples in mass $m_j$} \\
\hline 
 {$(1,1,1,1,1,1)$} & {$6I_1$} & {$\Ker _{\operatorname{ev}_{\tgamma}}=\emptyset$} & 
{$(\tgamma_1,\tgamma_2,\tgamma_3,\tgamma_4)$} &{$(m_1,m_2,m_3,m_4)$} \\
 \hline
 \multirow{2}*{$(2,1,1,1,1)$} & \multirow{2}*{$(I_2,4I_1)$} &  \multirow{2}*{$\Ker _{\operatorname{ev}_{\tgamma}}\cong\sfR(A_1)$}
& {$(\tgamma_1,\tgamma_2,\tgamma_3, 0)$}  & \multirow{2}*{$(m_1,m_2,m_3,m_3)$} \\
 & & & {$(\tgamma_1,\tgamma_2,\tgamma_3, \tgamma_1+\tgamma_2+\tgamma_3)$} & \\
\hline
\multirow{2}*{$(2,2,1,1)$} & \multirow{2}*{$(2I_2,2I_1)$} &  \multirow{2}*{$\Ker _{\operatorname{ev}_{\tgamma}}\cong\sfR(A_1)^{\oplus2}$}
& {$(\tgamma_1,\tgamma_2,\tgamma_1, \tgamma_2)$}  & {$(m_1,m_2,m_1,m_2)$} \\
& & & {$(\tgamma_1,\tgamma_2,0,0)$}  & {$(m_1,m_2,0,0)$} \\
\hline
\multirow{2}*{$(2,2,2)$} & \multirow{2}*{$3I_2$} &  \multirow{2}*{$\Ker _{\operatorname{ev}_{\tgamma}}\cong\sfR(A_1)^{\oplus3}$}
& {$(\tgamma_1,0,0,0)$}  &\multirow{2}*{$(m_1,m_1,0,0)$} 
\\
& & & {$(\tgamma_1,\tgamma_1,\tgamma_1,\tgamma_1)$}  & \\
\hline
{$(3,1,1,1)$} & {$(I_3,3I_1)$} &  {$\Ker _{\operatorname{ev}_{\tgamma}}\cong\sfR(A_2)$}
& {$(\tgamma_1,\tgamma_2,\tgamma_1+\tgamma_2,0)$} &{$(m_1,m_2,m_2,m_2)$} \\
\hline
\multirow{2}*{$(4,1,1)$} & \multirow{2}*{$(I_4,2I_1)$} &  \multirow{2}*{$\Ker _{\operatorname{ev}_{\tgamma}}\cong\sfR(A_3)$}
& \multirow{2}*{$(\tgamma_1,\tgamma_1,0,0)$}  &{$(m_1,m_1,m_1,m_1)$} \\
& & & &{$(m_1,0,0,0)$} \\
\hline
{$(6)$} & {$I_0^*$} & {$\Ker _{\operatorname{ev}_{\tgamma}}\cong\sfR(D_4)$}& {$(0,0,0,0)$}  &{$(0,0,0,0)$} \\
\end{tabular}
\end{adjustbox}
\caption{The ``generic'' configurations of Kodaira singular fibers in the Hitchin fibration of $\MH(C_{0,4}, \SU(2))$, along with conditions on the evaluation map kernel $\ker \operatorname{ev}_\tgamma$. The table also lists the associated multiplicity ord($\frakD$) of the discriminant \eqref{SW_discriminant}, and examples of the corresponding parameters $\tgamma_j$ and $m_j$ where the mass parameters and monodromy parameters are related by \eqref{mass_monodromy}, where $\beta_i = 0$ is assumed for simplicity. The evaluation map $\operatorname{ev}_\tgamma$ is defined in \eqref{evaluation}. The singular fibers can be read off from affine root system $\Ker _{\operatorname{ev}_{\tgamma}}$, as detailed in the conditions column, with specific examples of $\tgamma_j$ and $m_j$ values provided for each case.}
\label{tab:fiber_sing_classification2}
\end{table}

As discussed in the previous subsection, the singular fiber of type $I_0^*$ corresponds to the affine $D_4$ Dynkin diagram. This correspondence is a general feature of Kodaira singular fibers, where each type of singular fiber forms a chain of $\mathbb{CP}^1$ components (also known as rational curves or $(-2)$-curves) connected according to the structure of an affine $ADE$ Dynkin diagram (except for Kodaira type $II$). Consequently, the intersection matrix of a singular fiber matches the corresponding affine Cartan matrix, up to an overall sign difference. While the analysis of the Seiberg-Witten curve \eqref{SW-Nf4-generic} with the ramification parameters $\tgamma_j$, as described above, determines the configurations of singular fibers, the relationship between singular fibers and affine root systems provides a systematic way to classify the types of singular fibers. To see that, given the ramification parameters $\tgamma_j$, we define the evaluation map as
\be \label{evaluation}
\operatorname{ev}_{\tgamma}:\sfR(D_4)\rightarrow\bR \ ; \ r=(r_1,r_2,r_3,r_4)\mapsto \tgamma_1 r_1+\tgamma_2r_2+\tgamma_3r_3+\tgamma_4r_4~.
\ee 
The kernel of this linear map precisely gives the root system associated with the singular fibers. Physically, it encodes the breaking patterns of the flavor symmetry group $\SO(8)$ of the 4d SQCD for given mass parameters.
By combining the analysis of the Seiberg-Witten curve with this root system approach, we can straightforwardly classify the possible configurations of the Hitchin fibrations. In the case where, given the kernel condition, the positions of singular fibers are generic, seven configurations appear as $\tgamma_j$ vary. These configurations are summarized in Table \ref{tab:fiber_sing_classification2}.

However, Table \ref{tab:fiber_sing_classification2} is \emph{not} exhaustive as the root system does \emph{not} uniquely determine the Kodaira types. To classify all possible configurations of the Hitchin fibration, it is helpful to consider the physical interpretation \eqref{charge_monodromy} of the singular fibers in Seiberg-Witten theory. For generic mass parameters, the Hitchin fibration contains six $I_1$ singular fibers ($6I_1$), which can be divided into two classes based on their distinct monodromies. 
In a preferred electro-magnetic frame, four of these singular fibers, with base points denoted by $p_i$ for $i=1, 2, 3, 4$, have monodromies given by
\be 
M_q=T=\begin{pmatrix}
    1 & 1 \\
0 & 1
\end{pmatrix}~.
\ee 
At these points, known as quark singularities, the $(n_m, n_e) = (0, 1)$-cycle $S^1\subset T^2$ is pinched in a Hitchin fiber so that a quark with charge $(n_m, n_e) = (0, 1)$ becomes massless. 
The remaining two singular fibers, with base points denoted by $p_i$ for $i=5, 6$, have monodromies given by
\be
M_d=T^2STS^{-1}T^{-2}=\begin{pmatrix} 
-1 & 4 \\
-1 & 3
\end{pmatrix}~,
\ee 
where the standard generators of $\SL(2,\bZ)$ are given by  
\begin{equation}\label{modular_matrices}
    T=\begin{pmatrix}
1 & 1 \\
0 & 1 
\end{pmatrix},~\qquad S=\begin{pmatrix}
0 & 1 \\
-1 & 0 
\end{pmatrix}~.
\end{equation}
At these points, known as dyon singularities, the $(n_m, n_e) = (1,-2)$-cycle $S^1\subset T^2$ is pinched in a Hitchin fiber so that a dyon with charge $(n_m, n_e) = (1, -2)$ becomes massless. 
Then, direct calculation shows that the combined monodromy satisfies  
\begin{equation}
    M_q^2M_dM_q^2M_d=\begin{pmatrix}
        -1&0\\
        0&-1
    \end{pmatrix}
\end{equation}
which corresponds to the monodromy for the $I_0^*$ singular fiber, as expected at infinity.\footnote{In \cite{Seiberg:1994aj}, an electro-magnetic frame is chosen such that there are four quark ($M_q$), one monopole ($M_m$), and one dyon ($M_d$) singularities, satisfying:
\be 
M_q^4 M_m M_d = 
\begin{pmatrix}
    -1 & 0 \\
    0 & -1
\end{pmatrix}.
\ee 
For convenience, we have adopted a different frame here.}

\begin{figure}
    \centering
    \includegraphics[width=0.8\linewidth]{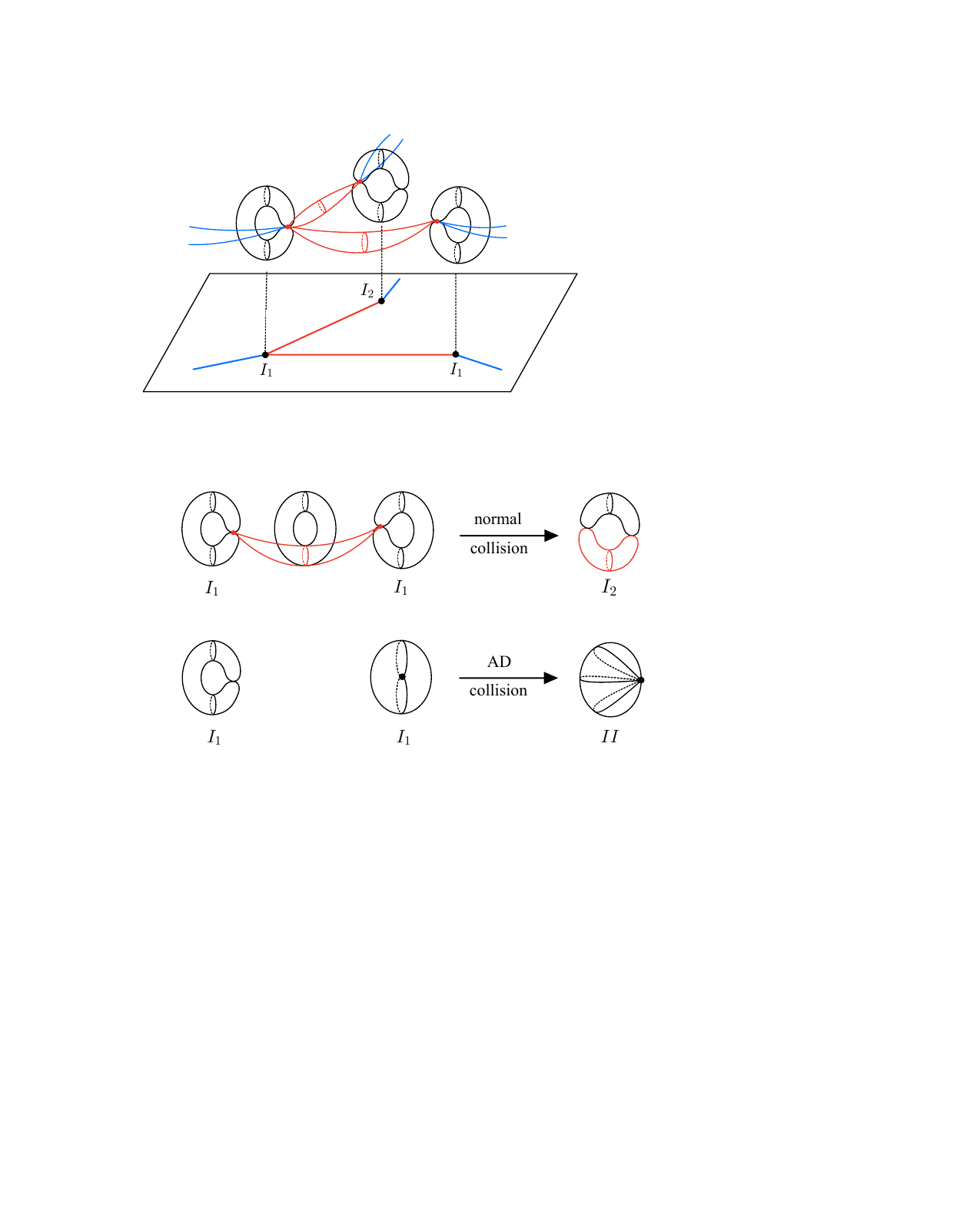}
    \caption{Two types of collisions of $I_1$ fibers. When two $I_1$ are of the same type, there is a cycle suspended between them, drawn in red. After the collision, an $I_2$ fiber appears where the suspended cycle becomes a (red) fiber cycle. When two $I_1$ are of distinct types, there is no cycle directly suspended between them, and collision will result in a $II$ type singular fiber.  }
    \label{fig:collide I1}
\end{figure}

There are two possible scenarios when two $I_1$ fibers collide to form a singular fiber. The first possibility is a collision of two $I_1$ fibers with monodromies of the same type. We refer to this as a \emph{normal collision}.  Since the same one-cycle $S^1\subset T^2$ becomes trivial for both singular fibers, there exists a two-cycle suspended between these singular fibers, as illustrated in Figure \ref{fig:collide I1}. Generally, when such fibers collide, an additional fiber cycle emerges. As a result, a normal collision of two $I_1$ fibers produces an $I_2$ singular fiber. By appropriately tuning the parameters $\tgamma_j$, one can successively collide additional $I_1$ fibers, leading to higher singular types as listed in Table \ref{tab:fiber_sing_classification2}.

The second possibility involves a collision of two $I_1$ fibers with monodromies of different types. 
With a specific choice of the UV coupling constant $\tau$ and the mass parameters of the Seiberg-Witten curve \eqref{SW-Nf4-generic}, a quark singularity and a dyon singularity can collide, resulting in the monodromy  
\be 
M_d \cdot M_q =
\begin{pmatrix}
-1 & 3 \\
-1 & 2
\end{pmatrix} \sim
\begin{pmatrix}
1 & 1 \\
-1 & 0
\end{pmatrix},
\ee  
which corresponds to the monodromy of Kodaira type $II$. At this point, mutually non-local degrees of freedom become massless simultaneously, as discussed in \cite{Argyres:1995jj, Argyres:1995xn}. The resulting low-energy theory is known as the $(A_1, A_2)$ Argyres-Douglas (AD) theory. 
Geometrically, AD points are characterized by the collision of singular fibers with monodromy of distinct types. We refer to this as an \emph{AD collision}.

Further collisions involving the $(A_1, A_2)$ AD point and quark singularities give rise to singular fibers of Kodaira types $III$ and $IV$, corresponding to the $(A_1, A_3)$ and $(A_1, D_4)$ AD theories \cite{Argyres:1995xn}, respectively. These account for the remaining configurations of the Hitchin fibration. 
Figure \ref{fig:Kodaira} classifies all possible Kodaira singular fibers in $\MH(C_{0,4},\SU(2))$ and illustrates how, starting from the $6I_1$ configuration, singular fibers undergo collisions and evolve into fibers of higher types as the ramification parameters $\tgamma_j$ are tuned, ultimately culminating in the $I_0^*$ singular fiber.  
Note that the kernel of the evaluation map \eqref{evaluation} is the root system associated with the singular fibers even with these ``exceptional'' cases. Nonetheless,
such collisions require a specific choice of the UV coupling constant $\tau$. A detailed analysis of these AD points in the Hitchin moduli space is presented in Appendix \ref{app:monodromy}.

\begin{figure}[ht]
    \centering
    \includegraphics{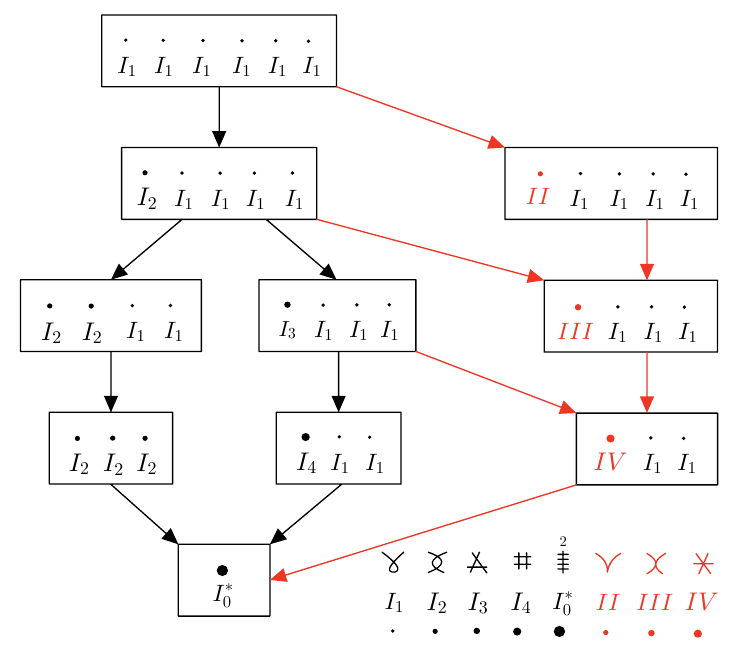}
    \caption{Classification of Kodaira singular fibers and their transitions under continuous variations of mass parameters. Seven configurations contain only $I_k$-type or $I^*_0$-type singular fibers, referred to as ``generic'' configurations, which are interconnected by black arrows representing normal collisions. The remaining three configurations involve $II$, $III$, and $IV$ types of singular fibers, corresponding to Argyres-Douglas points, and are denoted as ``exceptional'' configurations. These transitions are depicted with red arrows, indicating the presence of collisions of Argyres-Douglas type. The geometry of singular fibers is schematically illustrated at the bottom right, where each line represents a $\mathbb{CP}^1$ component.}
    \label{fig:Kodaira}
\end{figure}

\subsection{Generators of second homology groups and their volumes at generic masses}\label{sec:volume}

As the parameters $\tgamma_j$ vary, the Hitchin fibration over the $u$-plane exhibits increasingly intricate behavior. In this subsection, we aim to identify the generators of the second homology group and compute their volumes. This investigation also provides insights into how the geometry of the target space $\MS$ evolves as the parameters $(\talpha_j, \tgamma_j)$ are varied. We focus specifically on the central chamber $WXYZ$, defined in \eqref{wall_center}, where the parameters $(\talpha_j, \tgamma_j)$ can be freely adjusted. 
By selecting sufficiently generic parameters, we ensure that the geometry avoids developing du Val singularities. Consequently, the second homology group $H_2(\MH, \bZ)$ and its intersection form $Q$ remain invariant as the $\tgamma_j$ parameters are turned on. Therefore, $H_2(\MH, \bZ)$ can still be identified with the affine $D_4$ root lattice, as described in \eqref{Homology_Lattice}.

When the parameters $\tgamma_j$ are turned on, the global nilpotent cone of type $I_0^*$ splits into other singular fibers, and the homology generators $[\bfV]$, $[\bfD_j]$ ($j=1,2,3,4$) are not manifest in the Hitchin fibration. Nevertheless, they still span a basis of the second homology group at generic $\tgamma_j$. We describe the volume of a homology class $[\bfW]$ with respect to $\Omega_J / 2\pi i$ as:
\be\label{holomorphic-volume}
\operatorname{vol}(\bfW) := \int_{\bfW} \frac{\Omega_J}{2\pi i}~.
\ee
Thus, the volumes in the central chamber are obtained by replacing $\talpha_j$ in \eqref{vol-middle-chamber3} with $\talpha_j - i \tgamma_j$, yielding:
\be
\begin{aligned}\label{vol-middle-chamber4}
    \Big(\text{vol}(\bfD_1), &\text{vol}(\bfD_2), \text{vol}(\bfD_3), \text{vol}(\bfD_4), \text{vol}(\bfV)\Big) \\
    &= \Big(1 - \theta \cdot (\boldsymbol{\talpha} - i\boldsymbol{\tgamma}),~e^1 \cdot (\boldsymbol{\talpha} - i\boldsymbol{\tgamma}),~e^2 \cdot (\boldsymbol{\talpha} - i\boldsymbol{\tgamma}),~e^3 \cdot (\boldsymbol{\talpha} - i\boldsymbol{\tgamma}),~e^4 \cdot (\boldsymbol{\talpha} - i\boldsymbol{\tgamma})\Big),
\end{aligned}
\ee
where $e^i$ and $\theta$ represent the simple roots and the highest root, respectively, as defined in \eqref{simpleroots}.

An $I_k$ singular fiber consists of $k$ $\mathbb{CP}^1$ components arranged in a necklace shape where we denote irreducible components by $\bfU_a$ for $a = 1, \dots, k$. For instance, see Figure \ref{fig:411} for the $I_4$ singular fiber. To determine the volume function for these cycles, it is necessary to find the relationship between $[\bfU_a]$ and the basis $\{[\bfD_j], [\bfV]\}$. This can be done by simply using the structure of the affine root system.

The intersection form between the cycles $\bfU_a$ in an $I_k$ singular fiber corresponds to the Cartan matrix of the affine root system $A_{k-1}$, up to an overall minus sign. This correspondence allows us to identify the second homology group of an $I_k$ singular fiber with the affine $A_{k-1}$ root lattice, where the homology classes $[\bfU_a]$ serve as the simple roots of the root lattice. Consequently, specifying the relationships between $[\bfU_a]$ and the basis $\{[\bfD_j], [\bfV]\}$ reduces to finding an embedding of an affine sub-root lattice $\dt{\sfR}(\frakg)$ into the affine $D_4$ root lattice $\dt{\sfR}(D_4)$.

Importantly, the embedding of $\dt{\sfR}(\frakg) \hookrightarrow \dt{\sfR}(D_4)$ is not unique. This non-uniqueness arises because different choices of $\tgamma$ parameters can produce the same Hitchin fibration pattern. However, each embedding of the root system uniquely determines the $\tgamma$ parameters that generate the fibration pattern, up to cyclic permutations. Conversely, a fixed set of $\tgamma$ parameters uniquely determines the corresponding embedding of the affine subroot system.

To illustrate this interplay, we will conduct a detailed case study of the fibration configurations listed in Table \ref{tab:fiber_sing_classification2}. This analysis will clarify how specific embeddings of $\dt{\sfR}(\frakg) \hookrightarrow \dt{\sfR}(D_4)$ correspond to the $\tgamma$ parameters and their corresponding Hitchin fibration patterns.

\subsubsection*{Type $(I_4,2I_1)$}\label{I_4 singular fiber}
\begin{figure}
    \centering
    \includegraphics[width=0.7\linewidth]{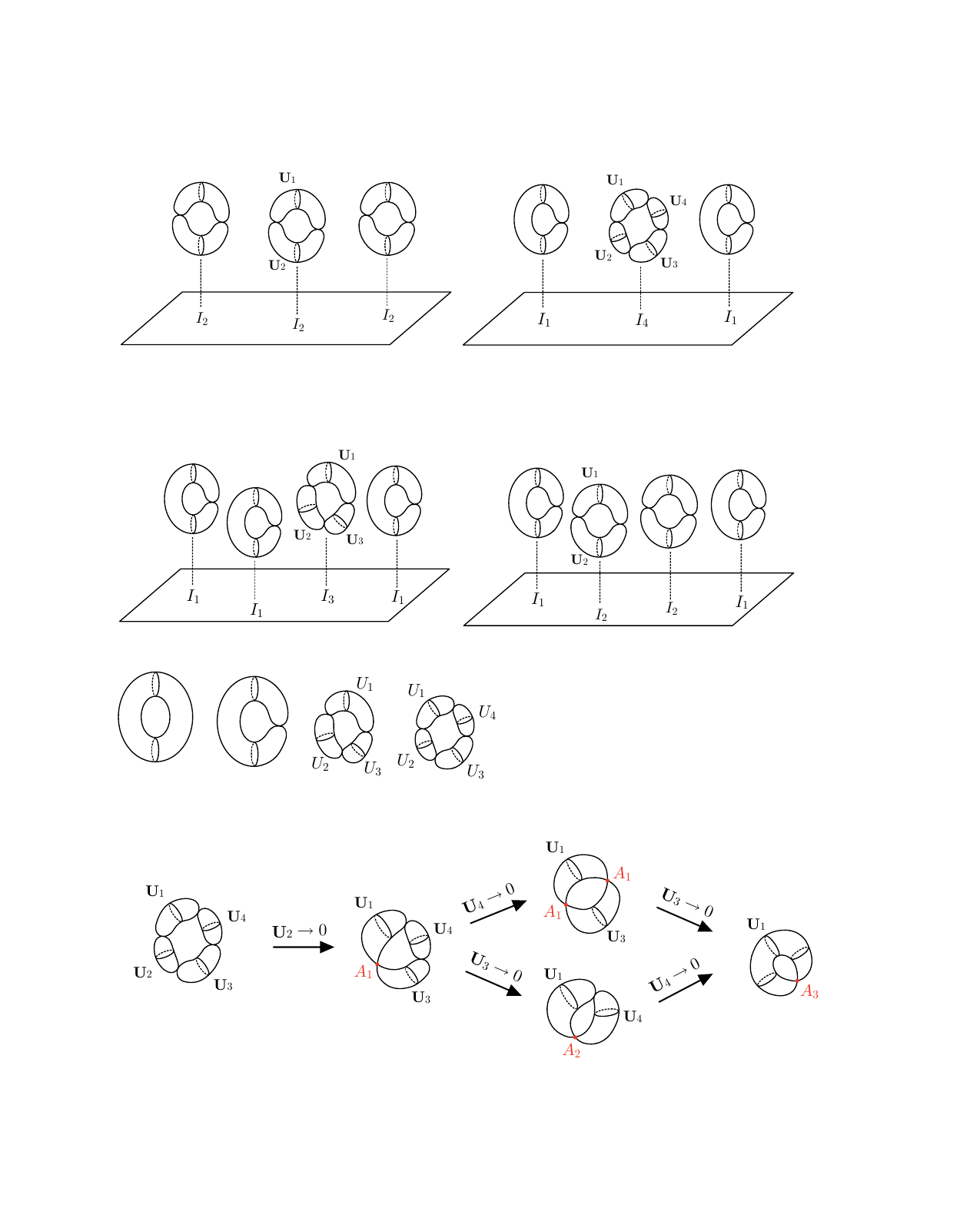}
    \caption{Cycles in Kodaira type $(I_4,2I_1)$.}
    \label{fig:411}
\end{figure}

Let us consider the fibration pattern $(I_4, 2I_1)$, as shown in Figure \ref{fig:411}. In this case, the singular fiber of type $I_1$ consists of a single irreducible component, and its homology class is simply the fiber class $[\bfF]$. So a generic fiber has volume $1$ for any choice of ($\tgamma_j,\talpha_j$). On the other hand, the $I_4$ singular fiber has four components: $\{[\bfU_1], [\bfU_2], [\bfU_3], [\bfU_4]\}$. The intersection form of these components coincides with the Cartan matrix of the affine $A_3$ root system up to an overall sign, which is:
\be
Q_{I_4}=
\begin{pmatrix}
-2 & 1 & 0 & 1 \\
1 & -2 & 1 & 0 \\
0 & 1 & -2 & 1 \\
1 & 0 & 1 & -2
\end{pmatrix}
\ee
Therefore, the components $[\bfU_a]$ correspond to the simple roots of the affine $A_3$ root system. To compute the volumes of these cycles, we need to express their homology classes $[\bfU_a]$ in terms of the basis of $H_2(\MH,\bZ)$. Although the data available now do not allow for a unique determination of these homology classes, we can infer that this setup specifies an embedding of the $A_3$ root system into the $D_4$ root system. One such possible embedding is given by:
\be\label{HomologyI42I1}
[\bfU_1] = [\bfV] + [\bfD_1], \quad
[\bfU_2] = [\bfD_2], \quad
[\bfU_3] = [\bfV] + [\bfD_3], \quad
[\bfU_4] = [\bfD_4].
\ee

\begin{figure}
    \centering
    \includegraphics[width=1\linewidth]{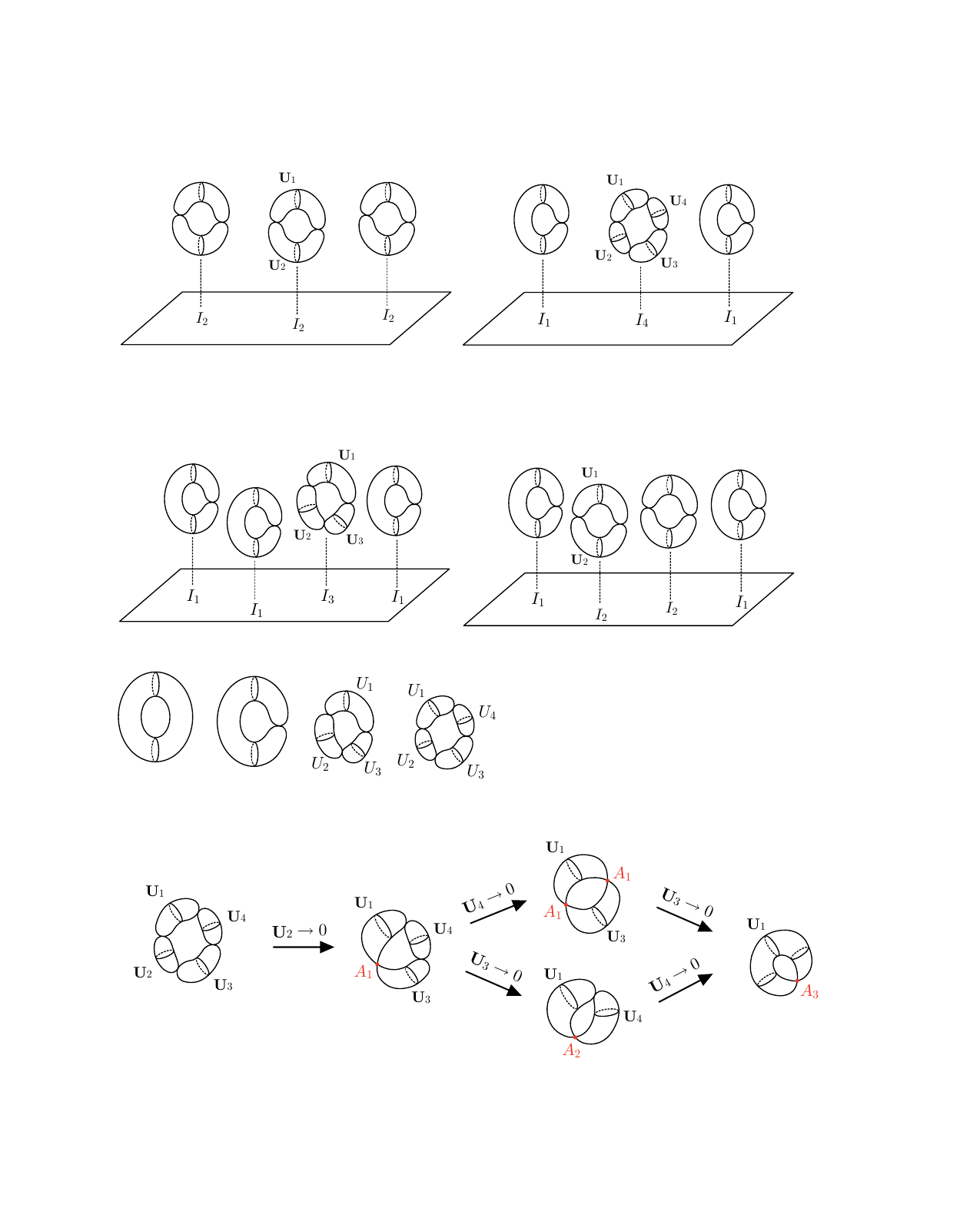}
    \caption{du Val singularities developed on $I_4$ fiber by shrinking cycles to points. $\bfU_a\to 0$ means that $\bfU_a$ shrinks to a point.}
    \label{fig:du_Val_I4}
\end{figure}

Next, since all of these components $[ \bfU_a ]$ are Lagrangian with respect to the symplectic form $\omega_K$, consistency requires that we choose the parameters $\tgamma_j$ such that the integral of $\omega_K$ over each component vanishes. This condition is encoded in the following system of linear equations:
\be
\begin{aligned}
-\tgamma_1 + \tgamma_2 + \tgamma_3 + \tgamma_4 &= 0~, \\
-\tgamma_1 - \tgamma_2 + \tgamma_3 + \tgamma_4 &= 0~, \\
\tgamma_1 + \tgamma_2 - \tgamma_3 + \tgamma_4 &= 0~, \\
-\tgamma_1 + \tgamma_2 + \tgamma_3 - \tgamma_4 &= 0~.
\end{aligned}
\ee
The solution to this system is $(\tgamma_1, \tgamma_2, \tgamma_3, \tgamma_4) = (\tgamma_1, 0, \tgamma_1, 0)$, which matches the conditions for the fibration configuration $(I_4, 2I_1)$ derived from the Seiberg-Witten curve, as classified in Table \ref{tab:fiber_sing_classification2}.
From the homology class relation \eqref{HomologyI42I1}, we deduce the explicit volume formulas for the irreducible components of the cycles:
\begin{equation}\label{I_4-volume}
\begin{aligned}
     \textrm {vol}_I(\bfU_1)&=1-(-\talpha_1+\talpha_2+\talpha_3+\talpha_4),\\
    \textrm {vol}_I(\bfU_2)&=-\talpha_1-\talpha_2+\talpha_3+\talpha_4,\\
    \textrm {vol}_I(\bfU_3)&=\talpha_1+\talpha_2-\talpha_3+\talpha_4,\\
    \textrm {vol}_I(\bfU_4)&=-\talpha_1+\talpha_2+\talpha_3-\talpha_4.
\end{aligned}
\end{equation}
When one or more of these volumes vanish, du Val singularities emerge. The conditions for such singularities align precisely with the classifications in Table \ref{tab:surface_sing_classification}, and these phenomena can be visualized geometrically in Figure \ref{fig:du_Val_I4}.

We can approach the analysis from the opposite direction by first specifying a choice of $\tgamma$ parameters.
For illustration, we select the $\tgamma$ parameters as $(\tgamma_1, \tgamma_2, \tgamma_3, \tgamma_4) = (\tgamma_1, 0, \tgamma_1, 0)$. Since this choice of $\tgamma_j$ gives $\Ker (\text{ev}_{\tgamma}) = \sfR(A_3)$, this implies that the homology classes of the Lagrangian cycles with respect to $\omega_K$ span an affine $A_3$ root lattice in $H_2(\MH,\bZ)$. 
Given a choice \eqref{wall_center} of the chamber in the $\talpha$ parameter space, (equivalently a choice of positive roots in the root lattice), $\Ker (\text{ev}_{\tgamma}) = \sfR(A_3) \subset  \sfR(D_4)$ uniquely specifies an embedding of $\dt\sfR(A_3)\hookrightarrow \dt\sfR(D_4)$. This will give the relation between the homology classes $[\bfU_a]$ and the basis of $H_2(\MH,\bZ)$, which is exactly \eqref{HomologyI42I1}. A similar analysis is applied to the other cases.

\subsubsection*{Type $(I_3,3I_1)$}
\begin{figure}
    \centering
    \includegraphics[width=0.7\linewidth]{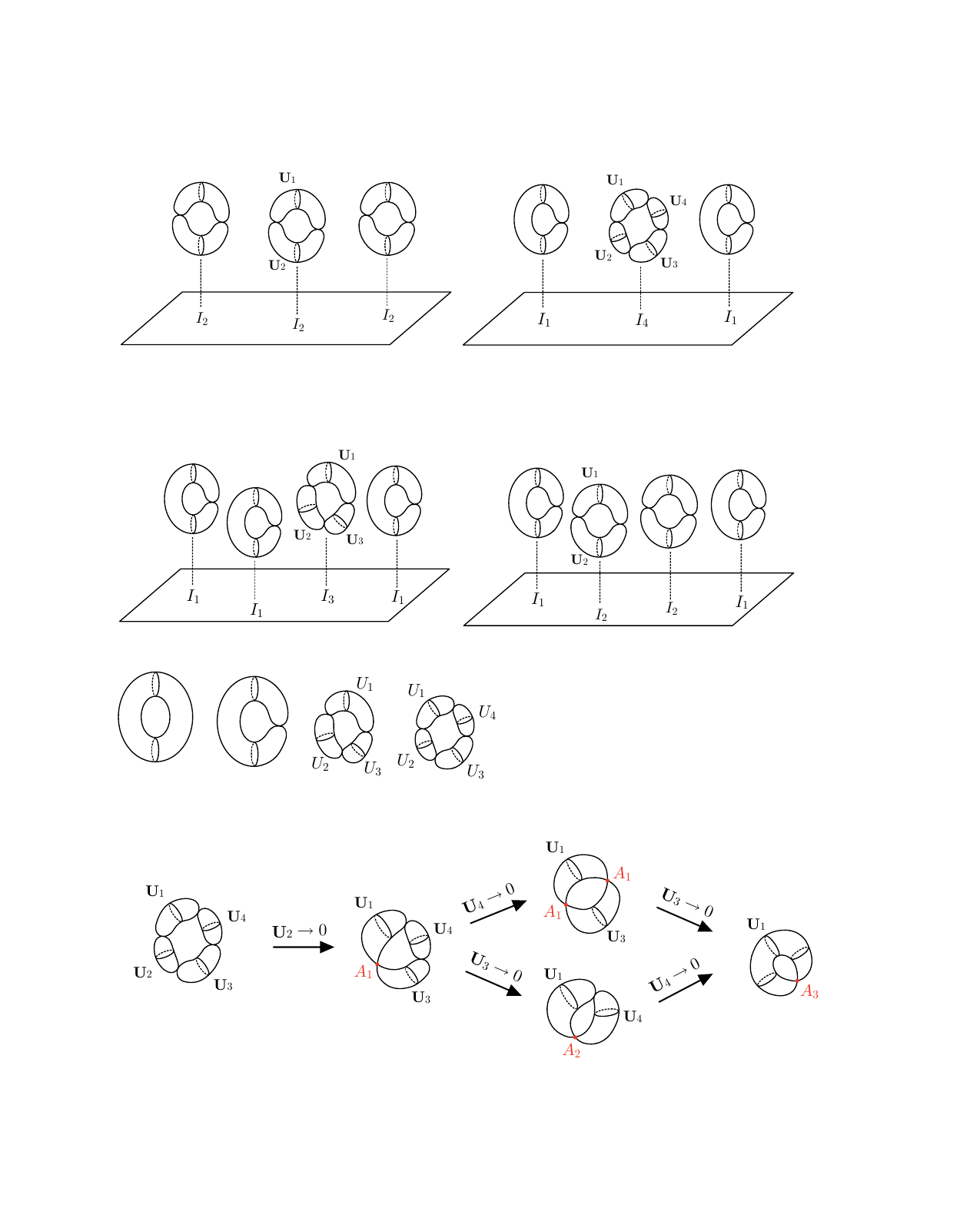}
    \caption{Cycles in Kodaira type $(I_3,3I_1)$. }
    \label{fig:3111}
\end{figure}
Let us now consider the $(I_3,3I_1)$ case, as in Figure \ref{fig:3111}.  Denote the  components of $I_3$ singular fiber as ${[\bfU_1],[\bfU_2],[\bfU_3]}$. The intersection form coincides with the affine $A_2$ Cartan matrix.
\begin{equation}
Q_{I_3}=
 \begin{pmatrix}
-2 & 1 & 1 \\
1 & -2 & 1 \\
1 & 1 & -2 
\end{pmatrix}  
\end{equation}
Thus, the components $[\bfU_a]$ can be interpreted as the simple roots of the affine $A_2$ root system. To compute their volumes, we express $[\bfU_a]$ in the basis of $H_2(\MH,\bZ)$. Although the data do not uniquely determine these homology classes, it specifies an embedding of the affine $A_2$ root system into the affine $D_4$ root system, such as:
\begin{equation}
     [\bfU_1]=[\bfV] + [\bfD_1]+[\bfD_2]~,\qquad 
    [\bfU_2]=[\bfV]+[\bfD_4]~,\qquad 
    [\bfU_3]=[\bfD_3]~.
\end{equation}
The Lagrangian conditions of cycles impose the following conditions on $\tgamma_j$
\begin{equation}
\begin{aligned}
    2\tgamma_2&=0,\\
   \tgamma_1+\tgamma_2+\tgamma_3-\tgamma_4&=0,\\
    -\tgamma_1+\tgamma_2-\tgamma_3+\tgamma_4&=0,
\end{aligned}
\end{equation}
with solution taking the form
$(\tgamma_1,\tgamma_2,\tgamma_3,\tgamma_4)=(\tgamma_1,0,\tgamma_3,\tgamma_1+\tgamma_3)$ . It coincides with one of the conditions when $(I_3,3I_1)$ singular type occurs as in Table \ref{tab:fiber_sing_classification2}. Integrating the volume form over these cycles, one finds that
\begin{equation}\label{volume_I3}
\begin{aligned}
     \textrm{vol}_I(\bfU_1)&=1-2\talpha_2\\
    \textrm{vol}_I(\bfU_2)&=\talpha_1+\talpha_2+\talpha_3-\talpha_4,\\
    \textrm{vol}_I(\bfU_3)&=-\talpha_1+\talpha_2-\talpha_3+\talpha_4.
\end{aligned}
\end{equation}
We can further analyze the singularities through the behavior of the volume function. When a single volume vanishes, the cubic surface develops an $A_1$ singularity. In contrast, when two volumes vanish simultaneously, the singularity enhances to $A_2$.

\subsubsection*{Type $(I_2,4I_1)$, $(2I_2,2I_1)$ and $3I_2$}
\begin{figure}
    \centering
    \includegraphics[width=0.7\linewidth]{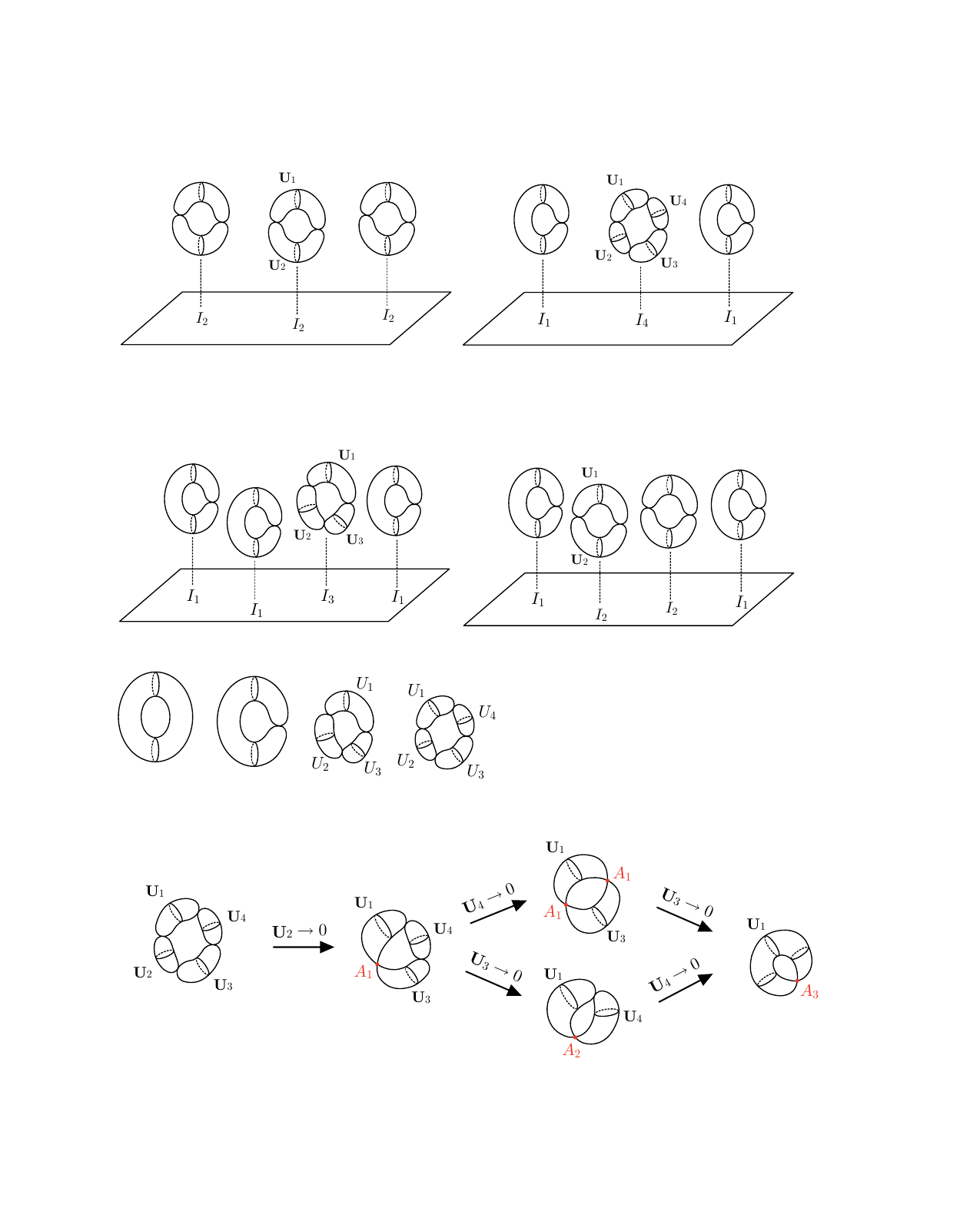}
    \caption{Cycles in Kodaira type $3I_2$.}
    \label{fig:222}
\end{figure}

We now analyze the cases $(I_2, 4I_1)$, $(2I_2, 2I_1)$, and $3I_2$, as depicted in Figures \ref{fig:2211} and \ref{fig:222}. Each $I_2$ fiber consists of two components, denoted as $\{[\bfU_1], [\bfU_2]\}$. These components correspond to the simple roots of the affine $A_1$ root system, with their intersection matrix matching the Cartan matrix of affine $A_1$ (up to an overall sign):
\begin{equation}Q_{I_2}=
    \begin{pmatrix}
-2 & 2 \\
2 & -2 
\end{pmatrix} 
\end{equation}
The homology classes of the components $[\bfU_a]$ can be expressed in terms of the basis of $H_2(\MH, \bZ)$, providing an embedding of the affine $A_2$ root system into the affine $D_4$ root system:
\begin{equation}
     [\bfU_1]=[\bfV] + [\bfD_1]+[\bfD_2]~,\qquad 
    [\bfU_2]=[\bfV]+[\bfD_3]+[\bfD_4]~,\qquad 
\end{equation}
In this case, the Lagrangian condition enforces $\tgamma_2 = 0$, which corresponds to the $(I_2, 4I_1)$ singularity type. Integrating the volume form over these cycles yields:
\begin{equation}\label{volume_I2_1}
\begin{aligned}
     \textrm{vol}_I(\bfU_1)&=1-2\talpha_2,\\
    \textrm{vol}_I(\bfU_2)&=2\talpha_2.
\end{aligned}
\end{equation}
When $\talpha_2=0~\text{or}~ \frac{1}{2} $, one $A_1$ singularity is developed.

For the $(2I_2, 2I_1)$ case, the singular fiber corresponds to a reducible affine ${A}_1 \oplus {A}_1$ root system. Each component $[\bfU_a]$ corresponds to a simple root of this system. By an embedding of the affine ${A}_1\oplus {A}_1$ root system into the affine $D_4$ root system, we can express $[\bfU_a]$ in terms of the basis of $H_2(\MH,\bZ)$:
\bea    
&[\bfU^{(1)}_1]=[\bfV] + [\bfD_1]+[\bfD_2]~,\qquad 
    [\bfU^{(1)}_2]=[\bfV]+[\bfD_3]+[\bfD_4]~,\qquad \cr 
   & [\bfU^{(2)}_1]=[\bfV] + [\bfD_1]+[\bfD_3]~,\qquad 
    [\bfU^{(2)}_2]=[\bfV]+[\bfD_2]+[\bfD_4]~,\qquad 
\eea
The lagrangian condition imposes $\tgamma_2=\tgamma_3=0$. The volume for the first $I_2$ cycles is the same as \eqref{volume_I2_1}. The volume cycles of the second fiber can also be achieved by straightforward calculation via \eqref{vol-middle-chamber2}.
\begin{equation}\label{volume_I2_2}
\begin{aligned}
     \textrm{vol}_I(\bfU^{(2)}_1)&=1-2\talpha_3,\\
    \textrm{vol}_I(\bfU^{(2)}_2)&=2\talpha_3.
\end{aligned}
\end{equation}

For the $3I_2$ case, the singular fibers correspond to a reducible affine root system ${A}_1\oplus{A}_1\oplus{A}_1$, which is the same as the one taken in \cite{Gukov:2022gei}. With the same analysis as before, we can take an embedding of the root lattice such that
\bea 
\relax  [\bfU^{(1)}_1]=[\bfV] + [\bfD_1]+[\bfD_2]~,\qquad 
    [\bfU^{(1)}_2]=[\bfV]+[\bfD_3]+[\bfD_4]~,\qquad \cr 
\relax    [\bfU^{(2)}_1]=[\bfV] + [\bfD_1]+[\bfD_3]~,\qquad 
    [\bfU^{(2)}_2]=[\bfV]+[\bfD_2]+[\bfD_4]~,\qquad \cr 
\relax [\bfU^{(3)}_1]=[\bfV] + [\bfD_1]+[\bfD_4]~,\qquad 
    [\bfU^{(3)}_2]=[\bfV]+[\bfD_2]+[\bfD_3]~,\qquad 
\eea
Then, the Lagrangian condition implies that $\tgamma_2 = \tgamma_3 = \tgamma_4 = 0$. The volume of the components in the first two singular fibers is the same as the one in \eqref{volume_I2_2}. For the third singular fiber, the volume is given by:
\begin{equation}\label{volume_I2_3}
\begin{aligned}
     \textrm{vol}_I(\bfU^{(3)}_1)&=1-2\talpha_4,\\
    \textrm{vol}_I(\bfU^{(3)}_2)&=2\talpha_4.
\end{aligned}
\end{equation}
Finally, as a consistency check, we consider the limit corresponding to the type $A_1$ DAHA, specifying the $\boldsymbol{t}$-parameters as in \eqref{A1Parameters}. In this limit, the volume formulas for the cycles precisely match the analysis presented in \cite{Gukov:2022gei}.

\begin{figure}
    \centering
    \includegraphics[width=0.7\linewidth]{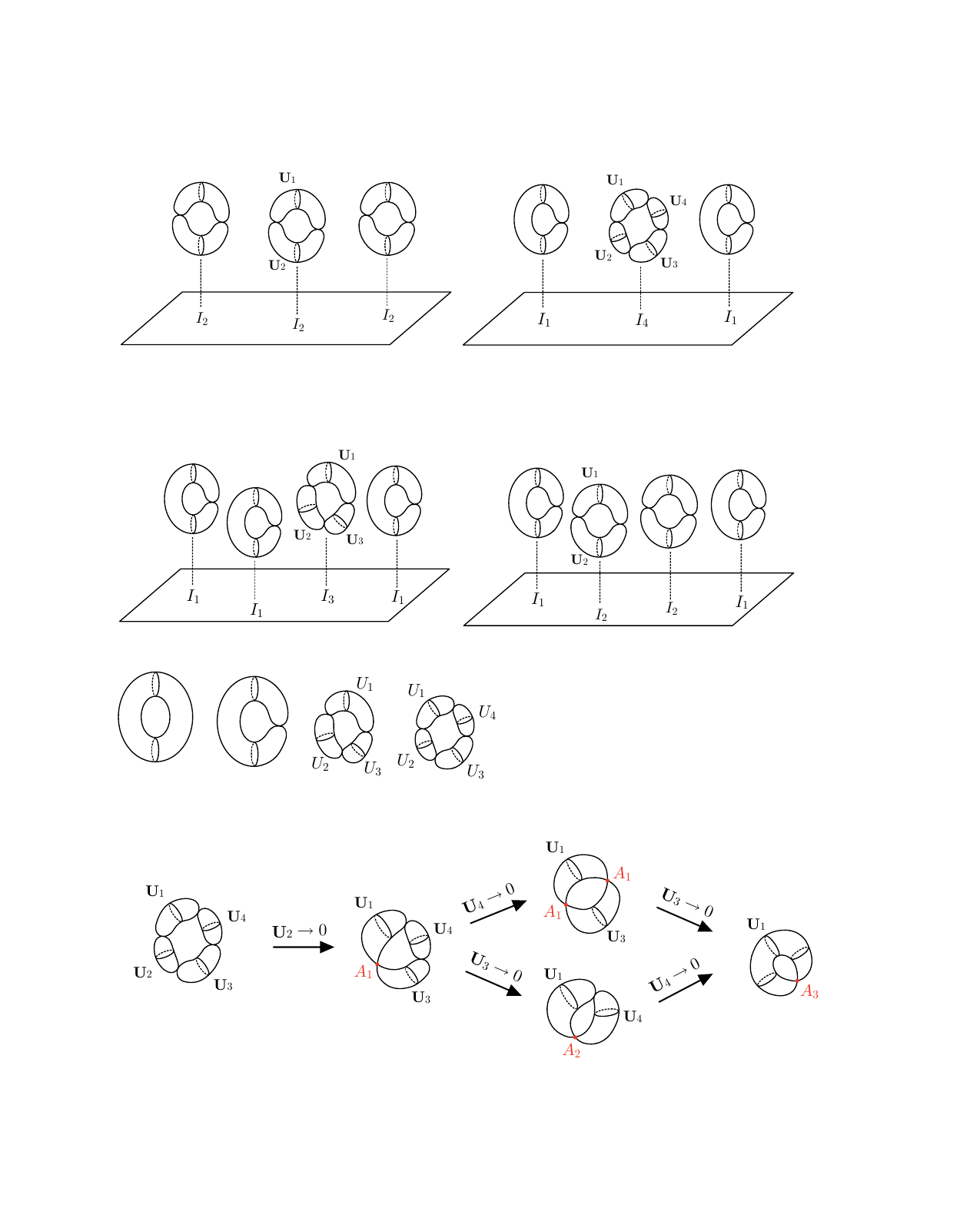}
    \caption{Cycles in Kodaira type $(2I_2,2I_1)$.}
    \label{fig:2211}
\end{figure}

\subsubsection*{Volume for suspended cycles}

\begin{figure}
    \centering
    \includegraphics[width=0.75\linewidth]{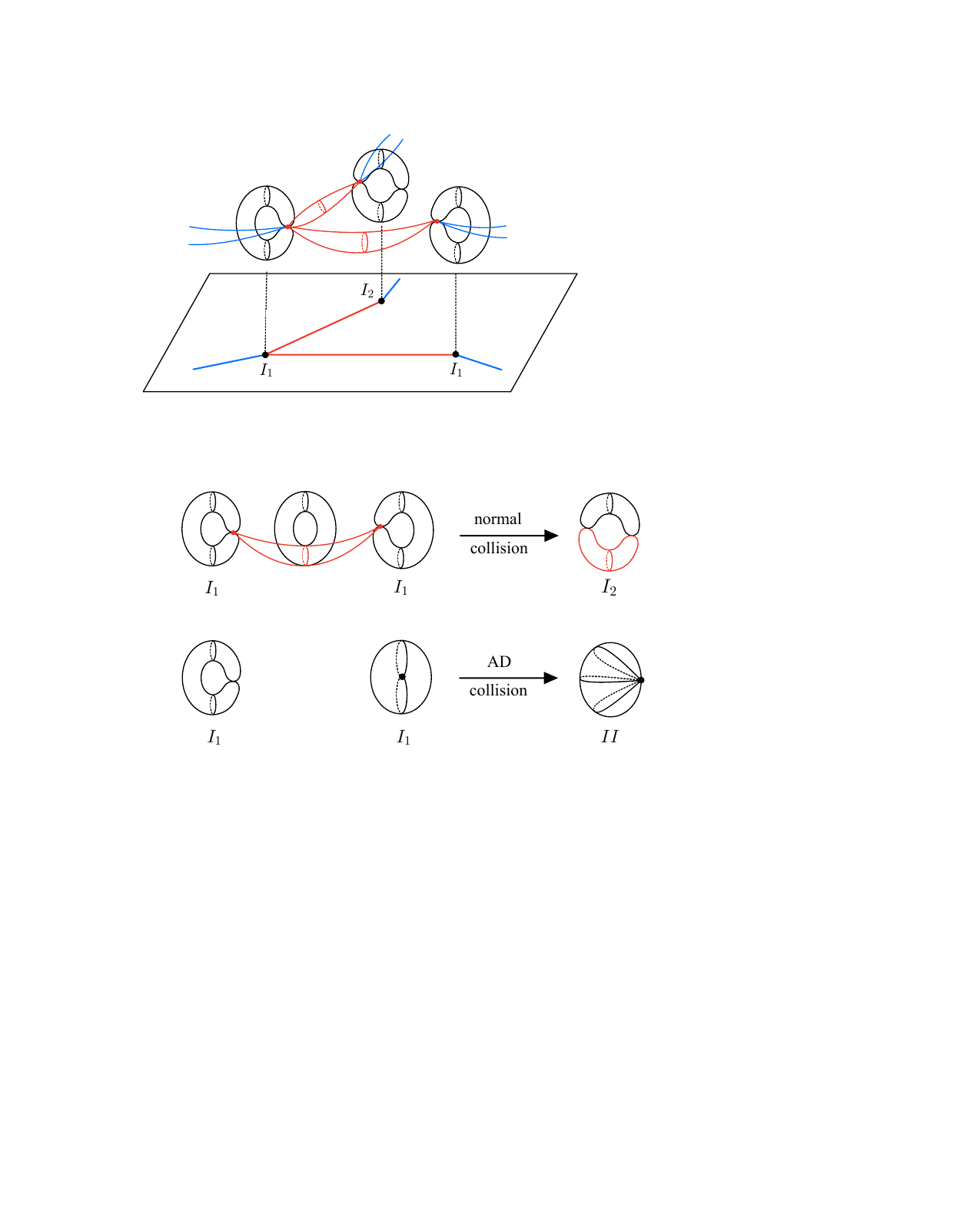}
    \caption{The cycles in the second homology of the Hitchin total space can be represented by a diagram on the $u$-plane. Singular fibers are represented as marked points along with their Kodaira types. Suspended cycles are represented by the lines connecting marked points.}
    \label{fig:suspended_cycle}
\end{figure}

Up to this point, we study homology classes of irreducible components in Kodaira singular fibers. However, as illustrated in Figure \ref{fig:collide I1}, there exist two-cycles suspended between two Kodaira singular fibers at a generic $\tgamma_j$. 
We study two-cycles suspended between two Kodaira singular fibers in more detail below, and we also identify generators of the second integral homology groups in each configuration of the Hitchin fibrations. This analysis will be useful for the $A$-model approach to the representation theory of $\SH$.

First, let us consider the case $6I_1$ with generic ramification parameters. 
The homology class of an $I_1$ singular fiber in $H_2(\MH, \bZ)$ is equivalent to the class $[\bfF]$ of a generic fiber, while the second homology satisfies $H_2(\MH, \bZ) \cong \bZ^{\oplus 5}$, as previously established. Therefore, there must exist additional homology generators in the Hitchin moduli space in the case of $6I_1$ singularities. To identify these generators, we begin by showing that for any class $[\bfW] \in H_2(\MH, \bZ)$, there exists a suitable representative $\bfW$ such that its projection $h(\bfW)$ onto the Hitchin base $\mathcal{B}_H$, via the fibration described in \eqref{Hitchin-fibration}, satisfies one of the following conditions:
\begin{enumerate}[nosep]
    \item[(1)] a point
    \item[(2)] a (piece-wise) line between Kodaira singular points 
\end{enumerate}
If the projection $h(\bfW)$ is two-dimensional, then $\bfW$ intersects the Hitchin fibers at discrete points due to dimensionality. However, since the Hitchin base $\cB_H$ is a complex plane $\bC$, a representative of $[\bfW]$ can be homotopically deformed such that its projection to $\cB_H$ reduces to either a point or a one-dimensional curve (with or without endpoints) in $\cB_H$. 
In the former case, where the projection is a point, a representative of $[\bfW]$ is supported entirely on a single Hitchin fiber, corresponding to condition (1) above. In the latter case, a representative of $[\bfW]$  intersects a generic Hitchin fiber along $S^1 \subset T^2\cong \bfF$, again by dimensionality. If we denote the projection to $\cB_H$ again by $h(\bfW)$ (abusing notation), this curve $h(\bfW)$ can end at points where the $S^1$ collapses to a trivial cycle in a Hitchin fiber, which occurs only at singular fibers. Note that for a singular fiber with monodromy $M$, the ($n_m,n_e$)-cycle subject to \eqref{charge_monodromy} collapses to a trivial cycle. 
As a result, a representative of $[\bfW]$ can always be deformed such that its projection to $\cB_H$ is either a composition of lines connecting Kodaira singular points (see Figure \ref{fig:suspended_cycle}) or a point in $\cB_H$. Therefore, the statement holds.

\begin{figure}
    \centering
    \includegraphics[width=1\linewidth]{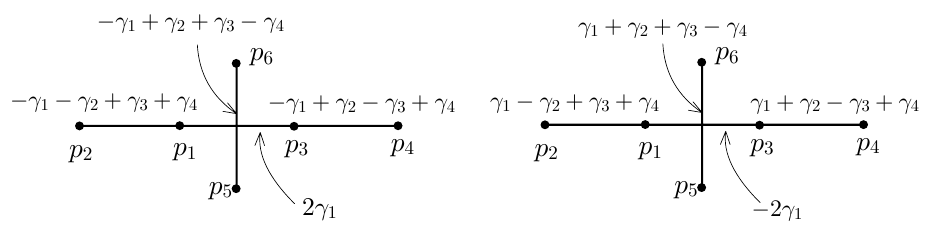}
    \caption{The generators of the second homology of the Hitchin total space for the $6I_1$ case. Singular fibers at $p_i$ are all of type $I_1$, and suspended cycles are projected onto lines between these points. The cycle suspended between $p_1$ and $p_3$ intersects with the one suspended between $p_5$ and $p_6$ once. Two examples of length assignments are related by a PL transformation \eqref{mPL} with respect to the cycle suspended $[\bfW_{13}]$ between $p_1$ and $p_3$.}
    \label{fig:6I1 cycles}
\end{figure}

The two-cycles satisfying condition (1) will be referred to as \emph{fiber} (two-)cycles, while those satisfying condition (2) will be called \emph{suspended (two-)cycles}. 
As seen in Figure \ref{fig:collide I1}, there exists a suspended cycle between the two $I_1$ singular fibers of the same type.  In the case of the configuration $6I_1$, these suspended cycles as well as the homology class $[\bfF]$ of a generic fiber generate the second homology group $H_2(\MH,\bZ)$, forming the affine $D_4$ root lattice \eqref{Homology_Lattice}:  
\be
\dt{\sfQ}(D_4) \cong \sfQ(D_4) \oplus \bZ\langle \delta \rangle,
\ee
where the homology class of a generic fiber serves as the imaginary root $\delta$. 
Hence, the homology classes represented by the suspended cycles span the root lattice $\sfQ(D_4)$ with the intersection form given by the Cartan matrix of type $D_4$ (up to an overall sign).

From the analysis above, suspended cycles can be represented as line segments connecting six marked points $p_i$ on the $u$-plane, corresponding to the loci of the $I_1$ singular fibers. We denote the cycle associated with the line segment between $p_i$ and $p_j$ as $\bfW_{ij}$. Recalling that four marked points $p_i$ ($i=1,2,3,4$) are quark singularities while two marked points $p_i$ ($i=5,6$) are dyon singularities. Since suspended cycles can exist between singular fibers of the same type, a consistent method to assign the cycles is illustrated in Figure \ref{fig:6I1 cycles}. 
Note that additional cycles, which wind around singular fibers, also exist. These winding cycles are discussed in Appendix \ref{app:monodromy}, where they are explicitly shown to be homologous to the suspended cycles depicted in Figure \ref{fig:6I1 cycles}.

The homology classes of the suspended cycles are determined by embedding the $D_4$ root system into the affine $D_4$ root lattice, along with specifying a chamber in the $\tgamma$-space. Let us begin by considering a specific chamber:
\begin{equation} \label{gamma-chamber}
\{\tgamma \in \mathfrak{t}^{(\tgamma)}_{D_4} \mid e^i \cdot \tgamma > 0, \ i=1,2,3,4\}.
\end{equation}
In this chamber, the homology classes of the suspended cycles are identified as follows:
\be\label{suspended-homology}
    [\bfW_{13}] = [\bfV], \quad 
    [\bfW_{12}] = [\bfD_2], \quad 
    [\bfW_{34}] = [\bfD_3], \quad 
    [\bfW_{56}] = [\bfD_4].
\ee
Therefore, \eqref{vol-middle-chamber4} tells us the volumes of these suspended cycles defined in \eqref{holomorphic-volume}  are
\be\label{Volume_suspended_cycles}
\Big(\textrm{vol}(\bfW_{12}),\textrm{vol}(\bfW_{34}),\textrm{vol}(\bfW_{56}),\textrm{vol}(\bfW_{13})\Big)
    =\Big(e^1\cdot(\boldsymbol{\talpha}-i\boldsymbol{\tgamma}),
e^2\cdot(\boldsymbol{\talpha}-i\boldsymbol{\tgamma}),~e^3\cdot(\boldsymbol{\talpha}-i\boldsymbol{\tgamma})~,e^4\cdot(\boldsymbol{\talpha}-i\boldsymbol{\tgamma})\Big)
\ee

The imaginary part of the volume in \eqref{Volume_suspended_cycles} has a geometric interpretation. By comparing these volumes with the conditions for the Kodaira singular fibers (as detailed in Table \ref{tab:fiber_sing_classification2}), we observe that a Kodaira singular fiber of new type appears precisely when the imaginary part of a volume vanishes. Geometrically, this corresponds to the collision of two $I_1$ fibers, as shown schematically in Figure \ref{fig:collide I1}. Consequently, the imaginary part of the volume serves as a measure of the ``distance'' between two $I_1$ singular fibers. The left-hand panel of Figure \ref{fig:6I1 cycles} visually represents these distances in the central chamber.

It is important to emphasize that the embedding of the $D_4$ root system into the affine $D_4$ root lattice does not uniquely specify the homology classes of the suspended cycles. Wall-crossing phenomena, described in \S\ref{sec:wallcrossing}, allow transitions between different bases of their homology classes. As seen in \eqref{dtWD4onabc}, the $\tgamma$-space receives only the Weyl group $W(D_4)$ rather than the affine Weyl group  $\dt W(D_4)$ since the affine translation acts on the $\tgamma$-space trivially. 
Thus, the walls in the $\tgamma$-space are determined by the discriminant loci given by
\begin{equation}
\sum_{j=1}^4 r_j \boldsymbol{\tgamma}_j = 0, \qquad \forall r=(r_1,r_2,r_3,r_4) \in \sfR(D_4),
\end{equation}
where $\sfR(D_4)$ denotes the $D_4$ root system. These walls correspond to the reflection hyperplanes of the $D_4$ weight lattice. As a result, each chamber in the $\tgamma$-space, bounded by these walls, corresponds to a Weyl chamber of type $D_4$. In Figure \ref{fig:gamma_space}, we schematically represent these walls by black lines.

The volume in \eqref{Volume_suspended_cycles} is proportional to the distance from the chamber boundaries, where the chamber is specified by \eqref{gamma-chamber}.
This relation indicates a wall-crossing phenomenon for the volume functions in the $\gamma$-space. When $\tgamma_j$ crosses a wall defined by a root $r$, the basis of the root undergoes a Weyl reflection $s_r$:  
\be\label{Weyl-reflection}
s_{r_a}(r_b) = r_b - A_{ba} r_a~,
\ee
where $A_{ba}$ is the $D_4$ Cartan matrix. Consequently, the volume functions exhibit a discontinuity, with their dependence on $\tgamma_j$ jumping according to the Weyl reflection. Since the wall-crossing is governed by Weyl reflections, the PL transformation defined by \eqref{mPL} still provides the basis transformation of the second homology classes once crossing a wall. The wall-crossing of the volume functions with respect to the cycle $[\bfW_{13}]$ is depicted in Figure \ref{fig:6I1 cycles}.

To better understand the physical interpretation of the volume function, we examine the $\talpha_j = 0$ limit and express the volumes in terms of the mass parameters $m_j$ as defined in \eqref{mass_monodromy}. In this case, the volumes of the relevant 2-cycles are given by:  
\bea\label{volume-suspended}
    \text{vol}(\bfW_{12}) &= m_1 - m_3, \\
    \text{vol}(\bfW_{13}) &= -m_1 - m_2, \\
    \text{vol}(\bfW_{34}) &= m_2 + m_4, \\
    \text{vol}(\bfW_{56}) &= m_2 - m_4.
\eea  
These expressions illustrate how the mass parameters dictate the geometric volumes of the cycles, encapsulating both their physical and geometric significance.  

This result can also be derived in the string theory framework of Seiberg-Witten theory, using the complex structure $I$ instead of $J$. Using the isomorphism  $\SU(2) \cong \Sp(1)$, the $\Sp(1)$ theory with $N_f = 4$ can be realized by considering a D3-brane in the presence of four D7-branes and an O7${}^-$-plane in Type IIB string theory \cite{Banks:1996nj}. The four D7-branes carry electric charges $(n_m, n_e) = (0, 1)$. A notable result from \cite{sen1996f} shows that the O7${}^-$-plane splits into two 7-branes with charges $(n_m, n_e) = (1, 0)$ and $(n_m, n_e) = (1, -2)$. Remarkably, the charges of these 7-branes agree perfectly with those of the massless BPS particles on the $u$-plane in the 4d $\cN = 2$ theory.  

In this framework, the mass (or equivalently, the central charge) of a BPS string stretched between the $i$-th and $j$-th 7-branes has a clear geometric interpretation. It is determined by integrating the holomorphic symplectic form $\Omega_I$ over the 2-cycle connecting the two $I_1$ singular fibers associated with the 7-branes \cite{sen1996f}:  
\be  
 \left| \int_{\bfW_{ij}} \frac{\Omega_I}{2\pi} \right| = \left| m_a \pm m_b \right|,  
\ee  
where $m$ are the mass parameters of the 4d theory. The precise relation between $\bfW_{ij}$ and the masses depends on a choice of chamber and electromagnetic frame of BPS particles, which we do not specify here. Nevertheless, our analysis in \eqref{volume-suspended} is consistent with the result of \cite{sen1996f} in the complex structure $I$, provided $\tbeta_j = 0$. Consequently, the \HK structure ensures a unified and consistent connection between the analyses in the complex structures $I$ and $J$.

\subsection{Symmetry actions}\label{sec:symmetry}

As seen in \S\ref{sec:SH}, the spherical DAHA $\SH$ enjoys the symmetries that involve maps on the deformation parameters $\boldsymbol{t}$. The symmetry action can be understood as the group of certain diffeomorphisms (or symplectomorphisms as we will see in \S\ref{sec:category}) of the target space $\X$ with different ramification parameters.  Recall that the deformation parameters $\boldsymbol{t}$ are related to the $(\talpha, \tgamma)$ parameters via $t_j = e^{-2\pi(\tgamma_j + i\talpha_j)}$ \eqref{tj}. Here, we analyze the actions of the braid group $B_3$ and the sign-flip group $\bZ_2\times \bZ_2$ at the level of homology classes. To achieve this, we must choose a preferred uplift of the group actions to the $(\talpha, \tgamma)$ parameter space. The chamber structure in the $(\talpha, \tgamma)$ parameter space, as discussed in the previous subsections, introduces ambiguity, as the uplift may map one chamber to another. To resolve the ambiguity, we adopt a preferred uplift where the actions of both the braid group $B_3$ and the sign-flip group $\bZ_2\times \bZ_2$ preserve the chamber structure.

To be explicit, we focus on the chamber $WXYZ$ defined in \eqref{wall_center}. Our goal is to analyze the action of the braid group $B_3$ as described in \eqref{Classical_braid_action_sdaha}. A family of character varieties is parametrized by the ramification parameters $(\talpha, \tgamma)$, and the generators of $B_3$ act as diffeomorphisms between the character varieties at different points in the parameter space. The explicit maps of the ramification parameters are given by:
\bea\label{braid_action_daha2}
\tau_+: \left(\talpha_1, \talpha_2, \talpha_3, \talpha_4\right) &\mapsto \left(\talpha_1, \talpha_2, \talpha_4, \talpha_3\right), \\
\left(\tgamma_1, \tgamma_2, \tgamma_3, \tgamma_4\right) &\mapsto \left(\tgamma_1, \tgamma_2, \tgamma_4, \tgamma_3\right), \\
\tau_-: \left(\talpha_1, \talpha_2, \talpha_3, \talpha_4\right) &\mapsto \left(\talpha_1, \talpha_4, \talpha_3, \talpha_2\right), \\
\left(\tgamma_1, \tgamma_2, \tgamma_3, \tgamma_4\right) &\mapsto \left(\tgamma_1, \tgamma_4, \tgamma_3, \tgamma_2\right), \\
\sigma: \left(\talpha_1, \talpha_2, \talpha_3, \talpha_4\right) &\mapsto \left(\talpha_1, \talpha_3, \talpha_2, \talpha_4\right), \\
\left(\tgamma_1, \tgamma_2, \tgamma_3, \tgamma_4\right) &\mapsto \left(\tgamma_1, \tgamma_3, \tgamma_2, \tgamma_4\right).
\eea
These parameter transformations induce corresponding diffeomorphisms of the character varieties, which we schematically write 
\bea
\tau_+: \X(t_1,t_2,t_3,t_4) \to \X(t_1,t_2,t_4,t_3) \ ;\ (x, y, z) \mapsto (x, xy - z - \theta_3, y), \cr
\tau_-: \X(t_1,t_2,t_3,t_4) \to \X(t_1,t_4,t_3,t_2) \ ;\ (x, y, z) \mapsto (xy - z - \theta_3, y, x), \cr
\sigma: \X(t_1,t_2,t_3,t_4) \to \X(t_1,t_3,t_2,t_4) \ ;\ (x, y, z) \mapsto (y, x, xy - z - \theta_3).
\eea
Furthermore, computing the changes of the volumes using \eqref{vol-middle-chamber4}, we can determine the corresponding transformations of homology cycles:
\bea\label{Braidgroup_Action}
\tau_+: [\bfD_3] \leftrightarrow [\bfD_4], \quad & [\bfD_1]~\text{and}~[\bfD_2]~\text{remain invariant}, \\
\tau_-: [\bfD_2] \leftrightarrow [\bfD_4], \quad & [\bfD_1]~\text{and}~[\bfD_3]~\text{remain invariant}, \\
\sigma: [\bfD_2] \leftrightarrow [\bfD_3], \quad & [\bfD_1]~\text{and}~[\bfD_4]~\text{remain invariant}.
\eea

Next, we analyze the action of the sign-flip symmetry group $\bZ_2^{\times 2}$, as defined in \eqref{Z2action}. This group can be understood as the group of diffeomorphisms between the character varieties at different points in the parameter space. Specifically, the diffeomorphisms $\xi_1$, $\xi_2$, and $\xi_3$ are written as
\bea\label{Z2-diffeo}
    \xi_1: \X(t_1, t_2, t_3, t_4) &\to \X(t_1, -t_2^{-1}, t_3, -t_4^{-1}); \quad (x, y, z) \mapsto (-x, y, -z), \\
    \xi_2: \X(t_1, t_2, t_3, t_4) &\to \X(t_1, t_2, -t_3^{-1}, -t_4^{-1}); \quad (x, y, z) \mapsto (x, -y, -z), \\
    \xi_3: \X(t_1, t_2, t_3, t_4) &\to \X(t_1, -t_2^{-1}, -t_3^{-1}, t_4); \quad (x, y, z) \mapsto (-x, -y, z).
\eea
While we have so far considered the action at the level of the $t_i$-parameters, there is a natural way to lift this action to the ramification parameters $(\talpha_j, \tgamma_j)$. This lifting is done by requiring the action to preserve the chamber structure. Explicitly, the action of $\bZ_2^{\times 2}$ is given as follows:
\bea
    \xi_1: (\talpha_1, \talpha_2, \talpha_3, \talpha_4) &\mapsto (\talpha_1, \tfrac{1}{2} - \talpha_2, \talpha_3, \tfrac{1}{2} - \talpha_4), \\
    (\tgamma_1, \tgamma_2, \tgamma_3, \tgamma_4) &\mapsto (\tgamma_1, -\tgamma_2, \tgamma_3, -\tgamma_4), \\
    \xi_2: (\talpha_1, \talpha_2, \talpha_3, \talpha_4) &\mapsto (\talpha_1, \talpha_2, \tfrac{1}{2} - \talpha_3, \tfrac{1}{2} - \talpha_4), \\
    (\tgamma_1, \tgamma_2, \tgamma_3, \tgamma_4) &\mapsto (\tgamma_1, \tgamma_2, -\tgamma_3, -\tgamma_4), \\
    \xi_3: (\talpha_1, \talpha_2, \talpha_3, \talpha_4) &\mapsto (\talpha_1, \tfrac{1}{2} - \talpha_2, \tfrac{1}{2} - \talpha_3, \talpha_4), \\
    (\tgamma_1, \tgamma_2, \tgamma_3, \tgamma_4) &\mapsto (\tgamma_1, -\tgamma_2, -\tgamma_3, \tgamma_4).
\eea
Therefore, the maps \eqref{Z2-diffeo} provide diffeomorphisms between the character varieties at these two points.
To understand the action of $\bZ_2^{\times 2}$ on the homology classes, we compute the induced transformations using the volume formula. The action on the homology classes $[\bfD_i]$ is given by:
\bea\label{signflip_Action}
    \xi_1^*: [\bfD_1] \leftrightarrow [\bfD_3], \quad &[\bfD_2] \leftrightarrow [\bfD_4], \\
    \xi_2^*: [\bfD_1] \leftrightarrow [\bfD_2], \quad &[\bfD_3] \leftrightarrow [\bfD_4], \\
    \xi_3^*: [\bfD_1] \leftrightarrow [\bfD_4], \quad &[\bfD_2] \leftrightarrow [\bfD_3].
\eea
By analyzing the combined action of the braid group $B_3$ and the sign-flip group $\bZ_2^{\times 2}$ on the homology classes, we observe that they together generate the outer automorphism group of the affine $D_4$ Lie algebra.

\section{Branes vs Representations}\label{sec:brane-rep}

Having thoroughly studied the geometry of the target space of the $A$-model, we are now ready to introduce the main actors of our story—the branes. As outlined in \S\ref{sec:brane-quantization}, our motivation stems from brane quantization on the $\SL(2,\bC)$-character variety $(\X,\omega_\X)$ and its connection to the representation theory of the spherical DAHA $\SH$.

The main goal of this section is to establish a precise correspondence between $A$-branes in the sigma-model
and $\SH$-modules.
We first identify a set of non-compact
$(A,B,A)$-branes that correspond to polynomial representations of $\SH$. We then
construct the correspondence between compactly supported Lagrangian $A$-branes
and finite-dimensional representations by comparing defining properties,
including dimensions, shortening conditions, and morphism spaces.

In addition, from the perspective of the 2d $A$-model, we describe the action of the affine braid group on the category.

\subsection{Affine braid group action on category}\label{sec:category}

The focus of this section is the derived equivalence proposed by brane quantization, given explicitly by the functor
\begin{equation}\label{functor-explicit}
\RHom(-,\Bcc): D^b \ABrane(\X, \omega_{\X}) \rightarrow D^b \Rep(\SH), \quad \brane_\bfL \mapsto \scL = \Hom(\brane_\bfL,\Bcc)~.
\end{equation}
Before we explicitly establish the correspondence between $A$-branes and $\SH$-modules, let us first consider the symmetries acting on these two categories, known in category theory as auto-equivalences.
In fact, the equivalence of these categories requires that the auto-equivalences on both sides must correspond to each other.

In this respect, the $A$-model perspective becomes particularly insightful. The $A$-model depends on the quantum (or complexified) \K moduli but is independent of the complex structure moduli. As we vary parameters in the complex structure moduli space, the resulting monodromy transformations of $A$-branes induce non-trivial symmetries on the $A$-brane category. 

To describe quantum \K moduli space, let us recall the role of the $B$-field \eqref{B-field} in the 2d sigma-model. Beyond the triple $(\talpha, \tbeta, \tgamma)$ of ramification parameters, the $B$-field introduces an additional ``quantum'' parameter, denoted by $\teta$, to the 2d sigma-model \cite{Aspinwall:1995zi}. This inclusion enriches the parameter space of the model. Extending the construction in \eqref{dtWD4onabc}, the quadruple of parameters takes the value
\be 
(\talpha, \tbeta, \tgamma, \teta) \in \frac{\frakt_{D_4} \times \frakt_{D_4} \times \frakt_{D_4} \times \frakt_{D_4}^\vee}{\dt W(D_4)} \cong T_{D_4} \times \frakt_{D_4} \times \frakt_{D_4} \times T_{D_4}^\vee~.
\ee 
Under geometric Langlands duality, $\talpha$ and $\teta$ are exchanged, while $\tbeta$ and $\tgamma$ are also exchanged \cite{Gukov:2006jk}.
In our setup, it is natural to normalize the period of the $B$-field over a generic Hitchin fiber $\bfF$ as:
\be 
\int_{\bfF} \frac{B}{2\pi} = 1~.
\ee 
With this normalization, we can define the volume of a submanifold $\bfW$ as:
\be 
\operatorname{vol}_B(\bfW) \equiv \int_\bfW \frac{B}{2\pi}~.
\ee 
The integrals of the $B$-field over the basis elements of $H_2(\MH, \bZ)$ are then expressed as:
\be 
(\operatorname{vol}_B(\bfD_1), \operatorname{vol}_B(\bfD_2), \operatorname{vol}_B(\bfD_3), \operatorname{vol}_B(\bfD_4), \operatorname{vol}_B(\bfV)) = (1 - \theta \cdot \teta, e^1 \cdot \teta, e^2 \cdot \teta, e^3 \cdot \teta, e^4 \cdot \teta)~,
\ee 
where the Euclidean inner product is assumed as $r \cdot \teta = \sum_{j=1}^4 r_j \teta_j$.

As discussed in \S\ref{sec:wallcrossing}, one of the key insights from \cite{Gukov:2006jk} is the action of the affine Weyl group $\dt W(D_4)$ on $H_2(\MH, \bZ)$. This action is explicitly realized through the Picard-Lefschetz monodromy transformations. Another important insight in \cite{Gukov:2006jk} is the affine braid group action on the category of $A$-branes.

To illustrate the affine braid group action explicitly, let us consider the case where $\hbar$ is real, such that the target space is the symplectic manifold $(\X, \omega_K)$. In this setting, the $A$-model on $(\X, \omega_K)$ depends only on the quantum \K parameter
\be 
v = \exp(2\pi (\tgamma + i\teta))~,
\ee 
and is independent of the complex structure parameters $(\talpha, \tbeta)$. This implies that, for the study of $A$-branes, we should fix the value of $v$. 
At the same time, since the $A$-model is locally independent of $\talpha$ and $\tbeta$, the group of monodromies arising from varying $\talpha$ and $\tbeta$ generates auto-equivalences on the $A$-brane category.

Since the $A$-model depends on $v$, monodromies that change $v$ are irrelevant. The relevant part of the Weyl group is therefore the subgroup that leaves $v$ fixed, which keeps invariant the kernel of the evaluation map: 
\be \label{evaluation2}
\textrm{ev}_{\tgamma + i\teta}: \sfR(D_4) \to \bC \ ; \ r=(r_1,r_2,r_3,r_4)\mapsto \sum_{j=1}^4 r_j (\tgamma_j + i\teta_j)~.
\ee 
We write its affine extension by $\dt W_v\subset \dt W(D_4)$.

In this context, a pair $(\talpha, \tbeta)$ is said to be $\dt W_v$-\emph{regular} if it is not fixed by any element of $\dt W_v$ other than the identity. Non-regular quadruples $(\talpha, \tbeta, \tgamma, \teta)$ correspond to singularities in the target space $(\X, \omega_K)$. The group $\dt W_v$ acts freely on the space $(\mathfrak{t}_{D_4}^{(\alpha)} \times \mathfrak{t}_{D_4}^{(\beta)})^{\dt{W}_v\text{-reg}}$ of regular pairs, and a family of smooth sigma-models is parametrized by the quotient 
\be
\frac{(\mathfrak{t}_{D_4}^{(\alpha)} \times \mathfrak{t}_{D_4}^{(\beta)})^{\dt{W}_v\text{-reg}}}{\dt{W}_v}~.
\ee 
On the other hand, the singularities, arising from non-regular pairs, have a co-dimension of at least two in the $(\talpha, \tbeta)$-space.  As a result, the group of monodromies that acts non-trivially on the $A$-model as graded symplectomorphisms is identified with the orbifold fundamental group:
\be 
\dt\Br_v \cong \pi_1\left(\frac{(\mathfrak{t}_{D_4}^{(\alpha)} \times \mathfrak{t}_{D_4}^{(\beta)})^{\dt{W}_v\text{-reg}}}{\dt{W}_v}\right)~.
\ee 
This group can be understood through the following short exact sequence:
\begin{equation}
1 \rightarrow \pi_1\left((\mathfrak{t}_{D_4}^{(\alpha)} \times \mathfrak{t}_{D_4}^{(\beta)})^{\dt{W}_v\text{-reg}}\right) \rightarrow \dt\Br_v \rightarrow \dt{W}_v \rightarrow 1~.
\end{equation}

For the specific case where $v=1$, the group $\dt W_v$ becomes the affine braid group $\dt W(D_4)$ type $D_4$. In this situation, the derived category receives a natural $\dt\Br(D_4)$-action:
\be
\begin{tikzpicture}[x=0.75pt,y=0.75pt,yscale=-1,xscale=1]
\draw (273,83) node [anchor=north west][inner sep=0.75pt]    {$\dt\Br(D_4)$};
\draw (130,120.4) node [anchor=north west][inner sep=0.75pt]    {$D^b\ABrane(\X ,\omega _{\X})_{\tgamma = 0 = \teta} \cong D^b\Rep(\SH)_{|q|=1=|t_j|}$};
\draw (335,112) node [anchor=north west][inner sep=0.75pt]  [rotate=-141.53]  {$\circlearrowright$};
\draw (270,120) node [anchor=north west][inner sep=0.75pt]  [rotate=-212.33]  {$\circlearrowright$};
\end{tikzpicture}
\ee 
The affine braid group $\dt\Br(D_4)$ is explicitly given by
\be
\dt\Br(D_4) \cong \pi_1\left(\frac{(\mathfrak{t}_{D_4}^{(\alpha)} \times \mathfrak{t}_{D_4}^{(\beta)})^{\dt{W}(D_4)\text{-reg}}}{\dt{W}(D_4)}\right)~.
\ee
In this parameter space, the locus where a two-cycle $\bfW_a$ collapses to zero size defines a singular divisor. The monodromy around this divisor gives rise to a generator $T_a$ of $\dt\Br(D_4)$. The defining relations of $\dt\Br(D_4)$ are obtained from those of the affine Weyl group~\eqref{affine-Weyl-D4} by omitting the quadratic relations $T_j^2 = 1$ and $T_{\bfV}^2 = 1$. 
Explicitly, the braid relations take the form
\be\label{braid-relation}
\begin{aligned}
    T_j T_k &= T_k T_j, \\
    T_j T_{\bfV} T_j &= T_{\bfV} T_j T_{\bfV}, \\
\end{aligned}
\ee
for any $j, k = 1, 2, 3, 4$. As we will see in \S\ref{sec:I0*}, $T_j^2$ and $T_{\bfV}^2$ give rise to grading shifts $-2$ to compact Lagrangian $A$-branes so that they are non-trivial graded symplectomorphisms.

So far, we have considered the case where $\hbar$ is real. However, the story remains analogous for a generic $\hbar = |\hbar|e^{i\theta}$, where the symplectic form $\omega_\X$ is given in \eqref{generic-Bcc}. In this more general setting, the $A$-model on $(\X, \omega_\X)$ depends on the parameter 
\be
v = \exp\left[2\pi \left(\Im\left(\frac{\tgamma_j + i\talpha_j}{-i e^{i\theta}}\right) + i\teta\right)\right]~,
\ee
while it remains independent of the pair
\be 
\left(\Re\left(\frac{\tgamma_j + i\talpha_j}{-i e^{i\theta}}\right), \tbeta\right)~.
\ee 
Similarly, we can define $\dt W_v$ using $(\log v)/2\pi$ in \eqref{evaluation2}. The quotient of the space of regular pairs by $\dt W_v$ then parametrizes the $A$-model. Consequently, the orbifold fundamental group of this space corresponds to the group of auto-equivalences of the $A$-brane category $\ABrane(\X, \omega_\X)$ as well as the representation category $\Rep(\SH)$:
\be
\begin{tikzpicture}[x=0.75pt,y=0.75pt,yscale=-1,xscale=1]
\draw (255,85) node [anchor=north west][inner sep=0.75pt]    {$\dt{\Br}_v$};
\draw (140,120.4) node [anchor=north west][inner sep=0.75pt]    {$D^b\ABrane(\X ,\omega _{\X}) \cong D^b\Rep(\SH)$};
\draw (295,112) node [anchor=north west][inner sep=0.75pt]  [rotate=-141.53]  {$\circlearrowright$};
\draw (252,120) node [anchor=north west][inner sep=0.75pt]  [rotate=-212.33]  {$\circlearrowright $};
\end{tikzpicture}
\ee 
This leads to Claim \ref{claim:3}.
Note that the affine braid group action discussed here is mirror to the braid group action on $B$-branes generated by twist functors along spherical objects, as constructed in \cite{seidel2001braid}.

\bigskip

Beyond the affine braid group action, the symmetry action described in \S\ref{sec:symmetry} also induces auto-equivalences. Specifically, all the diffeomorphisms discussed in \S\ref{sec:symmetry} are holomorphic symplectomorphisms, meaning they satisfy  
\be 
f:\X \to \X', \qquad f^*\Omega_J = \Omega_J~, 
\ee  
where $\Omega_J$ is the holomorphic symplectic structure. As explained earlier, the generators of the braid group $B_3$ and the sign-flip group define symplectomorphisms between character varieties at different parameter points, even though the cubic equation \eqref{cubic_eqn} remains unchanged. Consequently, these symplectomorphisms induce auto-equivalences
\be 
f: \ABrane(\X, \omega_\X) \xrightarrow[]{\cong} \ABrane(\X', \omega_{\X'})~.
\ee  

To observe the symmetry action on $\Rep(\SH)$ discussed in \S\ref{sec:symmetry} from the perspective of the $A$-brane category, we focus on the symmetry action that preserves the parameters of the target space $\X$. A subgroup of $B_3$ that keeps the parameters $t_j$ invariant is generated by $\langle \tau_+^2, \tau_-^2, \sigma^2 \rangle$. Since this subgroup action preserves the holomorphic symplectic structure $\Omega_J$, it induces the group of auto-equivalences of the $A$-brane category $\ABrane(\MS, \omega_\MS)$, and hence also of $\Rep(\SH)$.

\subsection{Non-compact \texorpdfstring{$(A,B,A)$}{(A,B,A)}-branes and polynomial representations}\label{sec:brane-poly}

Having discussed the symmetry actions on the category, we demonstrate the explicit correspondence of objects and morphisms between the two categories under the functor \eqref{functor-explicit}.
In this subsection, we study a brane for the polynomial representation \eqref{pol}. The polynomial representation, being infinite-dimensional, is naturally expected to correspond to a non-compact Lagrangian $A$-brane with infinite volume.  Furthermore, we extend the polynomial representation from both the $D_4$ root system and a geometric perspective. Specifically, we relate the corresponding branes to the 24 lines in the target space $\X$, further elucidating the structure of this correspondence.

To identify the $A$-brane associated with the polynomial representation in \eqref{pol action of xyz}, we consider the classical limit $q \to 1$. This procedure is justified in \cite{Gukov:2022gei} for the quantum torus algebra
and for the spherical DAHA of type $A_1$.  In this limit, the lowering operator $L_n$ becomes independent of $n$, so we denote $L^{(c)} = L_n |_{q \to 1}$. In the classical limit, certain operators act as scalar multiplications:
\be\label{line}
\begin{aligned}
    \text{pol}(y) &= -t_1 t_3 - t_1^{-1} t_3^{-1}, \\
    -\text{pol}(L^{(c)}) &=0= \text{pol}(x) + \frac{\text{pol}(z)}{t_1 t_3} + t_1^{-1}t_2^{-1}+ t_1^{-1}t_2 + t_3^{-1} t_4^{-1} +  t_3^{-1} t_4.
\end{aligned}
\ee
The second equation holds because the lowering operator becomes null in the classical limit $q \to 1$, as demonstrated in \eqref{lowering}. Geometrically, this describes the support of the brane $\brane_\bfP$, for the polynomial representation:
\be\label{Brane_P}
\bfP=\{y=-t_1t_3 - t_1^{-1}t_3^{-1},z=-t_1t_3 x-t_1 t_4^{-1}-t_1 t_4 -t_2^{-1}t_3-t_2t_3\}
\ee
Since it is holomorphic in complex structure $J$, $\Omega_J|_\bfP=0$. Consequently, the brane $\brane_\bfP$ associated to the polynomial representation is an $(A, B, A)$-brane.

From the perspective of representation theory, we can construct new polynomial representations by applying the symmetries of $\SH$ to \eqref{pol2}. The supports of these branes can thus be found as the images of $\bfP$ in \eqref{Brane_P} under the symmetry actions. In this way, we obtain additional examples of the correspondence between non-compact $(A, B, A)$-branes and polynomial representations.

\subsubsection*{Weyl group $W(D_4)$ and $\bZ_3$ actions on the line}

We begin by applying the Weyl group action $W(D_4)$ and the cyclic group $\bZ_3$ (see \eqref{A_3action}) to $\bfP$. Since both group actions act linearly on the coordinates $(x, y, z)$, they map the line $\bfP$ to other lines. As a result, 24 distinct lines can be generated from $\bfP$, as detailed in Appendix \ref{app:24lines}. 

A notable feature of these lines is that they all lie in planes where one of the coordinates $x$, $y$, or $z$ remains constant. Thus, the slope of each line can be described by a single complex number. To formalize this, we define the slope as follows: if $x$ is constant along the line, the slope is given by $\frac{dy}{dz}$; if $y$ is constant, the slope is $\frac{dz}{dx}$; and if $z$ is constant, the slope is $\frac{dx}{dy}$.

Furthermore, we denote these slopes by $\bS_x$, $\bS_y$, and $\bS_z$, corresponding to lines in planes where $x$, $y$, or $z$ is constant, respectively. Interestingly, these sets have a natural interpretation in terms of $\SO(8)$ representation theory:
\be\label{Slope}
\begin{aligned}
        \bS_x &= \{-\boldsymbol{t}^w \mid w \in \sfP(\mathbf{8}_V)\}, \\
        \bS_y &= \{-\boldsymbol{t}^w \mid w \in \sfP(\mathbf{8}_S)\}, \\
        \bS_z &= \{-\boldsymbol{t}^w \mid w \in \sfP(\mathbf{8}_C)\},
\end{aligned}
\ee
where $\boldsymbol{t}^w \equiv t_1^{w_1} t_2^{w_2} t_3^{w_3} t_4^{w_4}$, and $\sfP(\mathbf{8}_V)$, $\sfP(\mathbf{8}_S)$, and $\sfP(\mathbf{8}_C)$ are the weights of the $\SO(8)$ vector, spinor, and cospinor representations, respectively. These weights correspond to the shortest weights in the $D_4$ weight lattice (see \eqref{D4weightsystem} for the conventions used). 

Consequently, the 24 non-zero roots of the $D_4$ root system are organized into three distinct sets, each in one-to-one correspondence with the weights of the $\SO(8)$ vector, spinor, and cospinor representations. Furthermore, a similar argument based on the pullback of $\Omega_J$ confirms that the branes corresponding to these 24 lines are $(A, B, A)$-branes in the cubic surface.

The remaining question is whether all lines of type $(A, B, A)$ have been identified in the target space. A classical result in algebraic geometry establishes that a smooth \emph{projective} cubic variety contains exactly 27 lines \cite{cayley1849}. Among them, three lines lie at infinity $xyz=0$ and intersect each other to form a triangle. (For a modern and elementary derivation of the result, see \cite[Section 7]{reid1988}.) To focus on the lines in the \emph{affine} cubic surface, we exclude these three lines at infinity. This exclusion leaves precisely 24 lines in the affine cubic surface, which are precisely the ones obtained via the symmetry actions described above. Therefore, we show Claim \ref{claim:1}.

As a result, the lines and their corresponding infinite-dimensional representations can be labeled using the shortest weights $w \in \sfP(D_4)$. We denote the lines as $\bfP_w$ and the polynomial representations as $\scP^w$. In addition, the raising and lowering operators in the representation $\scP^w$ are denoted as $R^w_n$ and $L^w_n$, respectively, while the associated Askey-Wilson polynomials are denoted by $P^w_n$.

Using this convention, the polynomial representation discussed in \S\ref{sec:polyrep} is written as $\mathcal{P}^{w=(1,0,1,0)}$, and the line described in \eqref{line} is labeled as $\mathbf{P}_{(1,0,1,0)}$, corresponding to its slope $-t_1 t_3$. However, for simplicity, we often omit the explicit weight $w = (1,0,1,0)$ when referring to the polynomial representation in \S\ref{sec:polyrep}, unless it is necessary for clarity.

\paragraph{Truncation criterion for finite-dimensional representations}
The relation between lines and the shortest weights provides a useful criterion to determine whether a finite-dimensional representation, as given in \eqref{finite dimensional representation}, can be obtained by truncating an infinite-dimensional representation supported on a line.

Recall that finite-dimensional representations obtained by truncating the infinite-dimensional representation $\scP^{(1,0,1,0)}$ with weight $w = (1,0,1,0)$ are classified by the conditions in \eqref{generic_shortening_cond}. These representations correspond to six roots of the $D_4$ root system, explicitly given by  
\be
\big\{ (2, 0, 0, 0), (0, 0, 2, 0), (1, 1, 1, 1), (1, 1, 1, -1), (1, -1, 1, 1), (1, -1, 1, -1) \big\} \subset \sfR(D_4).
\ee  
A direct calculation shows that these are the only roots that meet the condition $\langle r, w \rangle > 0$. Since the actions of the Weyl group preserve the inner product, applying these actions to both the weight $w$ and the identified roots guarantees that the result holds for any other weight $w^\prime$.
Therefore, we obtain the following claim:
\begin{claim}\label{claim-truncation}
    The finite-dimensional representation, labeled by a root $r \in \sfR(D_4)$, can be obtained by truncating the polynomial representation $\scP^w$ if and only if $\langle r, w \rangle > 0$, where $\langle \cdot, \cdot \rangle$ is the standard Euclidean inner product. 
\end{claim}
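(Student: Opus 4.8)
The plan is to reduce the claim to a finite, purely combinatorial statement about the $D_4$ root system, exploiting the Weyl-group equivariance that the problem setup already provides. First I would fix the distinguished weight $w_0 = (1,0,1,0) \in \sfP(\mathbf{8}_V)$ and recall from \S\ref{sec:polyrep} that the finite-dimensional quotients of the polynomial representation $\scP^{w_0}$ arise precisely when the lowering coefficient \eqref{lowering_coefficient} vanishes, i.e. when one of the seven shortening conditions in \eqref{generic_shortening_cond} holds (the eighth putative condition $q^n = t_1^{-2}t_3^{-2}$ being excluded because the relevant lowering operator degenerates). Of these seven, the condition $q^n=1$ is the trivial one that I would set aside (it corresponds to the $(0,0,0,0)$ entry in \eqref{finite dimensional representation} and is available for every $w$), and the remaining six I would rewrite in the form $q^n = \boldsymbol{t}^{-r}$ with
\begin{equation*}
r \in \big\{ (2,0,0,0),\ (0,0,2,0),\ (1,1,1,1),\ (1,1,1,-1),\ (1,-1,1,1),\ (1,-1,1,-1) \big\} \subset \sfR(D_4).
\end{equation*}

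Next I would verify, by direct inspection of the six vectors above against $w_0=(1,0,1,0)$, that these are exactly the roots $r \in \sfR(D_4)$ satisfying $\langle r, w_0\rangle > 0$. Since $\langle r, w_0 \rangle = r_1 + r_3$, this is immediate: a $D_4$ root has the form $\pm e_i \pm e_j$ ($i\neq j$), and $r_1+r_3 > 0$ forces $r_1+r_3 = 2$, which singles out precisely the six listed vectors (the roots with $r_1+r_3 = 0$ or $r_1 + r_3 < 0$ are excluded, and there is no root with $r_1 + r_3 = 1$). This establishes the claim for the single weight $w_0$.

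The passage from $w_0$ to an arbitrary shortest weight $w' \in \sfP(D_4)$ is where the Weyl-equivariance does the work. By Claim \ref{claim:1} and the discussion preceding it, every line $\bfP_{w'}$ — equivalently every representation $\scP^{w'}$ — is obtained from $\bfP_{w_0}$ (resp. $\scP^{w_0}$) by an element $g$ of the group generated by $W(D_4)$ and the $\bZ_3$ triality, and the Weyl group acts on $\sfP(D_4)$ transitively on each of the three orbits $\sfP(\mathbf{8}_V)$, $\sfP(\mathbf{8}_S)$, $\sfP(\mathbf{8}_C)$ of shortest weights (triality permutes the three orbits). Under such $g$ the shortening conditions are permuted accordingly: the finite-dimensional quotient of $\scP^{w_0}$ labelled by $r$ is carried to the quotient of $\scP^{g(w_0)}$ labelled by $g(r)$, because the raising/lowering operators $R_n^{w}, L_n^{w}$ and the null-vector conditions transform covariantly under the symmetry actions on $\SH$ recorded in \S\ref{sec:SH}. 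Since $g$ preserves the Euclidean inner product, $\langle r, w_0\rangle > 0 \iff \langle g(r), g(w_0)\rangle > 0$, so the ``if and only if'' characterization transports verbatim from $w_0$ to $w' = g(w_0)$. Taking the union over all $g$ and using transitivity on the orbits of shortest weights yields the claim for every $w$.

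I expect the only real obstacle to be bookkeeping rather than conceptual: one must check that the symmetry group element $g$ realizing $\bfP_{w'}$ from $\bfP_{w_0}$ genuinely intertwines the lowering-operator data — that is, that the map on representations induced by $g$ sends the ideal $(P_n)$ defining a finite-dimensional quotient of $\scP^{w_0}$ to the corresponding ideal in $\scP^{g(w_0)}$, so that the truncation labels really do transform by $r \mapsto g(r)$. This follows from the explicit action \eqref{braid_action_sdaha}--\eqref{A_3action} on $(x,y,z,\theta_i)$ together with the covariance of the raising/lowering operators $R_n, L_n$ in \eqref{raising_Lowering_operators} under these actions, but it should be spelled out carefully (in particular tracking how the $\boldsymbol{t}$-parameters, hence the exponents $r_j$ in $q^n = \boldsymbol{t}^{-r}$, are permuted). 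One should also confirm that no spurious degeneration (analogous to the excluded $q^n = t_1^{-2}t_3^{-2}$) occurs for the other weights, which again reduces to the observation that the excluded root is precisely the one with $\langle r, w_0\rangle = 0$ that would otherwise masquerade as admissible — a condition manifestly stable under the Weyl action.
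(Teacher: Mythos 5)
Your proposal is correct and follows essentially the same route as the paper: verify the equivalence directly for the base weight $w_0=(1,0,1,0)$ by matching the six nontrivial shortening conditions of \eqref{generic_shortening_cond} with the six roots satisfying $\langle r, w_0\rangle>0$, then transport to all other shortest weights via the $W(D_4)$ and triality actions together with invariance of the inner product (the paper leaves the intertwining of the lowering-operator data implicit, which you rightly flag as the bookkeeping step). One trivial caveat: your parenthetical description of $D_4$ roots as $\pm e_i\pm e_j$ uses the standard normalization rather than the paper's rescaled convention \eqref{D4rootsystem}, but the inspection of $\langle r,w_0\rangle=r_1+r_3$ over the 24 roots goes through unchanged.
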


\subsubsection*{Braid group action on the line}

Next, we apply the action of the braid group $B_3$ to $\mathbf{P}_w$ in order to identify additional $(A,B,A)$-branes. As discussed in \S\ref{sec:symmetry}, the parameters of the cubic surface can change under the braid group action, and therefore, the braid group action is \emph{not}, in general, a symplectomorphism from the surface to itself. However, as shown below, the braid group action determines an algebraic equation that specifies the support of a non-compact $(A,B,A)$-brane. Thus, we can consistently regard such $(A,B,A)$-branes as objects in the same target space $\mathfrak{X}$. 

Let us first consider the braid group action on $\bfP_{(1,0,1,0)}$.
We denote the image of the line $\bfP_{(1,0,1,0)}$ after applying $\tau_+^n$ for some $n\in \bZ$ as
\be
    \tau_+^n(\bfP_{(1,0,1,0)})=\big\{(x,y,z)\in\bC^3|y=f_n(x),z=g_n(x)\big\},
\ee
where $f_n(x)$ and $g_n(x)$ are functions parametrized by $t_j$. The initial conditions are 
\begin{equation}\label{f0_g0}
\begin{aligned}
    f_0 &= -t_1t_3 - t_1^{-1}t_3^{-1},\\
    g_0 &= -t_1t_3x - t_1t_4^{-1} - t_1t_4 - t_2^{-1}t_3 - t_2t_3,
\end{aligned}
\end{equation}
as specified in \eqref{Brane_P}. Since it is holomorphic in $J$, the action of the braid group generates additional $(A,B,A)$-branes.

To obtain a generic expression for the family of polynomials $f_n(x)$ and $g_n(x)$, we define an operator $\hat{\zeta}$ by 
\be
\hat{\zeta}:f(x;t_1,t_2,t_3,t_4)\mapsto f(x;t_1,t_2,t_4,t_3),
\ee
where $f$ is any function of $x$ parametrized by $t_j$, and $\hat{\zeta}^2 = \mathrm{id}$. Then, using \eqref{Classical_braid_action_sdaha}, the recursive relations for $f_n$ and $g_n$ are given by
\begin{equation}\label{recursive_relations}
\begin{aligned}
        f_{n+1}(x)&=x \hat{\zeta}(f_{n})(x)-\hat{\zeta}(g_{n})(x)-\hat{\zeta}(\theta_3),\\
        g_{n+1}(x)&=\hat{\zeta}(f_{n})(x).
\end{aligned}
\end{equation}
Eliminating $f_n$, we find
\begin{equation}
    \hat{\zeta}^{n+1} (g_{n+1})(x)=x \hat{\zeta}^{n} (g_{n})(x)-\hat{\zeta}^{n-1} (g_{n-1})(x)-\hat{\zeta}^{n+1}(\theta_3).
\end{equation}
This recursion relation can be solved by constructing the generating function for $\hat{\zeta}^n(g_n)$:
\begin{equation}
    G(x,u)=\sum_{n=0}^{\infty} \hat{\zeta}^{n}(g_{n}) u^n,
\end{equation}
where $u$ is a formal variable. Using standard techniques for solving recursion relations, we find
\begin{equation}
    G(x,u)=\frac{\theta _3 u^2+\theta _2 u+(1-u^2)(uf_0-g_0)
}{(u^2-1) (u^2-u x+1)},
\end{equation}
with $f_0,g_0$ given by \eqref{f0_g0}.
with $f_0$ and $g_0$ given in \eqref{f0_g0}. Expanding in powers of $u$, one can obtain $\hat{\zeta}^n(g_n)$, and hence $g_n$, for any $n \in \mathbb{Z}_{\geq 0}$. For example, the first few $g_n$ are
\be
\begin{aligned}
    g_1&= \hat{\zeta}(g_0)x-\hat{\zeta}(f_0)  - \theta_3,\\
    g_2&=g_0x^2-(f_0+\theta _2)x-\theta_3,\\
    g_3&=\hat{\zeta}(g_0) x^3- (\hat{\zeta}(f_0)+\theta _3)x^2- (2 \hat{\zeta}(g_0)+\theta
   _2)x+\hat{\zeta}(f_0),\\
   g_4&=g_0
   x^4- (f_0+\theta _2)x^3- (3 g_0+\theta _3)x^2+ (2 f_0+\theta _2)x+g_0.
\end{aligned}
\ee
The corresponding expressions for $f_n$ can be derived using the second equation in \eqref{recursive_relations}. By applying the Weyl group and the $\mathbb{Z}_3$ permutation symmetries, the loci of the $A$-branes can be determined starting from $\mathbf{P}_w$ for any weight $w$.

As a remark, in the limit \eqref{A1Parameters} of the DAHA of type $A_1$, the generating function simplifies even more. In this limit, the action of $\hat{\zeta}$ becomes trivial, and the generating function reduces to
\begin{equation}
   G(x,u) = -\frac{(u - x)t^2 + u}{t (u^2 - ux + 1)}.
\end{equation}
Moreover, by taking the additional limit $t \to i$, the generating function further simplifies to:
\begin{equation}
    G(x,u) = i x G_C(x,u),
\end{equation}
where $G_C(x,u) = \frac{1}{u^2 - ux + 1}$ is the generating function for the Chebyshev polynomials of the second kind, as mentioned in \cite{Gukov:2022gei}.

\subsection{Compact \texorpdfstring{$A$}{A}-branes and finite-dimensional representations}\label{Sec:Brane_Finite_Rep}

In this subsection, we present evidence for Claim \ref{claim:2}, which asserts the derived equivalence between the category of compact $A$-branes and the category of finite-dimensional representations of $\SH$. This equivalence involves matching both objects and morphisms.

To match objects, we establish a correspondence between compact Lagrangian $A$-branes and finite-dimensional representations of $\SH$ by comparing the shortening conditions described in \eqref{finite dimensional representation} with the existence conditions for $A$-branes (referred to as $A$-brane conditions) derived from the dimension formula \eqref{dimformula}. 

To illustrate this correspondence, we consider a case study that covers all possible ``generic'' configurations of Hitchin fibration, as classified in Table \ref{tab:fiber_sing_classification2}.  We begin by considering the case where $\hbar$ is real, with the symplectic form given by $\omega_\mathfrak{X} = \frac{\omega_K}{\hbar}$. Since Hitchin fibration \eqref{Hitchin-fibration} is completely integrable, each fiber is Lagrangian of type $(B, A, A)$. Consequently, every component of a Hitchin fiber is Lagrangian with respect to $\omega_\mathfrak{X}$ for any value of $\tgamma$. For all configurations of Hitchin fibration, we show that the shortening conditions in \eqref{finite dimensional representation} have a precise geometric interpretation, corresponding to $A$-branes supported on the components of a Hitchin fiber.

We then turn to matching morphisms. Specifically, two distinct $A$-branes supported on components of a singular fiber exhibit non-trivial morphisms when they intersect at a singular fiber, forming bound states. Our analysis identifies these bound states of compact $A$-branes and their corresponding $\SH$-modules using an extension. This provides additional evidence for the equivalence of morphism structures as described in \eqref{eq:functor}.

Lastly, we generalize our analysis to the case where $\hbar$ is arbitrary. In this broader context, although a Hitchin fiber is no longer Lagrangian with respect to $\omega_\mathfrak{X}$, cycles suspended between singular fibers can become Lagrangian under appropriate choices of $(\talpha, \tgamma)$. The corresponding matching of objects and morphisms in this setting is carried out analogously, reinforcing Claim \ref{claim:2}.

\subsubsection{Generic fibers of the Hitchin fibration}\label{sec:genericfiber}

We begin by considering the case where $\hbar$ is real. In this case, a generic fiber $\bfF$ is Lagrangian with respect to $\Im \Omega =\frac{\omega_K}{\hbar}$,
and thus $\bfF$ can serve as the support of an $A$-brane $\brane_{\bfF}$. Thus, we consider a brane $\brane_\bfF$ supported on a generic fiber and the corresponding representation $\scF$. The story is parallel to \cite[\S2.6.1]{Gukov:2022gei}.

The branes are labeled by positions on the Hitchin base $\cB_H$. Moreover, the flatness condition \eqref{deformed-flat} for the line bundle $\cL'$ supporting the $A$-brane $\brF$ takes the form
\begin{equation}
F'_{\bfF} + B\big|_{\bfF} = 0~.
\end{equation}
Since $\F$ is topologically a two-torus, a flat $\Spin^c$ bundle $\cL' \otimes K_\L^{-1/2}$ on $\brF$ can have nontrivial $\U(1)^2$ holonomies with spin structure \cite{Gukov:2008ve}.  

Consequently, the branes $\brF^\lambda$ are parametrized by 
\be 
\lambda = (x_m,y_m) \in \C^\times \times \C^\times~,
\ee 
where the absolute values $(|x_m|,|y_m|)$ specify the position on the base $\cB_H$, while the angular phases encode the $\U(1)^2$ holonomies and spin structures. Concretely, the angular phase of each $\U(1)$ factor determines both the holonomy and the choice of spin structure $\Z_2$ along a one-cycle of $\bfF$ via the exact sequence
\begin{equation}
1 \longrightarrow \Z_2 \longrightarrow \U(1) \longrightarrow \U(1) \longrightarrow 1~.
\end{equation}
We assign the Ramond spin structure to the element $+1 \in \Z_2$ and the Neveu--Schwarz spin structure to the element $-1 \in \Z_2$.

\paragraph{Object Matching}
Combining the dimension formula \eqref{dimformula} and the volume formula for a generic fiber \eqref{Fibervolume}, we have
\be\label{A-brane_condition_F}
m:=\dim \Hom(\brane_{\bfF}, \Bcc) = \int_{\bfF} \frac{\omega_I}{2\pi\hbar} = \frac{1}{\hbar}.
\ee
Consequently, the $A$-brane $\brane_{\bfF}$ can exist if and only if $\frac{1}{\hbar} = m \in \bZ_{>0}$. Physically, this condition can be interpreted as the Bohr-Sommerfeld quantization condition. More specifically, if one interprets $(\bfF, \omega_I)$ as a symplectic manifold representing a phase system in classical mechanics, the dimension formula \eqref{A-brane_condition_F} implies that the volume of the symplectic manifold $\bfF$, measured in units of $\hbar$, determines the dimension of the quantum Hilbert space. Consequently, to ensure the existence of a well-defined Hilbert space, the volume of the symplectic manifold must be quantized.

To ensure the existence of $(\Bcc, \brF^\lambda)$-strings, we require that $q$ be a primitive $m$-th root of unity, while $t$ may remain generic. Under this assumption, the action of $\SH$ in the generalized polynomial representation $\pol_{y_1}$ in \eqref{poly-rep-y1} commutes with the element $X^m - x_m$ for any $x_m \in \C^\times$, since the shift operator $\varpi$ acts trivially on this polynomial.

As a result, the ideal $(X^m - x_m)$ is invariant under $\pol_{y_1}$, and the quotient space
\be 
\scF^{\lambda}_{m} = \scP^{y_1}/(X^m - x_m)
\ee 
inherits a representation of $\SH$.
Moreover, when $X^m = x_m$, the Taylor expansion of any denominator in the multiplicative system $\wt M$ truncates automatically. Hence, $\scF^{\lambda}_{m}$ becomes a finite-dimensional representation of dimension $m$, parametrized by $\lambda = (x_m, y_m)$, where $y_1$ is any $m$-th root of $y_m$.
Therefore, when $q$ is a primitive $m$-th root of unity, this family of finite-dimensional modules $\scF_m^{\lambda}$ corresponds to a family of branes $\brane^{\lambda}_\bfF$ supported on generic Hitchin fibers:
\be 
\Hom(\Bcc, \brF^\lambda) \cong \scF^{\lambda}_{m}~.
\ee 
As mentioned above, the parameters $\lambda=(x_m,y_m)$ encode the position of the Hitchin base and the $\U(1)^2$ holonomies of the brane $\brF^\lambda$.

Setting $y_1 = 1$, the representation restores the $\bZ_2$ symmetry $X\leftrightarrow X^{-1}$, and a family of finite-dimensional representations labeled by $x_m$ can be constructed as the quotient of the ordinary polynomial representation \eqref{pol2}:
\be \label{general_AW}
\scF_m^{(x_m)}=\scP/(X^m + X^{-m} - x_m - x_m^{-1})~.
\ee
As $Q_m(X, x_m)$ is of degree $m$, $\scF_m^{(x_m)}$ is indeed $m$-dimensional, which matches the analysis of the $A$-brane condition \eqref{A-brane_condition_F}. 
This corresponds to the brane $\brF^{(x_m,+)}$ where the flat $\Spin^c$-bundle has trivial holonomy and a Ramond spin structure along the other $(0,1)$-cycle.

A special case of this family can be obtained by using Askey-Wilson polynomials with raising/lowering operators. In fact, 
the condition $q^m = 1$ precisely matches with the shortening condition \eqref{finite dimensional representation} with $r=(0,0,0,0)$. Under this shortening condition, we obtain the finite-dimensional $\SH$-module
\be 
\scF_m=\scP/(P_m)~,
\ee 
where 
\bea\label{AskeyWilson_nth_pol}
P_m(X,q=e^{\frac{2\pi i}{m}},\boldsymbol{t})&=X^m+X^{-m}+ F_m(\boldsymbol{t}),\cr 
F_m(\boldsymbol{t})&=\frac{(t_3^mt_4^m+t_3^m/t_4^m)(t_1^{2m}-1)+(t_1^mt_2^m+t_1^m/t_2^m)(t_3^{2m}-1)}{(t_1t_3)^{2m}-1}.
\eea
Moreover, such representations can be obtained from the other polynomial representations $\scP^w$ (see Claim \ref{claim-truncation}) in a similar way 
\be 
\scF^w_m=\scP^w/(P^{w}_m)~.
\ee 
As a remark, for the generic choice of $\boldsymbol t$, the $\SH$-modules $\scF^w_m$ are \emph{not} isomorphic to each other, as each representation $\scF^w_m$ uses different raising/lowering operators. These representations are special cases of \eqref{general_AW}.

\subsubsection{At global nilpotent cone of type \texorpdfstring{$I_{0}^{*}$}{I0*}}\label{sec:I0*}
 
In the following sections, we will establish the more intricate and intriguing correspondence between $A$-branes supported on components of singular fibers and finite-dimensional $\SH$-modules.
As we continue to assume that $\hbar$ is real, here we begin with the $I_0^{*}$ singular fiber that appears when $\tbeta_j = 0=\tgamma_j$, as considered in \S\ref{sec:wallcrossing}.  As illustrated in Figure \ref{fig:nilpotent_cone}, the irreducible components of the $I_0^{*}$ singular fiber consist of both the moduli space $\bfV$ of $\SU(2)$-bundles on $C_{0,4}$ and the exceptional divisors $\bfD_j$ ($j=1,2,3,4$).

\paragraph{Object Matching}
We first establish the correspondence between objects by comparing the $A$-brane conditions for the irreducible components in the $I_0^*$ singular fiber with the shortening conditions of \SH, as classified in \eqref{finite dimensional representation}. 

We start with the brane $\brane_\bfV$ supported on the irreducible component $\bfV$. By combining the dimension formula \eqref{dimformula} with the volume formula for $\brane_{\bfV}$ given in \eqref{vol-middle-chamber2}, we find that the dimension of the corresponding representation is:
\begin{equation}\label{dimV}
    k := \dim\Hom(\brane_{\bfV}, \Bcc) = \int_{\bfV} \frac{F+B}{2\pi} = \int_{\bfV} \frac{\omega_I}{2\pi\hbar} = \frac{2\talpha_1}{\hbar}.
\end{equation}
The $A$-brane $\brane_{\bfV}$ can exist if and only if $k$ is a positive integer. Using \eqref{tj} and noting that $q = e^{2\pi i \hbar}$, this requirement translates to the condition:
\be 
q^k = t_1^{-2}~,
\ee
which corresponds to the shortening condition \eqref{finite dimensional representation}, with the $D_4$ root $r = (2, 0, 0, 0)$. When the shortening condition is satisfied, the corresponding $k$-dimensional representation, denoted by $\scV_k$, can be explicitly constructed as the quotient:
\begin{equation}\label{Vk_rep}
    \scV_k = \scP / (P_k).
\end{equation}

Using the same approach, one can identify an $\SH$-module $\scD^{(j)}_{\ell_j}$ corresponding to a brane $\brane_{\bfD_j}$ supported on each exceptional divisor $\bfD_j$. The dimension of the morphism space is given by:
\begin{equation}
\ell_j := \dim\Hom(\brane_{\bfD_j}, \Bcc) = \int_{\bfD_j} \frac{F+B}{2\pi} = \int_{\bfD_j} \frac{\omega_I}{2\pi\hbar} = \frac{\text{vol}(\bfD_j)}{\hbar}.
\end{equation}
The results are summarized in Table \ref{Tab:Matching_A-brane_Rep_I0}, based on the volume formulas provided in \eqref{vol-middle-chamber2}. However, these representations \emph{cannot}  be constructed as quotients of a single polynomial representation $\scP^w$. Instead, one must apply Claim \ref{claim-truncation} to determine a compatible polynomial representation for each case.

For example, as shown in Table \ref{Tab:Matching_A-brane_Rep_I0}, the $A$-brane condition for $\bfD_1$ leads to the shortening condition $q^{\ell_1} = \boldsymbol{t}^{-r}$, with $r = (-1, -1, -1, -1)$. This condition does not align with the shortening conditions \eqref{generic_shortening_cond} associated with the polynomial representation $\scP^{(1,0,1,0)}$. However, since $\langle r, w \rangle > 0$ for $w = (-1, 0, -1, 0)$, Claim \ref{claim-truncation} ensures that the $\SH$-module $\scD^{(1)}_{\ell_1}$ can be constructed as the quotient module:
\begin{equation}
\scD^{(1)}_{\ell_1} := \scP^{(-1,0,-1,0)} / (P^{(-1,0,-1,0)}_{\ell_1}).
\end{equation}

\begin{table}[ht]
    \centering
    \renewcommand{\arraystretch}{1.3}
    \begin{tabular}{c|c|c|c}
\hline \text { finite-dim rep } & \text { shortening condition }  &\text {$A$-brane }& \text {$A$-brane condition } \\
\hline $\scF_m^{(x_m)}$ & $q^m=1$  &$\brane^{(x_m)}_{\bfF}$& $m=\frac{1}{\hbar}$ \\
\hline $\scV_{k}$ & $q^{k}=t_1^{-2}$ &$\brane_{\bfV}$& $k=\frac{2 \talpha_1}{ \hbar}$ \\
\hline $\scD^{(1)}_{\ell_1}$ & $q^{\ell_1}=t_1t_2t_3t_4$  &$\brane_{\bfD_1}$& $\ell_1=\frac{1}{\hbar}-\frac{\talpha_1+\talpha_2+\talpha_3+\talpha_4}{\hbar}$ \\
\hline $\scD^{(2)}_{\ell_2}$ & $q^{\ell_2}=t_1t_2t_3^{-1}t_4^{-1}$  &$\brane_{\bfD_2}$& $\ell_2=\frac{-\talpha_1-\talpha_2+\talpha_3+\talpha_4}{\hbar}$ \\
\hline $\scD^{(3)}_{\ell_3}$ & $q^{\ell_3}=t_1t_2^{-1}t_3t_4^{-1}$  &$\brane_{\bfD_3}$& $\ell_3=\frac{-\talpha_1+\talpha_2-\talpha_3+\talpha_4}{\hbar}$ \\
\hline $\scD^{(4)}_{\ell_4}$ & $q^{\ell_4}=t_1t_2^{-1}t_3^{-1}t_4$  &$\brane_{\bfD_4}$& $\ell_4=\frac{-\talpha_1+\talpha_2+\talpha_3-\talpha_4}{\hbar}$ \\
\hline
    \end{tabular}
    \caption{A summary of finite-dimensional $\protect\SH$-modules  with their shortening conditions and the corresponding $A$-brane configurations at the $I_0^*$ singular fiber, under the assumption $|q|=1$.}
    \label{Tab:Matching_A-brane_Rep_I0}
\end{table}

Under the braid group action, the homology cycles transform as described in \eqref{Braidgroup_Action}. From the perspective of representation theory, the corresponding $\SH$-modules are exchanged as follows:
\be
\begin{array}{rllll}
\tau_{+}: \scD_{\ell_3}^{(3)} & \leftrightarrow \scD_{\ell_4}^{(4)} & \text { and } & \scD_{\ell_1}^{(1)}, \scD_{\ell_2}^{(2)} & \text { are invariant, } \\
\tau_{-}: \scD_{\ell_2}^{(2)} & \leftrightarrow \scD_{\ell_4}^{(4)} & \text { and } & \scD_{\ell_1}^{(1)}, \scD_{\ell_3}^{(3)} & \text { are invariant, } \\
\sigma: \scD_{\ell_2}^{(2)} & \leftrightarrow \scD_{\ell_3}^{(3)} & \text { and } & \scD_{\ell_1}^{(1)}, \scD_{\ell_4}^{(4)} & \text { are invariant. }
\end{array}
\ee
In particular,  both $\scV_k$ and $\scD^{(1)}_{\ell_1}$ are invariant under the entire braid group action. This phenomenon has been observed in \cite{Gukov:2022gei}, where the braid group action reduces to the $\PSL(2,\bZ)$ action in the case of DAHA of type $A_1$. This action leads to the relationship of the DAHA modules with the modular tensor categories \cite{Shan:2023xtw,Shan:2024yas}.

Beyond a single irreducible component of the $I_0^*$ singular fiber, one can consider a brane supported on a union of several components, denoted by $\bfN_{r}$, which is Lagrangian with respect to $\omega_\X=\omega_K/\hbar$. The homology class of $\bfN_{r}$ is represented with the standard basis by
\be\label{generic_cycle}
[\bfN_{r}] = a_0 [\bfD_1] + a_1 [\bfD_2] + a_2 [\bfD_3] + a_3 [\bfD_4] + a_4 [\bfV],
\ee
where $a=(a_0, a_1, a_2, a_3, a_4)$ and the subscript $r$ are related by
\be \label{root-basis}
r = - a_0 \theta + \sum_{i=1}^4 a_i e^i~,\qquad   (a_i=\{0,1,2\})~.
\ee
Here, $e^i$ are the simple roots, and $\theta$ is the highest root in the $D_4$ root system \eqref{simpleroots}.
Since $D_4$ is a simply-laced root system, every root of $D_4$ has squared length two $\langle r,r\rangle=2$. Therefore, the cycle $\bfN_{r}$ corresponding to $r\in \sfR(D_4)$ also has a self-intersection number of minus two $[\bfN_{r}]\cdot [\bfN_{r}]=-2$. 
Using the same approach, we determine the $A$-brane condition for $\bfN_{r}$. Applying the dimension formula \eqref{dimformula} and the volume formula \eqref{vol-middle-chamber3}, the $A$-brane condition is given by
\be\label{GenericVolume}
m := \dim\Hom(\brane_{\bfN_{r}}, \Bcc) = \int_{\bfN_{r}} \frac{F + B}{2\pi} = \int_{\bfN_{r}} \frac{\omega_I}{2\pi\hbar} = \frac{a_0 + \left(\sum_{i=1}^4 a_i e^i - a_0 \theta\right) \cdot \talpha}{\hbar}.
\ee
The corresponding shortening condition takes the form
\be\label{General_A_brane}
q^{m} = \boldsymbol{t}^{-r},
\ee
where $\boldsymbol{t}^{r} = t_1^{r_1} t_2^{r_2} t_3^{r_3} t_4^{r_4}$. Thus, we write the corresponding $\SH$-module
\be \label{scN}
\scN^{r,w}_m =\scP^w /(P^w_m)~,
\ee 
where $\langle r, w\rangle>0$ due to Claim \ref{claim-truncation}.
As a result, if we consider all the cycles with self-intersection number minus two, all 24 roots in the $D_4$ root system $\sfR(D_4)$ are exhausted. In this way, the $A$-brane conditions in \eqref{GenericVolume} recover all the shortening conditions classified in \eqref{finite dimensional representation}.

If the self-intersection number of a cycle $\bfN_{r}$ is \emph{not} minus two, the condition \eqref{General_A_brane}  does not fit into the shortening conditions for polynomial representations. For example, in the case of the $A$-brane $\brane_{\bfN_{r}}$ with $a = (0,2,0,0,0)$, the condition $q^{m} = \boldsymbol{t}^{-2e^1}$ does not align with any shortening condition for polynomial representations because $2e^1$ is not an element of the $D_4$ root system.\footnote{Here, we assume that $m$ is odd. Otherwise, it reduces to $q^{m/2} = \boldsymbol{t}^{-e^1}$, which implies that the corresponding representation is simply a direct sum $\scD^{(2)}_{m/2}\oplus \scD^{(2)}_{m/2}$.} This suggests that the existence of an $A$-brane requires the self-intersection number of its supporting cycle to be precisely minus two.

\paragraph{Morphism Matching}

We now compare the morphism spaces in the two categories under consideration.
In the category of $A$-branes (or the Fukaya category), the endomorphism
algebra of a compact Lagrangian $A$-brane is isomorphic to the de~Rham
cohomology of its support:
\be
\operatorname{Hom}^*\!\left(\brane_{\bfL}, \brane_{\bfL}\right)
\;\cong\;
H^*(\bfL;\bC).
\ee

For two distinct $A$-branes $\brane_{\bfL_0}$ and $\brane_{\bfL_1}$, nontrivial morphisms
arise only when their supports intersect transversely. Such morphisms admit a physical
interpretation as bound states of open strings stretching between the two branes,
and are mathematically captured by Floer cohomology
\cite{floer1988morse,floer1989witten}.

Since the target space $\X$ is hyperk\"ahler of complex dimension two, the morphism
spaces satisfy a Serre duality of the form
\be
\operatorname{Hom}^k\!\left(\brane_{\bfL_0}, \brane_{\bfL_1}\right)
\;\cong\;
\operatorname{Hom}^{2-k}\!\left(\brane_{\bfL_1}, \brane_{\bfL_0}\right)^{*}.
\ee
In the examples relevant to our discussion, two distinct compact irreducible $A$-branes intersect
at most at a single point. This geometric constraint forces the degree to obey
$k=2-k$, and hence only degree-one morphisms can be nonvanishing. As a result,
\be
\operatorname{Hom}^1\!\left(\brane_{\bfL_0}, \brane_{\bfL_1}\right)
\;\cong\;
\bC\langle q\rangle,
\ee
where $q=\bfL_0\cap\bfL_1$ denotes the unique transverse intersection point.
In all other degrees, the morphism space is trivial.

In the following, we analyze the bound states of $A$-branes and identify the
corresponding morphisms in the $\SH$-module category. As we will show, there is a
precise correspondence between morphisms in the $A$-brane category and those in the
associated representation category.

\bigskip

Let $\brane_{\bfN_r}$ and $\brane_{\bfN_{r^\prime}}$ be $A$-branes supported on components of the $I_0^*$ singular fiber. The $A$-brane conditions are given by:
\be
    m = \dim\Hom(\brane_{\bfN_r}, \Bcc), \quad m^\prime = \dim\Hom(\brane_{\bfN_{r^\prime}}, \Bcc).
\ee
As in \eqref{scN}, we denote the corresponding $\SH$-modules by $\scN^{r,w}_m$ and $\scN^{r^\prime,w^\prime}_{m^\prime}$ where the shortening conditions are given by $q^m=\boldsymbol{t}^{-r}$ and $q^{m^\prime} = \boldsymbol{t}^{-r^\prime}$, respectively.
\begin{claim}
    If $\brane_{\bfN_r}$ and $\brane_{\bfN_{r^\prime}}$ intersect at a point $ q$, then the morphism space is one-dimensional
    \be
\Hom^1(\brane_{\bfN_r},\brane_{\bfN_{r^\prime}})=\bC\langle q\rangle~.
    \ee 
Then, the roots $r,r^{\prime}\in\sfR(D_4)$ satisfy
$\langle r, r^\prime \rangle = -2$. Moreover, there exists a weight $w$ such that  $\langle w, r \rangle > 0$, $\langle s_{r}(w), r^\prime \rangle > 0$. In this setting, there exists a short exact sequence of finite-dimensional $\SH$-modules
  \be\begin{tikzcd}[cramped, sep=small]
    0\arrow[r] &\scN^{r^\prime,s_r(w)}_{m^\prime}\arrow[r] & \scN^{r+r^\prime,w}_{m+m^\prime} \arrow[r] & \scN^{r,w}_m \arrow[r]&0
   \end{tikzcd}~.
\ee
This exact sequence is uniquely determined up to isomorphism, independent of the choice of $w$. Thus, $\Ext^1(\scN^{r,w}_m,\scN^{r^\prime,s_r(w)}_{m^\prime})$ is one-dimensional.
\end{claim}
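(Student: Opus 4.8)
The plan is to prove the three parts of the claim in turn: first the one-dimensionality of $\Hom(\brane_{\bfN_r},\brane_{\bfN_{r'}})$ and the numerical condition $\langle r,r'\rangle=-2$; second the existence of the short exact sequence; and third its uniqueness and the computation of $\Ext^1$. For the first part, I would use the Floer-theoretic description of morphisms between Lagrangian $A$-branes: when $\brane_{\bfN_r}$ and $\brane_{\bfN_{r'}}$ are supported on distinct components of the $I_0^*$ fiber (each a $\bC\bP^1$, or a chain of such glued along nodes) that meet transversally at a single point $q$, the Floer complex has a single generator and no differential, so $\Hom(\brane_{\bfN_r},\brane_{\bfN_{r'}})=\bC\langle q\rangle$. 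The intersection-number statement then follows from \eqref{intersection-form}: the cycles $\bfN_r$ and $\bfN_{r'}$ correspond under \eqref{root-basis} to roots $r,r'\in\sfR(D_4)$ with $[\bfN_r]\cdot[\bfN_r]=[\bfN_{r'}]\cdot[\bfN_{r'}]=-2$, and $[\bfN_r]\cdot[\bfN_{r'}]=1$ (one transverse intersection point) forces $[\bfN_{r}+\bfN_{r'}]^2 = -2-2+2 = -2$, hence $r+r'\in\sfR(D_4)$ as well, i.e. $\langle r,r'\rangle = -1$ in the normalization where $\langle r,r\rangle=2$; I must be careful to reconcile the normalization used in the claim ($\langle r,r'\rangle=-2$) with the Euclidean inner product of \eqref{simpleroots}, where $D_4$ roots have squared length $4$, so $\langle r,r\rangle=4$ and $\langle r,r'\rangle=-2$ is indeed the right statement. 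Given $\langle r,r'\rangle=-2$, the existence of a weight $w$ with $\langle w,r\rangle>0$ and $\langle s_r(w),r'\rangle = \langle w,r'\rangle + \langle w,r\rangle>0$ is a short combinatorial check on the $D_4$ weight lattice, using that the shortest weights $\sfP(\mathbf 8_V),\sfP(\mathbf 8_S),\sfP(\mathbf 8_C)$ from \eqref{Slope} pair non-trivially with every root.

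For the short exact sequence, the strategy is algebraic: realize all three modules as quotients of polynomial representations by ideals generated by Askey--Wilson polynomials, following \eqref{scN}. With $w$ chosen so that $\langle w,r\rangle>0$ and $\langle s_r(w),r'\rangle>0$, Claim~\ref{claim-truncation} guarantees $\scN^{r,w}_m=\scP^w/(P^w_m)$ exists, and similarly $\scN^{r+r',w}_{m+m'}=\scP^w/(P^w_{m+m'})$ since $\langle w, r+r'\rangle = \langle w,r\rangle + \langle w,r'\rangle>0$ (this last positivity following because $\langle w,r'\rangle = \langle s_r(w),r'\rangle - \langle w,r\rangle\cdot\tfrac{\langle r,r'\rangle}{\langle r,r\rangle}$... more simply, because $\langle w,r'\rangle + \langle w,r\rangle = \langle s_r(w), r'\rangle>0$ and $\langle w,r\rangle>0$ give $\langle w, r+r'\rangle>0$). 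The key point is that $P^w_m$ divides $P^w_{m+m'}$ \emph{as operators on the relevant quotient}: since $q^{m}=\boldsymbol t^{-r}$ and $q^{m+m'}=\boldsymbol t^{-(r+r')}$ both hold, and $q^{m'}=\boldsymbol t^{-r'}$, the raising/lowering structure of the Askey--Wilson basis (\eqref{raising}--\eqref{lowering}) shows that multiplication by the Askey--Wilson polynomial $P^w_{m}$ maps $\scP^w$ into itself and its image inside $\scP^w/(P^w_{m+m'})$ is exactly the submodule spanned by $P^w_m,\dots,P^w_{m+m'-1}$, which is $m'$-dimensional. The content then is to identify this submodule with $\scN^{r',s_r(w)}_{m'}$: the Weyl reflection $s_r$ acting on $w$ implements precisely the change of polynomial representation that re-expresses the ``top'' block of $\scN^{r+r',w}_{m+m'}$ as the truncation of $\scP^{s_r(w)}$ by $P^{s_r(w)}_{m'}$. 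This is where I would spend most effort: one must match the $\SH$-action on the submodule, generated via the raising operators $R^w_n$ starting from $P^w_m$, with the standard polynomial action on $\scP^{s_r(w)}$, using that $s_r$ permutes the shortest-weight lines $\bfP_{w'}$ and correspondingly transforms the raising/lowering operators; the braid-group formulas \eqref{braid_action_sdaha} and the explicit operators \eqref{raising_Lowering_operators} are the computational tools here.

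The main obstacle I anticipate is precisely this last identification — showing that the submodule of $\scN^{r+r',w}_{m+m'}$ is \emph{isomorphic as an $\SH$-module} to $\scN^{r',s_r(w)}_{m'}$ rather than merely of the same dimension. One clean way to handle it is to observe that the submodule is the unique (up to iso) $m'$-dimensional quotient of a polynomial representation satisfying the shortening condition $q^{m'}=\boldsymbol t^{-r'}$ that is \emph{not} further reducible, combined with a classification statement: finite-dimensional $\SH$-modules with a cyclic highest/lowest weight vector and a prescribed shortening condition are determined by that condition up to isomorphism (this is the $C^\vee C_1$ analogue of the $A_1$ fact used in \cite{Gukov:2022gei}, and follows from the explicit three-term recursion for the Askey--Wilson basis). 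For uniqueness of the extension and the dimension of $\Ext^1$: since $\scN^{r,w}_m$ is generated by a single vector (the image of $P^w_0=1$) and $\scN^{r',s_r(w)}_{m'}$ is cogenerated by a single vector, any extension is determined by where the generator lifts, i.e. by an element of $\scN^{r',s_r(w)}_{m'}$ modulo the image of the $\SH$-action relations; a direct count using the shortening conditions shows this space of extension data is one-dimensional, and the non-split class is realized by the geometric bound state, giving $\Ext^1(\scN^{r,w}_m,\scN^{r',s_r(w)}_{m'})\cong\bC$. Independence of $w$ follows because any two valid choices of $w$ differ by an element of the stabilizer-coset that acts by isomorphisms on all three modules simultaneously, compatibly with the maps in the sequence.
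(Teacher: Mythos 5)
Your overall route matches the paper's: derive $\langle r,r'\rangle=-2$ from the intersection form (your normalization fix, squared length $4$ for $D_4$ roots in this basis, is correct), choose $w$ in the $A_2$ sublattice generated by $r,r'$, and realize all three modules as Askey--Wilson truncations of polynomial representations so that the sequence arises from nested submodules of $\scP^w$. But the step you yourself flag as the main obstacle --- identifying the $m'$-dimensional submodule of $\scN^{r+r',w}_{m+m'}$ (equivalently $\Ker\pi\subset\scP^w$) with $\scN^{r',s_r(w)}_{m'}$ \emph{as an $\SH$-module} --- is exactly where the paper's proof has its content, and your proposal leaves it undone. The paper settles it by a direct computation with the explicit raising/lowering coefficients \eqref{raising}--\eqref{lowering}: under $q^m=\boldsymbol t^{-r}$ and $\langle r,w\rangle>0$ one verifies $R^w_n=q^{-m}R^{s_r(w)}_{n-m}$ and $L^w_n=q^{m}L^{s_r(w)}_{n-m}$ for $n\ge m$ (equation \eqref{Relations_Raising_Lowering}), so the shifted Askey--Wilson basis of the kernel carries precisely the $\scP^{s_r(w)}$ action. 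Your fallback argument --- that a finite-dimensional module with a cyclic vector and a prescribed shortening condition is determined up to isomorphism by that condition --- is not established and sits in tension with the paper's own remark that the modules $\scF^w_m$, which all satisfy $q^m=1$ and have equal dimension, are generically pairwise non-isomorphic for different $w$; so you cannot bypass the operator computation this way without proving a genuinely new classification statement. Note also that the reflection $s_r\in W(D_4)$ acts on the parameters $\boldsymbol t$ in \eqref{raising}--\eqref{lowering}, not through the braid-group formulas \eqref{braid_action_sdaha}, so those are not the right computational tool for this step.

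The final part is likewise incomplete. Your extension count (``any extension is determined by where the generator lifts, and a direct count shows the extension data is one-dimensional'') is asserted rather than proven and would require detailed control of the module structure, and ``any two valid $w$ differ by a stabilizer-coset element acting compatibly on the whole sequence'' is too vague to carry the independence-of-$w$ claim. The paper argues differently: there are exactly three admissible weights $w_1,w_2,w_3$, one from each of $\mathbf{8}_V,\mathbf{8}_S,\mathbf{8}_C$; the traces of $\pol(x),\pol(y),\pol(z)$ on $\scN^{r,w_i}_m$ under the shortening condition are computed explicitly and found to be triality-symmetric, and Schur's lemma then identifies the three short exact sequences, giving the one-dimensionality of $\Ext^1$. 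To complete your write-up you need (i) the operator identities establishing $\Ker\pi\cong\scP^{s_r(w)}$, and (ii) either this character/Schur argument or an honest computation of the extension space.
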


\begin{proof} 

From \eqref{generic_cycle} and \eqref{root-basis}, it is straightforward to show that the intersection number $[\bfN_r] \cdot [\bfN_{r^\prime}] = 1$ implies $\langle r, r^\prime \rangle = -2$. This indicates that the root sublattice generated by $r$ and $r^\prime$ forms an $A_2$ root lattice. Consequently, there always exists a weight $w$, as illustrated in Figure \ref{fig:rrw}, such that
\be\label{cond_rrw}
\langle w, r \rangle > 0~, \qquad \langle s_r(w), r^\prime \rangle > 0~,
\ee
where $s_r$ denotes the Weyl reflection \eqref{Weyl-reflection} with respect to $r$.

\begin{figure}
    \centering
    \includegraphics[width=0.3\linewidth]{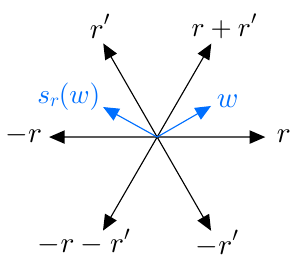}
    \caption{The roots $r$ and $r^\prime$ generate an $A_2$ root lattice, and there always exists a weight $w$ such that \eqref{cond_rrw} holds.}
    \label{fig:rrw}
\end{figure}

As shown in \eqref{scN}, when $q^m = \boldsymbol{t}^{-r}$, the $\SH$-module $\scN^{r,w}_m$ is obtained from the polynomial representation $\scP^w$ with $\langle r, w \rangle > 0$. This leads to a short exact sequence
\be\label{Exact_Sequence_general}
\begin{tikzcd}[cramped, sep=small]
   0 \arrow[r] & \Ker \pi \arrow[r] & \scP^w \arrow[r, "\pi"] & \scN^{r,w}_m \arrow[r] & 0~.
   \end{tikzcd}
\ee

Under the condition $q^m = \boldsymbol{t}^{-r}$ and $\langle r, w \rangle > 0$, the direction calculation using \eqref{raising} and \eqref{lowering} shows that the raising and lowering operators for $\scP^w$ and $\scP^{s_r(w)}$ satisfy the following relations
\bea\label{Relations_Raising_Lowering}
R^w_n &= q^{-m} R^{s_r(w)}_{n-m}~, \\
L^w_n &= q^{m} L^{s_r(w)}_{n-m}~,
\eea
for any $n\geq m$. Consequently, the raising and lowering operators in $\Ker \pi$ coincide with those in the polynomial representation $\scP^{s_r(w)}$ up to scalar multiplication. This establishes an isomorphism of $\SH$-representations $\Ker \pi \cong \scP^{s_r(w)}$, and the exact sequence \eqref{Exact_Sequence_general} can be rewritten as
\be\label{Exact_Sequence_general2}
\begin{tikzcd}[cramped, sep=small]
   0 \arrow[r] & \scP^{s_r(w)} \arrow[r, "i"] & \scP^w \arrow[r, "\pi"] & \scN^{r,w}_m \arrow[r] & 0,
   \end{tikzcd}~,
\ee
which represents an element in $\Ext^1(\scN^{r,w}_m, \scP^{s_r(w)})$. From the perspective of the $A$ model, it can be verified that the line $\bfP_{s_r(w)}$ intersects the $A$-brane $\bfN_r$ under the conditions $q^m = \boldsymbol{t}^{-r}$ and $\langle r, w \rangle > 0$.

Since $\langle s_r(w), r^\prime \rangle > 0$, the $\SH$-module $\scN^{r^\prime,s_r(w)}_{m^\prime}$ can be constructed as the quotient 
\be
\scN^{r^\prime,s_r(w)}_{m^\prime} := \scP^{s_r(w)} / (P^{s_r(w)}_{m^\prime})~,
\ee
when $q^{m^\prime} = \boldsymbol{t}^{-r^\prime}$. The simultaneous conditions $q^m = \boldsymbol{t}^{-r}$ and $q^{m^\prime} = \boldsymbol{t}^{-r^\prime}$ imply $q^{m + m^\prime} = \boldsymbol{t}^{-r - r^\prime}$. Since $s_r(r^\prime) = r + r^\prime$, it follows that
$\langle r + r^\prime, w \rangle = \langle r^\prime, s_r(w) \rangle > 0$.
By Claim \ref{claim-truncation}, another $\SH$-module can be constructed as
\be 
\scN^{r + r^\prime,w}_{m + m^\prime} = \scP^w / (P^w_{m + m^\prime}),
\ee
which gives rise to the short exact sequence
\be\label{Exact_short}
\begin{tikzcd}[cramped, sep=small]
    0 \arrow[r] & \scN^{r^\prime,s_r(w)}_{m^\prime} \arrow[r, "i_*"] & \scN^{r + r^\prime,w}_{m + m^\prime} \arrow[r, "\pi_*"] & \scN^{r,w}_m \arrow[r] & 0.
   \end{tikzcd}~
\ee
For a graphical representation, see Figure \ref{fig:ext_Nr_Nr}. 

There are always three such weights $w_1,w_2,w_3$ that give rise to the short exact sequence \eqref{Exact_short}, each corresponding to one of the eight-dimensional representations $\boldsymbol{8}_V$, $\boldsymbol{8}_S$, and $\boldsymbol{8}_C$. However, it can be shown using Schur's lemma that the short exact sequences associated with these three different weights are isomorphic.  In fact, the character of the finite-dimensional module $\scN_m^{r,w_1}$ can be explicitly computed from \eqref{pol action of xyz} as
\bea 
\Tr_{\scN_m^{r,w_1}}(\pol(x))&= -\frac{q^{m/2}-q^{-m/2}}{q^{1/2}-q^{-1/2}} \left(q^{m/2}\boldsymbol{t}^{w_1}+q^{-m/2}\boldsymbol{t}^{-w_1}\right),\cr
\Tr_{\scN_m^{r,w_1}}(\pol(y))&= -\frac{q^{m/2}-q^{-m/2}}{q^{1/2}-q^{-1/2}}  \left(q^{m/2}\boldsymbol{t}^{w_2}+q^{-m/2}\boldsymbol{t}^{-w_2}\right),\cr
\Tr_{\scN_m^{r,w_1}}(\pol(z))&=-\frac{q^{m/2}-q^{-m/2}}{q^{1/2}-q^{-1/2}}  \left(q^{m/2}\boldsymbol{t}^{w_3}+q^{-m/2}\boldsymbol{t}^{-w_3}\right),
\eea 
under the shortening condition $ q^m = \boldsymbol{t}^{-r} $. Recalling that $\boldsymbol{8}_V$, $\boldsymbol{8}_S$, and $\boldsymbol{8}_C$ are exchanged under the triality transformation \eqref{A_3action}, the above expression is manifestly symmetric under the triality action on $w_1$, $w_2$, and $w_3$. 
By Schur's lemma, this symmetry leads to an isomorphism between the short exact sequences for $ i \neq j $:
\be\begin{tikzcd}[cramped, sep=small]
   0\arrow[r]&\scN^{r^\prime,s_r(w_i)}_{m^\prime}  \arrow[d, "\sim" {anchor=south, rotate=90, inner sep=1.5mm}] \arrow[r] &\scN^{r + r^\prime,w_i}_{m + m^\prime}  \arrow[d , "\sim" {anchor=south, rotate=90, inner sep=1.5mm}] \arrow[r] & \scN^{r,w_i}_m \arrow[d , "\sim" {anchor=south, rotate=90, inner sep=1.5mm}] \arrow[r] & 0\\
    0\arrow[r] &\scN^{r^\prime,s_r(w_j)}_{m^\prime} \arrow[r] & \scN^{r + r^\prime,w_j}_{m + m^\prime}  \arrow[r] & \scN^{r,w_j}_m \arrow[r]&0\\
   \end{tikzcd}~
\ee
Consequently,  $\Ext^1(\scN^{r,w}_m,\scN^{r^\prime,s_r(w)}_{m^\prime})$ is one-dimensional.
This completes the proof.
\end{proof}

\begin{figure}[ht]
    \centering
    \includegraphics[width=0.95\linewidth]{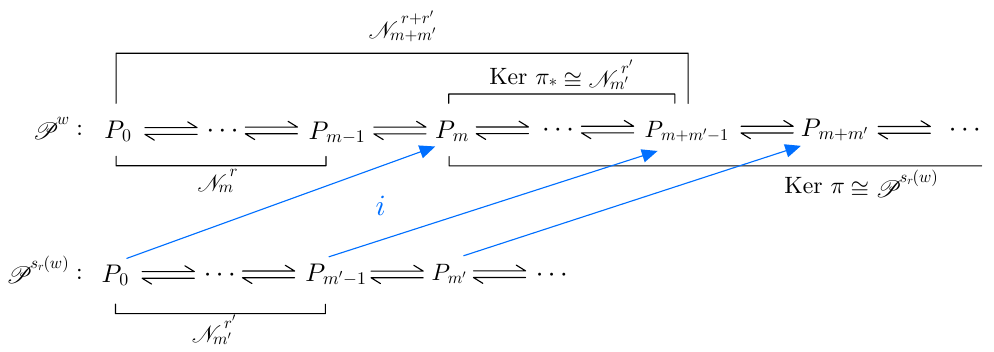}
    \caption{The polynomial representation $\scP^w$ can be understood as the extension of $\scN^{r,w}_m$ by $\scP^{s_r(w)}$ under the condition $q^{m} = \boldsymbol{t}^{-r}$ , as in \eqref{Exact_Sequence_general2}. Imposing the condition $q^{m^\prime} = \boldsymbol{t}^{-r^\prime}$ further, $\scN^{r+r^\prime,w}_{m+m^\prime}$ is constructed as the extension of $\scN^{r,w}_m$ by $\scN^{r',s_r(w)}_{m^\prime}$ as in \eqref{Exact_short}.}
    \label{fig:ext_Nr_Nr}
\end{figure}

\paragraph{Example:}   
As an explicit example, let us consider the morphism space $\Hom^*(\brane_{\bfV}, \brane_{\bfD_1})$. Since $\bfD_1$ and $\bfV$ intersect at a single point $q_1$ (see Figure \ref{fig:nilpotent_cone}), the geometric perspective predicts that the morphism space is one-dimensional:
\be\label{GeometryMorphism}
\Hom^1\left(\brane_{\bfV}, \brane_{\bfD_1}\right) \cong \bC\left\langle q_1 \right\rangle.
\ee

The $A$-brane $\brane_{\bfN_r}$, representing their bound state, is supported on $\bfN_r = \bfV \cup \bfD_1$, with the corresponding root given by:
\be
r = -\theta + e_4 = (1, -1, -1, -1).
\ee
The $A$-brane condition for $\brane_{\bfN_r}$ is evaluated as:
\be
m = \dim\Hom(\brane_{\bfN_r}, \Bcc) = \int_{\bfN_r} \frac{F + B}{2\pi} = \frac{1}{\hbar} - \frac{-\talpha_1 + \talpha_2 + \talpha_3 + \talpha_4}{\hbar},
\ee
which translates to the shortening condition $q^m = t_1^{-1}t_2t_3t_4$.

In the following discussion, we will explain how to construct the corresponding morphism space $\Ext^1(\scV_k, \scD^{(1)}_{\ell})$ as an $\SH$-module. The roots that correspond to $\bfV$ and $\bfD_1$ are $r=(2,0,0,0)$ and $r^\prime=(-1,-1,-1,-1)$, respectively. 
There are three weights that satisfy the condition \eqref{cond_rrw}:
\bea 
&\mathbf{8}_V\ni w_1=(1,-1,0,0)~, \quad
s_r(w_1)=(-1,-1,0,0)~, \cr 
&\mathbf{8}_S\ni  w_2=(1,0,-1,0)~, \quad
s_r(w_2)=(-1,0,-1,0)~, \cr 
&\mathbf{8}_C\ni  w_3=(1,0,0,-1)~, \quad
s_r(w_3)=(-1,0,0,-1)~.  
\eea 
It follows from the proof using Schur's lemma that finite-dimensional $\SH$-modules associated to these weights are all isomorphic. Therefore, it suffices to focus on the second choice $w:=w_2$.

Consider the polynomial representation $\scP^w$ where the action of lowering operators $L_m^w$ contains a factor $(q^{m}-t_1^{-2})(q^{m}-t_1^{-1}t_2t_3t_4)$, which can be verified from \eqref{lowering}. 
Suppose that we impose the two shortening conditions simultaneously 
\be \label{2-shortening}
q^k=t_1^{-2}~, \quad \textrm{and}  \quad q^{k+\ell}=t_1^{-1}t_2t_3t_4~,
\ee 
where the first condition is for $\scV_k$ (see Table \ref{Tab:Matching_A-brane_Rep_I0}). Then, 
we obtain the following short exact sequence from $\scP^w$:
\be\label{Exact_Seq_W}
\begin{tikzcd}[cramped, sep=small]
    0 \arrow[r] & \Ker \pi_* \arrow[r] & \scN^{r,w}_{k+\ell} \arrow[r, "\pi_*"] & \scV_k \arrow[r] & 0
\end{tikzcd}~.
\ee  

Since \eqref{2-shortening} implies $q^\ell= t_1t_2t_3t_4$ that is for $\scD^{(1)}_\ell$, it is easy to see that $\scD^{(1)}_{\ell} \cong \Ker \pi$ by \eqref{Relations_Raising_Lowering}, and as a result, we obtain the short exact sequence
\be\label{Exact_Sequence_D1}  
\begin{tikzcd}[cramped, sep=small]  
    0 \arrow[r] & \scD^{(1)}_{\ell} \arrow[r] & \scN^{r,w}_{k+\ell} \arrow[r] & \scV_k \arrow[r] & 0~.
\end{tikzcd} 
\ee
This represents a non-trivial element in $\Ext^1(\scV_k, \scD^{(1)}_{\ell})$.

If we switch the role of $w$ and $s_r(w)$,  we can construct a short exact sequence in the opposite direction, representing an element in $\Ext^1(\scD^{(1)}_\ell, \scV_k)$:
\be\label{Exact_Sequence_D1_2}  
\begin{tikzcd}[cramped, sep=small]  
    0 \arrow[r] & \scV_k \arrow[r] & \scN^{r,s_r(w)}_{k+\ell} \arrow[r, "\pi"] & \scD^{(1)}_{\ell} \arrow[r] & 0~.
\end{tikzcd}  
\ee  
This is indeed the Poincar\'e dual of \eqref{Exact_Sequence_D1} in the representation category.

\begin{figure}[ht]
    \centering
    \includegraphics[width=.8\linewidth]{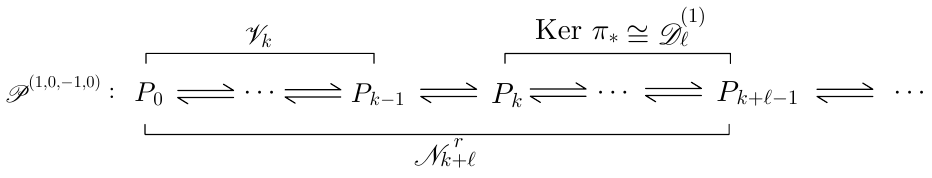}
    \caption{The finite-dimensional module $\scN_k^{r, w}$ is shown as an extension of $\scV_k$ by $\scD^{(1)}_\ell$ through the short exact sequence \eqref{Exact_Sequence_D1}.The weights $w = (1, 0, -1, 0)$ are omitted from the superscripts of the polynomials for simplicity. Note that $\scD^{(1)}_\ell$ is not obtained as a quotient of $\scP^w$ but rather from $\scP^{s_r(w)}$, where $s_r(w) = (-1, 0, -1, 0)$.}
    \label{fig:ext_vk_dl1}
\end{figure}

\bigskip

In this way, one can construct bound states of compact $A$-branes and their corresponding $\SH$-modules. Let us consider two more illustrative examples. 
For the first example, consider an $A$-brane $\brane_{\bfN_r}$ supported on the union of cycles $\bfN_r = \bfD_1 \cup \bfD_2 \cup \bfD_3 \cup \bfV$, which corresponds to the root
\be
r = (-1, -1, -1, 1).
\ee
Imposing the shortening conditions for $\scD^{(j)}_{\ell_j}$ and $\scV_k$ simultaneously (see Table \ref{Tab:Matching_A-brane_Rep_I0}), we obtain:
\be 
q^m = t_1 t_2 t_3 t_4^{-1},
\ee
where $m = k + \ell_1 + \ell_2 + \ell_3$. This result is consistent with the $A$-brane condition
\be\label{I0_Example}
m = \dim\Hom(\brane_{\bfN_r}, \Bcc) = \frac{1}{\hbar} - \frac{\talpha_1 + \talpha_2 + \talpha_3 - \talpha_4}{\hbar}.
\ee
To identify the corresponding $\SH$-module explicitly, we choose $w = (-1, 0, -1, 0)$ and consider the polynomial representation $\scP^w$ under the simultaneous shortening conditions. Following a similar argument as before, we find that there exists a short exact sequence
\be
\begin{tikzcd}[cramped, sep=small]
    0 \arrow[r] & \bigoplus_{j=1,2,3}\scD^{(j)}_{\ell_j} \arrow[r] & \scN^{r,w}_m \arrow[r] & \scV_k \arrow[r] & 0~.
\end{tikzcd}
\ee
The module $\scN^{r,w}_m$ can be understood as a diagonal element in the representation category, $q_1 + q_2 + q_3 \in \Hom^*(\brane_{\bfV}, \bigoplus_{j=1,2,3}\brane_{\bfD_j})$. Thus, we can express the corresponding $A$-brane as:
\be
\brane_{\bfN_r} \in \Hom^*\left(\brane_{\bfV}, \bigoplus_{j=1,2,3} \brane_{\bfD_j}\right).
\ee
In the limit of the DAHA of type $A_1$, this module corresponds to the finite-dimensional representation newly discovered in \cite[\S2.7,2]{Gukov:2022gei}. While this representation cannot be constructed as a quotient of the polynomial representation of the DAHA of type $A_1$, the correspondence with the $A$-brane predicts its existence. For the DAHA of type $C^\vee C_1$, however, this module can be obtained as a quotient of the polynomial representation by leveraging the $W(D_4)$ Weyl group and the $\bZ_3$ cyclic symmetry of the algebra.

As the second example, consider the $A$-brane supported on the entire global nilpotent cone $\bfN_{r} := h^{-1}(0) = \bfV \cup \bigcup_{j=1}^4 \bfD_j$ (see Figure \ref{fig:nilpotent_cone}), which corresponds to the root
\be
r = (-2, 0, 0, 0).
\ee
The dimension formula for the corresponding $A$-brane $\brane_{\bfN_{r}}$ can be evaluated as
\be
m = \dim\Hom(\brane_{\bfN_r}, \Bcc) = \int_{\bfN_r} \frac{F + B}{2\pi} = \frac{1}{\hbar} - \frac{2\talpha_1}{\hbar}.
\ee
This yields the shortening condition $q^m = t_1^2$. If we further assume the existence of $\brane_\bfV$, which implies $q^k = t_1^{-2}$, then $q$ must be a $(m + k)$-root of unity. As discussed in \S\ref{sec:genericfiber}, this condition aligns with the $A$-brane condition for $\brane_{\bfF}$ supported on a generic Hitchin fiber.
The representation associated with the nilpotent cone $\bfN_r$ can then be constructed as the quotient module
\be
\scN_{m}^r := \scF_{m+k} / \scV_k.
\ee
By construction, this quotient module fits into the following short exact sequence
\be\label{Exact_Seq_Nilpotent_cone}
\begin{tikzcd}[cramped, sep=small]
    0 \arrow[r] & \scV_k \arrow[r] & \scF_{m+k} \arrow[r] & \scN_{m}^r \arrow[r] & 0~.
\end{tikzcd}
\ee

If we impose additional shortening conditions appropriately, a similar analysis yields the short exact sequence:
\be\label{NilpotentConeMorphism}
\begin{tikzcd}[cramped, sep=small]
    0 \arrow[r] & \bigoplus_{j=1}^4 \scD^{(j)}_{\ell_j} \arrow[r] & \scN_a \arrow[r] & \scV_k \arrow[r] & 0~,
\end{tikzcd}
\ee
which represents an element in $\Ext^1(\scV_k, \bigoplus_{j=1}^4 \scD^{(j)}_{\ell_j})$.

As a consistency check, consider the limit of the DAHA of type $A_1$ specified by \eqref{A1Parameters}. Under this specification, we recover all the $A$-brane conditions discussed in \cite{Gukov:2022gei}.

\paragraph{Affine braid group action.}

Under our current assumptions $\tgamma=0$ and vanishing $B$-field flux along the global nilpotent cone, we may set $v=1$.  
As explained in \S\ref{sec:category}, under these conditions the affine braid group $\dt\Br(D_4)$ of type $D_4$ acts naturally on the category of $A$-branes.  
We now describe explicitly how this affine braid group acts on compactly supported Lagrangian $A$-branes.

Fix a base point in parameter space such that the parameter $\talpha$ satisfies the existence conditions for all irreducible $A$-branes listed in Table~\ref{Tab:Matching_A-brane_Rep_I0}.  
Starting from this base point, consider a loop in the $(\talpha,\tbeta)$-parameter space encircling a locus where a two-cycle $\bfW_a$ shrinks to zero volume.  
The associated monodromy defines a generator $T_a$ of the affine braid group $\dt\Br(D_4)$.

At the level of homology, the supports of Lagrangian branes are transformed according to the Picard--Lefschetz formula \eqref{mPL}.  
At the categorical level, however, the same monodromy is lifted to an autoequivalence of the derived category of $A$-branes, encoded by the exact triangle
\cite{Seidel1999,seidel2000graded,Seidel2008}
\be \label{exact-triangle}
\Hom(\brane_{\bfW_a}, \brane_{\bfW_b}) \otimes \brane_{\bfW_a}
\xrightarrow{\mathrm{ev}}
\brane_{\bfW_b}
\longrightarrow
T_a(\brane_{\bfW_b})
\longrightarrow
\bigl(\Hom(\brane_{\bfW_a}, \brane_{\bfW_b}) \otimes \brane_{\bfW_a}\bigr)[1]~,
\ee
where $\mathrm{ev}$ denotes the evaluation map.

Consider two distinct Lagrangian $A$-branes $\brane_{\bfW_a}$ and $\brane_{\bfW_b}$ whose supports intersect transversely at a single point $p$.  
In this case,
\be
\Hom^1(\brane_{\bfW_a}, \brane_{\bfW_b})
\cong \bC\langle p\rangle~,
\ee
and the exact triangle \eqref{exact-triangle} reduces to
\be
\brane_{\bfW_a}[-1]
\xrightarrow{\mathrm{ev}}
\brane_{\bfW_b}
\longrightarrow
T_a(\brane_{\bfW_b})
\longrightarrow
\brane_{\bfW_a}~.
\ee
Here the grading is determined by the Maslov index, as reviewed in \S\ref{sec:brane-quantization}.  
It follows that $T_a(\brane_{\bfW_b})$ is the bound state of $\brane_{\bfW_a}$ and $\brane_{\bfW_b}$, supported on the cycle
$[\bfW_{ab}] = [\bfW_a] + [\bfW_b]$,
namely
\begin{equation}\label{mPL-2}
T_a(\brane_{\bfW_b}) = \brane_{\bfW_{ab}}~.
\end{equation}

Next, consider the action of $T_a$ on $\brane_{\bfW_a}$ itself.  
Since each irreducible component of the global nilpotent cone is topologically an $S^2$, its self-intersection number is $-2$.  
At the level of homology, the Picard--Lefschetz transformation therefore acts as
\be \label{orientation-reverse}
T_a : [\bfW_a] \longmapsto -[\bfW_a]~,
\ee 
corresponding to a reversal of orientation.

In the derived category of $A$-branes, $\brane_{\bfW_a}$ is a spherical object, satisfying
\be
\Hom(\brane_{\bfW_a}, \brane_{\bfW_a})
\cong H^*(S^2)
\cong \bC \oplus \bC[-2]~.
\ee
The exact triangle \eqref{exact-triangle} then becomes
\be
\brane_{\bfW_a} \oplus \brane_{\bfW_a}[-2]
\xrightarrow{\mathrm{ev}}
\brane_{\bfW_a}
\longrightarrow
T_a(\brane_{\bfW_a})
\longrightarrow
\brane_{\bfW_a}[1] \oplus \brane_{\bfW_a}[-1]~,
\ee
which implies that the symplectic Dehn twist shifts the grading by $-1$:
\be
T_a(\brane_{\bfW_a}) = \brane_{\bfW_a}[-1]~.
\ee
Thus, although the homology class changes sign \eqref{orientation-reverse}, the underlying Lagrangian support remains unchanged; the nontrivial effect in the $A$-brane category is instead a shift of the Maslov grading.
In particular, $T_a^2$ is \emph{not} the identity but shifts the grading by $-2$.  
As a result, the generators $T_a$ satisfy affine braid relations \eqref{braid-relation} rather than affine Weyl group relations.

Correspondingly, objects in the derived category of $\SH$-modules are naturally $\bZ$-graded.  
The affine braid group action on $D^b(\Rep(\SH))$ is defined in an entirely parallel manner, via exact triangles of the same form.

\subsubsection{At a singular fiber of type \texorpdfstring{$I_4$}{I4}}\label{sec:brane-I4}

In the previous subsection, we provided a detailed analysis of the correspondence between compact $A$-branes and finite-dimensional $\SH$-modules in the case where $\tbeta_j = \tgamma_j = 0$ and $\hbar$ is real. We now turn to the analysis of other fibration configurations listed in Table \ref{tab:fiber_sing_classification2}. To proceed, we keep $\hbar$ to be real and adjust the ramification parameter $\tgamma$ to realize the $I_4$ singular fiber, as illustrated in \S\ref{I_4 singular fiber}. In this case, a generic fiber remains Lagrangian. Therefore, all the irreducible components of the $I_4$ singular fiber are also Lagrangian.
To be concrete, we focus on the specific case where $(\tgamma_1, \tgamma_2, \tgamma_3, \tgamma_4) = (\tgamma_1, 0, \tgamma_1, 0)$. 

\paragraph{Object Matching} 
We apply the same technique as in the previous cases to match the $A$-branes and finite-dimensional $\SH$-modules, by comparing the shortening conditions and $A$-brane conditions. 
The compact $A$-branes $\brane_{\bfF}$ and $\brane_{\bfU_i}$ $(i=1,2,3,4)$ can exist when $q$ is a root of unity and specific $\boldsymbol{t}$ are chosen. In this case, the homology classes and the volumes of the irreducible components have been analyzed in \eqref{I_4-volume}. The $A$-brane condition is thus specified by
\begin{equation}
d_i:=\dim\Hom(\brane_{\bfU_i},\Bcc)=\int_{\bfU_i}\frac{F+B}{2\pi}=\frac{
    \text{vol}_I(\bfU_i)}{\hbar} ,\qquad (i=1,2,3,4)
\end{equation}
from which one could read off the corresponding shortening conditions. The finite-dimensional representation $\scU^{(i)}_{d_i}$ can be constructed by a quotient of an appropriate polynomial representation $\scP^w$. The correspondence between irreducible components and finite-dimensional representations is summarized in Table \ref{Matching_A-brane_Rep_I4}.

\begin{table}[ht]
    \centering
    \renewcommand{\arraystretch}{1.3}
    \begin{tabular}{c|c|c|c}
\hline \text { finite-dim rep } & \text { shortening condition }  &\text { $A$-brane}& \text {$A$-brane condition } \\
\hline $\scU^{(1)}_{d_1}$ &$q^{d_1}=t_1^{-1}t_2t_3t_4$  &$\brane_{\bfU_1}$& $d_1=\frac{1}{\hbar}-\frac{-\talpha_1+\talpha_2+\talpha_3+\talpha_4}{\hbar}$\\
\hline $\scU^{(2)}_{d_2}$ & $q^{d_2}=t_1t_2t_3^{-1}t_4^{-1}$  &$\brane_{\bfU_2}$&$d_2=\frac{-\talpha_1-\talpha_2+\talpha_3+\talpha_4}{\hbar}$ \\
\hline $\scU^{(3)}_{d_3}$ & $q^{d_3}=t_1^{-1}t_2^{-1}t_3t_4^{-1}$  &$\brane_{\bfU_3}$&$d_3= \frac{\talpha_1+\talpha_2-\talpha_3+\talpha_4}{\hbar}$\\
\hline $\scU^{(4)}_{d_4}$  & $q^{d_4}=t_1t_2^{-1}t_3^{-1}t_4$ &$\brane_{\bfU_4}$&$d_4=\frac{-\talpha_1+\talpha_2+\talpha_3-\talpha_4}{\hbar}$ \\
\hline
    \end{tabular}
    \caption{A summary of finite-dimensional $\protect\SH$-modules with their shortening conditions and the corresponding $A$-brane configurations at the $I_4$ singular fiber, under the assumption $|q|=1$.}
    \label{Matching_A-brane_Rep_I4}
\end{table}

\paragraph{Morphism Matching}
 From Figure \ref{I_4 singular fiber}, the fiber of type $I_4$ consists of four $\mathbb{CP}^1$ joining like a necklace, or the affine $\dot{A}_3$ Dynkin diagram. Therefore, the morphism of the $A$-branes $\brane_{\bfU_i}$ and $\brane_{\bfU_{i+1}}$ is
\be
\Hom^*\left(\brane_{\bfU_i}, \brane_{\bfU_{i+1}}\right):=C F^*\left(\brane_{\bfU_{i}}, \brane_{\bfU_{i+1}}\right) \cong \bC\left\langle p_i\right\rangle  ,\qquad (i=1,2,3,4).
\ee
where we denote $\bfU_{5}:=\bfU_{1}$ and $p_i$ as the intersection point between $\bfU_{i}$ and $\bfU_{i+1}$. 

From the perspective of representation theory, the morphism space can be directly constructed from the analysis above, as the cycles $\bfU_{i}$ and $\bfU_{i+1}$ intersect at a single point. Specifically, let $\scU^{(i,i+1)}_{d_i + d_{i+1}}$ denote the finite-dimensional representation corresponding to $A$-brane supported on the cycle $\bfU_i \cup \bfU_{i+1}$. Using the method described in the $I_0^*$ singular fiber, the following short exact sequence can be constructed in the representation theory side:
\be\label{Exact_Seq_I4}
\begin{tikzcd}[cramped, sep=small]
    0 \arrow[r]&\scU^{(i)}_{d_i}\arrow[r]&\scU^{(i,i+1)}_{d_i+d_{i+1}}\arrow[r]&\scU^{(i+1)}_{d_{i+1}}\arrow[r]&0 \quad,\qquad (i=1,2,3,4)
\end{tikzcd}
\ee
which yields an element in $\Ext^1(\scU^{(i+1)}_{d_{i+1}},\scU^{(i)}_{d_i})$.
Thus, we establish the matching of the morphism space in the case of the $I_4$ singular fiber.

\subsubsection{At a singular fiber of type \texorpdfstring{$I_3$}{I3}}\label{sec:brane-I3}

We adjust the ramification parameters to realize the $I_3$ singular fiber. Since a generic fiber is Lagrangian, all irreducible components of the $I_3$ singular fiber are also Lagrangian. For clarity, we focus on the specific case where $(\tgamma_1, \tgamma_2, \tgamma_3, \tgamma_4) = (\tgamma_1, 0, \tgamma_3, \tgamma_1+\tgamma_3)$. 

\paragraph{Object Matching}
We apply the same technique as in the previous cases to match the $A$-branes and finite-dimensional representations, by comparing the shortening conditions and $A$-brane conditions. With the volumes of the irreducible components $\bfU_i$ evaluated in \eqref{volume_I3}, the $A$-brane condition writes: 
\be
d_i := \dim\Hom(\brane_{\bfU_i}, \Bcc) = \int_{\bfU_i} \frac{F + B}{2\pi} = \frac{\text{vol}_I(\bfU_i)}{\hbar}~i=1,2,3
\ee
We can extract the shortening condition. The finite-dimensional representation $\scU^{(i)}_{d_i}$ is constructed as a quotient of $\scP^w$ by $(P_{d_i}^w)$, where $w$ is an appropriate element in $W(D_4)$. The correspondence between the irreducible components and the finite-dimensional representations is summarized in Table \ref{Matching_A-brane_Rep_I3}.

\begin{table}[ht]
    \centering
    \renewcommand{\arraystretch}{1.3}
    \begin{tabular}{c|c|c|c}
\hline \text { finite-dim rep } & \text { shortening condition }  &\text {$A$-brane}& \text {$A$-brane condition } \\
\hline $\scU^{(1)}_{d_1}$ & $q^{d_1}=t_2^2$  &$\brane_{\bfU_1}$& $d_1=\frac{1}{\hbar}-\frac{2\talpha_2}{\hbar}$ \\
\hline $\scU^{(2)}_{d_2}$ & $q^{d_2}=t_1^{-1}t_2^{-1}t_3^{-1}t_4$  &$\brane_{\bfU_2}$& $d_2=\frac{\talpha_1+\talpha_2+\talpha_3-\talpha_4}{\hbar}$ \\
\hline $\scU^{(3)}_{d_3}$ & $q^{d_3}=t_1t_2^{-1}t_3t_4^{-1}$  &$\brane_{\bfU_3}$& $d_3= \frac{-\talpha_1+\talpha_2-\talpha_3+\talpha_4}{\hbar}$ \\
\hline
    \end{tabular}
    \caption{A summary of finite-dimensional $\protect\SH$-modules with their shortening conditions and the corresponding $A$-brane configurations at the $I_3$ singular fiber, under the assumption $|q|=1$.}
    \label{Matching_A-brane_Rep_I3}
\end{table}

\paragraph{Morphism Matching}
The analysis is the same as in an $I_4$ fiber:
\be
\Hom^*\left(\brane_{\bfU_i}, \brane_{\bfU_{i+1}}\right):=C F^*\left(\brane_{\bfU_{i}}, \brane_{\bfU_{i+1}}\right) \cong \bC\left\langle p_i\right\rangle  ,i=1,2,3
\ee
where we denote $\bfU_{4}:=\bfU_{1}$ and $p_i$ as the intersection point between $\bfU_{i}$ and $\bfU_{i+1}$.

From the viewpoint of representation theory, the morphism space can be constructed directly based on the analysis above, given that the cycles $\bfU_i$ and $\bfU_{i+1}$ intersect at a single point. More precisely, the corresponding $\SH$-module can be expressed in terms of the short exact sequence
\be\label{Exact_Seq_I3}
\begin{tikzcd}[cramped, sep=small]
    0 \arrow[r]&\scU^{(i)}_{d_i}\arrow[r]&\scU^{(i,i+1)}_{d_i+d_{i+1}}\arrow[r]&\scU^{(i+1)}_{d_{i+1}}\arrow[r]&0~,
\end{tikzcd}
\ee
which yields an element in $\Ext^1(\scU^{(i+1)}_{d_{i+1}},\scU^{(i)}_{d_i})$. Hence, we conclude the matching of the morphism space in the case of $I_3$ singular fiber.

\subsubsection{At a singular fiber of type \texorpdfstring{$I_2$}{I2}}\label{sec:brane-I2}

We adjust the ramification parameters to realize the $I_2$ singular fiber. Since a generic fiber is Lagrangian, all irreducible components of the $I_2$ singular fiber are also Lagrangian. For clarity, we focus on the specific case where $(\tgamma_1, \tgamma_2, \tgamma_3, \tgamma_4) = (\tgamma_1, 0, \tgamma_3, \tgamma_4)$. 

\paragraph{Object Matching}
The volumes of the irreducible components have been previously analyzed in \eqref{volume_I2_2}, with associated $A$-brane condition:
\be
d_i := \dim\Hom(\brane_{\bfU_i}, \Bcc) = \int_{\bfU_i} \frac{F + B}{2\pi} = \frac{\text{vol}_I(\bfU_i)}{\hbar}, i=1,2
\ee
From this, we can extract the shortening conditions. The finite-dimensional representation $\scU^{(i)}_{d_i}$ is constructed by quotienting $\scP^w$ over $P_{d_i}^w$. The correspondence between the irreducible components and the finite-dimensional representations is summarized in Table \ref{Matching_A-brane_Rep_I3}.

\begin{table}[ht]
    \centering
    \renewcommand{\arraystretch}{1.3}
    \begin{tabular}{c|c|c|c}
\hline \text { finite-dim rep } & \text { shortening condition }  &\text{$A$-brane}& \text {$A$-brane condition } \\
\hline $\scU^{(1)}_{d_1}$ & $q^{d_1}=t_2^2$  &$\brane_{\bfU_1}$& $d_1=\frac{1}{\hbar}-\frac{2\talpha_2}{\hbar}$\\
\hline $\scU^{(2)}_{d_2}$ & $q^{d_2}=t_2^{-2}$  &$\brane_{\bfU_2}$& $d_2=\frac{2\talpha_2}{\hbar}$ \\
\hline
    \end{tabular}
    \caption{A summary of finite-dimensional $\protect\SH$-modules with their shortening conditions and the corresponding $A$-brane configurations at the $I_2$ singular fiber, under the assumption $|q|=1$.}
    \label{Matching_A-brane_Rep_I2}
\end{table}

\paragraph{Morphism Matching}

The $I_2$ case differs from the analysis above, as two irreducible components $\bfU_{1}$ and $\bfU_{2}$ intersect at two points, denoted as $p_1,p_2$.
In this case, the morphism of the $A$-branes $\brane_{\bfU_1}$ and $\brane_{\bfU_{2}}$ is
\be
\Hom^*\left(\brane_{\bfU_1}, \brane_{\bfU_{2}}\right):=C F^*\left(\brane_{\bfU_{1}}, \brane_{\bfU_{2}}\right) \cong \bC\left\langle p_1\right\rangle\oplus\bC\left\langle p_2\right\rangle .
\ee

From the viewpoint of representation theory, the morphism space is expected to be two-dimensional. In this case, applying a similar technique as before, a short exact sequence is obtained from the polynomial representation $\scP^w$ with $w=(0,1,0,1)$
\be\label{Exact_Seq_I2}
\begin{tikzcd}[cramped, sep=small]
    0 \arrow[r]&\scU^{(1)}_{d_1}\arrow[r]&\scF^{(0,1,0,1)}_{d_1+d_{2}}\arrow[r]&\scU^{(2)}_{d_{2}}\arrow[r]&0~,
\end{tikzcd}
\ee
which yields an element in $\Ext^1(\scU^{(2)}_{d_{2}},\scU^{(1)}_{d_1})$. 

However, as the morphism space is two-dimensional, from the representation theory side, there must be another morphism that one can find. To identify the corresponding module, we consider the quotients of the polynomial representation with a different weight $w=(0,1,0,-1)$
\be\label{Exact_Seq_I2(2)}
\begin{tikzcd}[cramped, sep=small]
    0 \arrow[r]&\scU^{(1)}_{d_1}\arrow[r]&\scF^{(0,1,0,-1)}_{d_1+d_{2}}\arrow[r]&\scU^{(2)}_{d_{2}}\arrow[r]&0~,
\end{tikzcd}
\ee
which yields another element in $\Ext^1(\scU^{(2)}_{d_{2}},\scU^{(1)}_{d_1})$. As the raising/lowering operators in $\scF^{(0,1,0,1)}_{d_1+d_{2}}$ and $\scF^{(0,1,0,-1)}_{d_1+d_{2}}$ are different, they provide two distinct generators for $\Ext^1(\scU^{(2)}_{d_{2}},\scU^{(1)}_{d_1})$.
Therefore, we conclude the matching of the morphism space in the case of $I_2$ singular fiber.

\subsubsection{Generic \texorpdfstring{$\hbar$}{hbar} parameter}\label{sec:generic_hbar}

Until now, we have assumed that $\hbar$ is real. However, the analysis of finite-dimensional representations in \eqref{generic_shortening_cond} does not rely on this assumption. Let us now explore the scenario where $\hbar = |\hbar| e^{i\theta}$ is complex rather than real. In this case, the symplectic form $\omega_\X = \Im \Omega$ of the $A$-model is no longer proportional to $\omega_K$ but instead becomes a linear combination of $\omega_I$ and $\omega_K$. As a result, for generic values of $\hbar$, a Hitchin fiber is no longer Lagrangian with respect to $\omega_{\MS}$. 

As shown in \eqref{Homology_Lattice}, the second integral homology group of the target space $\X$ is isomorphic to the affine $D_4$ root lattice. The standard generators $[\bfD_j]$ ($j=1,2,3,4$) and $[\bfV]$ correspond to the simple roots $e^\ell$ ($\ell=0,\ldots,4$) of the affine $D_4$ root system, as shown in \eqref{homology-root}. For simplicity, we denote the homology classes by their corresponding roots in the following paragraphs. Using \eqref{vol-middle-chamber4}, their volumes with respect to $\Omega$ can be written as
\be 
\int_{e^\ell}\frac{\Omega}{2\pi} = \frac{e^\ell \cdot (\talpha - i\tgamma)}{\hbar}~,\qquad (\ell = 0,\ldots,4).
\ee 
Therefore, a necessary condition for the homology class $e^\ell$ to be represented by a Lagrangian submanifold with respect to $\omega_\X$ is
\be 
\Im \frac{e^\ell \cdot (\talpha - i\tgamma)}{\hbar} = 0~.
\ee 
The comparison with the representations of $\SH$ indeed indicates that it is also sufficient although a rigorous derivation is unknown to us. In other words, for any homology class $e^\ell$ satisfying this condition, there exists a corresponding Lagrangian submanifold with respect to $\omega_\X$. Based on this assumption, the correspondence between compact $A$-branes and $\SH$-modules for generic $\hbar$ is summarized in Table \ref{Tab:Matching_A-brane_Rep_I0_generic}. 

Furthermore, in direct analogy with the discussion following \eqref{generic_cycle}, for any root $r\in \sfR(D_4)$ there exists a compact Lagrangian brane $\brane_{\bfN_r}$ when the parameters satisfy $q^m=\boldsymbol{t}^{-r}$. The corresponding $\SH$-module is given by a quotient of the polynomial representation,
\be \label{scN-2}
\scN^{r,w}_m = \scP^w /(P^w_m),
\ee
where the weight $w$ obeys $\langle r, w\rangle>0$.

When $q$ is not a root of unity, all finite-dimensional representations of $\HH$ are rigid. In this case, the classification of finite-dimensional representations has been completed in \cite{oblomkov2009finite}, where it was shown that every finite-dimensional representation arises as a quotient of a polynomial representation. The above correspondence, therefore, provides a nontrivial consistency check: the classification of compact Lagrangian $A$-branes obtained from our geometric considerations aligns precisely with the known classification \cite{oblomkov2009finite} of finite-dimensional $\HH$-modules.

\begin{table}[ht]
    \centering
    \renewcommand{\arraystretch}{1.3}
    \begin{tabular}{c|c|c|c}
\hline \text { finite-dim rep } & \text { shortening condition } & \text{$A$-brane} & \text {$A$-brane condition } \\
\hline $\scD^{(1)}_{\ell_1}$ & $q^{\ell_1}=t_1t_2t_3t_4$ & $\brane_{\bfD_1}$ & $\ell_1=\frac{1}{\hbar}-\frac{\theta\cdot(\talpha-i\tgamma)}{\hbar}$ \\
\hline $\scD^{(2)}_{\ell_2}$ & $q^{\ell_2}=t_1t_2t_3^{-1}t_4^{-1}$ & 
$\brane_{\bfD_2}$ & $\ell_2=\frac{e^1\cdot(\talpha-i\tgamma)}{\hbar}$ \\
\hline $\scD^{(3)}_{\ell_3}$ & $q^{\ell_3}=t_1t_2^{-1}t_3t_4^{-1}$ & 
$\brane_{\bfD_3}$ & $\ell_3=\frac{e^2\cdot(\talpha-i\tgamma)}{\hbar}$ \\
\hline $\scD^{(4)}_{\ell_4}$ & $q^{\ell_4}=t_1t_2^{-1}t_3^{-1}t_4$ & 
$\brane_{\bfD_4}$ & $\ell_4=\frac{e^3\cdot(\talpha-i\tgamma)}{\hbar}$ \\
\hline $\scV_{k}$ & $q^{k}=t_1^{-2}$& 
$\brane_{\bfV}$ & $k=\frac{e^4\cdot(\talpha-i\tgamma)}{ \hbar}$ \\
\hline
    \end{tabular}
    \caption{A summary of finite-dimensional representations of $\protect\SH$ with corresponding shortening and $A$-brane conditions at $I_0^*$ singular fiber.}
    \label{Tab:Matching_A-brane_Rep_I0_generic}
\end{table}

As a special case, consider the scenario where $\hbar$ is purely imaginary ($i\hbar\in\bR_{>0}$) so that the symplectic form becomes $\omega_\X = \omega_I / |\hbar|$. As described in \eqref{suspended-homology}, the suspended cycles $\bfW_{ij}$ serve as generators of the second homology group, corresponding to the simple roots $e^\ell$ ($\ell=1,2,3,4$), when the ramification parameters $\tgamma_j$ are generic, and the Hitchin fibration has  $6I_1$.

A particularly interesting limit occurs when $\talpha_j = 0$ and $\tgamma_j$ lie within the chamber specified by \eqref{gamma-chamber}. In this case, the volumes of the suspended cycles with respect to $\Omega$ are given by \eqref{Volume_suspended_cycles}:
\be 
\int_{e^\ell}\frac{\Omega}{2\pi} =-\frac{ie^\ell \cdot \tgamma}{\hbar} \in \bR, \qquad (\ell = 1, 2, 3, 4).
\ee 
Then, applying the above assumption, there exist four compact branes of type $(A, A, B)$ suspended between the $I_1$ singular fibers. The correspondence between these $A$-branes and $\SH$-modules is summarized in Table \ref{Tab:Matching_A-brane_Rep_General}.

\begin{table}[ht]
    \centering
    \renewcommand{\arraystretch}{1.3}
    \begin{tabular}{c|c|c|c}
\hline \text { finite-dim rep } & \text { shortening condition } & $A$-brane & \text {$A$-brane condition } \\
\hline $\scW^{(13)}_{d_1}$ & $q^{d_1}=t_1^{-2}$& $\brane_{\bfW_{13}}$ & $d_1=\frac{2 \tgamma_1}{ |\hbar|}$ \\
\hline $\scW^{(12)}_{d_2}$ & $q^{d_2}=t_1t_2t_3^{-1}t_4^{-1}$ & 
$\brane_{\bfW_{12}}$ & $d_2=\frac{-\tgamma_1-\tgamma_2+\tgamma_3+\tgamma_4}{|\hbar|}$ \\
\hline $\scW^{(34)}_{d_3}$ & $q^{d_3}=t_1t_2^{-1}t_3t_4^{-1}$ & 
$\brane_{\bfW_{34}}$ & $d_3=\frac{-\tgamma_1+\tgamma_2-\tgamma_3+\tgamma_4}{|\hbar|}$ \\
\hline $\scW^{(56)}_{d_4}$ & $q^{d_4}=t_1t_2^{-1}t_3^{-1}t_4$ & 
$\brane_{\bfW_{56}}$ & $d_4=\frac{-\tgamma_1+\tgamma_2+\tgamma_3-\tgamma_4}{|\hbar|}$ \\
\hline
    \end{tabular}
    \caption{A summary of finite-dimensional representation of $\protect\SH$ with corresponding shortening and $A$-brane conditions under $\talpha_j=0$ limit.}
    \label{Tab:Matching_A-brane_Rep_General}
\end{table}

\acknowledgments
SN  would like to thank Sergei Gukov, Peter Koroteev, Du Pei and Ingmar Saberi for the collaboration in \cite{Gukov:2022gei}, based on which this paper is written. The authors would like to thank Yutaka Yoshida for sharing his draft \cite{Yoshida:2025iae} on the related topic with us.  In addition, SN is grateful to Chris Brav, Yixuan Li, Umut Varolgunes and Meri Zaimi for discussions.
This work is supported by National Natural Science Foundation of China No.12050410234, and Shanghai Municipal Science and Technology Major Project No. 22WZ2502100 and No. 24ZR1403900. The work of Z.Y. is supported by National Natural Science Foundation of China No.123B1010.

\appendix

\section{Notations}\label{app:notation}

The conventions and notations largely follow those in \cite{Gukov:2022gei}.  
\begin{itemize}\setlength\itemsep{.05pt}
    \item {Sans-serif symbols:} Single sans-serif symbols are used to denote lattices or free $\bZ$-modules (e.g., $\sfQ$ and $\sfP$ for the root and weight lattices, respectively). Words in sans-serif type (e.g., \ABrane) refer to categories.  

    \item {Calligraphic letters:} Symbols such as $\mathcal{M}$ or $\mathcal{B}$ are reserved for objects that are moduli spaces or closely related to them.  

    \item {Boldface symbols:} Boldface symbols are used for two-cycles of the target space, often representing the support of $A$-branes (e.g., $\bfF$ for a generic fiber of the Hitchin fibration).  

    \item {Gothic letters:} Capital gothic symbols (e.g., $\X$ for the target space) denote objects equipped with the structure required by the topological $A$-model. For example, $\brane$ represents an $A$-brane associated with specific data, while $\brane_{\bfF}$ denotes a brane supported on a generic fiber $\bfF$ of the Hitchin fibration.

    \item {Script letters:} Script letters are used for modules over the algebra $\OO^q(\X)$, with the specific algebra being clear from the context. For consistency, the same symbol is used for a brane and its corresponding representation. For instance, a representation $\scF$ of the spherical DAHA $\SH$ is identified with an $A$-brane $\brane_{\bfF}$ under the equivalence \eqref{eq:functor} between the two categories.
\end{itemize}

Let $\bC[q^{ \pm \frac{1}{2}}, \boldsymbol{t}^\pm]:= \bC[q^{ \pm \frac{1}{2}}, t_1^{ \pm},t_2^{ \pm},t_3^{ \pm},t_4^{ \pm}]$ be the ring of Laurent polynomials in the formal parameters $q^{1 / 2}$ and $\boldsymbol{t}=(t_1,t_2,t_3,t_4)$, and consider a multiplicative system $M$ in $\bC[q^{ \pm \frac{1}{2}}, \boldsymbol{t}^\pm]$ generated by elements of the form $(q^{\ell/2} t_1t_3-q^{-\ell/2} t_1^{-1}t_3^{-1})$ for any non-negative  integer $\ell \in \bZ_{\geq 0}$. We define the coefficient ring $\bC_{q, t}$ to be the localization (or formal ``fraction'')  of the ring $\bC[q^{ \pm \frac{1}{2}}, \boldsymbol{t}^{ \pm}]$ at $M$:
\be \label{localization}
\bC_{q,\boldsymbol{t}}=M^{-1} \bC[q^{ \pm \frac{1}{2}}, \boldsymbol{t}^{ \pm}] .
\ee

The standard notation of DAHA of $C^\vee C_1$ used in \cite{oblomkov2004cubic} is given by
\begin{equation}
\begin{aligned}
\left(T_0-t_1\right)\left(T_0+t_1^{-1}\right) &=0, \cr
\left(T_0^{\vee}-t_2\right)\left(T_0^{\vee}+t_2^{-1}\right) &=0, \cr
\left(T_1-t_3\right)\left(T_1+t_3^{-1}\right) &=0, \cr
\left(T_1^{\vee}-t_4\right)\left(T_1^{\vee}+t_4^{-1}\right) &=0,
\end{aligned}
\end{equation}
and
\begin{equation}
T_1^\vee T_1 T_0 T_0^\vee=q^{-\frac{1}{2}}.
\end{equation}

To match with the geometry side, we make the following change of variables to obtain our definition on DAHA \eqref{sec:DAHA-CC1-subsec}:
\begin{equation}
\begin{aligned}
    (T_0,T_0^\vee,T_1,T_1^\vee) &\to(T_1,T_1^\vee,T_0,T_0^\vee) \cr
    (t_1,t_2,t_3,t_4) &\to(-it_3,-it_4,-q^{\frac{1}{2}}it_1,-it_2) ~.
\end{aligned}
\end{equation}
\subsubsection*{Root system and weight lattice}

In this paper, we establish notation for the root system and weight lattice of type $D_4$. For the discussion of the root system, we assume an orthogonal basis in $\bR^4$ equipped with the standard Euclidean inner product, allowing us to express the roots accordingly.
The standard convention for the $D_4$ root system and weight systems is as follows, with a tilde ($\sim$) placed above the notation to distinguish it from our chosen notation. The $D_4$ root system consists of 24 non-zero roots and 4 zero roots, among whom the non-zero roots are expressed as:

\be
\tilde{\sfR}(D_4) = \{\pm \epsilon^i \pm \epsilon^j \mid i,j \in \{1,2,3,4\}, i > j \},
\ee
where the basis vectors $\epsilon^i$ are defined as:
\be
\epsilon^1 = (1,0,0,0), \quad \epsilon^2 = (0,1,0,0), \quad \epsilon^3 = (0,0,1,0), \quad \epsilon^4 = (0,0,0,1).
\ee

To match the geometry of the cubic surface, we adopt a slightly unconventional presentation of the $D_4$ root system:
\begin{equation}\label{D4rootsystem}
    \sfR(D_4) = \left\{(\pm 1, \pm 1, \pm 1, \pm 1), (\pm 2, 0, 0, 0), (0, \pm 2, 0, 0), (0, 0, \pm 2, 0), (0, 0, 0, \pm 2) \right\}.
\end{equation}
Two conventions differ by change of bases, specified by
\begin{equation}
    r= A \tilde{r}~,\qquad   A=\begin{pmatrix}
1 & 1 & 0 & 0 \\
1 & -1 & 0 & 0 \\
0 & 0 & 1 & 1 \\
0 & 0 & 1 & -1 
\end{pmatrix}  ,\quad \tilde{r}\in\tilde{\sfR}(D_4)
\end{equation}
As a result, the norm of vectors in $\sfR(D_4)$ in our notation is normalized to 2 instead of $\sqrt{2}$.
We take the set of simple roots for $D_4$ root system as
\begin{equation}
    \{e^1, e^2, e^3, e^4\} = \{(-1, -1, 1, 1), (-1, 1, -1, 1), (-1, 1, 1, -1), (2, 0, 0, 0)\}~,
\end{equation}
Then, the highest root is expressed by $\theta=e^1+e^2+e^3+2e^4 = (1, 1, 1, 1)$, as drawn in Figure \ref{fig:affineD4}.
For the affine $D_4$ root system, we use $\delta$ to denote the imaginary root. The extra simple root in affine $D_4$ root system is given by $e^0=\delta-\theta$.

Under this convention, the weights of inside $\mathbf{8}_V,\mathbf{8}_S,\mathbf{8}_C$ are as follows, which we denote as $\sfP(\mathbf{8}_V),\sfP(\mathbf{8}_S)$ and $\sfP(\mathbf{8}_C)$.
\begin{equation}\label{D4weightsystem}
\begin{aligned}
    \sfP(\mathbf{8}_V)=\{(\pm1,\pm1,0,0),(0,0,\pm1,\pm1)\}\\
    \sfP(\mathbf{8}_S)=\{(\pm1,0,\pm1,0),(0,\pm1,0,\pm1)\}\\
    \sfP(\mathbf{8}_C)=\{(\pm1,0,0,\pm1),(0,\pm1,\pm1,0)\}\\
\end{aligned}
\end{equation}
Here all the weights have the multiplicity one. 
Throughout the paper, we adopt the following notation:
\be 
\overline{t_j} \equiv t_j+t_j^{-1}~.
\ee 
Then, the characters can be expressed as
\begin{equation}
\begin{aligned}
        \chi_{\mathbf{8}_{V}}&=\sum_{w\in \sfP(\mathbf{8}_V)}\boldsymbol{t}^{w}=\bar{t}_1\bar{t}_2+\bar{t}_3\bar{t}_4=\theta_1\\
        \chi_{\mathbf{8}_{S}}&=\sum_{w\in \sfP(\mathbf{8}_S)}\boldsymbol{t}^{w}=\bar{t}_1\bar{t}_3+\bar{t}_2\bar{t}_4=\theta_2\\
        \chi_{\mathbf{8}_{C}}&=\sum_{w\in \sfP(\mathbf{8}_C)}\boldsymbol{t}^{w}=\bar{t}_1\bar{t}_4+\bar{t}_2\bar{t}_3=\theta_3\\
        \chi_{\text{adj}}&=\sum_{r\in \sfR(D_4)\cup\{(0,0,0,0)\}^{\times4}}\boldsymbol{t}^{r}= -4+\overline{t_1}^2+\overline{t_2}^2+\overline{t_3}^2+\overline{t_4}^2+\overline{t_1} \ \overline{t_2} \ \overline{t_3} \ \overline{t_4}=\theta_4\\
\end{aligned}
\end{equation}
where $\boldsymbol{t}^{w}=t_1^{w_1}t_2^{w_2}t_3^{w_3}t_4^{w_4}$, $\{(0,0,0,0)\}^{\times4}$ means four copies of the zero root coming from the Cartan subalgebra. In the last formula, we apply the identity $t_1^2+t_1^{-2}=\overline{t_1}^2-2$.

\subsubsection*{Infinite/finite-dimensional representations}

In \S\ref{sec:brane-poly}, we demonstrated that there are 24 line-like $(A, B, A)$-branes, whose support corresponds to the 24 lines in the cubic surface. These lines, characterized by their slopes, can be labeled by the 24 shortest weights in the $D_4$ weight lattice $\sfP(D_4)$. Accordingly, we label the infinite-dimensional representation associated with the shortest weight $w$ as $\scP^w$.

In this context, we denote the raising and lowering operators in the representation $\scP^w$ as $R^w_n$ and $L^w_n$, respectively, where the corresponding Askey-Wilson polynomials are denoted as $P^w_n$. In this notation,  the polynomial representation discussed in \S\ref{sec:polyrep} is denoted as $\scP^{w=(1,0,1,0)}$, with the raising and lowering operators in equation \eqref{raising_Lowering_operators} labeled as $R^{(1,0,1,0)}_n$ and $L^{(1,0,1,0)}_n$, respectively.  However, for brevity, we often omit the explicit weight $w = (1,0,1,0)$ when referring to the polynomial representation in \S\ref{sec:polyrep}, unless explicit clarification is needed.

\section{24 lines in affine cubic surface}\label{app:24lines}

We have mentioned in \S\ref{sec:brane-poly} that there are, in total, 24 lines in the affine cubic surface. We can use the symmetry action of $\SH$ in \S\ref{sec:SH} to identify the positions of these lines.
In \S\ref{sec:brane-poly}, we denote the slopes of the lines located on the plane where $x$,$y$ or $z$ is constant as $\bS_x,\bS_y,\bS_z$ respectively.  Moreover, the slopes of the lines $\bS_x,\bS_y,\bS_z$ are in one-to-one correspondence with the weights in the $SO(8)$ vector, spinor, and cospinor representations $\sfP(\mathbf{8}_V),\sfP(\mathbf{8}_S),\sfP(\mathbf{8}_C)$ respectively. 
\be\label{SlopeAppendix}
\begin{aligned}
        \bS_x &= \{-\boldsymbol{t}^w \mid w \in \sfP(\mathbf{8}_V)\}, \\
        \bS_y &= \{-\boldsymbol{t}^w \mid w \in \sfP(\mathbf{8}_S)\}, \\
        \bS_z &= \{-\boldsymbol{t}^w \mid w \in \sfP(\mathbf{8}_C)\},
\end{aligned}
\ee
where $\boldsymbol{t}^w=t_1^{w_1}t_2^{w_2}t_3^{w_3}t_4^{w_4}$. Therefore, we could label the 24 lines with the weight in the $\bfP_w$, whose slope is $-\boldsymbol{t}^w$.
As mentioned in \S\ref{sec:brane-poly}, the 24 lines serve as the support of $(A,B,A)$-brane inside the Hitchin moduli space. 

Eight lines on the plane where $x$ is constant are denoted by $\mathbf{P}_w$, where $w \in \mathsf{P}(\mathbf{8}_V)$.
\begin{equation}
    \begin{aligned}
        \bfP_{(-1,-1,0,0)}&=\left\{x=-t_1t_2-t_1^{-1}t_2^{-1},y= -t_1^{-1}t_2^{-1}z-t_1^{-1}t_3-t_1^{-1} t_3^{-1}-t_2^{-1}t_4-t_2^{-1} t_4^{-1}\right\},\\
        \bfP_{(1,1,0,0)}&=\left\{x=-t_1t_2-t_1^{-1}t_2^{-1},y= -t_1t_2 z-t_1t_3 -t_1t_3^{-1}-t_2 t_4-t_2t_4^{-1}\right\},\\
        \bfP_{(1,-1,0,0)}&=\left\{x=-t_1^{-1}t_2-t_1t_2^{-1},y= -t_1t_2^{-1} z- t_1t_3-t_1t_3^{-1}-t_2^{-1}t_4-t_2^{-1} t_4^{-1}\right\},\\
        \bfP_{(-1,1,0,0)}&=\left\{x=-t_1^{-1}t_2-t_1t_2^{-1},y= -t_1^{-1}t_2 z-t_2t_4 -t_2t_4^{-1}-t_1^{-1}t_3-t_1^{-1} t_3^{-1}\right\},\\
        \bfP_{(0,0,-1,-1)}&=\left\{x=-t_3t_4-t_3^{-1}t_4^{-1}, y= -t_3^{-1} t_4^{-1}z-t_1t_3^{-1}-t_1^{-1} t_3^{-1}-t_2t_4^{-1}-t_2^{-1} t_4^{-1}\right\},\\
        \bfP_{(0,0,1,1)}&=\left\{x=-t_3t_4-t_3^{-1}t_4^{-1},y= -t_3t_4 z-t_1 t_3-t_1^{-1}t_3-t_2 t_4-t_2^{-1}t_4\right\},\\
        \bfP_{(0,0,1,-1)}&=\left\{x=-t_3^{-1}t_4-t_3t_4^{-1},y= -t_3t_4^{-1} z-t_1 t_3-t_1^{-1}t_3- t_2^{-1}t_4^{-1}-t_2t_4^{-1}\right\},\\
        \bfP_{(0,0,-1,1)}&=\left\{x=-t_3^{-1}t_4-t_3t_4^{-1},y= -t_3^{-1}t_4 z-t_1t_3^{-1}- t_1^{-1}t_3^{-1}-t_2 t_4-t_2^{-1}t_4\right\}.
    \end{aligned}
\end{equation}

Eight lines on the plane where $y$ is constant are denoted by $\mathbf{P}_w$, where $w \in \mathsf{P}(\mathbf{8}_S)$.
\begin{equation}
    \begin{aligned}
        \bfP_{(1,0,1,0)}&=\left\{y=-t_1t_3-t_1^{-1}t_3^{-1},z= -t_1t_3 x- t_1t_4-t_1t_4^{-1}-t_2 t_3-t_2^{-1}t_3\right\},\\
        \bfP_{(-1,0,-1,0)}&=\left\{y=-t_1t_3-t_1^{-1}t_3^{-1},z= -t_1^{-1} t_3^{-1}x-t_1^{-1}t_4-t_1^{-1} t_4^{-1}-t_2t_3^{-1}-t_2^{-1}t_3^{-1}\right\},\\
        \bfP_{(-1,0,1,0)}&=\left\{y=-t_1^{-1}t_3-t_1t_3^{-1},z= -t_1^{-1}t_3 x-t_2 t_3-t_2^{-1}t_3-t_1^{-1}t_4-t_1^{-1} t_4^{-1}\right\},\\
        \bfP_{(1,0,-1,0)}&= \left\{y=-t_1^{-1}t_3-t_1t_3^{-1},z= -t_1 t_3^{-1}x- t_1t_4-t_1t_4^{-1}-t_2t_3^{-1}-t_2^{-1} t_3^{-1}\right\},\\
        \bfP_{(0,1,0,1)}&=\left\{y=-t_2t_4-t_2^{-1}t_4^{-1},z= -t_2 t_4  x-t_2 t_3 -t_2t_3^{-1}-t_1 t_4-t_1^{-1}t_4\right\},\\
        \bfP_{(0,-1,0,-1)}&=\left\{y=-t_2t_4-t_2^{-1}t_4^{-1},z= -t_2^{-1} t_4^{-1}x-t_2^{-1}t_3-t_2^{-1} t_3^{-1}-t_1t_4^{-1}-t_1^{-1} t_4^{-1}\right\},\\
        \bfP_{(0,-1,0,1)}&=\left\{y=-t_2^{-1}t_4-t_2t_4^{-1},z= -t_2^{-1}t_4 x-t_2^{-1}t_3-t_1 t_4-t_1^{-1}t_4-t_2^{-1} t_3^{-1}\right\},\\
        \bfP_{(0,1,0,-1)}&=\left\{y=-t_2^{-1}t_4-t_2t_4^{-1},z= -t_2 t_4^{-1} x-t_1t_4^{-1}-t_2 t_3-t_2t_3^{-1}- t_1^{-1}t_4^{-1}\right\}. \\
    \end{aligned}
\end{equation}

Eight lines on the plane where $z$ is constant are denoted by $\mathbf{P}_w$, where $w \in \mathsf{P}(\mathbf{8}_C)$.
\begin{equation}
    \begin{aligned}
        \bfP_{(-1,0,0,-1)}&=\left\{z=-t_1t_4-t_1^{-1}t_4^{-1}, x= -t_1^{-1} t_4^{-1}y-t_1^{-1}t_2-t_1^{-1} t_2^{-1}-t_3t_4^{-1}-t_3^{-1} t_4^{-1}\right\},\\
        \bfP_{(1,0,0,1)}&=\left\{z=-t_1t_4-t_1^{-1}t_4^{-1},x= -t_1 t_4 y-t_1t_2 -t_1t_2^{-1}-t_3 t_4-t_3^{-1}t_4\right\},\\
        \bfP_{(1,0,0,-1)}&=\left\{z=-t_1^{-1}t_4-t_1t_4^{-1},x= -t_1 t_4^{-1}y-t_1t_2 -t_1t_2^{-1}-t_3t_4^{-1}-t_3^{-1} t_4^{-1}\right\},\\
        \bfP_{(-1,0,0,1)}&=\left\{z=-t_1^{-1}t_4-t_1t_4^{-1},x= -t_1^{-1}t_4 y-t_1^{-1}t_2-t_1^{-1} t_2^{-1}-t_3 t_4-t_3^{-1}t_4\right\},\\
        \bfP_{(0,-1,-1,0)}&=\left\{z=-t_2t_3-t_2^{-1}t_3^{-1},x= -t_2^{-1} t_3^{-1}y-t_1t_2^{-1}-t_1^{-1} t_2^{-1}-t_3^{-1}t_4-t_3^{-1} t_4^{-1}\right\},\\
        \bfP_{(0,1,1,0)}&=\left\{z=-t_2t_3-t_2^{-1}t_3^{-1},x=- t_2t_3 y-t_1 t_2-t_1^{-1}t_2-t_3 t_4-t_3t_4^{-1}\right\}\\
        \bfP_{(0,1,-1,0)}&=\left\{z=-t_2^{-1}t_3-t_2t_3^{-1},x= -t_2t_3^{-1} y-t_1 t_2-t_1^{-1}t_2-t_3^{-1}t_4-t_3^{-1} t_4^{-1}\right\},\\
        \bfP_{(0,-1,1,0)}&=\left\{z=-t_2^{-1}t_3-t_2t_3^{-1},x= -t_2^{-1}t_3 y-t_1t_2^{-1}-t_1^{-1}t_2 ^{-1}-t_3 t_4-t_3t_4^{-1}\right\}
    \end{aligned}
\end{equation}

As a remark, the eight polynomial representations $\mathcal P^{\epsilon_0,\epsilon_1}, \overline{\mathcal P}^{\delta_0,\delta_1}$ $(\epsilon_i=\pm 1, \delta_i=\pm 1)$  of $\HH$ discussed in \cite{oblomkov2009finite} correspond to the eight lines $\mathbf{P}_w$ with $w \in \mathsf{P}(\mathbf{8}_V)$. Specifically,
\begin{equation}
    \mathcal P^{\epsilon_0,\epsilon_1} \leftrightarrow\mathbf{P}_{(0,0,\epsilon_0,\epsilon_1)}~,\quad \overline{\mathcal P}^{\delta_0,\delta_1} \leftrightarrow \mathbf{P}_{(\delta_0,\delta_1,0,0)}~.
\end{equation}

\begin{figure}[ht]
    \centering
    \includegraphics{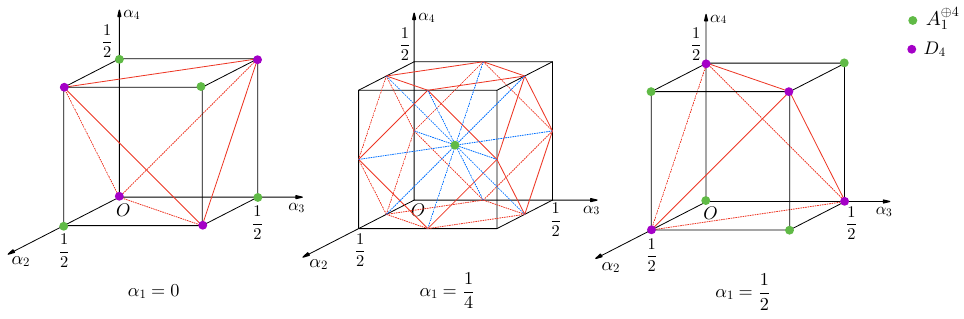}
    \caption{3-cube cross sections of $\mathsf{Cube}$ at $\talpha_1=0,\frac{1}{4},\frac{1}{2}$. The $A_1^{\oplus4}$ and $D_4$ singularities are marked out, which show up only for these special values of $\talpha_1$. }
    \label{fig:chambers_special_alpha1}
\end{figure}

\section{Chamber structures}\label{app:chamber-structure}

In this Appendix, we provide a detailed analysis of the chamber structure for the volumes of the irreducible components in the global nilpotent cone, as discussed in \S\ref{sec:wallcrossing}. Thanks to the periodicity of the parameters $\talpha_j \to \talpha_j + 1$ and the Weyl group symmetry $\talpha_j \to -\talpha_j$ of the cubic equation, we can restrict the parameter space to the 4-dimensional unit cube:
\be
\mathsf{Cube} = \left\{ (\talpha_1, \talpha_2, \talpha_3, \talpha_4) \mid \talpha_j \in \left[0, \frac{1}{2} \right] \right\} \subset \bR^4.
\ee

The walls described in \eqref{wall} divide $\mathsf{Cube}$ into 24 distinct chambers. Singularities of all types, as classified in Table \ref{tab:surface_sing_classification}, appear on these walls and at their intersection points. The configurations of $\bfV$ and $\bfD_j$ corresponding to different singularities are illustrated in Figure \ref{fig:du_Val}.

Within each chamber, the volumes of $\bfV$ and $\bfD_j$ are linear functions of the parameters $\talpha_j$. As we move from one chamber to another, crossing a wall, the volume functions exhibit a wall-crossing phenomenon.

We can slice the cube $\mathsf{Cube}$ along a fixed value of $\talpha_1$, resulting in a 3-dimensional sub-cube, as shown in Figure \ref{fig:chambers}:
\begin{enumerate}[nosep]
    \item Each of the eight vertices of this 3-cube, which corresponds to an edge of $\mathsf{Cube}$, lies entirely in a single chamber (including its boundary). These vertices correspond either to $A_1^{\oplus3}$ singularities (shown in blue) or $A_3$ singularities (shown in red).
    \item Each line segment in Figure \ref{fig:chambers} represents either an $A_1^{\oplus2}$ singularity (in blue or black) or an $A_2$ singularity (in red). 
    \item Each wall in Figure \ref{fig:chambers} corresponds to an $A_1$ singularity, where either $\bfV$ or some $\bfD_j$ shrinks to a point. The positions of these walls are summarized in Table \ref{tab:A1_wall_position} and are explicitly illustrated in Figure \ref{fig:A1_wall}.
\end{enumerate}
At special values of $\talpha_1$, $A_1^{\oplus4}$ or $D_4$ singularities may appear, as shown in Figure \ref{fig:chambers_special_alpha1}.

\begin{table}
    \centering
    \begin{tabular}{c|c}
         $A_1$ singularities&  Positions of $A_1$ walls\\
         \hline
         $\textrm{vol}_I(\bfD_1)=0$& 
    $BCHLIF,~BCQ,~CHR,~HLV,~LIS,~IFT,~FBP$\\
    \hline
 $\textrm{vol}_I(\bfD_2)=0$&$CDEIJG,~CDR,~DEO,~EIS,~IJT,~JGU,~GCQ$\\
 \hline
 $\textrm{vol}_I(\bfD_3)=0$&$ABGKLE,~ABP,~BGQ,~GKU,~KLV,~LES,~EAO$\\
 \hline
 $\textrm{vol}_I(\bfD_4)=0$&$ADHKJF,~ADO,~DHR,~HKV,~KJU,~JFT,~FAP$\\
 \hline
 $\textrm{vol}_I(\bfV)=0$&$ABCD,AEIF,BFJG,CGKH,DELH,IJKL,ADE,BCG,IJF,HKL$\\
 \end{tabular}
    \caption{Positions of $A_1$ walls in Figure \ref{fig:chambers} when either $\bfV$ or one of $\bfD_j$ shrinks to a point.}
    \label{tab:A1_wall_position}
\end{table}

\begin{figure}
    \centering
    \includegraphics[width=0.8\linewidth]{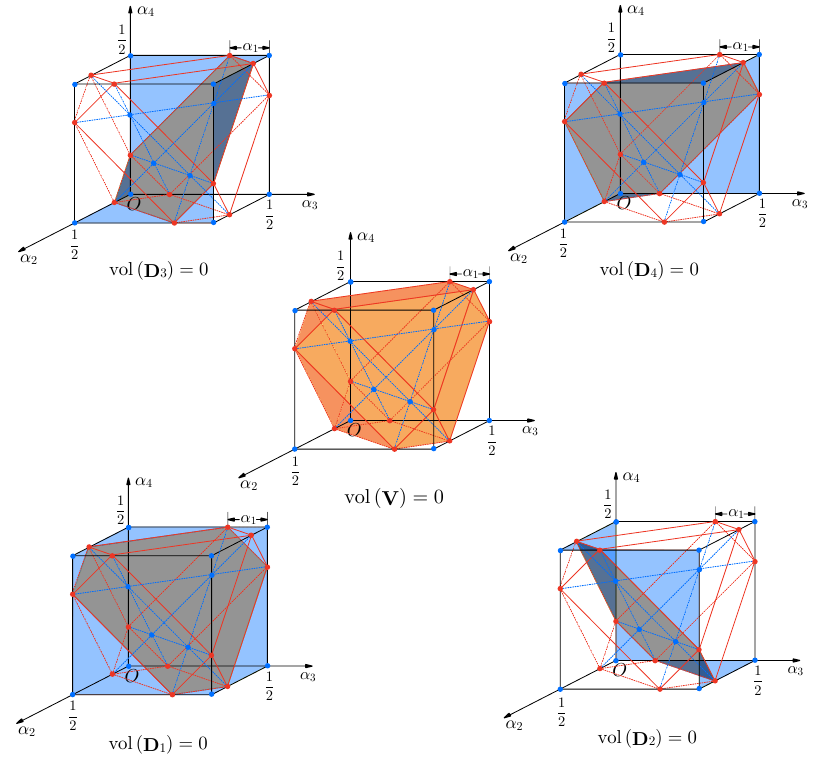}
    \caption{$A_1$ singularities can show up in 5 cases where either $\bfV$ or one $\bfD_j$ shrinks to a point. They correspond to the planes inside the 3-cube in Figure \ref{fig:chambers} as summarized in Table \ref{tab:A1_wall_position}. }
    \label{fig:A1_wall}
\end{figure}

At the origin, where $\talpha_j = 0$ for all $j = 1, 2, 3, 4$, the cubic surface exhibits a $D_4$ singularity. At this point, $\bfV$ and three of the $\bfD_j$'s vanish, as shown in the bottom-right case of Figure \ref{fig:du_Val}. In $\mathsf{Cube}$, the origin lies at the boundary of 12 chambers simultaneously, four of which contain the $\talpha_j$-axes. The chamber that contains the $\talpha_1$-axis is the tetrahedral chamber $OADE$ in the 3-cube shown in Figure \ref{fig:chambers}. 

We label the four $\bfD_j$'s as $j = 1, 2, 3, 4$, and in this chamber, the volumes are given by:
\be
\textrm{vol}_I(\bfD_j) = \left( 1 - 2\talpha_1, 2\talpha_2, 2\talpha_3, 2\talpha_4 \right), \quad \textrm{vol}_I(\bfV) = \talpha_1 - \talpha_2 - \talpha_3 - \talpha_4.
\ee
In this convention, the volume functions for the 8 chambers $OADE$, $PABF$, $RCDH$, $SEIL$, $QBCG$, $TFIJ$, $VHKL$, and $UGJK$ in the 3-cube are listed in Table \ref{tab:chamber_8_vertices}.

\begin{table}
    \centering
    \begin{tabular}{c|c|c|c}
     Chambers&vertices contained& $\textrm{vol}_I(\bfD_j)$& $\textrm{vol}_I(\bfV)$ \\
    \hline
     $OADE$&$(0,0,0)$ & $(1-2\talpha_1,2\talpha_2,2\talpha_3,2\talpha_4)$& $\talpha_1-\talpha_2-\talpha_3-\talpha_4$\\
    \hline
     $PABF$&$(1,0,0)$ & $(1-2\talpha_2,2\talpha_1,2\talpha_4,2\talpha_3)$& $-\talpha_1+\talpha_2-\talpha_3-\talpha_4$\\
    \hline
     $RCDH$&$(0,1,0)$ & $(1-2\talpha_3,2\talpha_4,2\talpha_1,2\talpha_2)$& $-\talpha_1-\talpha_2+\talpha_3-\talpha_4$\\
    \hline
     $SEIL$&$(0,0,1)$ & $(1-2\talpha_4,2\talpha_3,2\talpha_2,2\talpha_1)$& $-\talpha_1-\talpha_2-\talpha_3+\talpha_4$\\
    \hline
     $QBCG$&$(1,1,0)$ & $(2\talpha_4,1-2\talpha_3,1-2\talpha_2,2\talpha_1)$& $-\frac{1}{2} -\talpha_1+\talpha_2+\talpha_3-\talpha_4$\\
    \hline
     $TFIJ$&$(1,0,1)$ & $(2\talpha_3,1-2\talpha_4,2\talpha_1,1-2\talpha_2)$& $-\frac{1}{2}- \talpha_1+\talpha_2-\talpha_3+\talpha_4$\\
    \hline
     $VHKL$&$(0,1,1)$ & $(2\talpha_2,2\talpha_1,1-2\talpha_4,1-2\talpha_3)$& $-\frac{1}{2}- \talpha_1-\talpha_2+\talpha_3+\talpha_4$\\
    \hline
     $UGJK$&$(1,1,1)$ & $(2\talpha_1,1-2\talpha_2,1-2\talpha_3,1-2\talpha_4)$& $-1 -\talpha_1+\talpha_2+\talpha_3+\talpha_4$\end{tabular}
    \caption{Volume functions at 8 chambers containing the 3-cube vertices. The first column lists the chambers in Figure \ref{fig:chambers}. The second column lists the 3-cube vertices contained in each chamber.}
    \label{tab:chamber_8_vertices}
\end{table}

There are 16 additional chambers, which we call the internal chambers. The volume functions for these chambers are uniformly expressed as

\begin{align}\label{internal_chambers}
    \textrm{vol}_I(\bfD_j) &= \Big( \left| -1 + \talpha_1 + \talpha_2 + \talpha_3 + \talpha_4 \right|, \left| \talpha_1 + \talpha_2 - \talpha_3 - \talpha_4 \right|, \cr
    & \quad \left| \talpha_1 - \talpha_2 + \talpha_3 - \talpha_4 \right|, \left| \talpha_1 - \talpha_2 - \talpha_3 + \talpha_4 \right| \Big)~, \cr
    \textrm{vol}_I(\bfV) &= \frac{1}{2} \left( 1 - \sum_{j=1}^4 \textrm{vol}_I(\bfD_j) \right).
\end{align}

In these 16 chambers, the expressions in the absolute values in \eqref{internal_chambers} take different combinations of signs. Since there are four parameters $\talpha_j$, there are $2^4 = 16$ possible sign combinations, each corresponding to one of the 16 chambers. The specific sign combinations and the corresponding volumes of $\bfV$ are explicitly listed in Table \ref{tab:internal_chambers}.

\begin{table}\footnotesize
    \centering
    \begin{tabular}{c|c|c|c|c|c}
     Chambers&signs & $\textrm {vol}_I(\bfV)$ &  Chambers&signs & $\textrm {vol}_I(\bfV)$ \\[.3em]
    \hline
     $WXYZ-$&$(----)$& $2\talpha_1$&  $WXYZ+$&$(++++)$& $1-2\talpha_1$\\[.3em]
    \hline
    $WDEXHL$ &$(--++)$& $2\talpha_2$& $ZBGYFJ$ &$(++--)$& $1-2\talpha_2$\\[.3em]
    \hline
    $WAEYFI$ &$(-+-+)$& $2\talpha_3$& $ZCGXHK$ &$(+-+-)$& $1-2\talpha_3$\\[.3em]
    \hline
    $WADZBC$ &$(-++-)$& $2\talpha_4$& $XKLYJI$ &$(+--+)$& $1-2\talpha_4$\\[.3em]
    \hline
     $XYZKJG$&$(+---)$ & $1+ \talpha_1-\talpha_2-\talpha_3-\talpha_4$&  $WADE$&$(-+++)$ & $-\talpha_1+\talpha_2+\talpha_3+\talpha_4$\\[.3em]
    \hline
     $WYZAFB$&$(-+--)$& $ \talpha_1-\talpha_2+\talpha_3+\talpha_4$&  $XHKL$&$(+-++)$& $1-\talpha_1+\talpha_2-\talpha_3-\talpha_4$\\[.3em]
    \hline
     $WXZDHC$&$(--+-)$& $ \talpha_1+\talpha_2-\talpha_3+\talpha_4$&  $YFIJ$&$(++-+)$& $1-\talpha_1-\talpha_2+\talpha_3-\talpha_4$\\[.3em]
    \hline
     $WXYELI$&$(---+)$& $\talpha_1+\talpha_2+\talpha_3-\talpha_4$& $ZBCG$ &$(+++-)$& $1- \talpha_1-\talpha_2-\talpha_3+\talpha_4$\end{tabular}
    \caption{Volume functions at 16 internal chambers. The first and fourth columns list the chambers in Figure \ref{fig:chambers}. In particular $WXYZ-$ and $WXYZ+$ are the center chambers when $\talpha_1<\frac{1}{4}$ and $\talpha_1>\frac{1}{4}$ respectively. The second and fifth columns list the signs of the expressions in the absolute values in \eqref{internal_chambers}.}
    \label{tab:internal_chambers}
\end{table}

There are two conditions, given by \eqref{A1_t_cond1} and \eqref{A1_t_cond2}, under which we reach the \texorpdfstring{$A_1$}{A1} limit of the DAHA. The loci of these points in $\mathsf{Cube}$ are shown in Figure \ref{fig:A1_limit_cube}.

\begin{figure}[ht]
    \centering
    \includegraphics{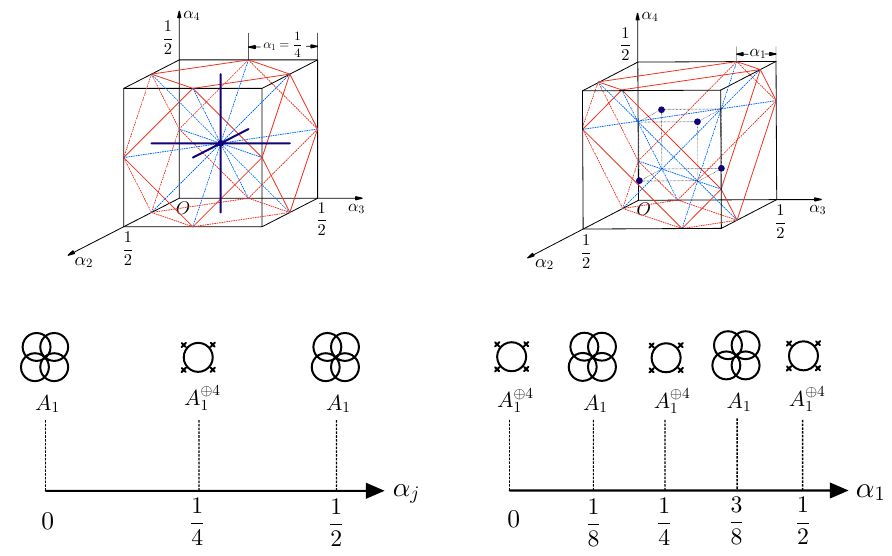}
    \caption{There are four lines satisfying the first $A_1$ limit condition \eqref{A1_t_cond1} in $\mathsf{Cube}$, along whom three of $\talpha_j$ equal $\frac{1}{2}$ and the remaining $\talpha$ is the direction this line extends along. Three of these lines are visible in the 3-cube when $\talpha_1=\frac{1}{4}$, as drawn in dark blue in the top left figure. The remaining line is the trajectory of the center point $(\talpha_1,\frac{1}{4},\frac{1}{4},\frac{1}{4})$ propagating along $\talpha_1$ direction in $\mathsf{Cube}$. There are again four lines satisfying the second $A_1$ limit condition \eqref{A1_t_cond2}. Along three of these lines, three of $\talpha_j$ equals $\talpha_1$ and the remaining one $\frac{1}{2}-\talpha_1$. Along the remaining line, all $\talpha_j$ equal $\frac{1}{2}-\talpha_1$ except $\talpha_1$ itself. They are drawn on the top right, and there are four distinct points whose trajectories along $\talpha_1$ are these lines.  The du Val singularities along each of the 8 lines are drawn out in the bottom figures.}
    \label{fig:A1_limit_cube}
\end{figure}

\section{Winding cycles and Argyres-Douglas surface}\label{app:monodromy}

In this Appendix, we provide a detailed analysis of the monodromies associated with singular fibers and demonstrate the existence of 2-cycles that wind around these fibers. Furthermore, we pinpoint the locations of Argyres-Douglas (AD) points, which arise from the collisions of singular fibers containing mutually non-local degrees of freedom. As a result, we reveal that the AD surfaces introduce a non-trivial monodromy to the $\gamma$ space. This insight explains why the winding cycles are homologous to the straight ones.

\subsubsection*{Winding cycles}
As discussed in \S\ref{sec:classification}, in addition to the 2-cycles depicted in Figure \ref{fig:6I1 cycles} (referred to as \emph{straight cycles} in the following), there exist additional cycles that wind around singular fibers. 

To demonstrate their existence, we begin by examining how the monodromies evolve when two $I_1$ fibers rotate around each other. Consider two singular fibers labeled 1 and 2, with monodromies $M_1$ and $M_2$, as shown in Figure \ref{fig:rotate 2 fibers}. After successive rotations of $\pi$,
\begin{equation}\label{monodromy_rotate}
\begin{aligned}
    M_1&\xrightarrow{\pi} M_2 M_1 M_2^{-1}\xrightarrow{\pi} (M_1M_2)M_1(M_1M_2)^{-1}\xrightarrow{\pi} (M_2M_1M_2)M_1(M_2M_1M_2)^{-1}\rightarrow\cdots,\\
    M_2&\xrightarrow{\pi} M_2 M_2 M_2^{-1} \xrightarrow{\pi} (M_1M_2)M_2(M_1M_2)^{-1}\xrightarrow{\pi} (M_2M_1M_2)M_2(M_2M_1M_2)^{-1}\rightarrow\cdots.
\end{aligned} 
\end{equation}

\begin{figure}[ht]
    \centering
    \includegraphics[width=1\linewidth]{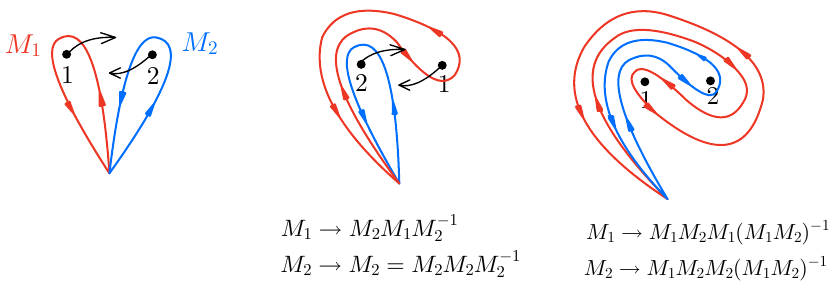}
    \caption{Rotate two singular fibers with monodromies $M_1$ and $M_2$ respectively. Every $\pi$ rotation conjugates both monodromies and $M_i$. When $M_1=M_2$, monodromies are unchanged and when $M_1=M_q,M_2=M_d$, monodromies behave as \eqref{monodromy_periodicity}. }
    \label{fig:rotate 2 fibers}
\end{figure}

When $M_1=M_2$ equals $M_q$ or $M_d$ defined above \eqref{modular_matrices}, the monodromies are unaltered under any rotation. However, this invariance does not hold when $M_1 = M_q$ and $M_2 = M_d$. A straightforward calculation shows that the resulting monodromies evolve as follows:  
\begin{equation}\label{monodromy_periodicity}
   \begin{aligned}
       M_q &\xrightarrow{3\pi} M_d \xrightarrow{3\pi} M_q, \\
       M_d &\xrightarrow{3\pi} M_q \xrightarrow{3\pi} M_d.
   \end{aligned}
\end{equation}
This demonstrates that after a $3\pi$ rotation, the monodromy matrices of the two fibers are exchanged. Following another $3\pi$ rotation, their monodromy matrices return to their original values.

Let us now examine the implications of this property for the second homology. Recall that a suspended cycle is constructed as the result of a deformation process between two singular fibers of the same type, i.e., fibers with identical monodromy matrices, in a fixed frame. 

Consider two singular fibers located at $p_i$ and $p_j$, and assume their monodromy matrices are identical. In this case, there exists a straight suspended cycle, denoted $\bfW$, connecting these two fibers. Now, suppose there is an additional singular fiber at $p_k$ nearby. If the monodromy matrix around $p_k$ equals those around $p_i$ and $p_j$, the deformation process between $p_i$ and $p_j$ can wind around $p_k$ an arbitrary number $n\in\bZ$ of times, since rotations involving $p_k$ do not alter the monodromies. Consequently, we can construct a cycle suspended between $p_i$ and $p_j$ that winds around $p_k$ for $n$ times. Extending this idea, we can construct cycles suspended between any pair of the three singular fibers while winding around the third. However, these winding cycles can always be decomposed into straight cycles suspended between the fibers.

On the other hand, if the monodromy matrix around $p_k$ is distinct from those around $p_i$ and $p_j$, the $6\pi$ periodicity described in \eqref{monodromy_periodicity} implies that the deformation process between $p_i$ and $p_j$ can wind around $p_k$ an additional $3n$ times, where $n \in \bZ$. This results in the construction of a new cycle, $\bfW^\prime$, suspended between $p_i$ and $p_j$ and winding $p_k$ $3n$ times more than $\bfW$, as illustrated by the red cycle in Figure \ref{fig:winding cycles}. Similarly, by the $3\pi$ exchange in \eqref{monodromy_periodicity}, we can construct another cycle, $\bfW^{\prime\prime}$, suspended between $p_j$ and $p_k$, shown as the blue cycle in Figure \ref{fig:winding cycles}. 

Unlike the previous case, these winding cycles cannot be directly decomposed into straight cycles. However, as we will demonstrate later, these cycles are homologous to the straight cycles by going around the AD surfaces in the parameter space.

\begin{figure}[ht]
    \centering
    \includegraphics[width=0.3\linewidth]{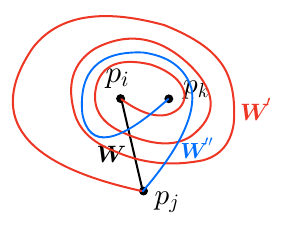}
    \caption{For every cycle suspended between $p_i$ and $p_j$, if there is a nearby singular fiber $p_k$ of a distinct type, it is possible to construct cycles suspended between $p_i$ and $p_j$ that wind around $p_k$ $3n$ times for any $n \in \bZ$. For instance, one such cycle is $\bfW^\prime$, where $n = -1$. Additionally, there exists another cycle, $\bfW^{\prime\prime}$, suspended between $p_j$ and $p_k$, which winds around $p_i$ . }
    \label{fig:winding cycles}
\end{figure}

\subsubsection*{Argyres-Douglas types of singular fibers}

When two $I_1$ fibers of distinct types, i.e., with distinct monodromies, collide into a single fiber, no new fiber cycle is generated, as there is no straight suspended cycle between them. We refer to this type of collision as the AD collision. This terminology reflects the fact that distinct types of $I_1$ fibers correspond to mutually non-local massless degrees of freedom, and their collision leads to the SCFT of AD type discovered in \cite{Argyres:1995jj,Argyres:1995xn}. 

This phenomenon can be verified by analyzing the Weierstrass form of the Seiberg-Witten curve at the collision points, which takes the form
\begin{equation}\label{AD_SW}
    y^2 = (x - x_0(\tau))^3,
\end{equation}
for some $\tau$-dependent $x_0$.

The Kodaira type of the resulting AD singular fiber can be determined by computing the product of the monodromy matrices associated with the colliding fibers. Choosing the frame as described in \S\ref{sec:classification}, let the monodromy matrices around the two $I_1$ fibers be $M_q$ and $M_d$, respectively. The monodromy of the resulting singular fiber is given by\footnote{The singular type of the resulting fiber is independent of the order of the matrix product.}
\begin{equation}
      M_d\cdot M_q=\begin{pmatrix}
-1 & 3 \\
-1 & 2 
\end{pmatrix}  \sim 
\begin{pmatrix}
1 & 1 \\
-1 & 0 
\end{pmatrix} ,
\end{equation}
which is the monodromy matrix associated with fibers of Kodaira type $II$. Geometrically, the $M_q$ fibers are tori with $(0,1)$-cycles shrinking to a point, while $M_d$ fibers are tori with $(1,-2)$-cycles shrinking to a point. The collision of an $M_q$ fiber and an $M_d$ fiber is shown at the bottom of Figure \ref{fig:collide I1}.

Physically, the resulting AD singular fiber corresponds to the $(A_1, A_2)$ AD theory, as both a quark and a dyon become massless simultaneously. A similar analysis can be carried out for other collisions, such as when an $I_2$ or $I_3$ fiber collides with a dyon $I_1$ fiber. These collisions result in singular fibers of Kodaira types $III$ and $IV$, respectively, corresponding to the $(A_1, A_3)$ and $(A_1, D_4)$ AD theories. The possibilities of AD-type collisions are summarized as the red arrows in Figure \ref{fig:Kodaira}. As noted in \cite{Argyres:1995jj}, there are no higher-level AD theories in the Coulomb branch moduli space of SU(2) $N_f = 4$ SQCD.

\begin{figure}[ht]
    \centering
    \includegraphics[width=0.65\linewidth]{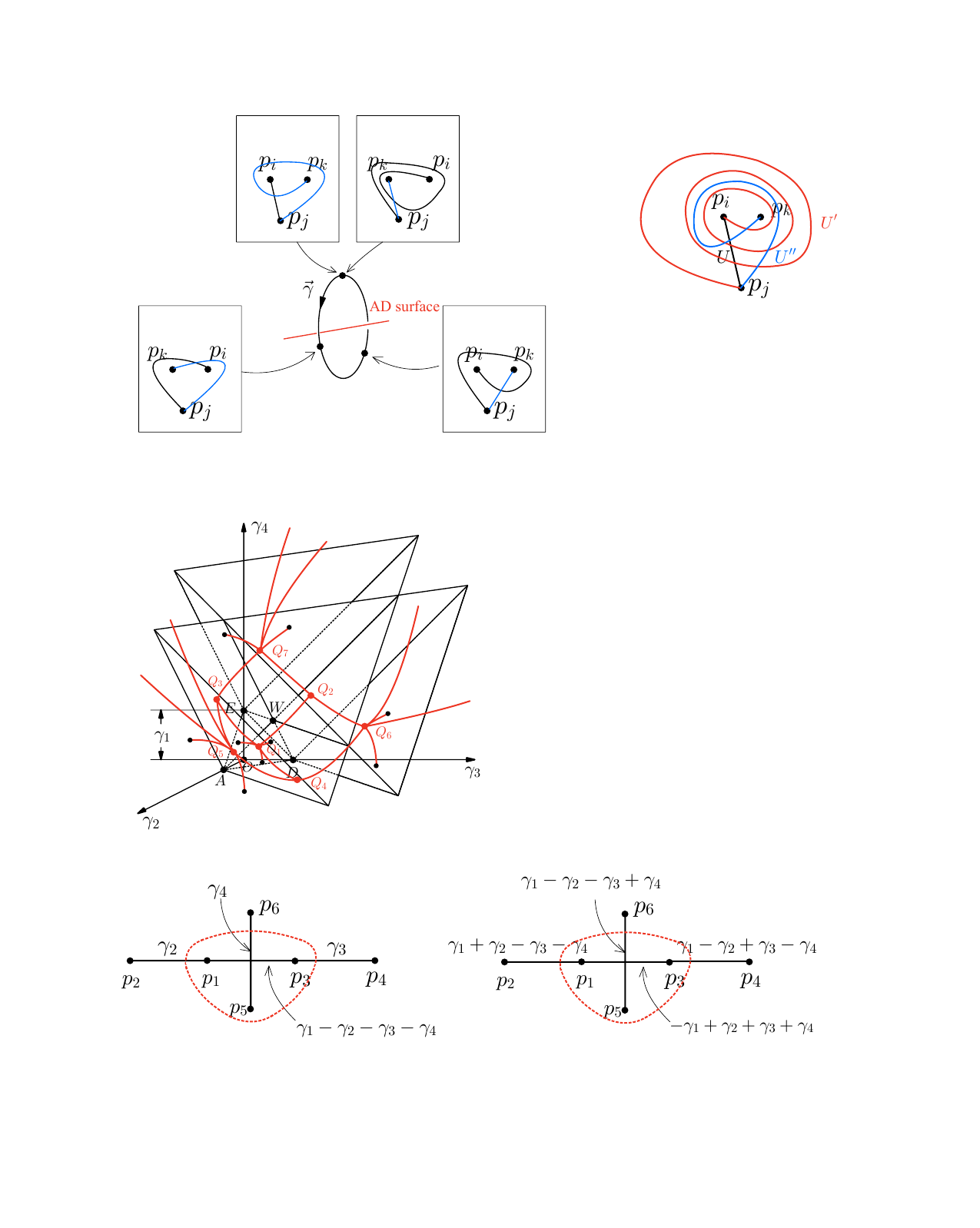}
    \caption{The structure of the $\gamma$-space is visualized by taking a section along a fixed value of $\tgamma_1$. By the black lines, we draw the walls in $\gamma$-space where normal collisions occur, such as $\Delta ADE$ and $\Delta WDE$, as determined by the conditions in Table \ref{tab:fiber_sing_classification2}. The red curves correspond to AD surfaces, which are codimension-2 surfaces defined by solving equation \eqref{ADeqn}. Points along these red curves correspond to $(A_1, A_2)$ AD theories, while the vertex points labeled $Q_1$ to $Q_7$ represent $(A_1, A_3)$ AD theories. The numerical visualization of these AD surfaces is shown in Figure \ref{fig:mmaAD}.}
    \label{fig:gamma_space}
\end{figure}

\begin{figure}[ht]
    \centering
    \includegraphics[width=0.5\linewidth]{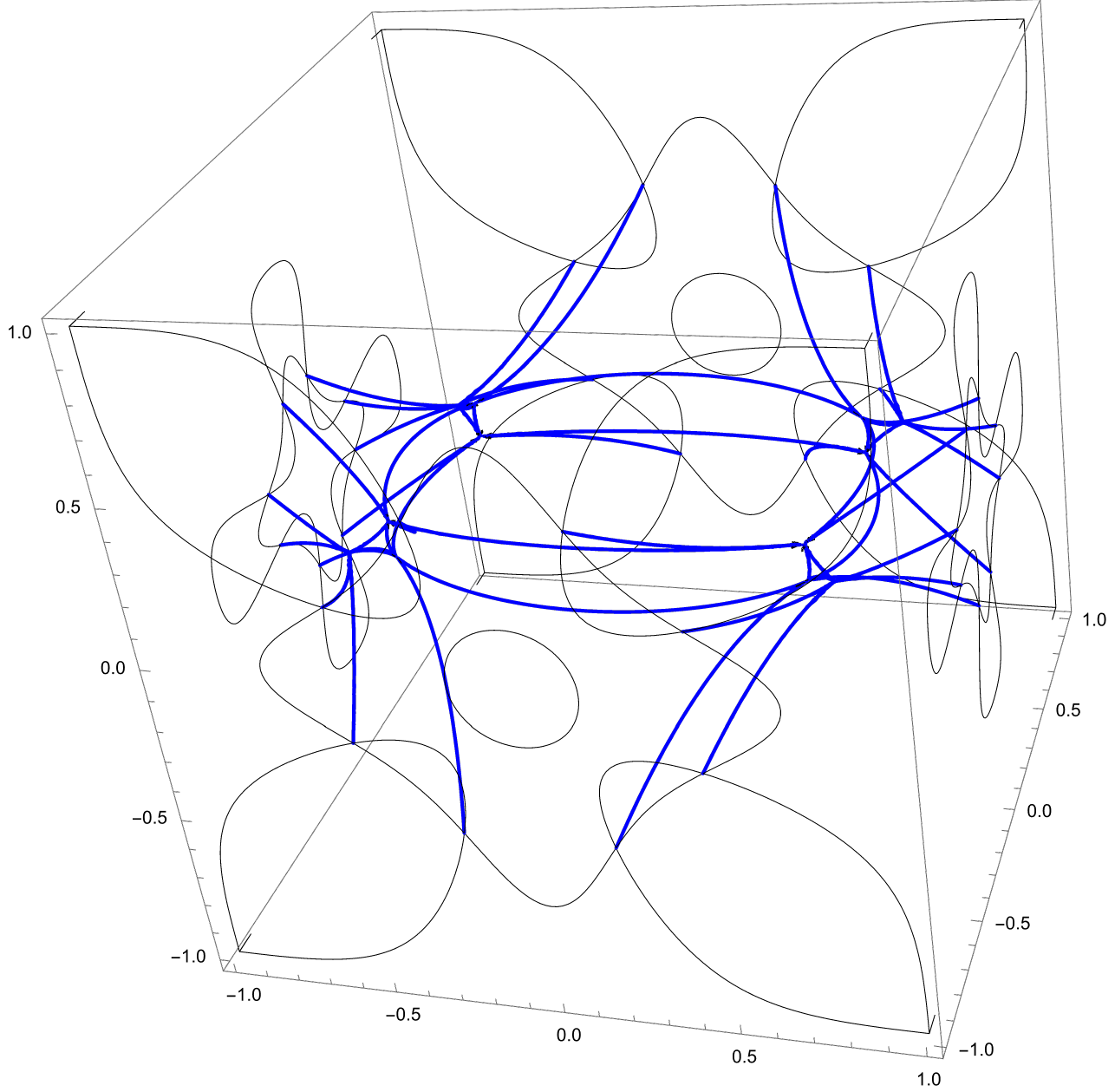}
    \caption{The Argyres-Douglas surfaces, as plotted by Mathematica, correspond to the case where $q = 0.001i$ and $\tgamma_1 = \frac{1}{2}$.}
    \label{fig:mmaAD}
\end{figure}

\begin{figure}[ht]
    \centering
    \includegraphics[width=0.75\linewidth]{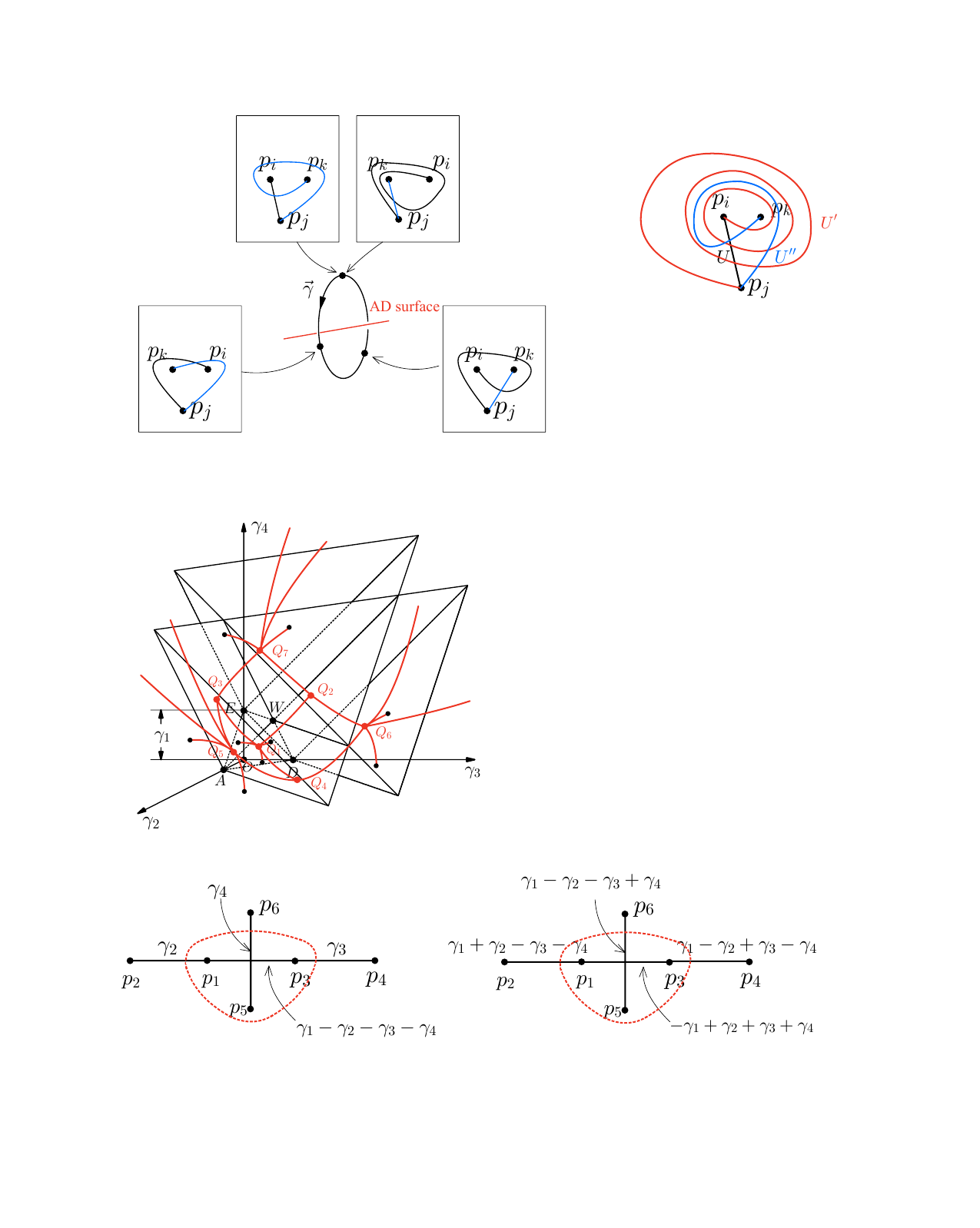}
    \caption{In the $\tgamma$ space, monodromy around AD-surface colliding $p_i$ and $p_k$ will produce a $3\pi$ rotation between $p_i$ and $p_k$. Therefore, the straight cycle between $p_i$ and $p_j$ will wind around $p_k$, while the winding cycle between $p_j$ and $p_k$ will be straightened.}
    \label{fig:monodromy_around_AD_surface}
\end{figure}

The conditions for the appearance of AD-type singular fibers are more complicated than those for the usual $I_k$-type or $I_0^*$-type fibers. In addition to tuning the mass parameters, the complex coupling $\tau$ must also be carefully adjusted. These conditions can be determined by solving \eqref{AD_SW} directly. The generic Seiberg-Witten curve is given by \eqref{SW-Nf4-generic}. By equating \eqref{AD_SW} with this generic form and eliminating $x_0$, one can derive two complex equations involving the $u$ parameter.
\begin{multline}\nonumber
    9B \tilde u^2+12A(\tgamma_1^2+\tgamma_2^2+\tgamma_3^2+\tgamma_4^2)\tilde u+16B^2(\tgamma_1^4+\tgamma_2^4+\tgamma_3^4+\tgamma_4^4)\\
 +16C(-\vartheta_3^4,\vartheta_4^4)(\tgamma_1^2\tgamma_2^2+\tgamma_3^2\tgamma_4^2)+16C(\vartheta_2^4,-\vartheta_3^4)(\tgamma_1^2\tgamma_3^2+\tgamma_2^2\tgamma_4^2)+16C(\vartheta_4^4,\vartheta_2^4)(\tgamma_1^2\tgamma_4^2+\tgamma_2^2\tgamma_3^2)=0,
    \end{multline}
    \begin{multline}\label{ADeqn}
    27A\tilde u^3-144\Big(AB(\tgamma_1^4+\tgamma_2^4+\tgamma_3^4+\tgamma_4^4)\\
   \qquad +D(-\vartheta_3^4,\vartheta_4^4)(\tgamma_1^2\tgamma_2^2+\tgamma_3^2\tgamma_4^2)+D(\vartheta_2^4,-\vartheta_3^4)(\tgamma_1^2\tgamma_3^2+\tgamma_2^2\tgamma_4^2)+D(\vartheta_4^4,\vartheta_2^4)(\tgamma_1^2\tgamma_4^2+\tgamma_2^2\tgamma_3^2)\Big)\tilde u\\
    +64A^2 \left(\tgamma _1^6+\tgamma _2^6+\tgamma _3^6+\tgamma _4^6\right)-512 B^3 \left(\tgamma _1^2+\tgamma _2^2+\tgamma _3^2+\tgamma _4^2\right)^3\\-192  F\left(\tgamma _1^2\tgamma _2^2 \left(\tgamma _3^2+\tgamma _4^2\right) +\tgamma _3^2 \tgamma _4^2\left(\tgamma _1^2+\tgamma _2^2\right) \right)\\
    +192  E(-\vartheta_3^4,\vartheta_4^4)\left(\tgamma _1^2  \tgamma _2^2\left(\tgamma _1^2+\tgamma _2^2\right)+\tgamma _3^2 \tgamma _4^2 \left(\tgamma _3^2+\tgamma _4^2\right)\right)\\+192 E(\vartheta_2^4,-\vartheta_3^4)\left(\tgamma _1^2  \tgamma _3^2\left(\tgamma _1^2+\tgamma _3^2\right)+\tgamma _2^2 \tgamma _4^2 \left(\tgamma _2^2+\tgamma _4^2\right)\right)  \\
    +192  E(\vartheta_4^4,\vartheta_2^4)\left(\tgamma _1^2 \tgamma _4^2 \left(\tgamma _1^2+\tgamma _4^2\right)+\tgamma _2^2 \tgamma _3^2 \left(\tgamma _2^2+\tgamma _3^2\right)\right) =0,
\end{multline}
with
\begin{equation}
\begin{aligned}
\tilde u&=u-\frac{2}{3}\vartheta_2^4 (\tgamma_1^2+\tgamma_2^2+\tgamma_3^2+\tgamma_4^2),\\
    A&=\big[(-\vartheta_3^4)-\vartheta_4^4\big]\big[\vartheta_2^4-(-\vartheta_3^4)\big]\big[\vartheta_2^4-\vartheta_4^4\big],\\
    B&=\vartheta_2^8-\vartheta_2^4\vartheta_3^4+\vartheta_3^8,\\
    C(a,b)&=2 a^4+31a^3b+60 a^2b^2+31ab^3+2 b^4,\\
    D(a,b)&=(a-b)(4 a^4-13 a^3b-36 a^2b^2-13 ab^3+4 b^4),\\
 E(a,b)&=\left(a-b\right)^2(4 a^4-7 a^3b-21 a^2b^2-7 ab^3+4 b^4),\\
 F&=-8 \vartheta_2^{24}+24 \vartheta _3^4 \vartheta_2^{20}+249 \vartheta _3^8 \vartheta_2^{16}-538 \vartheta _3^{12} \vartheta_2^{12}+249 \vartheta _3^{16} \vartheta_2^8+24 \vartheta _3^{20} \vartheta_2^4-8 \vartheta _3^{24},
\end{aligned}
\end{equation}
where $a$ and $b$ are arbitrary expressions, and $\vartheta_i = \vartheta_i(\tau)$ are Jacobi theta functions given by \eqref{jacob_theta}. By eliminating the $u$ parameter, one can obtain a single complex equation, which corresponds to two real equations that constrain $\tgamma_j$. These equations involve theta functions of $\tau$ as their coefficients. Therefore, the loci where AD-type fibers appear form codimension-2 surfaces in the $\tgamma$ space, which vary with $\tau$. We refer to these surfaces as the AD surfaces, as illustrated in Figure \ref{fig:gamma_space} and numerically plotted in Figure \ref{fig:mmaAD}.

In addition to the codimension-2 AD surfaces, there are codimension-1 walls determined by normal collisions, similar to those in Figure \ref{fig:chambers}. The conditions for these normal collisions are precisely the conditions for the appearance of $I_k$-type or $I_0^*$-type singular fibers, as listed in Table \ref{tab:fiber_sing_classification2}. These conditions define the codimension-1 walls in the $\tgamma$ space, as illustrated in Figure \ref{fig:gamma_space}. 

However, unlike the walls in Figure \ref{fig:chambers}, there is no periodicity for the $\tgamma$ parameters. As a result, the wall-crossings form the Weyl group $W(D_4)$ of the usual $D_4$ algebra, rather than the affine one.

Similar to the $I_0^*$ case, the Picard-Lefschetz transformation occurs when passing through a normal collision wall. For example, starting with a length assignment on the left side of Figure \ref{fig:6I1 cycles}, if we tune $\tgamma_1-\tgamma_2-\tgamma_3-\tgamma_4$ from positive to negative, all the cycles will change according to the PL transformation rule described in \eqref{mPL}. As a result, the length assignment will transform to the configuration shown on the right side of Figure \ref{fig:6I1 cycles}.

In each chamber, there are three components of AD surfaces where $(A_1, A_2)$ AD theories appear. These three surfaces intersect simultaneously at one of the walls of the chamber, such as the $Q_1$ point on the $ADE$ wall, where the $(A_1, A_3)$ AD theory appears. By tuning the $\tau$ parameter appropriately, the intersection points $Q_i$ can situate at the intersection of two walls, leading to the appearance of the $(A_1, D_4)$ AD theory.

\subsubsection*{Monodromies around Argyres-Douglas surfaces}

Finally, we observe an interesting phenomenon: the AD surfaces endow the $\tgamma$ space with non-trivial monodromies. When the parameter point lies on an AD surface, two $I_1$ fibers of different types collide. If the parameter point is slightly displaced from the AD surface and goes around a loop linking this AD surface, it is observed that the two $I_1$ fibers will rotate $3\pi$ times around each other, as shown in Figure \ref{fig:monodromy_around_AD_surface}. This results in a continuous exchange of the types and positions of the two fibers.

An immediate consequence is that we can rotate the fibers $p_i$ and $p_k$ in Figure \ref{fig:winding cycles} by going two rounds around an AD surface in the $\tgamma$ space. This allows the cycle $\bfW^\prime$ to be continuously deformed into $\bfW$. In this way, every winding cycle can be deformed into, and thus be homologous to, a straight cycle as shown in Figure \ref{fig:6I1 cycles}.

\newpage

\bibliography{references}
\bibliographystyle{hyperamsalpha}

\end{document}